%% file: main.tex
\documentclass[11pt,letterpaper,onecolumn]{article}
\usepackage[
  tocappendix=false,
  tocsubsubsections=false,
  tocreferences=true
]{fullversion}
\input{macros}

\title{Reachability in Five-Dimensional Vector Addition Systems Is PSPACE-Complete}

\author{Weijun Chen}
\email{cwj2018@sjtu.edu.cn}
\orcid{0009-0007-2611-0338}
\affiliation{%
  \institution{BASICS, \\ Shanghai Jiao Tong University}
  \city{Shanghai}
  \country{China}
}

\author{Bo Fu}
\email{fubo0970@sjtu.edu.cn}
\orcid{0009-0002-6513-8935}
\affiliation{%
  \institution{BASICS, \\ Shanghai Jiao Tong University}
  \city{Shanghai}
  \country{China}
}

\author{Yuxi Fu\thanks{Corresponding author}}
\email{fu-yx@cs.sjtu.edu.cn}
\orcid{0000-0001-6429-7550}
\affiliation{%
  \institution{BASICS, \\ Shanghai Jiao Tong University}
  \city{Shanghai}
  \country{China}
}

\keywords{vector addition systems, reachability, complexity, semilinearity}

\begin{document}
\makefrontmatter{\input{contents/0-Abstract}}

\input{contents/1-Introduction}
\input{contents/2-Overview}
\input{contents/3-Preliminary}
\input{contents/4-Quantitative}

\input{contents/5-Flatness}
\input{contents/6-Coverable}
\input{contents/7-Semicoverable}
\input{contents/8-Noncoverable}

\paragraph{AI Methodology.}
\addcontentsline{toc}{section}{AI Methodology}
The main mathematical ideas and proof strategies were developed by the authors.
OpenAI's ChatGPT was used to assist in checking and refining detailed technical arguments, particularly in~\cref{sec:removing_coverability} and~\cref{app:arbitrary_target_proofs}.
It was also used to prepare, audit, and revise the manuscript.
The authors have verified all content, including references, and take full responsibility for the manuscript.

\clearpage
\appendix
\crefalias{section}{appendix}
\crefalias{subsection}{appendix}
\crefalias{subsubsection}{appendix}
\input{appendix/Appendix-A-Overview}
\input{appendix/Appendix-B-Transformation-Area}
\input{appendix/Appendix-C-Factorization}
\input{appendix/Appendix-D-Basic}
\input{appendix/Appendix-E-Induction}
\input{appendix/Appendix-F-general}

\bibliography{contents/9-Reference}
\end{document}

%% file: macros.tex
\newcommand{\defproblem}[3]{
    \;\par\vspace{0.5em}
    \noindent\fbox{
        \begin{minipage}{0.96\columnwidth}
            \underline{#1}
            \begin{description}
                \item[Input:] {#2}
                \item[Question:] {#3}
            \end{description}
        \end{minipage}
    }\vspace{0.5em}
    \par\;
}

\newcommand{\OpName}[1]{\mathop{\operatorname{\textup{#1}}}\nolimits}
\newcommand{\OpNameSC}[1]{\mathop{\operatorname{\textup{\textsc{#1}}}}\nolimits}

\renewcommand{\Vec}[1]{\bm{#1}}
\newcommand{\norm}[1]{\left\Vert{#1}\right\Vert}

\newcommand{\Supp}{\OpName{supp}}
\newcommand{\CycleSpace}{\OpNameSC{cyc}}
\newcommand{\Len}{\OpNameSC{len}}
\newcommand{\Span}{\OpName{span}}
\newcommand{\Drop}{\OpNameSC{drop}}
\newcommand{\Cone}{\OpName{cone}}
\newcommand{\Rank}{\operatorname{rank}\nolimits}
\newcommand{\PolyF}[1]{{\normalfont\textsf{poly}({#1})}}

\newcommand{\dimcyc}{\dim_{\textup{cyc}}}
\newcommand{\dimcom}{\dim_{\textup{com}}}
\newcommand{\Reach}{\OpName{Reach}}
\newcommand{\Size}{\OpNameSC{size}}

\newcommand{\PSPACE}{\ensuremath{\mathsf{PSPACE}}\xspace}
\newcommand{\NPSPACE}{\ensuremath{\mathsf{NPSPACE}}\xspace}
\newcommand{\NP}{\ensuremath{\mathsf{NP}}\xspace}
\newcommand{\NL}{\ensuremath{\mathsf{NL}}\xspace}
\newcommand{\EXPSPACE}{\ensuremath{\mathsf{EXPSPACE}}\xspace}
\newcommand{\NEXPT}{\ensuremath{\mathsf{NEXPTIME}}\xspace}

\newcommand{\Inner}[1]{\langle {#1} \rangle}
\newcommand{\NN}{\ensuremath{\mathbb{N}}}
\newcommand{\ZZ}{\ensuremath{\mathbb{Z}}}
\newcommand{\QQ}{\ensuremath{\mathbb{Q}}}
\newcommand{\XX}{\ensuremath{\mathbb{X}}}
\newcommand{\Eff}{\ensuremath{\OpNameSC{eff}}}
\newcommand{\Parikh}[1]{\boldsymbol{\psi}^{#1}}
\newcommand{\MinLen}{\ell}
\newcommand{\TA}{\mathcal T}
\newcommand{\bigO}{\mathcal{O}}
\newcommand{\restr}[2]{#1\!\upharpoonright_{#2}}
\newcommand{\enc}[1]{[\![#1]\!]}
\newcommand{\GTA}{\mathcal G}
\newcommand{\hnorm}{\OpNameSC{norm}}
\newcommand{\rev}{\OpNameSC{rev}}
\newcommand{\BDCounters}{\OpNameSC{bd}}

%% file: contents/0-Abstract.tex
Vector addition systems (VAS) are a fundamental model of concurrent computation in which transitions update nonnegative integer counters by adding integer vectors, provided all counters remain nonnegative.
Closely related to Petri nets, VAS are the single-state counterpart of vector addition systems with states (VASS), which use finitely many states to restrict the available transitions.
A classical simulation of $d$-VASS by $(d+3)$-VAS connects their reachability problems, but the complexity-theoretic picture of fixed-dimensional VAS remains largely unclear.
While reachability in $2$-VASS is \PSPACE{}-complete under binary encoding, the best known upper bound for $5$-VAS was only elementary.
We prove that every reachable $5$-VAS instance has a reachability witness of length polynomial in its unary size.
Consequently, $5$-VAS reachability is \NL{}-complete under unary encoding and \PSPACE{}-complete under binary encoding.
We also establish a polynomial factorization of the reachability relation of every $d$-VAS with $d \ge 3$.
For $5$-VAS, this strengthens Hopcroft and Pansiot's effective semilinearity of reachability sets to the full reachability relation.
For every fixed $d\ge3$, it also yields an $\mathsf F_d$ upper bound for reachability in $(d+3)$-VAS.

%% file: contents/1-Introduction.tex
\section{Introduction}
\label{sec:introduction}

Vector addition systems (VAS), introduced by Karp and Miller in the 1960s~\cite{DBLP:journals/jcss/KarpM69}, are a fundamental model of concurrent computation.
A $d$-dimensional VAS models $d$ types of resources, each represented by a nonnegative integer counter, and a finite set of rules called transitions.
A configuration is a nonnegative integer vector recording the values of these counters.
Each transition updates a configuration by adding a fixed integer vector, and can be applied only if the resulting counters remain nonnegative.
VAS are essentially equivalent in expressive power to Petri nets, widely used to model protocols, workflows, and chemical and biological processes~\cite{DBLP:conf/ac/Esparza96,DBLP:journals/pieee/Murata89,DBLP:journals/tie/ZurawskiZ94}.

The central decision problem for VAS is \emph{reachability}: given a VAS $V$ and two configurations $\Vec{s}$ and $\Vec{t}$, determine whether a \emph{run}, i.e., a finite sequence of transitions executable from $\Vec{s}$, reaches $\Vec{t}$.
Its decidability was established by Mayr in 1981~\cite{DBLP:conf/stoc/Mayr81} with subsequent refinements by Kosaraju~\cite{DBLP:conf/stoc/Kosaraju82} and Lambert~\cite{DBLP:journals/tcs/Lambert92}, leading to the KLM algorithm.
Its precise computational complexity remained open for decades.
Leroux and Schmitz established an Ackermannian upper bound in 2019~\cite{DBLP:conf/lics/LerouxS19}, while Czerwiński and Orlikowski~\cite{DBLP:conf/focs/CzerwinskiO21} and, independently, Leroux~\cite{DBLP:conf/focs/Leroux21} established matching lower bounds in 2021, settling reachability as Ackermann-complete.

Much of the complexity theory of reachability has been developed for a closely related model called \emph{vector addition systems with states} (VASS), which extend VAS with finitely many states restricting the available transitions.
While every VAS is a VASS with a single state, Hopcroft and Pansiot showed that every $d$-VASS can be simulated by a $(d+3)$-VAS~\cite{DBLP:journals/tcs/HopcroftP79}.
Thus, the two models are equivalent when dimension is unfixed, but the three additional counters make the fixed-dimensional complexity comparison nontrivial.
For VASS, the exact complexity of reachability under binary encoding is known only in dimensions at most two: \NP{}-complete for $1$-VASS~\cite{DBLP:conf/concur/HaaseKOW09} and \PSPACE{}-complete for $2$-VASS~\cite{DBLP:journals/jacm/BlondinEFGHLMT21}.
Only recently was reachability in $3$-VASS shown to be elementary~\cite{DBLP:conf/icalp/CzerwinskiJ0O25}, with its upper bound later improved to \EXPSPACE{}~\cite{DBLP:journals/corr/abs-2607-14983}.

These advances for low-dimensional VASS make substantial use of the structure of two-dimensional reachability sets and relations.
In particular, for a fixed source configuration, its \emph{reachability set} consists of all reachable targets, whereas the \emph{reachability relation} consists of all reachable source-target pairs.
A set or relation is \emph{semilinear} if it is a finite union of sets obtained from a base vector by adding nonnegative integer combinations of finitely many period vectors.
Hopcroft and Pansiot established that the reachability sets of $2$-VASS are effectively semilinear~\cite{DBLP:journals/tcs/HopcroftP79}, while Leroux and Sutre proved that even the full reachability relation is effectively semilinear~\cite{DBLP:conf/concur/LerouxS04}.
Remarkably, Hopcroft and Pansiot also proved that the reachability sets of $5$-VAS are effectively semilinear~\cite{DBLP:journals/tcs/HopcroftP79}.
This parallel raises two natural questions: firstly, is the reachability relation of $5$-VAS semilinear, and secondly, is reachability in $5$-VAS \PSPACE{}-complete under binary encoding, as in $2$-VASS?
The latter problem is already \PSPACE{}-hard by the three-counter simulation, whereas only an elementary upper bound was established recently~\cite{DBLP:conf/concur/ChenFZ26}.

The relationship between VAS and VASS also motivates improvements in higher dimensions.
For $d$-VASS, reachability is known to be in $\mathsf{F}_d$~\cite{DBLP:conf/icalp/FuYZ24}, the $d$-th level of the fast-growing complexity hierarchy.
In particular, $\mathsf{F}_3$ corresponds to tower-of-exponentials complexity, while $\mathsf{F}_\omega$ corresponds to Ackermannian complexity~\cite{DBLP:journals/toct/Schmitz16}.
Our current knowledge about lower bounds is that $(2k+3)$-VASS is $\mathsf{F}_k$-hard~\cite{DBLP:conf/fsttcs/CzerwinskiJ0LO23}.
Huge gaps remain between the known upper and lower bounds.
For a long time, $d$-VAS inherited the same upper bound, until Chen, Fu, and Zheng improved it to $\mathsf{F}_{d-2}$~\cite{DBLP:conf/concur/ChenFZ26} by exploiting the absence of states.
In view of the three-counter simulation, one may ask whether reachability in $(d+3)$-VAS admits the same $\mathsf{F}_d$ upper bound as reachability in $d$-VASS.

\subsection{Our Results}
\label{subsec:our_results}

\paragraph{Reachability Complexity.}
\label{par:intro_reachability_complexity}
Our main result establishes the exact complexity of reachability in $5$-VAS under both binary and unary encoding, where encoding schemes refer to how the integers specifying the transitions and the configurations are represented.

\begin{restatable}{theorem}{mainComplexity}
    \label{thm:main_complexity}
	The reachability problem for $5$-VAS is $\PSPACE$-complete under binary encoding and $\NL$-complete under unary encoding.
\end{restatable}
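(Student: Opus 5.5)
The argument splits into hardness and membership. The membership part rests on the paper's central technical claim from the abstract: every reachable $5$-VAS instance has a reachability witness of length polynomial in its unary size. I will assume this short-witness theorem, proved in the later sections, and derive both complexity bounds from it.

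\textbf{Lower bounds.} For \PSPACE{}-hardness under binary encoding, I will compose two known results:
\begin{itemize}
\item reachability in $2$-VASS with binary updates is \PSPACE{}-hard~\cite{DBLP:journals/jacm/BlondinEFGHLMT21};
\item the Hopcroft--Pansiot simulation~\cite{DBLP:journals/tcs/HopcroftP79} turns a $2$-VASS into a $5$-VAS, using three extra counters to encode the control state.
\end{itemize}
I need to check that this simulation is a logspace (indeed polynomial) reduction. It encodes states by polynomially bounded values of the three extra counters, and its transition updates are polynomial in the number of states plus the original updates. Hence the binary size grows only polynomially, and reachability from $(\Vec{s},q)$ to $(\Vec{t},q')$ is preserved.

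For \NL{}-hardness under unary encoding, I will reduce directed graph reachability, for instance through $0$-VASS or $2$-VASS with unary updates, pushed through the same simulation. Alternatively, I can encode it directly: use one counter holding a "state token" value, with the remaining counters unused. The simulation with unary encoding of polynomially bounded numbers is logspace computable.

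\textbf{Upper bounds.} Let $n$ be the unary size of an instance, that is, the sum of the absolute values of all numbers in the transitions and in $\Vec{s},\Vec{t}$. If $\Vec{t}$ is reachable from $\Vec{s}$, the short-witness theorem gives a run of length at most $p(n)$ for a fixed polynomial $p$. Along such a run, every intermediate configuration has entries bounded by $\norm{\Vec{s}}+p(n)\cdot\max_a\norm{a}\le n+p(n)n$.

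Under unary encoding, an \NL{} machine guesses the run one transition at a time. It stores only:
\begin{itemize}
\item the current configuration, five numbers of magnitude polynomial in $n$, hence $O(\log n)$ bits;
\item a step counter up to $p(n)$.
\end{itemize}
At each step it checks nonnegativity, and at the end it compares the configuration with $\Vec{t}$.

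Under binary encoding of size $m$, the unary size is at most $2^{m}$. The witness then has length at most $p(2^m)$, and all counter values are at most exponential, so they fit in polynomially many bits. The same guess-and-check procedure runs in \NPSPACE{}, which equals \PSPACE{} by Savitch's theorem.

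\textbf{Main obstacle.} The real difficulty is the polynomial (in unary size) bound on reachability witnesses for $5$-VAS. Everything above is a routine wrapper around it. A secondary subtlety is checking that the Hopcroft--Pansiot construction is a logspace reduction preserving encoding size, so that \PSPACE{}-hardness of $2$-VASS transfers.
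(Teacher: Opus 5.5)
Your proposal is correct, but your hardness argument takes a different route from the paper's. The membership half is exactly the paper's proof. You assume \cref{thm:polynomial_witness}, guess a run of length at most polynomial in the unary size, and store five counters of polynomial magnitude together with a step counter. This is logarithmic space under unary encoding and polynomial space under binary encoding, followed by Savitch. Your unary-size measure (sum of absolute values) differs from $\Size$ but is polynomially equivalent, so nothing changes.

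For the lower bounds, the paper simply cites two results: \NL{}-hardness from \cite{DBLP:conf/concur/ChenFZ26}, and \PSPACE{}-hardness from Kami\'nski--Lasota \cite{DBLP:conf/rp/KaminskiL26}. The latter already holds for $3$-VAS, so $5$-VAS follows by padding with zero-effect coordinates. You instead derive both bounds from the Hopcroft--Pansiot three-counter simulation, applied to $2$-VASS for the binary case and to plain graph reachability (a $0$-VASS) for the unary case. This works, and the paper itself alludes to it in the introduction (``already \PSPACE{}-hard by the three-counter simulation''). The point you correctly isolate is that the state encoding uses numbers polynomial in the number of states. Hence the simulation is a logspace reduction under either encoding and preserves reachability between encoded configurations.

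The two routes buy different things. The cited route gives a stronger statement, hardness already in dimension $3$, with nothing to verify. Yours is self-contained modulo the classical simulation.

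One caution: drop your ``alternative'' direct encoding of graph reachability with a single state-token counter. A VAS has no control state, so an edge $i\to j$ realized as effect $j-i$ on one counter is enabled at every counter value $v\ge i-j$ and moves it to $v+j-i$. This creates spurious paths. Preventing exactly this is why the simulation needs three rigidly chosen extra counters.
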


These results match those for $2$-VASS, whose reachability is also \PSPACE{}-complete under binary encoding and \NL{}-complete under unary encoding~\cite{DBLP:journals/jacm/BlondinEFGHLMT21}.
Recently, Kami\'nski and Lasota established \PSPACE{}-hardness~\cite{DBLP:conf/rp/KaminskiL26} for $3$-VAS, which, together with the known upper bounds~\cite{DBLP:conf/concur/ChenFZ26}, settled the binary complexity of reachability in dimensions three and four.
Combined with our result, this establishes the same \PSPACE{} (binary) and \NL{} (unary) completeness for every $d\in\{3,4,5\}$.

To prove the upper bounds in \cref{thm:main_complexity}, we bound the length of shortest runs.
The \emph{reachability distance} $\MinLen_V(\Vec{s},\Vec{t})$ is the length of a shortest run from $\Vec{s}$ to $\Vec{t}$ in $V$ if such a run exists.
We write $\Size(V,\Vec{s},\Vec{t})$ for the \emph{unary size} of the instance, which accounts for the number of transitions, the norms of their effects, and the values of the source and target configurations.

\begin{restatable}[Polynomial Reachability Distance]{theorem}{lenFiveVAS}
	\label{thm:polynomial_witness}
	There exists a constant $c$ such that for every reachable $5$-VAS instance $(V,\Vec s,\Vec t)$, one has $\MinLen_V(\Vec s,\Vec t)\le\Size(V,\Vec s,\Vec t)^c$.
\end{restatable}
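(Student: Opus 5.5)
We plan to prove \cref{thm:polynomial_witness} by combining a structural decomposition of $5$-VAS runs with a case analysis on how many counters stay bounded along a run. Start from the standard KLM-style view: a run from $\Vec s$ to $\Vec t$ passes through configurations where some counters are small, polynomial in $\Size(V,\Vec s,\Vec t)$, and others are large. The sections on flatness, coverable, semicoverable, and noncoverable instances suggest a split according to whether the target is coverable from the source and, dually, whether the source is coverable backwards from the target. Concretely, we would fix a threshold $B=\Size(V,\Vec s,\Vec t)^{c_0}$ and let $\BDCounters$ denote the counters that stay below $B$ along a shortest run. If some counter stays bounded, we encode it into finitely many states. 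This yields a VASS of dimension at most $4$ with $\PolyF{B}$ states, and in the two-dimensional residual cases we can rely on the known polynomial bounds for $2$-VASS. The three- and four-dimensional residuals are the delicate part, and we would handle them using the absence of states in the original system. There, the only states come from a bounded counter, so the state graph is a ``line'' indexed by counter values, which is what keeps flatness manageable.

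When no counter is bounded, every counter exceeds $B$ at some point. We would then factor the run as $\pi=\pi_1\pi_2\pi_3$. Here $\pi_1$ pumps up from $\Vec s$ to a configuration that is large in the relevant coordinates, and it is short by a Rackoff-style covering bound, which is polynomial in fixed dimension under unary size. Similarly, $\pi_3$ is a reverse-covering run into $\Vec t$. The middle part $\pi_2$ lives in the region where all relevant counters are large. There, reachability is governed by integer linear constraints: a Parikh vector solving the state-free Kirchhoff system (just $\Vec t'-\Vec s'=\sum_a x_a\Vec a$, with no connectivity condition since there is a single state) plus positivity of cycles. We would use the quantitative lemmas of \cref{sec:introduction}'s later sections, such as bounds on small solutions of linear systems and the dimension of the cycle space, to obtain a Parikh vector of polynomial norm. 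We would then order the transitions so that no counter drops below zero, exploiting the largeness margin and the availability of positive and negative cycles in each direction.

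The main obstacle is the mixed, ``semicoverable'' regime. Here some counters become large while others must return to small values in a coordinated way, and the run cannot be split cleanly into a pumping phase and a linear phase. In dimension $5$, the set of large counters can change several times, and each change can a priori compound the lengths exponentially. To avoid this, we would try an induction on the number of large counters, with a potential argument showing that the total number of phase changes is bounded by a constant depending only on $d=5$. We would also need to show that within each phase the length blows up only polynomially. The key would be to prove that with at most two genuinely unbounded directions the relevant reachability relation is flat, i.e., expressible by polynomially many linear path schemes with polynomial-size cycles. This mirrors the $2$-VASS argument and uses the three ``spare'' dimensions to absorb the lack of states. If this flatness step holds, composing a constant number of polynomial bounds gives $\MinLen_V(\Vec s,\Vec t)\le\Size(V,\Vec s,\Vec t)^c$.
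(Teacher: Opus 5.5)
Your proposal has genuine gaps. The most serious is your plan to bound the number of phase changes (changes in which counters are bounded) by a constant depending only on $d=5$. Nothing forces this. Once two coordinates are bounded by $B$ and stored in states, a run of the $3$-VASS $\enc V_B$ can pass through up to $(B+1)^2$ SCCs. It can also alternate arbitrarily often between the regions where one of the three remaining counters is small. So the number of phases is polynomial in the input, not constant, and composing polynomial bounds over $k$ phases gives roughly $M^{c^k}$, which is not polynomial.

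The paper never bounds the number of phases. It uses an additive-growth induction on the number of SCCs. Each step either ends in a terminal case with a direct polynomial bound, or shortens a bounded-coordinate prefix while raising the source norm by at most a fixed additive $R$. This yields $L_k(N)\le kP(M+N+kR)$. The real work is showing that every prefix with large growth is terminal. This uses the dichotomy of \cref{lem:large_prefix_dichotomy}: either a free period, handled as a pumpable source, or a translated family, joined by the determinant-rounding argument of \cref{lem:translated_join}. All parameters are fixed before the induction starts.

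The step you call key, flatness with at most two unbounded directions, is just the known $2$-VASS theory. The hard residual is a $3$-VASS $B$-encoding whose SCCs may have full three-dimensional cycle space; its state graph is the grid $[0,B]^2$, not a line. Moreover, flatness alone does not give polynomial length. Repetition counts of polynomially many cycles obtained via Pottier are only bounded by $2^{\PolyF{M}}$, which is exactly why the paper's mixed-LPS corollary yields only an exponential bound.

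Your ``all counters large'' phase also fails as stated. Rackoff gives short runs $\Vec s\to\Vec u$ and $\Vec v\to\Vec t$ with $\Vec u,\Vec v$ large. But $\Vec v-\Vec u$ need not be a nonnegative integer combination of transition or cycle effects, so there may be no $\ZZ$-run from $\Vec u$ to $\Vec v$ to reorder. The paper obtains such a run only under wideness, $\Cone(U)=\QQ^3$, by inserting compensating cycles (\cref{lem:wide_compatible_covers}). It treats non-wide instances separately, using a separating normal that reduces the cycle-dimension to two. Your ``positive and negative cycles in each direction'' is this hypothesis, not a consequence of anything you have shown.

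When some coordinates are bounded and stored in states, the reordering must also keep them within bounds. Van Leeuwen-style largeness does not ensure this; the paper needs the generalized transformation-area theorem (\cref{thm:generalized_transformation_area}), which relocates cycles to a peak state in $\enc V_{2B}$ and applies Steinitz. Finally, the one-bounded-counter (four-dimensional) residual you call delicate is handled in the paper by the CFZ characterization (\cref{thm:cfz}). It reduces everything to two components of cycle-dimension at most two, or to a single $B$-encoding, and your outline offers no replacement for this reduction.
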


A nondeterministic algorithm can guess a run of length at most $M^c$ using $\bigO(\log M)$ space to store the current configuration and a step counter, where $M=\Size(V,\Vec{s},\Vec{t})$.
This yields \NL{} membership under unary encoding and \PSPACE{} membership under binary encoding.

\paragraph{Factorization of Reachability Relations.}
\label{par:intro_factorization}

\cref{thm:polynomial_witness} bounds the length of a witness for each reachable source-target pair, but does not describe the structure of the whole reachability relation.
We develop a polynomial factorization of the reachability relation of every $d$-VAS with $d\ge 3$.
A \emph{linear path scheme} (LPS) $\Lambda=\alpha_{0}\beta_{1}^{*}\alpha_{1}\cdots\beta_{k}^{*}\alpha_{k}$ consists of fixed paths $\alpha_0,\ldots, \alpha_k$ interleaved with repeatable cycles $\beta_1,\ldots,\beta_k$, providing a standard representation of reachability in low-dimensional VASS~\cite{DBLP:journals/jacm/BlondinEFGHLMT21}.
We extend them to \emph{mixed linear path schemes} (mixed LPS), which allow different components to be interpreted under either ordinary nonnegative semantics or integer semantics.

We call a representation of the reachability relation of a VAS $V$ of the form
\begin{equation*}
    \Reach_V = \bigcup_{(R_1,\ldots,R_m)\in\mathcal F}R_1\circ\cdots\circ R_m
\end{equation*}
a \emph{polynomial factorization} if $\mathcal F$ is finite and each $R_i$ is either represented by a mixed LPS or is the reachability relation of $V$ with three counters bounded by some $B$.
The latter is essentially captured by a $(d-3)$-VASS, obtained by encoding these bounded counters into states.
The length $m$ of each factorization, the bounds $B$, and all mixed-LPS parameters must be bounded by a polynomial in $\Size(V)$.
Moreover, every witness for a pair in $R_i$ lifts to a run in $V$ of no greater length.

\begin{theorem}
    \label{thm:intro_factorization}
    For every $d\ge3$, the reachability relation of every $d$-VAS admits a polynomial factorization.
\end{theorem}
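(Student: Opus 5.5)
We would prove the factorization by induction on the structure of runs, organized around which counters stay bounded and which can be pumped. Fix a $d$-VAS $V$ with $d\ge 3$ and a polynomial threshold $B=\Size(V)^{c}$. Given a run $\pi$ from $\Vec s$ to $\Vec t$, we cut it at the first and last moments when the counters are large, and classify each maximal segment by the set of counters exceeding $B$.

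On a segment where at most three counters ever exceed $B$ while the others stay below $B$, the pair lies in a relation of the second kind. These are the reachability relations of $V$ with three counters bounded by $B$, after the bounded coordinates are encoded into states. On a segment where at least $d-2$ counters are simultaneously large, we intend to show that the segment can be replaced, with no loss of reachability, by a mixed LPS. Some coordinates are then interpreted over $\ZZ$: those large enough that nonnegativity is automatic. The remaining coordinates are handled with ordinary semantics via a low-dimensional LPS argument in the style of the $2$-VASS result of~\cite{DBLP:journals/jacm/BlondinEFGHLMT21}. Alternating between these two regimes gives $\Reach_V$ as a finite union of compositions $R_1\circ\cdots\circ R_m$.

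The key steps, in order, are as follows.
\begin{enumerate}
\item Prove a \emph{quantitative} lemma on unboundedness: if, along a run, $d-2$ counters reach a value above $B$, then the run visits a configuration from which cycles with positive effect on those counters are enabled. This lets the large counters be treated as integer counters by pumping such cycles forward and cancelling them later.
\item Show that the remaining segments involve a bounded number of switches between regimes. Each switch changes the set of large counters, and monotonicity arguments (a counter can only be forced back below $B$ by a drop that is itself polynomially controlled) bound $m$ by a function of $d$. This yields a number polynomial in $\Size(V)$.
\item Build the mixed LPS for each high segment. Its cycles are simple cycles in the bounded-counter projection, of polynomial length, and its number of cycles is polynomial, via a Carathéodory-style argument on the cone of cycle effects.
\item Verify the lifting property. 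Each mixed-LPS witness must become a genuine run of length at most the witness length. This follows because integer-semantics coordinates stay at least $B$ minus the total drop of the scheme, which is designed to stay nonnegative.
\end{enumerate}

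We expect the main obstacle to be step 2 together with the lifting in step 4: ensuring that the integer-semantics coordinates genuinely never go negative once we compose many factors. Our first approach would be a potential argument. We would choose thresholds $B_1\ll B_2\ll\cdots\ll B_d$ that grow polynomially with a degree depending on $d$, and declare a counter "integer" only when it exceeds the next threshold. The slack between consecutive thresholds would then absorb the maximal drop of any mixed LPS at the previous level. Since $d$ is fixed, the composed polynomial bounds remain polynomial. For $d=5$, the bounded-three-counter factors become $2$-VASS, so this recovers the semilinearity of the full relation.
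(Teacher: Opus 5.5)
\paragraph{Step~2 fails.} Nothing forces the set of counters above $B$ to change only boundedly often. A counter may fall below $B$ and climb back arbitrarily many times, and the size of such a drop is governed by the configuration values, not by anything polynomial in $\Size(V)$. For instance, in a $3$-VAS with transitions $(1,-1,0)$ and $(-1,1,0)$, a run from $(N,0,0)$ back to itself may shuttle the mass between the first two counters as often as it likes. So classifying a run's segments by which counters are large gives no bound on $m$. You also give no argument that some other witness has few switches. The paper singles this out as the central obstacle.

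The paper does not get a bound depending only on $d$; it gets at most $2B+2$ factors per round from a different decomposition. With $H$ chosen via Rackoff's extraction, the \emph{high} positions (every counter reaches $H$ both before and after) form one contiguous interval. The prefix and suffix outside it each keep some counter below $H+\norm{T}$, so each becomes a factor with one more bounded coordinate. If a peak state of the current subrun (maximal first encoded coordinate) occurs at a high position, the whole subrun becomes a single mixed-LPS factor. Otherwise the recursion continues on the interval, whose peak value is strictly smaller, so there are at most $B+1$ stages. Three such rounds, padded with dummy coordinates, yield the three bounded coordinates.

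\paragraph{Steps~1 and~4 do not justify the $\ZZ$-semantics.} In a $\ZZ$-component the cycle repetition counts are unconstrained, so a witness executed in its given order has unbounded drop. No slack between thresholds $B_i\ll B_{i+1}$ can absorb this, and large endpoint values never make nonnegativity ``automatic''. Pumping positive cycles and cancelling them later fails twice over. It needs cycles of opposite effect, which need not exist. It also lengthens the run, violating the requirement that witnesses lift to runs of no greater length.

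\paragraph{The missing idea.} What you need is a rearrangement theorem, namely the generalized transformation-area theorem. Store two coordinates in the states of $\enc{V}_B$. Then every $\ZZ$-run between configurations whose remaining counters are at least a threshold $C$ (polynomial in $B$ and $\norm{T}$) can be reordered, with the same Parikh image and hence the same length, into a genuine run of $\enc{V}_{2B}$. The proof relocates simple cycles to a peak state, which keeps the encoded coordinates in $[0,2B]^2$, and orders them by Steinitz's lemma. Membership of the endpoints in $\GTA_C$ is enforced by checking gadgets $(-\Vec g^C_{\Vec p})(+\Vec g^C_{\Vec p})$ under $\NN$-semantics. The short runs into $\GTA_C$ come from Rackoff's extraction applied to the prefix and to the reversed suffix at the high peak position; the deleted cycles then form the middle $\ZZ$-run.

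\paragraph{Minor points.} In the paper the $\NN$/$\ZZ$ choice is made per LPS component, not per coordinate; the non-large coordinates live in the states. Also, a three-bounded-counter factor needs at least three counters staying below $B$, not at most three exceeding it.
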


For $5$-VAS, the factors with three bounded counters can be reduced to $2$-VASS by encoding these counters into states.
Since the reachability relation of $2$-VASS can be captured by LPS~\cite{DBLP:journals/jacm/BlondinEFGHLMT21}, these factors can also be represented by LPS with polynomially bounded parameters.

\begin{restatable}{corollary}{flatnessFiveVAS}
    \label{cor:flatness_of_5vas}
    For every $5$-VAS $V$, its reachability relation can be represented by a finite set of mixed LPS, with all parameters polynomially bounded in $\Size(V)$.
\end{restatable}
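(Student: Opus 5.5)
We derive the corollary directly from \cref{thm:intro_factorization} with $d=5$. For a $5$-VAS $V$ this gives a finite family $\mathcal F$ with $\Reach_V=\bigcup_{(R_1,\ldots,R_m)\in\mathcal F}R_1\circ\cdots\circ R_m$. Here $m$ and all parameters are polynomial in $\Size(V)$, and each factor $R_i$ is either a mixed LPS or the reachability relation of $V$ with three counters bounded by a polynomial $B$. Mixed-LPS factors need no work. So the plan has three parts: convert each bounded factor into a finite union of mixed LPS of polynomial size; then check that a composition of mixed LPS relations is again a finite union of mixed LPS; and finally take the union over $\mathcal F$.

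For a bounded factor, let $I$ be the three bounded counters and $J$ the remaining two. We encode the values in $\{0,\ldots,B\}^I$ into states. This gives a $2$-VASS $W$ with $(B+1)^3$ states. Each transition $\Vec a$ of $V$ becomes an edge $(p,\Vec a|_J,p+\Vec a|_I)$ whenever $p$ and $p+\Vec a|_I$ lie in the box. Runs of $W$ from $(p,\Vec u)$ to $(q,\Vec v)$ correspond exactly, and with the same length, to runs of $V$ from $(p,\Vec u)$ to $(q,\Vec v)$ that stay within the bound on $I$. The unary size of $W$ is polynomial in $\Size(V)$.

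We then invoke the LPS result for $2$-VASS of~\cite{DBLP:journals/jacm/BlondinEFGHLMT21}. For every pair of states $p,q$, the reachability relation of $W$ from $p$ to $q$ is a finite union of LPS whose number of cycles and total lengths are polynomial in the number of states and in the maximal norm. Projected back to $V$, each such LPS becomes an LPS of $V$ with ordinary nonnegative semantics. Its source and target are pinned on $I$ to $p$ and $q$, and this is expressible in the mixed-LPS format by recording the fixed $I$-components. The union over $p,q$ has polynomially many members.

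For composition, a composition of mixed LPS relations $\Lambda_1\circ\cdots\circ\Lambda_m$ should be representable by concatenating the schemes: the final segment of $\Lambda_i$ is glued to the initial segment of $\Lambda_{i+1}$, and the per-component semantics are kept block by block. This is the step I expect to be the main obstacle. It requires checking, against the precise definition of mixed LPS, that concatenation preserves the semantics. Components with integer semantics in one block must not be forced nonnegative in another, and intermediate configurations must match at the junctions. If the definition does not directly allow blockwise semantics, I would first try to make the junction configuration an explicit shared variable. Failing that, I would accept the concatenated scheme as a mixed LPS with a sequence of semantic annotations. The total parameters are at most $m$ times a polynomial, hence polynomial, and the union over the finite family $\mathcal F$ completes the representation.
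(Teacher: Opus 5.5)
Your overall route matches the paper's: take the factorization from \cref{thm:intro_factorization}, flatten each bounded-counter factor with the $2$-VASS LPS theorem of~\cite{DBLP:journals/jacm/BlondinEFGHLMT21} after restoring the original effects, and concatenate. There is one real gap, though. You pin the source and target on $I$ to $p$ and $q$ ``by recording the fixed $I$-components'' in the mixed-LPS format, and this cannot be done.

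The reason is that a mixed LPS cannot pin coordinates. Each $\NN$-component is monotone, each $\ZZ$-component is translation invariant, and the junction vectors only need to be nonnegative. So $\Reach_\Lambda$ is closed under adding a common $\Vec e\in\NN^5$ to source and target. A nonempty bounded factor $R=\Reach_{\enc{V}_{I,B}}$ lacks this closure: adding $\Vec e$ with $\restr{\Vec e}{I}=(B+1)\cdot\Vec 1$ pushes the source out of $[0,B]^I$. Hence $R$ is not a finite union of mixed-LPS relations at all, and your plan to convert each bounded factor \emph{exactly} fails.

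The missing idea is that exactness is unnecessary. Take a $2$-VASS LPS $\Lambda$ from $p$ to $q$ with the original effects of $V$ restored, and leave it unpinned. Every witness of $\Lambda$ is a sequence of transitions of $V$ with nonnegative prefix sums, that is, an $\NN$-run of $V$. So $\Reach_\Lambda\subseteq\Reach_V$. On the other hand, every pair of $R$ from state $p$ to state $q$ lies in $\Reach_\Lambda$ for some such $\Lambda$. This gives $R\subseteq R':=\bigcup\Reach_\Lambda\subseteq\Reach_V$. Replacing each bounded factor $R_i$ by its over-approximation $R'_i$ (and keeping $R'_i:=R_i$ for mixed-LPS factors), transitivity of $\Reach_V$ yields
\[
\Reach_V=\bigcup R_1\circ\cdots\circ R_m\subseteq\bigcup R'_1\circ\cdots\circ R'_m\subseteq\Reach_V .
\]
This sandwich argument is exactly the paper's proof.

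The step you flagged as the main obstacle is in fact immediate. By definition, $\Vec c_0\xrightarrow{\Lambda}\Vec c_m$ for $\Lambda=(\Lambda_1,\XX_1)\cdots(\Lambda_m,\XX_m)$ means there are nonnegative junction vectors $\Vec c_i$ with $\Vec c_{i-1}\xrightarrow{\Lambda_i}_{\XX_i}\Vec c_i$, each component keeping its own semantics. So concatenating mixed LPS realizes precisely the composition of their relations.

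Finally, the statement bounds only the lengths and norms of the schemes, not how many there are. You therefore do not need, and should not claim, that the union over $p,q$ has polynomially many members.
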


Reachability through a mixed LPS can be described by linear inequalities over integers, with variables describing cycle repetitions and intermediate configurations.
Thus, mixed LPS define semilinear relations, while retaining a path-and-cycle structure that is more convenient for VASS constructions.
Our corollary therefore establishes semilinearity of the reachability relation of $5$-VAS, strengthening the classical result on reachability sets with a fixed source~\cite{DBLP:journals/tcs/HopcroftP79}.

\paragraph{Higher-Dimensional Complexity.}
\label{par:intro_higher_dimensional_complexity}
\cref{thm:intro_factorization} also yields an improved upper bound for reachability in higher-dimensional VAS.
For a $(d+3)$-VAS, the factors can be simulated by VASS whose strongly connected components have at most $d$ independent cycle effects, even though these VASS may have more than $d$ counters.
Applying the component-dimension reachability bound of~\cite{DBLP:conf/concur/ChenFZ26}, which builds on the improved KLM algorithm~\cite{DBLP:conf/icalp/FuYZ24}, yields the following result.

\begin{restatable}{theorem}{dVASReachInFdmThree}
\label{thm:higher_dimensional_complexity}
Reachability in $(d+3)$-VAS is in $\mathsf{F}_{d}$ for every fixed $d\ge 3$, even under binary encoding.
\end{restatable}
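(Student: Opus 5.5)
The plan is to derive \cref{thm:higher_dimensional_complexity} from the polynomial factorization of \cref{thm:intro_factorization}, applied in dimension $d+3\ge 6$, followed by the component-dimension reachability bound of~\cite{DBLP:conf/concur/ChenFZ26}.

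Let $(V,\Vec s,\Vec t)$ be a $(d+3)$-VAS instance in binary encoding and write $n=\Size(V)$. By \cref{thm:intro_factorization}, $\Reach_V$ is a finite union of compositions $R_1\circ\cdots\circ R_m$. Here $m$, the counter bounds $B$, and all mixed-LPS parameters are polynomial in $n$, hence at most exponential in the binary size. Each factor is either a mixed LPS or the reachability relation of $V$ with three counters bounded by $B$. I will build a single VASS $W$ that simulates one chosen composition. Its state space will consist of the index of the chosen tuple, the position $i\le m$, and local control for the current factor.

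\textbf{Mixed-LPS factor.} Traverse $\alpha_0\beta_1^*\alpha_1\cdots$ by a line of states carrying self-loops for the $\beta_j$. Cycles of $W$ inside such a gadget are just repetitions of single $\beta_j$, so each strongly connected component has cycle space of dimension at most $1\le d$. Components read under integer semantics need extra care, since VASS counters must stay nonnegative. I will handle them by adding a large polynomial offset (justified by the polynomial bounds on the parameters), or by splitting each such counter into a positive and a negative part that is reconciled at the end of the factor. Either way, each component still has at most $d$ independent cycle effects.

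\textbf{Bounded factor.} Encode the three bounded counters, with values in $[0,B]$, into states, and keep the remaining $d$ counters as VASS counters. The cycle effects then live in the $d$ unbounded coordinates, because the bounded coordinates return to their starting values around any cycle. Hence these components also have cycle-space dimension at most $d$, even though $W$ formally keeps all $d+3$ counters. Both gadgets preserve the lifting of witnesses guaranteed by the factorization. Therefore $\Vec t$ is reachable from $\Vec s$ in $V$ if and only if some chosen composition admits a run of $W$ from the corresponding initial to final state.

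Finally, the component-dimension bound of~\cite{DBLP:conf/concur/ChenFZ26} places reachability in VASS whose strongly connected components have cycle dimension at most $d$ in $\mathsf F_d$. This bound is in terms of the size of the VASS. $W$ has exponentially many states and exponentially large binary numbers, but $\mathsf F_d$ with $d\ge3$ is closed under elementary preprocessing, so an exponential blow-up of the input is absorbed; the same applies to enumerating $\mathcal F$.

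I expect two obstacles. The first is making sure that the mixed integer semantics in the LPS factors does not raise the cycle dimension or require states that break the SCC structure. The second is that the cited $\mathsf F_d$ bound really depends only on the component dimension and the size, so that the $d+3$ ambient counters do not matter.
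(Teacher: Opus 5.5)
\textbf{Overall route.} Your architecture is the paper's: apply the factorization theorem to the $(d+3)$-VAS, compile each factor into a VASS gadget whose SCCs have cycle-dimension at most $d$, chain the gadgets, and invoke the component-dimension bound of~\cite{DBLP:conf/concur/ChenFZ26}. Your bounded-counter gadget is also essentially the paper's. The genuine gap is in how you handle the $\ZZ$-components.

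\textbf{The polynomial offset fails.} The polynomial bounds on $|\Lambda|$ and $\norm{\Lambda}$ say nothing about repetition counts. How far a $\ZZ$-witness dips below zero is governed by those counts, and they are forced by the component's endpoint vectors. For instance, with $\beta_1=(-1,1)$ and $\beta_2=(1,0)$, the only witness of $\beta_1^*\beta_2^*$ from $(0,0)$ to $(0,m)$ is $\beta_1^m\beta_2^m$, which passes through $(-m,m)$. The endpoints of a $\ZZ$-component are intermediate configurations of the global witness, and bounding those is exactly what you cannot do in advance. So no offset fixed when constructing $W$ suffices. An offset adapted to the witness would require knowing it, and would also blow up $\Size(W)$.

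\textbf{What is needed instead.} The compensation must be guessed by the run and its removal enforced. The paper pads each $\ZZ$-component as $(\Vec 1,1_h)^*\Lambda^{\uparrow}(-\Vec 1,-1_h)^*$, where $h$ is a fresh auxiliary counter required to start and end at $0$. This forces both padding loops to be iterated equally often. Each padding loop is its own one-cycle SCC, so the cycle-dimension stays at most $1$.

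\textbf{Your split alternative.} It is in the same spirit and can be made to work, but only with details you omit. Since there are no zero tests, two things are required:
\begin{itemize}
\item The negative parts must be fresh counters for each $\ZZ$-component, never touched afterwards, and required to be $0$ in the global target. Reusing them would let an unreconciled surplus inflate the original counters during later $\NN$-factors, creating spurious runs.
\item The per-coordinate reconciliation loops, with effects $-\Vec e_j-\Vec e^-_j$, must sit in distinct one-loop SCCs. Placed at a single state, they span a space of dimension $d+3>d$.
\end{itemize}
Either fix gives $W$ polynomially many counters, not $d+3$. So what you actually need from the cited bound is uniformity in the number of counters, which is what the paper relies on. Your second obstacle should be stated that way.

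\textbf{Smaller points.} In $\NN$-components, each $\beta_j$ must be realized as a genuine cycle through fresh states, not as a self-loop labelled $\Eff(\beta_j)$. Otherwise nonnegativity of its intermediate prefixes is lost and $W$ acquires spurious runs.

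Building $W$ from all of $\mathcal F$ also requires $\mathcal F$ to be computable. The paper does supply this in its effective-construction paragraph, so you should cite it. The paper's own proof avoids the issue altogether. It uses the factorization only existentially, deriving $\MinLen_V(\Vec s,\Vec t)\le\MinLen_W(s',t')\le F_d(g(\Size(W,s',t')))$ for a witnessing composition, and then searches runs of $V$ up to that length.
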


\subsection{Future Work}
\label{subsec:future_work}

Taking semilinearity as a dividing line, the complexity of low-dimensional VAS reachability is not yet fully understood.
In particular, the exact complexity of $2$-VAS reachability remains open.
It is therefore natural to ask whether $2$-VAS shares the complexity of $3$-, $4$-, and $5$-VAS: \PSPACE{}-complete under binary encoding and \NL{}-complete under unary encoding.
A positive answer would be surprising, establishing the same complexity across four consecutive dimensions.
More broadly, it would suggest that reachability complexity may increase in steps rather than with every additional counter, motivating the search for substantially better upper bounds in higher dimensions, perhaps even $\mathsf{F}_{\lceil d/2\rceil}$ for $d$-VASS instead of the currently known $\mathsf{F}_d$.

Beyond the semilinearity boundary, a similar comparison arises between $3$-VASS and $6$-VAS.
Both models can have non-semilinear reachability sets~\cite{DBLP:journals/tcs/HopcroftP79}.
However, their complexity analysis is quite different.
Reachability in $3$-VASS has recently been shown to be in \EXPSPACE~\cite{DBLP:journals/corr/abs-2607-14983}, whereas our result places $6$-VAS reachability in $\mathsf{F}_3$, leaving its elementary membership open.

Another direction is to investigate whether polynomial factorization can also help analyze VASS reachability.
Every single-state SCC is itself a VAS, so~\cref{thm:intro_factorization} applies to its reachability relation, replacing it with compositions of mixed LPS and factors with three bounded counters.
It would be interesting to investigate whether these local factorizations can be incorporated into existing methods, such as the KLM algorithm, to improve reachability bounds for VASS.

\subsection{Organization}
\label{subsec:organization}
\cref{sec:tech_overview} provides a technical overview of our results, followed by preliminary definitions and tools in~\cref{sec:preliminary}.
We then establish the generalized transformation area theorem in~\cref{sec:generalized_transformation_areas} and prove the polynomial factorization theorem in~\cref{sec:reachability_factorization}.
\cref{sec:basic_reachability_bounds,sec:semi_coverable,sec:removing_coverability} are devoted to the proof of the polynomial reachability-distance bound for $5$-VAS.

%% file: contents/2-Overview.tex
\section{Technical Overview}
\label{sec:tech_overview}

We now outline the proofs of our two main technical results: the polynomial reachability-distance bound in~\cref{thm:polynomial_witness} and the polynomial factorization theorem in~\cref{thm:intro_factorization}.
The complexity classification in~\cref{thm:main_complexity} follows from the former, while the mixed LPS characterization for $5$-VAS and higher-dimensional complexity results in~\cref{cor:flatness_of_5vas,thm:higher_dimensional_complexity} follow from the latter.
We defer the proofs of these consequences to~\cref{app:overview}.
The two main theorems share a common tool: a generalized transformation-area theorem for VAS with two bounded coordinates encoded into states.
We introduce this tool in~\cref{subsec:overview_ta}, explain how it leads to a polynomial factorization of reachability in~\cref{subsec:overview_factorability}, and focus on the proof of~\cref{thm:polynomial_witness} in~\cref{subsec:overview_witness}.

\subsection{\texorpdfstring{$B$}{B}-Encoding and Transformation Areas}
\label{subsec:overview_ta}

A standard technique for studying VAS is to extract bounded coordinates and encode them into states, thereby reducing the system to a lower-dimensional VASS.
Although the resulting VASS has multiple states, it preserves some freedom to rearrange transitions inherited from the single-state structure of VAS.
We formalize this construction as a $B$-encoding and then extend the classical transformation-area technique of van Leeuwen~\cite{DBLP:conf/stoc/Leeuwen74} and Hopcroft-Pansiot~\cite{DBLP:journals/tcs/HopcroftP79} to this setting.

\paragraph{\texorpdfstring{$B$}{B}-Encoding.}
\label{par:overview_b_encoding}
Consider a run of a $5$-VAS $V$ in which the first two coordinates remain bounded by $B$.
We can store their values in states, while keeping the remaining three coordinates as counters.
For example, a configuration $(u_1,u_2,x_1,x_2,x_3)$ is represented by the state $(u_1,u_2)$ with counter vector $(x_1,x_2,x_3)$.
To define the transitions, write each transition of $V$ as $\Vec a=(\Vec b,\Vec c)\in\ZZ^2\times\ZZ^3$.
For every state $\Vec u\in[0,B]^2$, the encoding contains the transition
\[
    \Vec u\xrightarrow{\Vec c}\Vec u+\Vec b \qquad\text{whenever}\qquad \Vec u+\Vec b\in[0,B]^2.
\]
Thus, $\Vec b$ updates the state, while $\Vec c$ updates the remaining counters, subject to their nonnegativity.
In this way, the resulting $3$-VASS represents exactly the runs of $V$ that respect these bounds.

The same construction applies to any $d$-VAS $V$ and subset $I\subseteq[d]$.
We write $\enc{V}_{I,B}$ for the resulting $(d-|I|)$-VASS and call it a \emph{$B$-encoding}.
For the rest of the discussion, we mainly consider $|I|=2$.
By permuting coordinates, we assume $I=\{1,2\}$, i.e., the first two coordinates, and abbreviate $\enc{V}_{I,B}$ as $\enc{V}_B$, when there is no ambiguity.
The \emph{Parikh image} of a path in $\enc{V}_B$ counts the occurrences of transitions of $V$, treating their copies at different states as the same transition.

\paragraph{Transformation Area.}
\label{par:overview_transformation_area}
Reordering the transitions of a VAS preserves their total effect, but may cause intermediate configurations to become negative.
A \emph{$\ZZ$-run} is a transition sequence interpreted without requiring counter values to remain nonnegative.
Van Leeuwen~\cite{DBLP:conf/stoc/Leeuwen74} introduced \emph{transformation areas}, where every $\ZZ$-run between two configurations can be rearranged into a valid run with the same Parikh image.
He showed that such a region can be obtained by requiring all but one coordinate of the endpoints to be sufficiently large.
Hopcroft and Pansiot~\cite{DBLP:journals/tcs/HopcroftP79} extended this technique to $\ZZ$-runs that are already valid on two bounded coordinates, requiring only the remaining coordinates to be sufficiently large at the endpoints.
We extend this technique to $B$-encodings.

Let $V$ be a $(d+2)$-VAS, and identify each configuration $\Vec p(\Vec x)$ of $\enc{V}_B$ with $(\Vec p,\Vec x) \in \NN^{d+2}$ in $V$.
A set $\mathcal A \subseteq [0, B]^2 \times \NN^d$ is a \emph{generalized transformation area} for $\enc{V}_{B}$ if for every $\Vec{s}, \Vec{t}\in \mathcal A$ and every $\ZZ$-run $\Vec{s}\xrightarrow{\pi}_{\ZZ}\Vec{t}$ in $\enc{V}_{B}$, there is a run $\Vec{s}\xrightarrow{\rho}\Vec{t}$ in $V$ with the same Parikh image as $\pi$.
We focus on generalized transformation areas of the form
\[
    \GTA_C=[0,B]^2\times\{\Vec x\in\NN^d:\Vec x\ge C\cdot\Vec 1\}.
\]
We call $C$ a \emph{transformation-area threshold} for $\enc{V}_B$ if $\GTA_C$ is a generalized transformation area.
We establish in~\cref{thm:generalized_transformation_area} an explicit polynomial bound on this threshold: every $C\ge(d+2)(B+1)^2\norm{T}$ is a transformation-area threshold for $\enc{V}_B$, where $T$ is the transition set of $V$.

\phantomsection
\label{par:overview_peak_states}
The proof builds on the \emph{cycle-relocation} technique in~\cite{DBLP:journals/tcs/HopcroftP79}.
We call a state visited by a run a \emph{peak state} if its first encoded coordinate is maximal among all states visited by the run.
We choose an occurrence of a peak state $\Vec q$ along the given $\ZZ$-run and extract simple cycles from the prefix and suffix around it.
For each extracted cycle, choose a state $\Vec w$ minimizing its second encoded coordinate and rotate the cycle to start there.
We then relocate the cycle to $\Vec q$ by translating each of its encoded states $\Vec z$ to $\Vec q+\Vec z-\Vec w$.
Moreover, all translated states lie in $[0,2B]^2$: the first coordinate remains nonnegative by the choice of $\Vec q$, and the second by the choice of $\Vec w$.
Steinitz's lemma~\cite{Steinitz+1913+128+176} then provides an ordering that keeps the remaining counters nonnegative when the endpoints are sufficiently large.
Hence, our construction guarantees a run in $\enc{V}_{2B}$, which lifts to a run in $V$.

\subsection{Polynomial Factorization of VAS}
\label{subsec:overview_factorability}

We then outline the proof of~\cref{thm:intro_factorization}.
Our proof repeatedly applies a factorization procedure for $B$-encodings.
Each application produces mixed-LPS factors and factors with one additional bounded original coordinate, and only the latter require further processing.
By introducing dummy coordinates with zero transition effects when necessary, three rounds suffice to obtain the desired factorization of an arbitrary $d$-VAS.
We now explain the procedure for a single $B$-encoding.

\subsubsection{Mixed LPS Between Configurations in \texorpdfstring{$\GTA_C$}{the Generalized Transformation Area}}
\label{subsubsec:overview_mixed_lps}
We first explain the easy case: how mixed-LPS factors arise from the generalized transformation area.
Consider a $\ZZ$-run $\Vec p(\Vec x)\xrightarrow{\pi}_{\ZZ}\Vec q(\Vec y)$ in a $B$-encoding $\enc{V}_B$, with both endpoints in $\GTA_C$.
It can be decomposed by removing simple cycles, leaving a short path together with the removed cycles.
This allows us to represent the given $\ZZ$-run by an LPS $\Lambda$ of polynomially bounded size under $\ZZ$-semantics.
To ensure that both endpoints lie in $\GTA_C$, we add a checking path at each end, consisting of two transitions with effects $-C\cdot\Vec{1}$ and $+C\cdot\Vec{1}$ under $\NN$-semantics.
Such a path is enabled precisely when all coordinates are at least $C$.
Combining these checks with $\Lambda$ gives a mixed LPS, whose reachability relation is contained in that of the original VAS $V$.

However, $\GTA_C$ alone is insufficient to factor arbitrary runs.
To see the difficulty, consider a $5$-VAS $V$ and its $B$-encoding $\enc{V}_B$, which is a $3$-VASS.
At each state of the encoding, the transformation area corresponds to the region $C\cdot\Vec{1}+\NN^3$.
Its complement consists of three overlapping regions near the coordinate planes, each defined by
\[
    \mathcal A_i:=\{\Vec{x}\in\NN^3:\Vec{x}(i)<C\},
    \qquad i\in[3].
\]
A run that stays entirely within one such region can be reduced to a $2$-VASS by encoding the corresponding bounded coordinate into states.
However, a run may alternate between these regions arbitrarily many times.
Simply splitting a run according to which coordinate remains bounded does not yield a bound on the number of factors.

\subsubsection{Factorization of Arbitrary Runs via High Positions}
\label{subsubsec:overview_high_positions}

To bound the number of factors, we develop an iterative factorization procedure in~\cref{sec:reachability_factorization}.

\paragraph{Rackoff's Extraction and High Positions.}
\label{par:overview_rackoff_extraction}
The first ingredient is Rackoff's extraction lemma~\cite{DBLP:conf/lics/LerouxS19,DBLP:conf/concur/ChenFZ26}.
For a fixed-dimensional VASS $U$ and a threshold $C$, there exists $H=(C+\Size(U))^{\bigO(1)}$ such that if every counter reaches at least $H$ somewhere along a run $\pi$, then deleting cycles from $\pi$ yields a run of length at most $H$ that ends in the same state with all counters at least $C$.

We choose $C$ to be a transformation-area threshold for $\enc{V}_{2B}$ and let $H$ be given by Rackoff's extraction.
Consider a run $\pi$ in $\enc{V}_B$.
We call a position of $\pi$ \emph{high} if every counter reaches at least $H$ somewhere in both the prefix ending at that position and the suffix starting there.
At such a position with state $\Vec r$, applying Rackoff's extraction to the prefix and the reversed suffix yields two short runs $\Vec p(\Vec x)\xrightarrow{\alpha}\Vec r(\Vec u)$ and $\Vec r(\Vec v)\xrightarrow{\beta}\Vec q(\Vec y)$ where $\Vec u,\Vec v\ge C\cdot\Vec 1$.
Thus, both intermediate endpoints lie in $\GTA_C$, at the same state $\Vec r$.
If $\Vec r$ is also a peak state, the deleted cycles can be relocated to $\Vec r$ to obtain a $\ZZ$-run $\Vec r(\Vec{u}) \xrightarrow{*}_\ZZ \Vec{r} (\Vec{v})$ in $\enc{V}_{2B}$.
Then the mixed-LPS construction described above allows us to represent $\pi$ by a mixed LPS with polynomially bounded parameters.

\paragraph{High-Interval Recursion.}
\label{par:overview_high_interval_recursion}
The high positions of a run form a contiguous interval, which determines a possibly empty middle sub-run $\pi_{\mathrm{mid}}$.
The prefix and suffix outside $\pi_{\mathrm{mid}}$ each have a coordinate bounded by $H+\Size(V)$, where the additional $\Size(V)$ accounts for the boundary transitions.
It suffices to proceed with the sub-run $\pi_{\mathrm{mid}}$, which contains no peak state of $\pi$.
Thus, the first encoded coordinate of a peak state of $\pi_{\mathrm{mid}}$ is strictly smaller than that of $\pi$.
Since this value lies in $[0,B]$, the procedure terminates after at most $B+1$ steps, producing at most $2B+2$ factors.

\subsection{Polynomial Reachability Distance in 5-VAS}
\label{subsec:overview_witness}

We now turn to~\cref{thm:polynomial_witness}, whose proof may exploit the given source $\Vec{s}$ and target $\Vec{t}$, unlike the preceding factorization argument.
Our starting point is the characterization of Chen, Fu, and Zheng~\cite{DBLP:conf/concur/ChenFZ26}, which reduces the polynomial reachability-distance problem for $5$-VAS to two classes of instances.
The characterization relies on the \emph{cycle-dimension} of a VASS $V$, defined as the dimension of the space $\CycleSpace(V)$ spanned by the effects of its simple cycles.

\paragraph{The CFZ Characterization.}
\label{par:overview_cfz_characterization}
We recall the characterization specialized to dimension five.
For every reachable $5$-VAS instance $(V,\Vec s,\Vec t)$, their characterization provides a reachable VASS instance $(V',s',t')$ of polynomial size such that $\MinLen_V(\Vec s,\Vec t)\le\MinLen_{V'}(s',t')$.
The new instance has one of two forms.
\begin{caselist}
    \item
    $V'$ consists of two $5$-VASS of cycle-dimension at most $2$, connected by a zero-effect transition.
    \item
    $V'$ is a single $B$-encoding $\enc{V}_B$, where $B\le\Size(V,\Vec s,\Vec t)^{\bigO(1)}$.
\end{caselist}

It therefore suffices to establish polynomial reachability-distance bounds for these two classes.
The first case can be handled using existing hybrid-set results for VASS of cycle-dimension at most two.
The second case presents the main challenge, as its SCCs may have full three-dimensional cycle spaces.
We focus on this case after explaining how to handle the first.

\paragraph{Case I: Hybrid-Set Representations.}
\label{par:overview_hybrid_representations}

VASS of cycle-dimension at most two have been studied by Zheng~\cite{DBLP:conf/concur/Zheng25,DBLP:conf/focs/0001CMOSW24,DBLP:conf/icalp/CzerwinskiJ0O25}.
These techniques were further developed in~\cite{DBLP:journals/corr/abs-2607-14983}, yielding efficient hybrid-set representations and short reachability-distance results for geometrically $2$-dimensional $3$-VASS.
A \emph{hybrid set} has the form 
\[
    H=\Vec b+P^{\langle\Vec c,\cdot\rangle\le K} := \{\Vec{b} + P\Vec{\lambda}: \Vec\lambda\in\NN^m \text{ and }\Inner{\Vec{c}, \Vec{\lambda}} \le K\},
\]
where $\Vec b$ is its base, $P=\{\Vec p_1,\ldots,\Vec p_m\}$ is a set of nonnegative periods, and $\Vec c \in \NN^m$ assigns a cost to each period.
The parameter $K$ bounds the total cost of the selected period occurrences.
A period is \emph{free} if its cost is zero and \emph{restricted} otherwise.
We measure the size of a hybrid-set representation by its \emph{representation norm} $\hnorm(H):=\max\{\norm{\Vec b},\norm P,\norm{\Vec c}\}$, which excludes $K$.

Let $V_1$ and $V_2$ denote the two components in Case I, which may encode different coordinates.
Existing hybrid-set results~\cite{DBLP:journals/corr/abs-2607-14983} describe the configurations reachable at the bridge through $V_1$.
To connect with $V_2$, we lift a hybrid set containing the original bridge configuration to $\NN^5$, fix the two coordinates encoded by $V_2$, and project onto the remaining three coordinates.
This requires absorbing some period occurrences into the base, which increases the representation norm by at most a polynomial factor.
Applying the reachability-distance theorem for geometrically $2$-dimensional $3$-VASS~\cite{DBLP:journals/corr/abs-2607-14983} to $V_2$ and $V_1$ yields a polynomial-length run through the bridge.

\paragraph{Case II: Main Challenge in \texorpdfstring{$B$}{B}-Encodings.}
\label{par:overview_encoding_case}
We now consider a reachable instance in the $3$-VASS $W=\enc{V}_B$.
Fixing the SCCs and connecting transitions (called \emph{bridges}) visited by a witnessing run $\pi$ gives a \emph{sequential sub-VASS}
\[
    U=(U_1)u_1(U_2)\cdots u_{k-1}(U_k).
\]
Note that these SCCs may have cycle-dimension three, and their number $k \le (B+1)^2$ is not bounded by a constant.
Throughout the analysis of Case II, we write $s=\Vec p(\Vec x)$ and $t=\Vec q(\Vec y)$ for the source and target configurations in $U$, where $\Vec p,\Vec q\in[0,B]^2$ are states of the encoding.
Their corresponding configurations in the original VAS $V$ are $\Vec s=(\Vec p,\Vec x)$ and $\Vec t=(\Vec q,\Vec y)$.

We begin with the basic cases, namely non-wide and coverable instances, for which polynomial-length witnesses can be obtained directly.
The remaining cases are handled by \emph{additive growth induction} over the number of SCCs $k$ in~\cref{subsubsec:overview_additive_growth}.

\subsubsection{Basic Cases: Wideness and Coverability}
\label{subsubsec:overview_basic_cases}

Recall that the cycle space $\CycleSpace(U)$ is spanned by the effects of simple cycles in $U$.
However, since cycles can only be used with nonnegative multiplicities, the cycle space does not capture which directions can be generated by their effects.
We therefore consider the \emph{cone} $\Cone(U)$, generated by the effects of simple cycles with nonnegative rational coefficients.
We call $U$ \emph{wide} if $\Cone(U)=\QQ^3$ and \emph{non-wide} otherwise.

\paragraph{Non-Wide Decomposition.}
\label{par:overview_nonwide_decomposition}
Suppose that $\Cone(U)\ne\QQ^3$.
Geometrically, the cycle cone must lie in a closed half-space through the origin.
By a separating-hyperplane argument as in~\cite{DBLP:conf/icalp/CzerwinskiJ0O25,DBLP:conf/concur/ChenFZ26}, there exists a nonzero integer vector $\Vec h$ of polynomial norm such that $\Inner{\Vec h,\Eff(\theta)}\ge0$ for every simple cycle $\theta$ in $U$.
Along any run from $s$ to $t$, deleting simple cycles from the prefix and suffix around an intermediate configuration $r(\Vec z)$ shows that $\Inner{\Vec h,\Vec z}$ is polynomially bounded in $\Size(U,s,t)$.
All intermediate configurations of such runs are therefore confined to polynomially many parallel affine planes of the form $\Inner{\Vec h,\Vec z}=m$.

By recording this value in states, we obtain a polynomial-size $3$-VASS whose cycle effects all lie in the two-dimensional space $\Vec h^\perp$.
The resulting VASS has cycle-dimension at most two, so the hybrid-set techniques from Case I yield a polynomial-length witness.

\paragraph{Coverability in the Wide Case.}
\label{par:overview_wide_coverability}
Suppose that $U$ is wide, namely $\Cone(U)=\QQ^3$.
Then every cycle effect can be canceled by inserting other cycles.
If the source and target can be moved into $\GTA_C$, respectively, such cancellation yields a $\ZZ$-run between the new endpoints.
Shortening this $\ZZ$-run and applying the transformation-area theorem then gives a polynomial-length run.

Let $C$ be a transformation-area threshold for $\enc{V}_B$.
We say that the source $s=\Vec p(\Vec x)$ is \emph{forward $C$-coverable} if $s\xrightarrow{*}\Vec p(\Vec u)$ for some $\Vec u\ge C\cdot\Vec 1$.
Symmetrically, the target $t=\Vec q(\Vec y)$ is \emph{backward $C$-coverable} if $\Vec q(\Vec v)\xrightarrow{*}t$ for some $\Vec v\ge C\cdot\Vec 1$.
We call the instance \emph{$C$-coverable} if both conditions hold.

When $U$ is wide and $C$-coverable, we first extract short covering runs at both endpoints by Rackoff's covering bound~\cite{DBLP:journals/tcs/Rackoff78}.
By canceling their effects, we obtain a $\ZZ$-run connecting two configurations in $\GTA_C$.
Since $\ZZ$-reachability can be characterized using linear equations, Pottier's theorem~\cite{DBLP:conf/rta/Pottier91} yields a polynomial-length $\ZZ$-run.
The generalized transformation-area theorem then converts it into a valid run in the original VAS without changing its length.
Together with the covering runs, this establishes the polynomial reachability-distance bound in~\cref{thm:polynomial_witness}.

\subsubsection{Bounded Prefixes and Additive-Growth Induction}
\label{subsubsec:overview_additive_growth}

We remove the remaining $C$-coverability assumptions in two stages while preserving a uniform polynomial bound.
We first consider instances whose source is forward $C$-coverable or whose target is backward $C$-coverable, which we call \emph{semi-$C$-coverable}.
Since transformation-area thresholds are preserved under reversal, we may assume that the target is backward $C$-coverable.
We then remove this final assumption using the same additive-growth induction.
The two stages are proved in~\cref{sec:semi_coverable,sec:removing_coverability}, respectively.

\paragraph{Main Obstacle.}
\label{par:overview_main_obstacle}

Assume that $\Vec{p}(\Vec{x}) \xrightarrow{\pi} \Vec{q}(\Vec{y})$ is a run in $U=(Q,T)$.
The wide coverable cases have already been handled.
Rackoff's extraction implies that, if $\Vec{p}(\Vec{x})$ is not forward $C$-coverable, some coordinate $i\in[3]$ stays below a polynomial threshold $D$ along the segment of $\pi$ in the first SCC.
Encoding this coordinate into states yields a $2$-VASS, whose reachable configurations can be described by hybrid sets as in Case I.
Following the approach of~\cite{DBLP:conf/icalp/CzerwinskiJ0O25,DBLP:journals/corr/abs-2607-14983}, one could adjoin the periods of a selected hybrid set to the next component and continue with fewer SCCs.

However, polynomial bounds for individual reduction steps do not suffice when they are repeatedly composed.
Even a recurrence as simple as $M_{j+1}\le M_j^c$ yields $M_k \le M_0^{c^k}$.
Since the number $k$ of SCCs is part of the input, the resulting polynomial degree is not fixed.
Thus, we need to avoid repeated polynomial composition altogether.

\paragraph{Bounded Coordinate Extraction.}
\label{par:overview_bounded_coordinate_extraction}
Starting from the bounded coordinate found in the first SCC, we extend the prefix across successive SCCs while keeping the same coordinate $i$ below $D$.
The prefix either reaches the target, in which case the entire run can be handled by a $2$-VASS, or stops at the entry of the first later SCC $U_{\iota+1}$ where coordinate $i$ becomes large along the witnessing run.
In the latter case, we obtain
\begin{equation}
	\label{eq:bounded_cut}
	s = \Vec{p}(\Vec{x}) \xrightarrow{\pi_0} \Vec{r}(\Vec{z}) \xrightarrow{\pi_1} \Vec{q}(\Vec{y}) = t.
\end{equation}
Since coordinate $i$ remains below $D$ throughout $\pi_0$, the entire prefix is captured by a single $2$-VASS.
Moreover, the stopping condition ensures that every $\Vec w\in\NN^3$ satisfying
\begin{equation}\label{eq:prefix_coverable_condition}
	\Vec w(i)\ge\Vec z(i),
	\qquad
	\Vec w(j)\ge D
	\quad\text{for all }j\ne i
\end{equation}
gives a forward $C$-coverable configuration $\Vec r(\Vec w)$ in the next SCC $U_{\iota+1}$. 
This suggests a way to reach the transformation area.
If we can modify the bounded prefix $\pi_0$ to raise the other two coordinates of its endpoint above $D$, while keeping coordinate $i$ at least $\Vec z(i)$, then the next SCC admits a run into $\GTA_C$.
The challenge is to achieve this modification while preserving a connection to the target $t$.

\paragraph{Additive-Growth Induction.}
\label{par:overview_additive_growth}
We induct on the number $k$ of SCCs using the partition in~\eqref{eq:bounded_cut}.
Fix a threshold $R=D^c\ge D$, where the constant $c$ will be chosen below.
We call the prefix $\pi_0$ \emph{$R$-small} if $\norm{\Vec z}\le\norm{\Vec x}+R$, and \emph{$R$-large} otherwise.
Let $\Vec u=(\Vec r,\Vec z)$ be the configuration of $V$ corresponding to the endpoint of the prefix.
Each step has one of the following outcomes:
\begin{description}
    \item[Terminal Case.]
    We obtain a polynomial-length run to the target directly.
    This is already possible when $U$ is non-wide, the source is forward $C$-coverable, or the bounded prefix reaches the target.
    We will show that every $R$-large prefix also leads to a terminal case.
    \item[Inductive Case.]
    The prefix $\pi_0$ is $R$-small, and hence $\norm{\Vec u}\le\norm{\Vec s}+R$.
	The new instance consists of the suffix $(U_{\iota+1})u_{\iota+1}\cdots u_{k-1}(U_k)$, with source $\Vec u$ and target $\Vec t$.
\end{description}

Fix $V,B,C,\Vec t$, put $M:=D+\norm{\Vec t}$, and let $L_k(N)$ bound the reachability distance for reachable sequential subinstances with at most $k$ SCCs, source norm at most $N$, and backward $C$-coverable target $\Vec t$.
A single nondecreasing polynomial $P$ bounds both the terminal cases and the cost of replacing the bounded prefix, giving
\begin{equation*}
    L_1(N)\le P(M+N),
    \qquad
    L_{k+1}(N)\le P(M+N+R)+L_k(N+R).
\end{equation*}
Thus $L_k(N)\le kP(M+N+kR)$.
Since $k\le(B+1)^2$ and $R=D^c$, where $c$ and the degree of $P$ are independent of $k$, this yields the uniform polynomial bound in~\cref{thm:polynomial_witness}.
The full induction is given in~\cref{sec:semi_coverable}.
It remains to show that $R$ can be chosen appropriately so that every $R$-large prefix falls into a terminal case.

\subsubsection{Translated Families and \texorpdfstring{$R$}{R}-Large Dichotomy}
\label{subsubsec:translated_family_overview}

Suppose that the third coordinate remains below $D$ along $\pi_0$, while $\Vec z(2)>\norm{\Vec x}+R$.
By~\eqref{eq:prefix_coverable_condition}, $\Vec r(\Vec z)$ is forward $C$-coverable whenever $\Vec z(1)\ge D$.
We would like to increase the deficient first coordinate to obtain $C$-coverability, but changing the midpoint arbitrarily may destroy its connection to the target.
We exploit the cycles responsible for the large growth of the second coordinate.
Deleting different numbers of repetitions of a mixed-sign cycle produces a family of translated midpoints, providing the freedom needed to recover the connection to the target.

\begin{example}
	\label{ex:translated_family}
	Suppose a prefix $\Vec p(\Vec x)\xrightarrow{\pi_0}\Vec r(\Vec z)$ is $R$-large, with $\Vec z(3)<D$ and $\Vec z(2)>\norm{\Vec x}+R$.
	Assume that $\pi_0$ contains $\kappa$ repetitions of a cycle $\theta$ with $\Eff(\theta)=(-2,3,0)$.
	Deleting $r$ copies of $\theta$ moves the midpoint to
	\begin{equation*}
		\Vec{z}^{(r)}=\Vec{z}+r\cdot(2,-3,0).
	\end{equation*}

	To raise both coordinates to $D$, we need $r\ge r_{\min} := \lceil(D-\Vec z(1))/2\rceil$ and $r\le r_{\max} :=\lfloor(\Vec z(2)-D)/3\rfloor$.
    In particular, to obtain at least $\lambda$ translations by deleting $\theta$, it suffices that $\min\{r_{\max},\kappa\}-\max\{r_{\min},0\}\ge\lambda-1$.
	Since $r_{\min}\le\lceil D/2\rceil$ and $\Vec z(2)\ge R$, this is guaranteed, for example, by
	\begin{equation*}
		\kappa\ge\left\lceil\frac D2\right\rceil+\lambda-1,
		\qquad
		R\ge D+3\kappa.
	\end{equation*}
	Thus, sufficiently large $\kappa$ and $R$ provide at least $\lambda$ candidate translated midpoints with both coordinates at least $D$, giving the freedom needed for the subsequent argument.
\end{example}

This example illustrates how deleting cycle repetitions changes the endpoint; the formal proof will show that the modified paths are valid runs.
More importantly, all candidate midpoints differ by translations along the same direction.
We extend this idea to hybrid sets by considering a family
\begin{equation*}
	H_r=\Vec b+r\cdot \Vec \delta+P^{\Inner{\Vec c,\cdot}\le K},
\end{equation*}
where the periods $P$, their costs $\Vec c$, and $K$ remain unchanged, while the base moves along $\Vec\delta$.
We call such a family $\{H_r\}_{r=0}^\kappa$ a \emph{translated family}.

\paragraph{Terminal Case: Translated Family.}
\label{par:overview_translated_terminal}
Consider a translated family $\{H_r\}_{r=0}^\kappa$, where $H_r=\Vec b+r\cdot\Vec\delta+P^{\langle\Vec c,\cdot\rangle\le K}$ and $\Vec\delta=(u,-v,0)$ with $u,v\ge1$ and $\Vec a(3)=0$ for every $\Vec a\in P$.
Suppose that $\Vec z\in H_0$ reaches the backward $C$-coverable target through the remaining suffix.
Assume also the stopping condition from~\eqref{eq:prefix_coverable_condition}, with coordinate $3$ bounded.

The key is to choose a translation $r\cdot\Vec \delta$ that both makes the translated source forward $C$-coverable and preserves its $\ZZ$-reachability to the target.
For sufficiently large $\kappa$, some $r\le D$ raises the first coordinate of $\Vec b+r\cdot\Vec\delta$ above $D$, while the nonnegativity of $\Vec b+\kappa\cdot\Vec\delta$ ensures that its second coordinate remains at least $(\kappa-r)v\ge D$.
The stopping condition therefore gives forward $C$-coverability.

To preserve the $\ZZ$-reachability, consider the VASS $U[P]$ augmented with the periods $P$ as self-loops at the source state.
If $U[P]$ is wide, one can choose three linearly independent cycle effects and let $\nu$ be the absolute value of their determinant.
Rounding this index upward to a multiple $r$ of $\nu$ makes $r\cdot\Vec\delta$ an integer combination of these three cycle effects, while the remaining translation range preserves $C$-coverability.
Cycle cancellation then connects the translated base to the target.
The non-wide case follows from the non-wide decomposition.
The resulting short run may use added periods whose total cost exceeds $K$.
In that case, $K\le \norm{\Vec c}\cdot \ell$, where $\ell$ is the length of the short run, so we can absorb the restricted periods into the base and then repeat the argument.
Consequently, for $\kappa\ge\kappa_0=(D+\norm P)^{\bigO(1)}$, some $r\in[0,\kappa]$ and $\Vec h\in H_r$ satisfy
\begin{equation*}
    \norm{\Vec h}, \MinLen_V\bigl((\Vec r,\Vec h),\Vec t\bigr)
    \le (D+\norm{\Vec t}+\hnorm(H_0)+\norm{\Vec\delta})^{\bigO(1)}.
\end{equation*}
The full joining argument is given in~\cref{subsec:joining_translated}.

\paragraph{The \texorpdfstring{$R$}{R}-Large Dichotomy.}
We have shown that a translated family $\{H_r\}_{r=0}^\kappa$ with sufficiently large $\kappa$ yields a terminal case.
We now explain how an $R$-large prefix produces either such a family or a hybrid set with a nonzero free period.
In the bounded-coordinate $2$-VASS, large growth must arise from repeated cycles in an LPS.
If a used cycle $\beta$ has nonzero nonnegative effect, repeating it provides a free period.
Otherwise, large growth in coordinate $k$ forces many repetitions of a mixed-sign cycle $\beta$ with effect $(-u,v)$, where $u,v\ge1$.
Deleting its copies shifts the remaining suffix along $\Vec\delta=(u,-v)$.
These two operations are illustrated in~\cref{fig:r_large_dichotomy}.

\begin{figure}[h]
	\centering
	\input{figures/r_large_dichotomy.tikz}
	\begin{subfigure}[b]{.475\linewidth}
		\centering
		\resizebox{\linewidth}{!}{\input{figures/r_large_free_period.tikz}}
		\caption{Free-period outcome.}
		\label{fig:r_large_free_period}
	\end{subfigure}
	\hfill
	\begin{subfigure}[b]{.475\linewidth}
		\centering
		\resizebox{\linewidth}{!}{\input{figures/r_large_translated_family.tikz}}
		\caption{Translated-family outcome.}
		\label{fig:r_large_translated_family}
	\end{subfigure}
	\caption{The $R$-large dichotomy, projected onto coordinates $j,k$.
	(a) Repeating a cycle $\beta$ with nonzero nonnegative effect $\Vec a$ produces a free period.
	(b) Deleting copies of a mixed-sign cycle $\beta$ shifts the reachable endpoints along $\Vec\delta=-\Eff(\beta)$, producing a translated family.
	The shapes representing $H_0,H_1,\ldots$ illustrate translations of a common hybrid set.}
	\label{fig:r_large_dichotomy}
\end{figure}

This leads to the \emph{$R$-large dichotomy}: for any $\kappa\ge1$, we can choose $R\le(D+\kappa)^{\bigO(1)}$ such that the prefix yields either a hybrid set $H$ with a free period and $\Vec z\in H$, or a translated family $\{H_r\}_{r=0}^{\kappa}$ with $\Vec z\in H_0 = \Vec b+P^{\Inner{\Vec c,\cdot}\le K}$.
Crucially, $\kappa$ can be increased arbitrarily without increasing the polynomial bounds on the periods, costs, or $\norm{\Vec\delta}$; only the required $R$ and $\norm{\Vec b}$ may depend on $\kappa$.
Every member $\Vec h$ of either outcome admits a prefix witness from $\Vec p(\Vec x)$ of length polynomial in $D+\norm{\Vec x}+\norm{\Vec h}$.

The remaining difficulty is to ensure that deleting repetitions of $\beta$ preserves nonnegativity.
We choose the last cycle increasing the second coordinate that is repeated at least $\kappa$ times.
Deleting its repetitions increases the first coordinate, while the large final value of the second coordinate provides enough reserve to keep the remaining suffix valid.
Indeed, all later cycles increasing the second coordinate occur fewer than $\kappa$ times, so their total positive contribution is bounded in terms of $\kappa$ and the LPS parameters.
Using hybrid-set descriptions twice, before and after the selected cycle, we can represent all deletion choices by a single translated family with common $P$, $\Vec c$, and $K$.

\paragraph{Terminal Case: Pumpable Sources.}
\label{par:overview_pumpable_terminal}
A configuration $\Vec p(\Vec x)$ is \emph{pumpable} if it admits a cycle reaching $\Vec p(\Vec x')$ for some $\Vec x'\ge\Vec x$ with $\Vec x'\ne\Vec x$.
We show in~\cref{subsec:pumpable_sources} that, if the source is pumpable and the target is backward $C$-coverable, then
\begin{equation*}
    \MinLen_V(\Vec s,\Vec t)\le\bigl(B+C+\Size(V,\Vec s,\Vec t)\bigr)^{\bigO(1)}.
\end{equation*}

We focus on the coordinates not increased by the initial pumping cycle.
When the source is not forward $C$-coverable, bounded-coordinate extraction yields a prefix in which these coordinates remain bounded in alternating segments.
If one coordinate remains bounded throughout the prefix, we can apply $2$-VASS techniques.
Otherwise, a hill-cutting argument yields two ordered cycles whose effects, together with the initial pumping cycle, can be combined into a strictly positive vector.
We then use an auxiliary-VASS construction to ensure that this vector can be added as a free period to a hybrid representation of the reachable configurations, which can then be joined to the backward $C$-coverable target using the coverable-case argument.

\paragraph{Uniform Choice of \texorpdfstring{$\kappa$}{K} and \texorpdfstring{$R$}{R}.}
\label{par:overview_uniform_parameters}

The preceding arguments give
\[
    \norm P\le S\le D^{\bigO(1)},\qquad
    \kappa_0\le(D+\norm P)^{\bigO(1)},\qquad
    R\le(D+\kappa)^{\bigO(1)}.
\]
Here $S$ bounds the period norms of all translated families extracted by the $R$-large dichotomy.
Crucially, this bound is independent of $\kappa$, $R$, and the current source.
We can therefore choose the parameters in the order
\[
    D
    \xrightarrow{\text{\cref{lem:large_prefix_dichotomy}}} S
    \xrightarrow{\text{\cref{lem:translated_join}}} \kappa
    \xrightarrow{\text{\cref{lem:large_prefix_dichotomy}}} R,
\]
where each arrow represents a polynomial dependence on the previously chosen parameters.
Recall also that $B$, $C$, and $D$ are polynomially bounded in the original instance size $\Size(V,\Vec{s},\Vec{t})$.
All these parameters remain polynomial in $\Size(V,\Vec{s},\Vec{t})$ and can be fixed \emph{before the induction begins}.

\subsubsection{General Case}
\label{subsubsec:overview_removing_coverability}
We now establish the polynomial reachability-distance bound without any coverability assumption.
Our approach is to extend the additive-growth induction developed for semi-$C$-coverable instances.
The bounded-prefix extraction and the $R$-large dichotomy remain valid for arbitrary targets.
Moreover, combining the pumpability argument with the semi-$C$-coverable result allows us to handle pumpable sources and hybrid sets with a nonzero free period without target coverability.
Thus, the only remaining difficulty is an $R$-large prefix yielding a translated family $\mathcal J$.

Our construction ensures that some translation of $\mathcal J$ is forward $C$-coverable.
By reversing the remaining suffix, we can therefore regard $\mathcal J$ as a fixed target family with a backward $C$-coverable translation.
We then repeat the additive-growth induction from an arbitrary source, keeping $\mathcal J$ fixed.
The main difficulty arises when an $R$-large prefix produces another translated family $\mathcal H$.
Our construction guarantees that the two families are \emph{$C$-joinable}: some members of $H_0$ and $J_0$ are connected by the original run, and both families admit $C$-coverable translations at indices at most $D$, forward for $\mathcal H$ and backward for $\mathcal J$.
We show that $C$-joinability suffices to obtain a polynomial-length run between suitable members of the two families.

The same additive recurrence thus yields a polynomial reachability-distance bound for Case II of the CFZ characterization, completing the proof of~\cref{thm:polynomial_witness}.

%% file: figures/r_large_dichotomy.tikz
\definecolor{rldpath}{RGB}{35,91,153}
\definecolor{rldcycle}{RGB}{32,119,88}
\definecolor{rldshift}{RGB}{28,118,151}
\definecolor{rlddelete}{RGB}{180,87,47}
\definecolor{rldcover}{RGB}{151,118,57}
\definecolor{rldlarge}{RGB}{75,125,151}
\tikzset{
	rld axis/.style={black!75,line width=.55pt,-{Stealth[length=2.7pt,width=2pt]}},
	rld path/.style={rldpath,line width=.55pt,postaction={decorate},
		decoration={markings,mark=at position .46 with {\arrow{Stealth[length=3.5pt,width=2.8pt]}}}},
	rld cycle/.style={rldcycle,line width=.55pt,postaction={decorate},
		decoration={markings,mark=at position .55 with {\arrow{Stealth[length=3.5pt,width=2.8pt]}}}},
	rld translated/.style={rldshift,line width=.55pt,dashed,opacity=.65,
		postaction={decorate},decoration={markings,mark=at position .55 with {\arrow{Stealth[length=3.5pt,width=2.8pt]}}}},
	rld deleted/.style={rlddelete,line width=.55pt,dashed,postaction={decorate},
		decoration={markings,mark=at position .52 with {\arrow{Stealth[length=3.5pt,width=2.8pt]}}}},
	rld effect/.style={line width=.55pt,-{Stealth[length=3.2pt,width=2.4pt]}},
	rld hybrid/.style={rlddelete,line width=.55pt,-{Stealth[length=2.5pt,width=1.8pt]}},
	rld point/.style={circle,fill=black,inner sep=1.05pt},
	rld label/.style={fill=none,inner sep=1pt},
}

\newcommand{\rldbackground}{
	\path[use as bounding box] (-.65,-1.15) rectangle (11.6,11.25);
	\fill[black!2] (0,0) rectangle (3,3);
	\draw[gray!50,line width=.55pt] (0,3)--(3,3)--(3,0);
	\draw[gray!65,densely dashed,line width=.55pt] (0,7)--(7,7)--(7,0);
	\draw[rld axis] (0,0)--(10.9,0) node[right] {$x_j$};
	\draw[rld axis] (0,0)--(0,10.95) node[above] {$x_k$};
	\node[rld label,below left] at (0,0) {$0$};
	\node[rld label,left=3pt] at (0,3) {$\norm{\Vec x}$};
	\draw[gray!55,line width=.55pt] (3,-.1)--(3,-.75) (7,-.1)--(7,-.75);
	\draw[black!65,line width=.55pt,{Stealth[length=2.2pt,width=1.6pt]}-{Stealth[length=2.2pt,width=1.6pt]}]
		(3,-.55)--(7,-.55) node[midway,below=2pt,rld label] {$R$};
}

\newcommand{\rldhybridshape}{
	\fill[rlddelete!12] (.12,.15)--(.39,1.50)--(1.45,.385)--cycle;
	\draw[rld hybrid] (.12,.15)--(.39,1.50);
	\draw[rld hybrid] (.12,.15)--(1.45,.385);
}

\newcommand{\rldpositivecycle}{
	(0,0) .. controls (.9,-.15) and (1.35,.55) .. (.65,.95)
	.. controls (.25,1.35) and (-.1,.95) .. (0,1.35)
}
\newcommand{\rldpositivesuffix}{
	(5.65,4.5) .. controls (5.7,5.4) and (4.4,5.1) .. (3.8,5.75)
	.. controls (3.2,6.5) and (1.1,6.65) .. (1.3,8.1)
}
\newcommand{\rldmixedcycle}{
	(0,0) .. controls (.25,.38) and (-.2,.55) .. (-.55,.25)
	.. controls (-.9,-.05) and (-1.35,.25) .. (-1.2,.65)
}
\newcommand{\rldmixedsuffix}{
	(1.5,5.9) .. controls (1.95,6.55) and (1.05,7.2) .. (1.3,8.1)
}

%% file: figures/r_large_free_period.tikz
\begin{tikzpicture}[x=.5cm,y=.5cm,font=\scriptsize]
	\fill[rldlarge!7] (0,7) rectangle (10.7,10.85);
	\fill[rldlarge!7] (7,0) rectangle (10.7,7);
	\rldbackground
	\node[rldlarge,rld label,anchor=north east] at (10.55,10.6) {$R$-large};
	\coordinate (rldx) at (3,1.15);
	\coordinate (rldz) at (1.3,8.1);
	\coordinate (rldza) at (1.3,9.45);

	\draw[rld path] (rldx) .. controls (3.45,1.3) and (3.8,1.05) .. (4.5,1.65)
		.. controls (5.35,2.15) and (5.2,2.85) .. (5.65,3.15);
	\begin{scope}[shift={(5.65,3.15)}]
		\draw[rld cycle] \rldpositivecycle;
	\end{scope}
	\draw[rld path] \rldpositivesuffix;

	\begin{scope}[shift={(5.65,4.5)}]
		\draw[rld cycle,opacity=.65] \rldpositivecycle;
	\end{scope}
	\begin{scope}[shift={(0,1.35)}]
		\draw[rld translated] \rldpositivesuffix;
	\end{scope}
	\begin{scope}[shift={(5.65,5.85)}]
		\draw[rld cycle,opacity=.2] \rldpositivecycle;
	\end{scope}
	\draw[rldcycle!45,dotted,rld effect] (5.65,7.35)--(5.65,8.2);

	\foreach \p in {(5.65,3.15),(5.65,4.5),(5.65,5.85)} {
		\node[rld point,fill=rldcycle] at \p {};
	}
	\node[rldcycle,rld label] at (5,3.2) {$\beta$};
	\draw[rldcycle,rld effect,dashed] (5.65,3.15)--(5.65,4.5)
		node[midway,right,rld label] {$\Vec a$};
	\draw[rldcycle!70,rld effect,dashed] (5.65,4.5)--(5.65,5.85)
		node[midway,right,rld label] {};

	\draw[rldshift,rld effect,dashed] (rldz)--(rldza)
		node[midway,right=5pt,rld label] {$\Vec a$};
	\draw[rldshift!65,dotted,rld effect] (1.3,9.65)--(1.3,10.3);
	\node[rldshift,rld label,right=3pt] at (1.3,10.3) {$\Vec z+\Vec a^\ast$};
	\node[rld point] at (rldx) {};
	\node[rld label,below right=2pt] at (rldx) {$\Vec x$};
	\node[rld point] at (rldz) {};
	\node[rld label,left=3pt] at (rldz) {$\Vec z$};
	\node[rld point,fill=rldshift] at (rldza) {};
\end{tikzpicture}

%% file: figures/r_large_translated_family.tikz
\begin{tikzpicture}[x=.5cm,y=.5cm,font=\scriptsize]
	\fill[rldcover!5] (2,2) rectangle (10.7,10.6);
	\rldbackground
	\node[rldcover,rld label,anchor=north east] at (10.55,10.4) {$\ge(D,D)$};
	\draw[rldcover!65,densely dotted,line width=.55pt] (2,0)--(2,10.6) (0,2)--(10.7,2);
	\node[rldcover,rld label,left=3pt] at (0,2) {$D$};
	\node[rldcover,rld label,below=4pt] at (2,0) {$D$};
	\coordinate (rldx) at (3,1.15);
	\coordinate (rldz) at (1.3,8.1);

	\draw[rld path] (rldx) .. controls (4,1.05) and (4.2,2.6) .. (5.2,2.4)
		.. controls (5.9,2.25) and (6.6,2.7) .. (6.3,3.3);
	\foreach \s in {0,1,2,3} {
		\begin{scope}[shift={({6.3-1.2*\s},{3.3+.65*\s})}]
			\draw[rld cycle] \rldmixedcycle;
			\draw[rldcycle,rld effect,dashed] (0,0)--(-1.2,.65);
		\end{scope}
	}
	\draw[rld path] \rldmixedsuffix;

	\foreach \s/\fade in {1/.7,2/.45,3/.25} {
		\begin{scope}[shift={({1.2*\s},{-.65*\s})}]
			\draw[rld deleted,opacity=\fade] \rldmixedsuffix;
		\end{scope}
		\node[rld point,fill=rlddelete]
			at ({1.3+1.2*\s},{8.1-.65*\s}) {};
	}
	\node[rldcycle,rld label] at (5.75,4.25) {$-\Vec\delta$};
	\node[rldcycle,rld label] at (5.05,3.0) {$\beta$};
	\draw[rlddelete,rld effect,dashed] (rldz)--(2.5,7.45);
	\node[rlddelete,rld label] at (1.7,7.4) {$\Vec\delta$};
	\draw[rlddelete!60,densely dotted,line width=.55pt] (2.5,7.45)--(6.1,5.5);
	\foreach \s/\hlab in {0/H_0,1/H_1,2/H_2,3/\ldots} {
		\begin{scope}[shift={({1.3+1.2*\s},{8.1-.65*\s})}]
			\rldhybridshape
			\node[rlddelete,rld label,anchor=south west,font=\tiny]
				at (.85,1.55) {$\hlab$};
		\end{scope}
	}

	\node[rld point] at (rldx) {};
	\node[rld label,below right=2pt] at (rldx) {$\Vec x$};
	\node[rld point] at (rldz) {};
	\node[rld label,left=3pt] at (rldz) {$\Vec z$};
\end{tikzpicture}

%% file: contents/3-Preliminary.tex
\section{Preliminaries}
\label{sec:preliminary}

Let $\NN, \ZZ, \QQ$ denote the sets of natural numbers, integers, and rational numbers, respectively.
We use, for example, $\QQ_{\ge 0}$ for the set of rationals restricted to nonnegative values.
For integers $m\le n$, write $[m,n]=\{m,m+1,\ldots,n\}$.
The interval $[1, n]$ is abbreviated as $[n]$.
Vectors are written in boldface $\Vec{u}, \Vec{v}, \Vec{x}, \Vec{y}$, etc.
For a vector $\Vec{x} \in \QQ^d$, we write $\Vec{x}(i)$ for its $i$-th component, where $i \in [d]$.
Its \emph{support} is $\Supp(\Vec x):=\{i\in[d]:\Vec x(i)\ne0\}$.
We compare vectors component-wise, so that $\Vec{x} \le \Vec{y}$ means $\Vec{x}(i) \le \Vec{y}(i)$ for all $i \in [d]$.
We use $\norm{\Vec{x}} := \max_{i \in [d]}|\Vec{x}(i)|$ for the max norm of $\Vec{x}$, and $\norm{\Vec{x}}_1 := \sum_{i \in [d]}|\Vec{x}(i)|$ for the $L_1$-norm of $\Vec{x}$.
For a finite set $X \subseteq \QQ^d$, we write $\norm{X} := \max_{\Vec{x} \in X}\norm{\Vec{x}}$ with the understanding that $\max\emptyset:=0$.
When the dimension $d$ is clear from the context, we write $\Vec{1} := (1, 1, \ldots, 1)$ and $\Vec{0} := (0, 0, \ldots, 0)$.
For a finite set $X \subseteq \QQ^d$ of vectors where $X = \{\Vec{x}_1, \ldots, \Vec{x}_n\}$, the \emph{cone} generated by $X$ is $\Cone(X) := \{\lambda_1\cdot\Vec{x}_1 + \cdots + \lambda_n\cdot\Vec{x}_n : \lambda_1, \ldots, \lambda_n \in \QQ_{\ge 0}\}$.

\paragraph{Linear Set and Hybrid Set.}
\label{par:linear_and_hybrid_sets}
For a finite set $P=\{\Vec p_1,\ldots,\Vec p_m\}\subseteq\NN^d\setminus\{\Vec0\}$, write $P^*:=\{\sum_{i=1}^m\lambda_i\cdot\Vec p_i:\Vec\lambda\in\NN^m\}$, with $\emptyset^*=\{\Vec0\}$.
A \emph{linear set} is a set $\Vec b+P^*$, where $\Vec b\in\NN^d$ is its \emph{base} and the vectors in $P$ are its \emph{periods}.
A \emph{semilinear set} is a finite union of linear sets.
For a single period $\Vec a$, write $\Vec a^{\ast}:=\{\Vec a\}^{\ast}$.

Given $\Vec{x}, \Vec{y} \in \QQ^d$, let $\Inner{\Vec{x}, \Vec{y}}$ denote the inner product of these two vectors.
We use the notation for hybrid sets from~\cite[Section~4]{DBLP:journals/corr/abs-2607-14983}.
Given $\Vec{c}\in\NN^m$ and $K\in\NN$, the associated \emph{hybrid set} is
\begin{equation*}
    H=\Vec{b}+P^{\langle\Vec{c},\cdot\rangle\le K}, \text{ where }P^{\langle\Vec{c},\cdot\rangle\le K} =\left\{\sum_{i=1}^m\lambda_i\cdot\Vec{p}_i: \Vec{\lambda}\in\NN^m,\ \langle\Vec{c},\Vec{\lambda}\rangle\le K\right\}.
\end{equation*}
We call $\Vec c(i)$ the \emph{cost} of $\Vec p_i$.
A period is \emph{free} if its cost is zero and \emph{restricted} if its cost is positive.
The \emph{representation norm} is $\hnorm(H):=\max\{\norm{\Vec b},\norm P,\norm{\Vec c}\}$, where $K$ is excluded from the norm.
The case $\Vec{c}=\Vec{0}$ gives the ordinary linear set.

\paragraph{VASS and VAS.}
\label{par:vass_and_vas}

A $d$-dimensional \emph{vector addition system with states} ($d$-VASS) is a pair $V=(Q,T)$, where $Q$ is a finite set of \emph{states} and $T\subseteq Q\times\ZZ^d\times Q$ is a finite set of \emph{transitions}.
A \emph{configuration} is a pair $p(\Vec x)$ with $p\in Q$ and $\Vec x\in\NN^d$.
A transition $u=(p,\Vec a,q)$ induces a one-step run $p(\Vec x)\xrightarrow{u}q(\Vec y)$ if $\Vec x+\Vec a=\Vec y$ and both vectors are nonnegative.
A \emph{path} is a transition sequence $\pi=(p_1,\Vec a_1,q_1)\cdots(p_k,\Vec a_k,q_k)$ with $q_i=p_{i+1}$ for $i<k$.
It is a \emph{cycle} if it is nonempty and $q_k=p_1$, and a \emph{simple cycle} if it repeats no state except its initial and final state.
A one-transition cycle is a \emph{self-loop}.
Empty paths are allowed at every state and have length zero and effect $\Vec0$.
The effect and length are defined by $\Eff(\pi) := \sum_{i=1}^k\Vec a_i$ and $|\pi| := k$.

Fix an enumeration $T=\{u_1,\ldots,u_m\}$, where $u_i=(p_i,\Vec a_i,q_i)$.
The \emph{transition matrix} of $V$ is
\begin{equation*}
	\Vec{T}:=
	\begin{pmatrix}
		\Vec{a}_1 & \Vec{a}_2 & \cdots & \Vec{a}_m
	\end{pmatrix}
	\in\ZZ^{d\times m}.
\end{equation*}
The \emph{Parikh image} of a path $\pi\in T^*$ is the vector $\Parikh{\pi}\in\NN^m$ whose $i$-th component counts the number of occurrences of $u_i$ in $\pi$.
Hence, $\Eff(\pi) = \Vec{T}\Parikh{\pi}$.
We also write $\Parikh{\pi}(u)$ for the number of occurrences of a transition $u\in T$ in $\pi$ when the chosen enumeration is irrelevant.

A path induces an \emph{$\NN$-run} (or simply a run) when all intermediate counter values remain nonnegative.
For example,
\begin{equation*}
	\pi:p_1(\Vec{x}_0) \xrightarrow{t_1} q_1(\Vec{x}_1) \xrightarrow{t_2} \cdots \xrightarrow{t_k} q_k(\Vec{x}_k),
\end{equation*}
where $\Vec{x}_0,\ldots,\Vec{x}_k\in\NN^d$.
We write $p(\Vec{x})\xrightarrow{\pi}q(\Vec{y})$ if $\pi$ induces a run from $p(\Vec{x})$ to $q(\Vec{y})$, and write $p(\Vec{x})\xrightarrow{*}q(\Vec{y})$ if such a path exists.
Two paths $\pi_1$ and $\pi_2$ may be concatenated as $\pi_1\pi_2$ whenever the target state of $\pi_1$ is the source state of $\pi_2$.
If $\pi$ is a cycle, then $\pi^m$ denotes its $m$-fold concatenation.

A looser semantics is defined on \emph{$\ZZ$-configurations} of the form $p(\Vec{z})$, where $\Vec{z}\in\ZZ^d$.
A \emph{$\ZZ$-run} or \emph{integer run} $p(\Vec{z})\xrightarrow{\pi}_{\ZZ}q(\Vec{w})$ is defined analogously to a run, with configurations replaced by $\ZZ$-configurations.
We also write $p(\Vec{z})\xrightarrow{*}_{\ZZ}q(\Vec{w})$ for $\ZZ$-runs.
For a $\ZZ$-configuration $p(\Vec{z})$, a path $\pi=t_1\cdots t_k$ from $p$ to $q$ induces a $\ZZ$-run $p(\Vec{z})\xrightarrow{\pi}_{\ZZ}q(\Vec{w})$, where $\Vec{w} = \Vec{z} + \Eff(\pi)$.
We define the \emph{drop} of $\pi$ as the vector $\Drop(\pi)\in\NN^d$ given, for every $i\in[d]$, by
\begin{equation*}
	\Drop(\pi)(i):=-\min_{j\in[0,k]}\Eff(t_1\cdots t_j)(i).
\end{equation*}
Thus, $\Drop(\pi)(i)$ is the maximum amount by which the $i$-th counter falls below its initial value along $\pi$.
Consequently, this $\ZZ$-run is a run if and only if $\Vec z\ge\Drop(\pi)$.
The \emph{reverse} $V^{\rev}$ replaces every transition $(p,\Vec a,q)$ by $(q,-\Vec a,p)$.
Clearly, $s\xrightarrow{*}t$ in $V$ if and only if $t\xrightarrow{*}s$ in $V^{\rev}$.
For a VAS, we also write $-V$ for the reversed system.

A $d$-dimensional \emph{vector addition system} ($d$-VAS) is a $d$-VASS with a single state.
In this case, transitions are determined by their effects, so we identify $T$ with a finite subset of $\ZZ^d$ and omit the unique state from configurations and runs.

\paragraph{Size and Reachability Problems.}
\label{par:size_and_reachability}

Let $V$ be a VASS and let $s,t$ be two configurations of $V$.
We call the triple $(V,s,t)$ a \emph{reachability instance}, with $s$ and $t$ referred to as its source and target configurations.
The unary-encoded \emph{size} of $V$ is defined by
\begin{equation*}
	\Size(V) := |Q| + d\cdot |T|\cdot(\norm{T}+1),
\end{equation*}
where $\norm{T} := \max_{(p,\Vec{a},q)\in T}\norm{\Vec{a}}$.
Unless stated otherwise, we consider VASS of dimension $d\ge1$ with at least one state and one transition.
For these systems, $\Size(V)\ge2$, so constant factors can be absorbed by increasing a polynomial exponent.
The unary size of a reachability instance $(V,s,t)$ is defined by
\begin{equation*}
	\Size(V,s,t) := \Size(V) + d\cdot\bigl(\norm{s}+\norm{t}+1\bigr).
\end{equation*}
The norm of a configuration $p(\Vec{x})$, and likewise of a $\ZZ$-configuration, is defined by $\norm{p(\Vec{x})}:=\norm{\Vec{x}}$.
Sometimes we also consider a pair $(V, s)$ where the target is not fixed; its size $\Size(V, s)$ is defined similarly.
For brevity, we also refer to such triples and pairs as VASS.
For a VASS $V$, its \emph{reachability relation} is
\begin{equation*}
	\Reach_V := \left\{ \bigl(p(\Vec{x}),q(\Vec{y})\bigr) : p(\Vec{x})\xrightarrow{*}q(\Vec{y}) \text{ in } V \right\}.
\end{equation*}
We compose two relations from left to right: $(s,t)\in R_1\circ R_2$ if there exists $u$ such that $(s,u)\in R_1$ and $(u,t)\in R_2$.
For a fixed source configuration $s$ of $V$, its \emph{reachability set} is $\Reach_V(s) := \left\{ t : s\xrightarrow{*}t \text{ in } V \right\}$.
For a VAS, where states are omitted, we regard $\Reach_V$ as a subset of $\NN^d\times\NN^d$ and $\Reach_V(\Vec{s})$ as a subset of $\NN^d$.
The reachability problem in VASS can be formulated as follows:
\defproblem{Reachability in VASS}{
A VASS $(V,s,t)$.
}{
Does $s \xrightarrow{*} t$ hold in $V$?
}

Complexity bounds for reachability problems are usually obtained from length bounds on runs witnessing the reachability.
We write $\Len(V,s,t)  := \left\{ |\pi| : s\xrightarrow{\pi}t \text{ in } V \right\}$ for the set of lengths of runs from $s$ to $t$.
The \emph{reachability distance} from $s$ to $t$ in $V$ is defined by $\MinLen_V(s,t) := \min \Len(V,s,t)$, with the convention that $\MinLen_V(s,t):=\infty$ if $t$ is not reachable from $s$.
We use $\bigO(n)$ and $\PolyF{n}$ to denote unspecified linear and polynomial functions on integers, respectively.

\paragraph{Mixed Linear Path Schemes.}
\label{par:linear_path_schemes}
A \emph{linear path scheme} (LPS) is an expression
\begin{equation*}
    \Lambda=\alpha_0\beta_1^*\alpha_1\cdots\beta_k^*\alpha_k,
\end{equation*}
where each $\alpha_i$ is a finite sequence of vectors in $\ZZ^d$, and each $\beta_i$ is a nonempty such sequence.
Define $|\Lambda|:=|\alpha_0\beta_1\alpha_1\cdots\beta_k\alpha_k|$ and $\norm\Lambda := \max\{\norm{\Vec a} : \Vec{a} \text{ appears in }\Lambda\}$.
A path captured by $\Lambda$ has the form $\alpha_0\beta_1^{n_1}\alpha_1\cdots\beta_k^{n_k}\alpha_k$ for some $n_1,\ldots,n_k\in\NN$.
For $\XX \in\{\NN,\ZZ\}$, write $\Vec s\xrightarrow{\Lambda}_{\XX}\Vec t$ if $\Lambda$ captures a sequence of vectors whose sum is $\Vec t-\Vec s$ and whose prefix sums, when added to $\Vec s$, all lie in $\XX^d$.
The subscript $\NN$ is often omitted.

A \emph{mixed linear path scheme} (mixed LPS) is an expression
\begin{equation*}
    \Lambda=(\Lambda_1,\XX_1)\cdots(\Lambda_m,\XX_m),
\end{equation*}
where $m\ge1$, $\Lambda_1\Lambda_2\cdots \Lambda_m$ is an LPS, and $\XX_i\in\{\NN,\ZZ\}$.
Define $|\Lambda| := \sum_{i \in [m]}|\Lambda_i|$, and $\norm\Lambda := \max_{i\in[m]}\norm{\Lambda_i}$.
We write $\Vec c_0\xrightarrow{\Lambda}\Vec c_m$ if there exist nonnegative vectors $\Vec c_0,\ldots,\Vec c_{m}$ such that $\Vec c_{i-1}\xrightarrow{\Lambda_i}_{\XX_i}\Vec c_i$ for all $i$.
The counter values inside a $\ZZ$-component may be negative.

A witness $w$ specifies the repetition counts of all cycles and the intermediate vectors between consecutive components.
These choices must satisfy the semantics of each component.
The length $|w|$ is the total number of vectors in the resulting sequences.
Write $\Reach_\Lambda:=\{(\Vec s,\Vec t):\Vec s\xrightarrow{\Lambda}\Vec t\}$ and let $\MinLen_\Lambda(\Vec s,\Vec t)$ be the minimum witness length, with value $\infty$ when there is no witness.

\paragraph{Cone and Cycle-Dimension.}
\label{par:cones_and_cycle_dimension}
Let $V$ be a $d$-VASS.
Its \emph{cycle space} $\CycleSpace(V)\subseteq\QQ^d$ is the span of the effects of simple cycles in the VASS $V$, i.e.,
\begin{equation*}
    \CycleSpace(V) := \Span\{\Eff(\pi) : \pi \text{ is a simple cycle in } V\}.
\end{equation*}
Its \emph{cycle-dimension} $\dimcyc(V) := \dim(\CycleSpace(V))$ is the dimension of its cycle space.
Its \emph{component-dimension} $\dimcom(V):=\max\{\dimcyc(U):U\text{ is an SCC of }V\}$ is the maximum cycle-dimension of its SCCs.
In particular, for a $d$-VAS $V$ with transition set $T$, we have $\CycleSpace(V) = \Span(T)$.
We remark that the cycle-dimension is also known as the geometric dimension and has been used as a key parameter in the analysis of reachability and coverability~\cite{DBLP:conf/lics/LerouxS19,DBLP:conf/icalp/FuYZ24,DBLP:conf/icalp/CzerwinskiGOSZ26}.

\paragraph{\texorpdfstring{$B$}{B}-Encoding and Sequential VASS.}
\label{par:encodings_and_sequential_vass}
For a vector $\Vec{x} \in \ZZ^d$ and a subset $I \subseteq [d]$, we denote by $\restr{\Vec{x}}{I}$ the restriction of $\Vec{x}$ to the coordinates in $I$.
Let $V = (Q,T)$ be a $d$-VASS, $I \subseteq [d]$, and $B \in \NN$.
The $(d - |I|)$-VASS $\enc{V}_{I,B}$ has state set $Q' := Q \times [0,B]^I$.
For every $(p,\Vec a,q) \in T$ and $p(\Vec u)\in Q'$, it has the transition
\begin{equation*}
    \bigl(p(\Vec u),\restr{\Vec a}{[d]\setminus I}, q(\Vec v)\bigr)
    \quad\text{precisely when}\quad
    \Vec v=\Vec u+\restr{\Vec a}{I}\in [0,B]^I.
\end{equation*}
For $p(\Vec x)$ with $\restr{\Vec x}{I}\in [0,B]^I$, set $\enc{p(\Vec x)}_{I, B} =(p,\restr{\Vec x}{I})(\restr{\Vec x}{[d] \setminus I})$.
In the subsequent discussion, we only consider the case where $|I| = 2$, and assume, w.l.o.g., that $I$ indexes the first two coordinates of the VAS $V$ when there is no ambiguity.
In such cases, we abbreviate $\enc{V}_{I,B}$ as $\enc{V}_{B}$, and $\enc{\Vec x}_{I, B}$ as $\enc{\Vec x}_{B}$.
We call $\enc{V}_{B}$ a \emph{$B$-encoding (of a VAS)} when $V$ is a VAS, and a \emph{$B$-encoding of a VASS} otherwise.
We also allow $0$-VASS, which are finite directed graphs with no counters.

Let $U_i=(Q_i,T_i)$, for $i\in[k]$, be strongly connected $d$-VASS with pairwise disjoint state sets, and, for $i\in[k-1]$, let $u_i=(q_i,\Vec a_i,p_{i+1})$ with $q_i\in Q_i$ and $p_{i+1}\in Q_{i+1}$ be a \emph{bridge}.
The \emph{sequential VASS}
\begin{equation*}
    U=(U_1)u_1(U_2)\cdots u_{k-1}(U_k)
\end{equation*}
has state set $\bigcup_iQ_i$ and transition set $\bigcup_iT_i\cup\{u_1,\ldots,u_{k-1}\}$.

\paragraph{Covering Runs and Rackoff's Extraction.}
\label{par:rackoff_extraction}

We recall two results for extracting short runs reaching sufficiently large configurations.
A configuration $q(\Vec z)$ is \emph{coverable} from $p(\Vec x)$ if $p(\Vec x)\xrightarrow{*}q(\Vec z')$ for some $\Vec z'\geq \Vec z$.
The classical result of Rackoff gives a bound on the length of a covering run.

\begin{lemma}[{\cite[Theorem 3.5]{DBLP:journals/tcs/Rackoff78}}]
	\label{lem:rackoff_cover}
	Let $V=(Q,T)$ be a $d$-VASS.
	If $q(\Vec z)$ is coverable from $p(\Vec x)$, there is a covering run ending at state $q$ of length less than $N^{\bigO((d+1)!)}$, where $N:=\max\{2,|Q|,\norm{T},\norm{\Vec z}\}$.
\end{lemma}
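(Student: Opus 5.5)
We would follow Rackoff's original induction on the number of coordinates that are required to remain nonnegative. Fix $V=(Q,T)$, the target $q(\Vec z)$ and $N=\max\{2,|Q|,\norm T,\norm{\Vec z}\}$. For $I\subseteq[d]$, call a path an \emph{$I$-run} if the counters indexed by $I$ stay nonnegative along it, while the other counters may be arbitrary integers. Call it \emph{$I$-covering} if it ends at state $q$ with $\restr{\Vec x}{I}\ge\restr{\Vec z}{I}$ at the end. Let $f(i)$ be the maximum, over all $I$ with $|I|=i$ and all starting $\ZZ$-configurations admitting an $I$-covering run, of the length of a shortest such run. The goal is to show $f(i+1)\le (N\cdot f(i))^{\bigO(i+1)}$ with $f(0)\le |Q|$. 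Unrolling this recurrence gives $f(d)\le N^{\bigO((d+1)!)}$, and $I=[d]$ is exactly the statement. The base case $f(0)\le|Q|$ holds because a covering run for $I=\emptyset$ is just a path in the state graph to $q$, and a shortest such path has length less than $|Q|$.

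For the inductive step, fix $I$ with $|I|=i+1$ and a shortest $I$-covering run $\pi$. Set the threshold $M:=N\cdot(f(i)+1)$, and split into two cases.

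\emph{Case 1: all $I$-counters stay below $M$ along $\pi$.} The projected configurations $(\text{state},\restr{\Vec x}{I})$ take at most $|Q|\cdot M^{i+1}$ values. Since $\pi$ is shortest, it repeats none of them, so $|\pi|\le |Q|M^{i+1}$.

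\emph{Case 2: some coordinate $j\in I$ reaches at least $M$.} Cut $\pi$ at the first position where this happens. The prefix falls under Case 1, so it has length at most $|Q|M^{i+1}$. At the cut point, apply the induction hypothesis to $I\setminus\{j\}$ to obtain an $(I\setminus\{j\})$-covering run of length at most $f(i)$ from that configuration; such a run exists because the original suffix is one. Along it, coordinate $j$ drops by at most $N\cdot f(i)<M$, so it remains nonnegative. Its final value is still at least $M-Nf(i)\ge N\ge\Vec z(j)$. Hence this replacement suffix is a genuine $I$-covering run.

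Combining the two cases gives $f(i+1)\le |Q|M^{i+1}+f(i)\le (Nf(i))^{c(i+1)}$ for an absolute constant $c$. Solving this recurrence yields $\log f(d)=\bigO\bigl((d+1)!\log N\bigr)$.

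The main obstacle is the bookkeeping in the recurrence rather than any conceptual step. One must choose $M$ large enough that the shortened suffix stays valid on coordinate $j$ and still covers $\Vec z(j)$. One must also check that the accumulated exponents multiply to $\bigO((d+1)!)$ rather than something larger. A further point is that the covering target is attained by some $\Vec z'\ge\Vec z$ rather than exactly. This is why the induction is phrased with covering, which is monotone, instead of reachability.
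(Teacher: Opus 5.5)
Your argument is Rackoff's own induction on the number of constrained coordinates. The paper does not reprove this lemma; it only cites Rackoff's Theorem~3.5. Everything up to the recurrence is sound. Minimality of $\pi$ excludes repeated pairs of state and $I$-values before the cut, since deleting the segment between them leaves all $I$-values unchanged afterwards. The choice $M=N(f(i)+1)$ keeps coordinate $j$ at least $N\ge\Vec z(j)$ along the replacement suffix. So you correctly obtain $f(i+1)\le |Q|M^{i+1}+f(i)$.

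The step that fails is the last one. You coarsen this to $f(i+1)\le (Nf(i))^{c(i+1)}$ for an absolute constant $c$. The base-$N$ logarithms $x_i$ of $f(i)$ then satisfy $x_{i+1}\le c(i+1)(1+x_i)$, and the solution is of order $c^d\,d!$. Since $c^d\,d!/(d+1)!=c^d/(d+1)\to\infty$, this gives $N^{2^{O(d\log d)}}$ but not the claimed $N^{O((d+1)!)}$. Moreover, any admissible uniform $c$ must exceed $1$. At $c=1$ your middle expression $|Q|N^{i+1}(f(i)+1)^{i+1}+f(i)$ already exceeds $(Nf(i))^{i+1}$. The coarsened bound even fails outright when $f(i)=0$, e.g.\ when $|Q|=1$.

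The fix is to keep the additive structure you already derived. Write $f(i)+1\le N^{a_i}$. From $f(0)\le |Q|-1$ you may take $a_0=1$. Using $|Q|\le N$ and $N\ge 2$, your recurrence gives $f(i+1)+1\le 2N^{(i+2)+(i+1)a_i}\le N^{(i+3)+(i+1)a_i}$. Hence one may take $a_{i+1}=(i+1)a_i+(i+3)$. Dividing by $(i+1)!$ telescopes to $a_d/d!=1+\sum_{m=1}^{d}(m+2)/m!\le 1+e+2(e-1)=3e-1$. Therefore $f(d)<N^{(3e-1)\,d!}$. This is even $N^{O(d!)}$, and in particular gives the stated bound of less than $N^{O((d+1)!)}$ for $I=[d]$.
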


We also use a related extraction result for a given run.
For a run $\pi$ and a threshold $D\in\NN$, let $\BDCounters(\pi,D)$ denote the set of coordinates whose values remain below $D$ throughout $\pi$.
Hence, $\BDCounters(\pi,D)=\emptyset$ means that every coordinate reaches a value at least $D$ somewhere along $\pi$, although these occurrences need not be simultaneous.
The following lemma shows that, for a sufficiently large threshold $D$, one can instead find a short run from the same source reaching a configuration in which all coordinates are simultaneously large.

\begin{lemma}[{\cite[Lemma 3.3]{DBLP:conf/concur/ChenFZ26}}]
	\label{lem:coordinate_extraction}

	For every $d$-VASS $(V,s)$ and number $C \in \NN$, there exists a threshold
	\begin{equation*}
		D_0:=(C + \Size(V))^{\bigO((d+1)!)}
	\end{equation*}
	such that, if every coordinate reaches a value at least $D_0$ somewhere along a run $\pi$ from $s$ in $V$, i.e., $\BDCounters(\pi, D_0)=\emptyset$, there is also a run $\rho$ from $s$ with $|\rho| \le D_0$ and ending at $q(\Vec{x})$ satisfying $\Vec{x} \ge C \cdot \Vec{1}$.
	Moreover, $\rho$ is obtained by deleting cycles from $\pi$ and therefore has the same final state as $\pi$.
\end{lemma}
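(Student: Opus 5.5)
The plan is a Rackoff-style argument: process the coordinates in the order in which $\pi$ first makes them large, and shorten $\pi$ by deleting cycles between consecutive such events. Write $V=(Q,T)$, $s=p(\Vec x_0)$, and fix thresholds $D_0\ge D_1\ge\cdots\ge D_d$, chosen at the end. The invariant after stage $j\in[0,d]$ is as follows. There is a set $J_j\subseteq[d]$ with $|J_j|=j$, a prefix $\pi_j'$ of $\pi$, and a run $s\xrightarrow{\rho_j}q_j(\Vec w_j)$ obtained from $\pi_j'$ by deleting cycles. Moreover $\rho_j$ ends at the same configuration as $\pi_j'$ on the coordinates outside $J_j$, and $\Vec w_j(i)\ge D_j$ for every $i\in J_j$. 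We start with $J_0=\emptyset$ and empty $\rho_0$, and stop at $j=d$. The required $\rho:=\rho_d$ then ends in the same state as $\pi$ up to a final trimming: we stop $\pi$ at the position where the last coordinate becomes large, and as in the cited lemma the final state is the state of $\pi$ at that position. Its length is $\sum_j|\rho_{j+1}\text{-segment}|$, and all coordinates are at least $C$ provided $D_d\ge C$.

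The inductive step from $j$ to $j+1$ goes as follows. Coordinates outside $J_j$ are still below $D_j$, so the hypothesis $\BDCounters(\pi,D_0)=\emptyset$ and $D_0\ge D_{j+1}$ guarantee that some later position of $\pi$ has a coordinate $i\notin J_j$ with value at least $D_{j+1}$. Take the earliest such position and the segment $\sigma$ of $\pi$ from the end of $\pi_j'$ up to it. Along $\sigma$, except at its last step, every coordinate outside $J_j$ lies in $[0,D_{j+1})$. Hence the pairs (state, restriction to $[d]\setminus J_j$) take at most $|Q|D_{j+1}^{d-j}$ values.

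We repeatedly delete the cycles of $\sigma$ between two positions with equal such pairs. This leaves a segment $\sigma'$ of length at most $|Q|D_{j+1}^{d-j}$ whose trajectory on the coordinates outside $J_j$ is a subsequence of the original one. So it remains nonnegative there, ends at the same values there, and in particular coordinate $i$ still reaches at least $D_{j+1}$ at the end. Applied from $\Vec w_j$, the coordinates in $J_j$ start at least $D_j$ and lose at most $\norm{T}\cdot|\sigma'|$. Then $\rho_{j+1}:=\rho_j\sigma'$ is a valid run and $J_{j+1}:=J_j\cup\{i\}$. The invariant holds provided $D_j-\norm{T}|\sigma'|\ge D_{j+1}$. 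Since everything removed is a cycle of $\pi$, $\rho$ is obtained from $\pi$ by deleting cycles.

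The remaining part, and the step where care is needed, is the choice of thresholds. It must make the surviving values of early high coordinates robust against all later segments, while keeping the final exponent at $\bigO((d+1)!)$ rather than $d^d$. The choice is $D_d:=C$ and $D_j:=D_{j+1}+\norm{T}\,|Q|\,D_{j+1}^{d-j}$. Let $N:=C+\Size(V)$. Then $D_j\le N^{\bigO(1)}D_{j+1}^{d-j}$, and unwinding gives $D_0\le N^{\bigO(1)\cdot(d!+\cdots)}=N^{\bigO((d+1)!)}$, because the exponents multiply to $d\cdot(d-1)\cdots1$. The total length $\sum_j|Q|D_{j+1}^{d-j}$ is at most $D_0$. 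The main subtlety to verify is that cycle deletion is performed only on the restriction to non-high coordinates, so that validity of the high coordinates is controlled purely by the length bound and not by the deleted cycles' effects. This is exactly why the decreasing thresholds are used.
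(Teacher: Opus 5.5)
Your stratified cycle-removal argument is sound as far as it goes. Deleting only segments between positions with the same state and the same values on the coordinates outside $J_j$ keeps those coordinates on a subsequence of their original trajectory. The coordinates in $J_j$ are protected purely by the length bound $\norm{T}\,|Q|\,D_{j+1}^{d-j}$, and your recursion does give an exponent of order $d!$.

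The genuine gap is the last sentence of the statement, which you explicitly replace by a weaker one. The lemma asserts that $\rho$ is obtained from the \emph{whole} run $\pi$ by deleting cycles, and therefore ends in the final state of $\pi$. Your $\rho_d$ is obtained from the prefix $\pi'_d$ and ends in the state $\pi$ has where the last coordinate becomes large. Cutting off the remaining suffix is not a cycle deletion. This matters for the paper. In \cref{lem:high_peak_normalization} the lemma is applied to $\pi^-$ and to the reverse of $\pi^+$. The resulting short runs must end, respectively start, exactly at the peak state $\Vec p^*$ of the chosen high position. Only then can the deleted cycles, of total effect $\Vec u-\Vec z$, be relocated there and the mixed LPS $(\alpha,\NN)\Theta(\delta,\NN)$ assembled.

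The repair is short:
\begin{itemize}
\item Take $D_d:=C+|Q|\max\{1,\norm{T}\}$ instead of $D_d:=C$. The $\max$ also keeps your length estimate valid when $\norm{T}=0$, where your recursion would collapse to $D_j=C$.
\item After stage $d$, append the remaining suffix of $\pi$ with all cycles removed from its state sequence. This is a path of length less than $|Q|$ from $q_d$ (the state of $\pi$ at the end of $\pi'_d$) to the final state of $\pi$. It is fireable from $\Vec w_d\ge D_d\cdot\Vec 1$ and leaves every counter at least $C$.
\item The resulting run is obtained from $\pi$ by deleting cycles. Its length grows by less than $|Q|$, which you absorb into $D_0$, and the exponent is unchanged.
\end{itemize}

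Two smaller points. First, the next stopping position must be taken at or after the end of $\pi'_j$, not strictly later, since $\sigma$ may be empty: several coordinates can cross a threshold in one transition, and the source may already have several large coordinates. Second, the reason it exists is not that the coordinates outside $J_j$ are below $D_j$, which need not hold. Rather, by minimality of the earlier stopping positions, every coordinate outside $J_j$ is below $D_1\le D_0$ at every position strictly before the end of $\pi'_j$, so it attains $D_0$ at or after that point.
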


\paragraph{Integer Programming.}
\label{par:integer_programming}
Let $\Vec{A} \in \mathbb{Z}^{m \times n}$ be a matrix, $\Vec{b} \in \mathbb{Z}^m$ be a vector, and $\Vec{x}$ be a vector of variables.
A \emph{linear Diophantine system} $\Vec{A} \Vec{x} = \Vec{b}$ is a set of $m$ linear equations with $n$ variables, whose coefficients are integers.
A \emph{minimal solution} of the system $\Vec{A}\Vec{x} = \Vec{b}$ is a minimal element, under the component-wise order $\le$, among the nonnegative integer solutions of the system.
Any nonnegative integer solution can be written as $\Vec x=\Vec x_0+\sum_j\lambda_j\cdot\Vec h_j$, where $\Vec x_0$ is minimal for $\Vec A\Vec x=\Vec b$, the $\lambda_j$ are nonnegative integers, and the $\Vec h_j$ range over the finitely many minimal elements of $\{\Vec h\in\NN^n\setminus\{\Vec0\}:\Vec A\Vec h=\Vec0\}$.
We write $\norm{\Vec{A}}_{1,\infty} := \max_{i \in [m]} \norm{\Vec{a}_i}_1$ and $\norm{\Vec{A}} := \max_{i \in [m]} \norm{\Vec{a}_i}$, where $\Vec{a}_i$ is the $i$-th row of $\Vec{A}$.
Pottier's theorem~\cite{DBLP:conf/rta/Pottier91} is used to bound the size of minimal solutions:

\begin{theorem}[Pottier's Theorem~{\cite[Theorem 1]{DBLP:conf/rta/Pottier91}}]
	\label{thm:pottier}
	Let $\Vec{A}\Vec{x} = \Vec{b}$ be a linear Diophantine system, where $\Vec{A} \in \mathbb{Z}^{m \times n}$ and $\Vec{b} \in \mathbb{Z}^m$.
	Each nonzero minimal solution $\Vec{x}_0 \in \NN^n$ of the system satisfies
	\begin{equation}\label{eq:pottier}
		\norm{\Vec{x}_0}_1 \le \left( 1 + \norm{\Vec{A}}_{1,\infty} + \lVert \Vec{b} \rVert \right)^{r},
	\end{equation}
	where $r:=\Rank\begin{pmatrix}\Vec{A}&\Vec{b}\end{pmatrix}$ is the rank of the augmented matrix $\begin{pmatrix}\Vec{A}&\Vec{b}\end{pmatrix}$.
\end{theorem}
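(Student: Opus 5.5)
The plan is to reduce to the homogeneous case and then bound minimal homogeneous solutions by a Carathéodory-plus-Cramer argument.

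\textbf{Homogenization.} Set $\Vec A':=\begin{pmatrix}\Vec A&-\Vec b\end{pmatrix}\in\ZZ^{m\times(n+1)}$ and consider $\Vec A'\Vec z=\Vec 0$ over $\NN^{n+1}$.

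If $\Vec x_0$ is a nonzero minimal solution of $\Vec A\Vec x=\Vec b$, then $\Vec z_0:=(\Vec x_0,1)$ is a minimal nonzero element of $\{\Vec z\in\NN^{n+1}\setminus\{\Vec 0\}:\Vec A'\Vec z=\Vec0\}$. To see this, suppose $\Vec 0\neq\Vec z'\le\Vec z_0$ with $\Vec A'\Vec z'=\Vec 0$. Either $\Vec z'(n+1)=1$, in which case $\Vec z'$ restricts to a solution $\le\Vec x_0$ and hence equals $\Vec z_0$ by minimality of $\Vec x_0$. Or $\Vec z'(n+1)=0$, in which case $\Vec z_0-\Vec z'$ has last coordinate $1$ and restricts to a smaller solution of $\Vec A\Vec x=\Vec b$; then minimality forces $\Vec z'=\Vec 0$, a contradiction. (When $\Vec b=\Vec 0$ the only minimal solution is $\Vec 0$, so there is nothing to prove.)

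Moreover $\Rank\Vec A'=r$ and $\norm{\Vec A'}_{1,\infty}\le\norm{\Vec A}_{1,\infty}+\norm{\Vec b}$. It therefore suffices to prove the homogeneous bound: every minimal nonzero $\Vec z\in\NN^{N}$ with $\Vec A'\Vec z=\Vec 0$ satisfies $\norm{\Vec z}_1\le(1+\norm{\Vec A'}_{1,\infty})^{r}$. The final bound for $\Vec x_0$ then follows, since $\norm{\Vec x_0}_1<\norm{\Vec z_0}_1$.

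\textbf{Homogeneous case.} The solution set $\{\Vec z\in\QQ_{\ge0}^N:\Vec A'\Vec z=\Vec0\}$ is a pointed polyhedral cone. Its extreme rays are spanned by \emph{circuits}: support-minimal solutions. I proceed in three steps.

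\begin{enumerate}
\item \emph{Bound the circuits.} Restricted to a support-minimal set, the kernel of $\Vec A'$ is one-dimensional. Pick $r$ linearly independent rows; by Cramer's rule the kernel is spanned by an integer vector $\Vec g$ whose entries are $r\times r$ minors. Expanding a determinant by multilinearity along rows bounds each minor by the product of the row $L_1$ norms, i.e.\ by $\norm{\Vec A'}_{1,\infty}^{r}$.
\item \emph{Decompose.} By Carathéodory's theorem, $\Vec z=\sum_{i}\lambda_i\Vec g_i$ with $\lambda_i\ge0$ and at most $\dim\ker=N-r$ circuits $\Vec g_i$, all conformal to $\Vec z$, meaning their supports lie in $\Supp(\Vec z)$.
\item \emph{Use minimality.} If some $\lambda_i\ge1$, then $\Vec g_i\le\Vec z$ is a smaller nonzero solution, contradicting minimality. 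Hence every $\lambda_i<1$, and $\norm{\Vec z}_1<\sum_i\norm{\Vec g_i}_1$.
\end{enumerate}

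\textbf{Main obstacle.} The delicate step is making the exponent exactly $r$, with base $1+\norm{\Vec A'}_{1,\infty}$ and no factor depending on $N$. The naive estimate from the three steps above, namely at most $N$ circuits each with $N$ entries bounded by $\norm{\Vec A'}_{1,\infty}^r$, introduces polynomial factors in $N$.

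To remove these factors I would first try the refinement that a circuit has support of size at most $r+1$. I would then count the $L_1$ norm of a circuit directly: by Laplace expansion, the sum of the $r+1$ maximal minors of the $r\times(r+1)$ submatrix is bounded by $\prod_{\text{rows}}(1+\norm{\Vec a_i}_1)$. This gives $\norm{\Vec g}_1\le(1+\norm{\Vec A'}_{1,\infty})^{r}$.

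The remaining issue is that Carathéodory still yields several circuits, so bounding $\norm{\Vec z}_1$ by a sum of circuit norms loses a factor. If this is not sharp enough, I would fall back on Pottier's own route. That route proceeds by induction on the number of rows, adding one equation at a time. At each step one uses a Graver/pairing argument: an element of the new Hilbert basis is a combination of elements of the old basis in which opposite-sign contributions are paired off. Each new row multiplies the norm bound by at most $1+\norm{\Vec a_i}_1$, and rows that are dependent on earlier ones cost nothing. This inductive step is where I expect most of the work.
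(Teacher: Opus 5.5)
The paper does not prove this theorem; it imports it from Pottier. So your proposal has to stand on its own, and it has a genuine gap exactly where you suspect.

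\textbf{What works.} Your homogenization is fine: $(\Vec x_0,1)$ is a minimal nonzero solution of $\Vec A'\Vec z=\Vec 0$, $\Rank\Vec A'=r$, and $\norm{\Vec A'}_{1,\infty}\le\norm{\Vec A}_{1,\infty}+\norm{\Vec b}$.

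\textbf{What fails.} The homogeneous bound is never established. Carath\'eodory gives up to $N-r$ circuits with coefficients in $[0,1)$. (You must also allow $\Vec z=\Vec g_i$, when $\Vec z$ is itself a circuit.) So the best conclusion is $\norm{\Vec z}_1<(N-r)\norm{\Vec A'}_{1,\infty}^{r}$. Even after noting that a minimal $\Vec z$ lives on at most $r\norm{\Vec A'}_{1,\infty}$ nonzero columns, this is of order $\norm{\Vec A'}_{1,\infty}^{r+1}$, which already exceeds $1+\norm{\Vec A'}_{1,\infty}$ for $r=1$. Sharper circuit norms cannot help, because the loss is in the number of summands, not their size.

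The fallback is only a hope. In the row-by-row induction as you describe it, you write a new minimal element as a sum of old Hilbert-basis elements and pair off signs under the new row $\Vec a$. That bounds the number of summands by about $2\norm{\Vec a}_\infty$ times the old bound. So the bound roughly squares at each row instead of gaining a factor $1+\norm{\Vec a}_1$, and you give no mechanism that avoids this.

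\textbf{The missing idea.} Decompose $\Vec z$ into unit vectors rather than circuits, and order them along the ray $t\mapsto t\Vec z$.
\begin{enumerate}
\item Keep only $r$ linearly independent rows $\Vec a_1,\dots,\Vec a_r$ of $\Vec A'$. This gives the same solutions and no larger row norms.
\item Let $L:=\norm{\Vec z}_1$ and sort the $L$ unit increments of $\Vec z$ by the times $k/\Vec z(j)$, $1\le k\le\Vec z(j)$, breaking ties arbitrarily.
\item Let $t_m$ be the $m$-th time, with $t_0=0$. The $m$-th partial sum $\Vec y_m$ satisfies $\Vec y_m-t_m\Vec z\in[-1,0]^N$.
\item Hence $\Inner{\Vec a_i,\Vec y_m}=\Inner{\Vec a_i,\Vec y_m-t_m\Vec z}$ lies in an integer interval of length $\norm{\Vec a_i}_1$. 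So the images of all partial sums lie in a box with $\prod_{i=1}^r(1+\norm{\Vec a_i}_1)$ integer points.
\item Suppose two of $\Vec y_0,\dots,\Vec y_{L-1}$ had the same image. Their difference would be a nonzero solution strictly below $\Vec z$, since the dependent rows follow from the independent ones. This contradicts minimality.
\end{enumerate}
Therefore $L\le\prod_{i=1}^r(1+\norm{\Vec a_i}_1)\le(1+\norm{\Vec A'}_{1,\infty})^r$, and your homogenization finishes the proof.
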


Sometimes we bound $\norm{\Vec{A}}_{1,\infty}$ by $n\cdot \norm{\Vec{A}}$ to simplify \eqref{eq:pottier}.
For a system of linear inequalities of the form $\Vec{A}\Vec{x} \ge \Vec{b}$, we can introduce $m$ slack variables and reduce it to a linear Diophantine system without increasing the number of equations.
By~\cref{thm:pottier}, the $L_1$-norm of any minimal solution of the system is bounded by $\left(2 + \norm{\Vec{A}}_{1,\infty} + \norm{\Vec{b}} \right)^m$, hence the following.

\begin{corollary}\label{cor:pottier}
	Let $\Vec{A}\Vec{x} \ge \Vec{b}$ be a system of $m$ linear inequalities, where $\Vec{A} \in \mathbb{Z}^{m \times n}$ and $\Vec{b} \in \mathbb{Z}^m$.
	Each minimal solution $\Vec{x}_0 \in \NN^n$ of the system satisfies $\norm{\Vec{x}_0}_1 \le \left( 2 + \norm{\Vec{A}}_{1,\infty} + \lVert \Vec{b} \rVert \right)^{m}$.
\end{corollary}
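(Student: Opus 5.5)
This follows from Pottier's Theorem (\cref{thm:pottier}) applied to an equational reformulation of the inequality system $\Vec A\Vec x\ge\Vec b$.

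\textbf{Reformulation.} Introduce slack variables $\Vec y\in\NN^m$ and consider
\[
\begin{pmatrix}\Vec A & -\Vec I_m\end{pmatrix}\begin{pmatrix}\Vec x\\ \Vec y\end{pmatrix}=\Vec b .
\]
This system still has $m$ equations, so the rank of its augmented matrix is at most $m$. Each row of $\begin{pmatrix}\Vec A&-\Vec I_m\end{pmatrix}$ has $L_1$-norm $\norm{\Vec a_i}_1+1$. Hence $\norm{\begin{pmatrix}\Vec A&-\Vec I_m\end{pmatrix}}_{1,\infty}=\norm{\Vec A}_{1,\infty}+1$.

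\textbf{Passing from $\Vec x_0$ to a minimal solution of the equations.} Let $\Vec x_0$ be a minimal solution of $\Vec A\Vec x\ge\Vec b$, and set $\Vec y_0:=\Vec A\Vec x_0-\Vec b\in\NN^m$. I claim $(\Vec x_0,\Vec y_0)$ is minimal among nonnegative solutions of the equational system. Suppose $(\Vec x',\Vec y')\le(\Vec x_0,\Vec y_0)$ is another nonnegative solution. Then $\Vec A\Vec x'=\Vec b+\Vec y'\ge\Vec b$, so $\Vec x'$ solves the inequality system with $\Vec x'\le\Vec x_0$. Minimality of $\Vec x_0$ forces $\Vec x'=\Vec x_0$, and then $\Vec y'=\Vec A\Vec x_0-\Vec b=\Vec y_0$.

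\textbf{Applying Pottier.} If $(\Vec x_0,\Vec y_0)=\Vec 0$, the bound holds trivially. Otherwise \cref{thm:pottier} applies and gives
\[
\norm{\Vec x_0}_1\le\norm{(\Vec x_0,\Vec y_0)}_1\le\bigl(1+\norm{\Vec A}_{1,\infty}+1+\norm{\Vec b}\bigr)^{r}.
\]
Here $r\le m$. The base is at least $2\ge1$, so raising the exponent from $r$ to $m$ only increases the right-hand side. This yields the stated bound $\bigl(2+\norm{\Vec A}_{1,\infty}+\norm{\Vec b}\bigr)^m$.

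The only step requiring care is the minimality transfer above; the rest is bookkeeping of norms and rank.
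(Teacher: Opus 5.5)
Your proof is correct and follows the paper's argument: introduce $m$ slack variables, observe that the row $L_1$-norms increase by one while the rank of the augmented matrix stays at most $m$, and apply Pottier's theorem. Your explicit check that a minimal solution of the inequalities lifts to a minimal solution of the slack system fills in a step that the paper leaves implicit.
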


%% file: contents/4-Quantitative.tex
\section{Generalized Transformation Areas for \texorpdfstring{$B$}{B}-Encodings of VAS}
\label{sec:generalized_transformation_areas}

In this section, we establish a generalized transformation-area theorem for $B$-encodings of VAS.
Classically, a transformation area is a region in which every $\ZZ$-run between two configurations can be rearranged into a run with the same Parikh image~\cite{DBLP:conf/stoc/Leeuwen74,DBLP:journals/tcs/HopcroftP79}\footnote{We also give an explicit bound for van Leeuwen's classical transformation areas in~\cref{app:classical_transformation_area}.}.
We now formally define the generalization to $B$-encodings outlined in~\cref{sec:tech_overview}.

Let $V$ be a $(d+2)$-VAS with transition set $T$ and $\enc{V}_{B}$ its $B$-encoding.
We identify each configuration $\Vec p(\Vec x)$ of $\enc{V}_{B}$ with $(\Vec p,\Vec x)\in\NN^{d+2}$ in $V$.
Throughout this section, Parikh images of paths in $\enc{V}_{B}$ count their underlying transitions in $T$.

\begin{definition}
	\label{def:gta}
	A set $A \subseteq [0, B]^2 \times \NN^d$ is a \emph{generalized transformation area} for $\enc{V}_{B}$ if for every $\Vec{s}, \Vec{t}\in A$ and every $\ZZ$-run $\Vec{s}\xrightarrow{\pi}_{\ZZ}\Vec{t}$ in $\enc{V}_{B}$, there is a run $\Vec{s}\xrightarrow{\rho}\Vec{t}$ in $V$ with $\Parikh\rho=\Parikh\pi$.
\end{definition}

Recall that $C$ is a transformation-area threshold for $\enc{V}_{B}$ if
\begin{equation*}
	\GTA_C = [0, B]^2 \times \{\Vec x\in\NN^d:\Vec x\ge C\cdot \Vec{1}\}
\end{equation*}
is a generalized transformation area for $\enc{V}_{B}$.
The main result of this section gives an explicit polynomial threshold and, more strongly, guarantees that the new run stays within the $2B$-encoding.

\begin{theorem}[Generalized Transformation Area Theorem]
	\label{thm:generalized_transformation_area}
	Let $V$ be a $(d+2)$-VAS with transition set $T$ and let $B\in \NN$.
	Set
	\begin{equation*}
		C := (d+2)(B+1)^2\norm{T}.
	\end{equation*}
	Then $\GTA_C$ is a generalized transformation area.
	More precisely, for every $\Vec s,\Vec t\in\GTA_C$ and every $\ZZ$-run $\Vec s\xrightarrow{\pi}_{\ZZ}\Vec t$ in $\enc{V}_{B}$, there is a run $\Vec s\xrightarrow{\rho}\Vec t$ in $\enc{V}_{2B}$ with $\Parikh\rho=\Parikh\pi$.
\end{theorem}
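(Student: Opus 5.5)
We follow the cycle-relocation strategy sketched in the overview. Fix $\Vec s=(\Vec p,\Vec x)$, $\Vec t=(\Vec q,\Vec y)$ in $\GTA_C$ and a $\ZZ$-run $\Vec s\xrightarrow{\pi}_{\ZZ}\Vec t$ in $\enc{V}_B$. First choose a \emph{peak state} $\Vec r$ visited by $\pi$, i.e. one whose first encoded coordinate is maximal among all visited states, and split $\pi=\pi_1\pi_2$ at one occurrence of $\Vec r$. From $\pi_1$ (a path from $\Vec p$ to $\Vec r$ in the finite graph with $(B+1)^2$ states) we repeatedly remove simple cycles until a simple path $\sigma_1$ remains; do the same with $\pi_2$, getting $\sigma_2$. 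Thus $\Parikh\pi=\Parikh{\sigma_1}+\Parikh{\sigma_2}+\sum_j\Parikh{\theta_j}$, where each $\theta_j$ is a simple cycle of $\enc{V}_B$ and $|\sigma_1|,|\sigma_2|,|\theta_j|\le(B+1)^2$. The path $\sigma_1\sigma_2$ has length at most $2(B+1)^2$, so along it every non-encoded counter moves by at most $2(B+1)^2\norm{T}$ from its starting value.

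Next we relocate each cycle $\theta_j$. Rotate it to start at a state $\Vec w_j$ minimizing its second encoded coordinate, and translate every state $\Vec z$ of $\theta_j$ to $\Vec r+\Vec z-\Vec w_j$. Since the underlying transitions of a VAS are state-independent, the translated sequence is a cycle at $\Vec r$ in the encoding with the same Parikh image and effect, provided all translated states stay in $[0,2B]^2$. For the first coordinate, $\Vec z(1)-\Vec w_j(1)\le 0$ fails in general, but $\Vec r(1)\ge\Vec z(1)$ by the peak choice gives $\Vec r(1)+\Vec z(1)-\Vec w_j(1)\ge \Vec z(1)\ge0$ only if $\Vec w_j(1)\le\Vec r(1)$, which holds since $\Vec w_j$ was visited; upper bound $\le 2B$ is clear. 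For the second, $\Vec z(2)-\Vec w_j(2)\ge0$ by the choice of $\Vec w_j$, and the value is $\le 2B$. So every relocated cycle is a cycle at $\Vec r$ in $\enc{V}_{2B}$, and $\sigma_1$, $\sigma_2$ live in $\enc{V}_B\subseteq\enc{V}_{2B}$.

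Finally we order the non-encoded effects. The candidate run is $\sigma_1\,\theta'_{j_1}\cdots\theta'_{j_n}\,\sigma_2$ with the relocated cycles inserted at $\Vec r$ in a suitable order. Their total effect is $E:=\Vec y-\Vec x-\Eff(\sigma_1)-\Eff(\sigma_2)$. The main obstacle is to keep the $d$ free counters nonnegative while executing the block of cycles. We handle it with Steinitz's lemma applied to the cycle effects (not individual transitions, so the encoded states remain consistent). The vectors $\Eff(\theta_j)\in\ZZ^d$ have norm $\le(B+1)^2\norm{T}$; subtracting $E/n$ and using Steinitz in dimension $d$ yields an order in which every partial sum lies within $d(B+1)^2\norm{T}$ (in max norm) of the segment from $\Vec 0$ to $E$. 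Along that segment the counter at $\Vec r$ is a convex combination of $\Vec x+\Eff(\sigma_1)$ and $\Vec y-\Eff(\sigma_2)$, both $\ge C-(B+1)^2\norm{T}$. Inside a single cycle the counters fluctuate by at most $(B+1)^2\norm{T}$ more. Hence the counters stay above $C-(d+2)(B+1)^2\norm{T}\ge0$, and the bookkeeping of these constants is exactly what fixes $C=(d+2)(B+1)^2\norm{T}$; during $\sigma_1,\sigma_2$ the same margin suffices. The resulting run has Parikh image $\Parikh\pi$ and lies in $\enc{V}_{2B}$, hence is a run of $V$, proving the theorem.
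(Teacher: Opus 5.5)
This is essentially the paper's proof: you split at a peak state, strip simple cycles from both halves, and relocate them to the peak state in $\enc{V}_{2B}$ after rotating each to a minimum of its second encoded coordinate. You then reorder the cycle blocks with a nonzero-sum Steinitz bound, using the same bookkeeping $C-M-dM-M\ge 0$ for $M=(B+1)^2\norm{T}$.

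The one step to state precisely is the bound $dM$ around the segment. It comes from appending $n$ auxiliary copies of $-E/n$, each of norm at most $M$, to the cycle effects, applying Steinitz to this $2n$-term zero-sum sequence, and then deleting the auxiliaries, as in the paper's corollary. Subtracting $E/n$ from each cycle effect instead gives vectors of norm up to $2M$ and hence only a $2dM$ bound, which would not close with $C=(d+2)M$.
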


The proof adapts the cycle-relocation technique of Hopcroft and Pansiot~\cite[Lemma 3.9]{DBLP:journals/tcs/HopcroftP79}.
We first extract and relocate simple cycles to one common state, keeping the two encoded coordinates within $[0,2B]^2$.
We then apply the nonzero-sum form of Steinitz's lemma in~\cref{cor:steinitz_partial_sums} to reorder their effects so that the $d$ remaining counters stay nonnegative.

\subsection{Steinitz Reordering and Cycle Relocation}
\label{subsec:steinitz_and_cycle_relocation}

We use Steinitz's lemma~\cite{Steinitz+1913+128+176} in the form of Grinberg and Sevast'yanov~\cite{GrinbergSevastyanov1980}, which bounds the partial sums of a reordered zero-sum sequence.

\begin{lemma}[Steinitz Lemma~{\cite{GrinbergSevastyanov1980}}]
	\label{lem:steinitz}
	Let $\Vec{v}_1,\ldots,\Vec{v}_n\in\QQ^d$, with $\sum_{\ell=1}^n\Vec{v}_\ell=\Vec0$ and $\norm{\Vec{v}_\ell}\le M$.
	There is a permutation $\sigma$ of $[n]$ such that $\norm{\sum_{\ell=1}^k\Vec v_{\sigma(\ell)}}\le dM$ for every $k\in[0,n]$.
\end{lemma}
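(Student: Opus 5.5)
The plan is the classical linear-programming proof of Grinberg and Sevast'yanov. We build the permutation $\sigma$ backwards, choosing $\sigma(n),\sigma(n-1),\ldots$ in turn. For each $k\in[0,n]$ let $A_k:=\{\sigma(1),\ldots,\sigma(k)\}$, so the $k$-th partial sum is $\sum_{i\in A_k}\Vec v_i$. The case $k\le d$ needs no work, since then $\norm{\sum_{i\in A_k}\Vec v_i}\le kM\le dM$ by the triangle inequality. We may therefore assume $n>d$ and only control the sets $A_k$ with $d\le k\le n$.

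\textbf{Invariant.} For each $k\in[d,n]$ we maintain a set $A_k\subseteq[n]$ with $|A_k|=k$ and weights $\lambda^{(k)}\in[0,1]^{A_k}$ satisfying
\begin{equation*}
\sum_{i\in A_k}\lambda^{(k)}_i\Vec v_i=\Vec0,
\qquad
\sum_{i\in A_k}\lambda^{(k)}_i=k-d .
\end{equation*}
The invariant gives the bound directly. Subtracting the first identity from $\sum_{i\in A_k}\Vec v_i$ gives
\begin{equation*}
\sum_{i\in A_k}\Vec v_i=\sum_{i\in A_k}\bigl(1-\lambda^{(k)}_i\bigr)\Vec v_i .
\end{equation*}
Each coefficient $1-\lambda^{(k)}_i$ is nonnegative and their total is $k-(k-d)=d$, so the norm is at most $dM$. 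At the start we take $A_n=[n]$ and $\lambda^{(n)}_i=(n-d)/n$ for all $i$. This satisfies the invariant because $\sum_i\Vec v_i=\Vec0$.

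\textbf{Inductive step ($k\to k-1$, for $k>d$).} Consider the polytope
\begin{equation*}
P_k:=\Bigl\{\mu\in[0,1]^{A_k}:\ \textstyle\sum_{i}\mu_i\Vec v_i=\Vec0,\ \sum_i\mu_i=k-1-d\Bigr\}.
\end{equation*}
It is nonempty, since it contains the rescaled vector $\tfrac{k-1-d}{k-d}\lambda^{(k)}$. It is bounded, so it has a vertex $\mu$, and because all data are rational this vertex is rational.

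\textbf{Key counting step.} At a vertex, $k$ linearly independent constraints are tight. At most $d+1$ of them come from the equality constraints. Hence at least $k-d-1$ coordinates of $\mu$ lie in $\{0,1\}$.

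We claim some coordinate equals $0$. Suppose not. Then the coordinates lying in $\{0,1\}$ all equal $1$, and there are at least $k-d-1$ of them, contributing at least $k-d-1$ to the sum. There is at least one further coordinate, because $k>k-d-1$ when $d\ge0$. That coordinate is strictly positive, so $\sum_i\mu_i>k-d-1$. This contradicts the defining constraint of $P_k$.

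So pick $j\in A_k$ with $\mu_j=0$, set $\sigma(k):=j$, $A_{k-1}:=A_k\setminus\{j\}$, and $\lambda^{(k-1)}:=\restr{\mu}{A_{k-1}}$. Dropping a zero coordinate preserves both identities, so the invariant holds at $k-1$.

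Iterating down to $k=d$ fixes $\sigma(n),\ldots,\sigma(d+1)$. We then order the remaining $d$ indices of $A_d$ arbitrarily; those prefixes are covered by the trivial case. The only delicate point is the vertex-counting step just above, in particular checking that the fractional coordinates cannot make the total exactly $k-d-1$ without a zero coordinate. The argument given there handles this.
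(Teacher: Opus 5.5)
Your proposal is correct: the invariant, the nonemptiness of $P_k$ via rescaling, and the vertex count (at least $k-d-1$ coordinates in $\{0,1\}$, which forces a zero coordinate because the sum is exactly $k-1-d$) all hold. The paper gives no proof of its own and simply cites Grinberg and Sevast'yanov; your argument is precisely their linear-programming vertex proof, so it takes the same route as the cited source.
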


The total effect of the extracted cycles need not vanish.
We also need a version of Steinitz's lemma for a nonzero total sum, which is a corollary of the above lemma.

\begin{corollary}
	\label{cor:steinitz_partial_sums}
	Let $\Vec v_1,\ldots,\Vec v_n\in\QQ^d$ have norm at most $M$, and let $\Vec z :=\sum_{i=1}^n\Vec v_i$.
	There is a permutation $\sigma$ such that every partial sum $\Vec p_k=\sum_{i=1}^k\Vec v_{\sigma(i)}$ satisfies
	\begin{equation*}
		\Vec p_k\ge\min(\Vec0,\Vec z)-dM \cdot \Vec1,
	\end{equation*}
	where the minimum is componentwise.
\end{corollary}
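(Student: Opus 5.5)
The plan is to reduce to the zero-sum Steinitz lemma by appending copies of $-\Vec z$ split into short pieces. If $\Vec z=\Vec0$ there is nothing to do: \cref{lem:steinitz} gives partial sums with $\norm{\Vec p_k}\le dM$, hence $\Vec p_k\ge -dM\cdot\Vec1$. In general, observe that $\norm{\Vec z}\le nM$. Let $N:=n$ and define the auxiliary vectors $\Vec w_1=\cdots=\Vec w_N:=-\Vec z/N$. Each has norm at most $M$, so the combined sequence $\Vec v_1,\ldots,\Vec v_n,\Vec w_1,\ldots,\Vec w_N$ consists of $2n$ vectors of norm at most $M$ summing to $\Vec0$. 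Since the lemma is stated over $\QQ^d$, the rational entries of $\Vec w_j$ cause no trouble.

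Apply \cref{lem:steinitz} to obtain a permutation $\tau$ of the combined sequence with all partial sums of norm at most $dM$. Define $\sigma$ as the order in which the original vectors $\Vec v_i$ appear in $\tau$. Fix $k\in[0,n]$ and take the prefix of $\tau$ ending at the $k$-th original vector; if $k=0$, take the empty prefix. This prefix contains exactly $\Vec v_{\sigma(1)},\ldots,\Vec v_{\sigma(k)}$ together with some number $j\in[0,N]$ of the $\Vec w$'s. Hence
\[
\Vec p_k - \tfrac{j}{N}\Vec z \ge -dM\cdot\Vec1 .
\]

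The remaining step is to estimate coordinatewise. Since $j/N\in[0,1]$, each coordinate satisfies $\tfrac{j}{N}\Vec z(i)\ge\min(0,\Vec z(i))$. Adding this to the displayed inequality gives $\Vec p_k\ge\min(\Vec0,\Vec z)-dM\cdot\Vec1$, as required.

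The only potential obstacle is ensuring that the auxiliary vectors respect the norm bound $M$. Splitting $-\Vec z$ into $n$ equal pieces handles this precisely because $\norm{\Vec z}\le nM$, so no worse constant arises. The degenerate case $n=0$ is trivial.
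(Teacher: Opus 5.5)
Your proposal is correct and is essentially the paper's own proof: you append $n$ copies of $-\Vec z/n$ (which stay within norm $M$ because $\norm{\Vec z}\le nM$), apply \cref{lem:steinitz}, and bound the prefix ending at the $k$-th original vector. There it uses $\Vec p_k-(j/n)\cdot\Vec z\ge -dM\cdot\Vec1$ together with $(j/n)\cdot\Vec z\ge\min(\Vec0,\Vec z)$. Your separate treatment of $\Vec z=\Vec0$ is harmless but unnecessary.
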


\begin{proof}
	The empty case is immediate.
	Otherwise append $n$ auxiliary copies of $-\Vec z/n$.
	Since $\norm{\Vec z}\le nM$, all $2n$ vectors have norm at most $M$ and sum to zero.
	Apply~\cref{lem:steinitz}, then delete the auxiliary vectors.
	For any prefix of the resulting order, consider the augmented prefix ending at its last original vector.
	If it contains $m$ auxiliary vectors, its sum is $\Vec p_k-(m/n) \cdot \Vec z$, where $k$ is the number of original vectors in the prefix.
	Since $(m/n) \cdot \Vec z\ge\min(\Vec0,\Vec z)$ and $\norm{\Vec p_k-(m/n) \cdot \Vec z} \le dM$, one has the required bound $\Vec p_k\ge\min(\Vec0,\Vec z)-dM \cdot \Vec1$.
	The empty prefix also satisfies it.
\end{proof}

\paragraph{Cycle Relocation.}
\label{par:cycle_relocation}
A state $\Vec{r}$ visited by $\pi$ is a \emph{peak state} of a path $\pi$ if 
\[
	\Vec{r}(1) = \max\{\Vec{p}(1) : \Vec p\text{ appears in }\pi\}.
\]
We first show that a simple cycle can be relocated to a peak state while preserving its Parikh image.

\begin{lemma}[Cycle Relocation]
	\label{lem:cycle_relocation}
	Let $\pi$ be a path in $\enc{V}_B$.
	Suppose that $\beta$ is a simple cycle in $\pi$ and that $\Vec r$ is a peak state of $\pi$.
	Then there exists a simple cycle $\widehat\beta$ at $\Vec r$ in $\enc{V}_{2B}$ with $\Parikh\beta=\Parikh{\widehat{\beta}}$.
\end{lemma}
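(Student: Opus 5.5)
The plan is to construct $\widehat\beta$ explicitly by rotating and translating $\beta$, following the sketch in the overview. Write $\beta$ as a closed walk through encoded states $\Vec z_0,\Vec z_1,\ldots,\Vec z_n=\Vec z_0$ in $[0,B]^2$, using underlying transitions $\Vec a_1,\ldots,\Vec a_n\in T$ with $\Vec z_j=\Vec z_{j-1}+\restr{\Vec a_j}{\{1,2\}}$. First choose an index $j_0$ such that $\Vec z_{j_0}(2)$ is minimal among the $\Vec z_j(2)$, and put $\Vec w:=\Vec z_{j_0}$. Rotate $\beta$ so that it starts and ends at $\Vec w$. Rotation permutes the transitions cyclically, so the Parikh image is unchanged and the result is still a simple cycle in $\enc{V}_B$.

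Next, translate the rotated cycle so that it is based at $\Vec r$: each state $\Vec z_j$ is replaced by $\Vec z_j':=\Vec r+\Vec z_j-\Vec w$, and each transition keeps its underlying vector $\Vec a_j$. Consecutive translated states then differ by exactly $\restr{\Vec a_j}{\{1,2\}}$. Hence each step is a transition of $\enc{V}_{2B}$ provided every $\Vec z_j'$ lies in $[0,2B]^2$, and the resulting walk starts and ends at $\Vec r+\Vec w-\Vec w=\Vec r$.

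The membership $\Vec z_j'\in[0,2B]^2$ is the key check, and it is where the two choices are used.
\begin{itemize}
\item For the first coordinate: $\Vec z_j$ appears in $\pi$ because $\beta$ occurs in $\pi$, and $\Vec r$ is a peak state. So $\Vec w(1)\le\Vec r(1)$, which gives $\Vec z_j'(1)=\Vec r(1)-\Vec w(1)+\Vec z_j(1)\ge\Vec z_j(1)\ge0$.
\item For the second coordinate: $\Vec z_j(2)\ge\Vec w(2)$ by the choice of $\Vec w$, so $\Vec z_j'(2)\ge\Vec r(2)\ge0$.
\item Upper bounds: each coordinate of $\Vec r+\Vec z_j-\Vec w$ is at most $B+B-0=2B$.
\end{itemize}

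Finally, simplicity is preserved because translation $\Vec z\mapsto\Vec z+(\Vec r-\Vec w)$ is injective, so the translated cycle repeats no state except its endpoint. The Parikh image counts the underlying transitions of $T$, and these are unchanged, so $\Parikh{\widehat\beta}=\Parikh\beta$.

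I expect no real obstacle. The only subtle point is to use the peak property for the first coordinate and the minimal-second-coordinate rotation for the second, rather than trying to control both coordinates with a single choice.
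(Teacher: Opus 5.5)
Your proof is correct and matches the paper's argument: rotate $\beta$ to start at a state $\Vec w$ minimizing the second encoded coordinate, then translate by $\Vec r-\Vec w$. As in the paper, the peak property gives the first-coordinate lower bound, the rotation choice gives the second, and injectivity of the translation together with the unchanged underlying transitions gives simplicity and equality of Parikh images.
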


\begin{proof}
	Rotate $\beta$ to start at a state $\Vec w$ minimizing its second coordinate, and translate its encoded states by $\Vec r-\Vec w$.
	Since $\beta$ has zero total effect on the encoded coordinates, every resulting state is $\Vec r+\Vec v-\Vec w$ for a state $\Vec v$ of $\beta$.
	Here $\Vec r(1)\ge\Vec w(1)$ by the peak-state condition and $\Vec v(2)\ge\Vec w(2)$ by the choice of $\Vec w$.
	Together with $\Vec r,\Vec v,\Vec w\in[0,B]^2$, this gives $0\le\Vec r(i)+\Vec v(i)-\Vec w(i)\le2B$ for $i\in\{1,2\}$.
	Thus the same underlying VAS transitions define a cycle $\widehat\beta$ at $\Vec r$ in $\enc{V}_{2B}$.
	Rotation and translation preserve state distinctness and underlying transition counts, so $\widehat\beta$ is simple and $\Parikh\beta=\Parikh{\widehat\beta}$.
\end{proof}

The next corollary decomposes a path in $\enc{V}_{B}$ into two short paths and simple cycles that can be relocated to a peak state in $\enc{V}_{2B}$.
Write $Q_B=[0,B]^2$ for the states of $\enc V_B$.

\begin{corollary}
	\label{cor:cycle_relocation}
	Let $\pi$ be a path in $\enc{V}_B$ from $\Vec p$ to $\Vec q$.
	There exists a path $\alpha\gamma_1\cdots\gamma_n\beta$ from $\Vec p$ to $\Vec q$ in $\enc{V}_{2B}$ such that:
	\begin{enumerate}
		\item The prefix $\alpha$ and the suffix $\beta$ are paths in $\enc{V}_B$, with $|\alpha|,|\beta|\le (B + 1)^2-1$.
		\item Each $\gamma_i$ is a simple cycle in $\enc{V}_{2B}$ of length at most $(B + 1)^2$.
		\item $\Parikh\pi=\Parikh\alpha+\sum_{i=1}^n\Parikh{\gamma_i}+\Parikh\beta$.
	\end{enumerate}
	Consequently $\alpha\gamma_{\sigma(1)}\cdots\gamma_{\sigma(n)}\beta$ is a path in $\enc{V}_{2B}$ for every permutation $\sigma$.
\end{corollary}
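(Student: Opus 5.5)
We prove \cref{cor:cycle_relocation} by the standard cycle-removal procedure on $\pi$, combined with \cref{lem:cycle_relocation} applied to one fixed peak state. First fix a peak state $\Vec r$ of $\pi$ together with one occurrence of it along $\pi$, and split $\pi=\pi_1\pi_2$ at that occurrence. Then $\pi_1$ runs from $\Vec p$ to $\Vec r$ and $\pi_2$ from $\Vec r$ to $\Vec q$. We want every extracted cycle to be relocated to $\Vec r$, and the leftover short paths to meet exactly at $\Vec r$. This is why we split at a single occurrence of the peak state rather than extracting cycles from $\pi$ as a whole.

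On $\pi_1$ we repeatedly remove simple cycles. Whenever the current path visits some state twice, pick the first repeated state and the segment between two of its consecutive occurrences. That segment is a cycle with no repeated inner state, hence a simple cycle of $\enc{V}_B$. Deleting it leaves a path in $\enc{V}_B$ with the same endpoints. The process terminates with a path $\alpha$ from $\Vec p$ to $\Vec r$ that repeats no state, so $|\alpha|\le|Q_B|-1=(B+1)^2-1$. We do the same on $\pi_2$ to obtain $\beta$ from $\Vec r$ to $\Vec q$ of the same length bound. Every removed cycle $\theta$ is a simple cycle of $\enc{V}_B$ with $|\theta|\le(B+1)^2$, and the Parikh images add up: $\Parikh\pi=\Parikh\alpha+\Parikh\beta+\sum_\theta\Parikh\theta$. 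Every state visited by a removed cycle is visited by $\pi$, so $\Vec r$ remains a peak state with respect to each removed cycle. Equivalently, $\Vec r(1)\ge\Vec w(1)$ for every state $\Vec w$ of $\theta$, which is exactly the hypothesis used in the proof of \cref{lem:cycle_relocation}.

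Next apply \cref{lem:cycle_relocation} to each removed cycle $\theta$. This gives a simple cycle $\gamma$ at $\Vec r$ in $\enc{V}_{2B}$ with $\Parikh\gamma=\Parikh\theta$; the length is preserved, so $|\gamma|\le(B+1)^2$. Enumerating these cycles as $\gamma_1,\ldots,\gamma_n$ yields items 1–3. For the final claim, note that $\enc{V}_B$ is a sub-VASS of $\enc{V}_{2B}$, so $\alpha$ and $\beta$ are paths there too. Since $\alpha$ ends at $\Vec r$, each $\gamma_i$ starts and ends at $\Vec r$, and $\beta$ starts at $\Vec r$, the concatenation $\alpha\gamma_{\sigma(1)}\cdots\gamma_{\sigma(n)}\beta$ is a well-defined path from $\Vec p$ to $\Vec q$ for every permutation $\sigma$.

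The only point needing care is the bookkeeping in the peak-state condition. The lemma's proof needs $\Vec r(1)\ge\Vec w(1)$ for the rotation point $\Vec w$ of each removed cycle, so we must check that cycles removed from intermediate, already shortened paths still consist only of states originally visited by $\pi$. This holds because removal never introduces new states.
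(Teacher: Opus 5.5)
Your proposal is correct and matches the paper's proof. Both split $\pi$ at one occurrence of a peak state $\Vec r$, repeatedly delete simple cycles from the two segments, relocate each deleted cycle to $\Vec r$ in $\enc{V}_{2B}$ via \cref{lem:cycle_relocation}, and concatenate everything at $\Vec r$. The paper justifies the relocation step by noting that each residual path stays in $\enc{V}_B$ and still passes through $\Vec r$, which is the same as your observation that deletion introduces no new states.

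One wording fix: to guarantee that the extracted segment is a simple cycle, take the state whose \emph{second} occurrence comes earliest along the current path. If you instead take the earliest-appearing repeated state, the segment between its consecutive occurrences can contain other repetitions.
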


\begin{proof}
	Split $\pi$ at an occurrence of a peak state $\Vec r$ and obtain two segments of $\pi$.
	Whenever either segment is not simple, select a simple cycle $\theta$ and use~\cref{lem:cycle_relocation} to obtain a simple cycle $\widehat \theta$ at $\Vec r$ with $\Parikh{\theta} = \Parikh{\widehat{\theta}}$.
	We also have $|\widehat{\theta}| = |\theta| \le |Q_B| \le (B+1)^2$.
	Delete $\theta$ from the residual path.
	The remaining encoded states are unchanged, so the residual path stays in $\enc{V}_B$ and still passes through the peak state $\Vec r$.
	Hence, the same lemma applies at every step.
	Each deletion shortens the path, so the process ends with two simple paths $\alpha$ and $\beta$ with $|\alpha|,|\beta|\le (B + 1)^2-1$.

	Let $\gamma_1,\ldots,\gamma_n$ be the constructed cycles.
	The successive deletions give the stated Parikh identity.
	Finally, $\alpha$ ends at $\Vec r$, all $\gamma_i$ start and end there, and $\beta$ starts there, so their concatenation is a path in $\enc{V}_{2B}$ with the cycles in any order.
\end{proof}

\subsection{Proof of the Transformation-Area Theorem}
\label{subsec:transformation_area_proof}

Combining the previous lemmas, we can now prove~\cref{thm:generalized_transformation_area}.
Write $I=\{1,2\}$ for the encoded coordinates and $J=[d+2]\setminus I$ for the remaining coordinates.

\begin{proof}[Proof of~\cref{thm:generalized_transformation_area}]
	Assume $\Vec s=(\Vec p,\Vec x)$ and $\Vec t=(\Vec r,\Vec y)$, where $\Vec p,\Vec r\in Q_B$ and $\Vec x,\Vec y\ge C\cdot\Vec1$.
	Put $M:=(B+1)^2\norm{T}$, so $C=(d+2)M$.
	Apply~\cref{cor:cycle_relocation} to obtain $\alpha,\gamma_1,\ldots,\gamma_n,\beta$ from $\pi$.
	For a permutation $\sigma$ to be chosen, consider
	\begin{equation*}
		\rho=\alpha\gamma_{\sigma(1)}\cdots\gamma_{\sigma(n)}\beta.
	\end{equation*}
	This path has the same Parikh image as $\pi$.
	Let \(\Vec u=\Vec x+\restr{\Eff(\alpha)}J\) and \(\Vec v=\Vec y-\restr{\Eff(\beta)}J\) be the \(J\)-coordinates immediately after \(\alpha\) and immediately before \(\beta\), respectively.
	Every prefix of $\alpha$ and every suffix of $\beta$ has effect of norm at most $M$ on $J$.
	Thus $\alpha$ and $\beta$ remain nonnegative on $J$, and
	\begin{equation*}
		\Vec u,\Vec v\ge(C - M) \cdot \Vec1=(d+1) M \cdot \Vec1.
	\end{equation*}

	Let $\Vec a_i=\restr{\Eff(\gamma_i)}J$ for $i\in[n]$.
	Since $|\gamma_i|\le(B+1)^2$, we have $\norm{\Vec a_i}\le M$, and the Parikh identity gives $\sum_i\Vec a_i=\Vec v-\Vec u$.
	By~\cref{cor:steinitz_partial_sums}, there is a permutation $\sigma$ such that every boundary between the reordered middle blocks has its $J$-coordinates at least
	\begin{equation*}
		\Vec u+\min(\Vec0,\Vec v-\Vec u)-dM\cdot\Vec1
		=\min(\Vec u,\Vec v)-dM\cdot\Vec1\ge M\cdot\Vec1.
	\end{equation*}
	A prefix of any $\gamma_i$ decreases each coordinate by at most $M$.
	Hence $\rho$ remains nonnegative on $J$.

	It remains to check the two encoded coordinates.
	By~\cref{cor:cycle_relocation}, $\rho$ stays in $[0,2B]^2$ on $I$ when started at $\Vec p$.
	Finally, $\Parikh{\rho}=\Parikh{\pi}$ gives the target $\Vec t$.
	Thus $\rho$ is the required run in $\enc{V}_{2B}$.
\end{proof}

\subsection{Stability of Transformation-Area Thresholds}
\label{subsec:threshold_stability}

Later arguments may reverse the system or add transitions with zero effect on $I$ and nonnegative or nonpositive effect on $J$.
\cref{cor:threshold_closure} allows us to use the same threshold under these operations.

\begin{corollary}
	\label{cor:threshold_closure}
	Let $V$ be a $(d+2)$-VAS and $B\in\NN$.
	Let $C$ be any transformation-area threshold for $\enc V_B$.
	The same $C$ is a transformation-area threshold for $\enc{V^{\rev}}_B$ and for $\enc{V'}_B$ whenever $V'$ is obtained from $V$ by adjoining transitions in $\{(0,0)\}\times\bigl(\NN^d\cup(-\NN^d)\bigr)$.
\end{corollary}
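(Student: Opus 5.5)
We verify the definition of a generalized transformation area directly for both modified systems, working from the definition rather than from the explicit threshold of \cref{thm:generalized_transformation_area}. An arbitrary threshold $C$ is given only abstractly, so the plan is to transform a $\ZZ$-run of the new system into a $\ZZ$-run of $\enc V_B$, apply the hypothesis on $C$, and transform the resulting run back.

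\emph{Reversal.} Let $\Vec s,\Vec t\in\GTA_C$ and let $\Vec s\xrightarrow{\pi}_{\ZZ}\Vec t$ be a $\ZZ$-run in $\enc{V^{\rev}}_B$. Reversing $\pi$ gives a path $\pi^{\rev}$ in $\enc V_B$, since $\enc{V^{\rev}}_B=(\enc V_B)^{\rev}$: a transition $\Vec u\xrightarrow{-\Vec c}\Vec u-\Vec b$ within $[0,B]^2$ corresponds to $\Vec u-\Vec b\xrightarrow{\Vec c}\Vec u$. The path $\pi^{\rev}$ induces a $\ZZ$-run $\Vec t\xrightarrow{}_{\ZZ}\Vec s$ in $\enc V_B$, with Parikh image matching that of $\pi$ under the correspondence $\Vec a\leftrightarrow-\Vec a$. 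Because both endpoints lie in $\GTA_C$, the hypothesis gives a run $\Vec t\xrightarrow{\rho}\Vec s$ in $V$ with the same Parikh image. Reversing $\rho$ then gives a run $\Vec s\to\Vec t$ in $V^{\rev}$ with Parikh image equal to that of $\pi$.

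\emph{Adjoined transitions.} Let $\pi$ be a $\ZZ$-run of $\enc{V'}_B$ between points of $\GTA_C$. Each new transition $(0,0,\Vec e)$ with $\Vec e\in\NN^d\cup(-\NN^d)$ is a self-loop at every state of the encoding, since it does not change the encoded coordinates. Delete all occurrences of these self-loops from $\pi$ to obtain a path $\pi'$ in $\enc V_B$; it is still a path because the state sequence is unaffected. Let $\Vec e_+$ be the total effect of the deleted nonnegative loops and $\Vec e_-\le\Vec0$ the total effect of the deleted nonpositive ones. The plan is to run all positive loops first, then an old-transition run, then all negative loops. Starting from $\Vec s=(\Vec p,\Vec x)$, executing the positive loops reaches $\Vec s'=(\Vec p,\Vec x+\Vec e_+)$, and this prefix is monotone and hence valid. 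Symmetrically, $\Vec t'=(\Vec q,\Vec y-\Vec e_-)\ge\Vec t$, and executing the negative loops from $\Vec t'$ descends monotonically to $\Vec t$. Both $\Vec s'$ and $\Vec t'$ lie in $\GTA_C$, since each only increases the $J$-coordinates. The path $\pi'$ induces a $\ZZ$-run $\Vec s'\to_{\ZZ}\Vec t'$, because its effect is $\Eff(\pi)-\Vec e_+-\Vec e_-$. Applying the hypothesis to this $\ZZ$-run gives a run $\rho'$ in $V$ from $\Vec s'$ to $\Vec t'$ with $\Parikh{\rho'}=\Parikh{\pi'}$. Concatenating the positive loops, $\rho'$, and the negative loops yields a run of $V'$ whose Parikh image equals that of $\pi$.

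The only step needing care is this ordering of the deleted loops. Each loop must be placed so that its prefix stays nonnegative, which is why positive loops go at the start and negative loops at the end. Each loop must also shift the corresponding endpoint upward, so that $\Vec s'$ and $\Vec t'$ remain in $\GTA_C$ and the hypothesis applies. Everything else is bookkeeping of Parikh images.
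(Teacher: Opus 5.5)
Your proposal is correct and follows the paper's proof essentially step for step. Reversal goes through $\enc{V^{\rev}}_B=(\enc V_B)^{\rev}$; for the adjoined transitions you delete the self-loops, shift the endpoints upward by the nonnegative total and the negated nonpositive total, apply the threshold property to the residual path in $\enc V_B$, and concatenate. The one detail the paper makes explicit and you leave implicit is that zero-effect loops must be assigned to exactly one of the two groups, so that no occurrence is counted twice in the Parikh image.
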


\begin{proof}
	For reversal, we reverse a $\ZZ$-run in $\enc{V^{\rev}}_B$ to be a $\ZZ$-run in $\enc{V}_B$, apply the transformation-area property for $\enc V_B$, and reverse the resulting run.
	For the second assertion, consider a $\ZZ$-run $\pi$ from $\Vec s$ to $\Vec t$ in $\enc{V'}_B$ with endpoints in $\GTA_C$.
	Partition the added occurrences into nonnegative and nonpositive ones, assigning every occurrence of a zero-effect transition to the former group.
	Let $\Vec a,\Vec b\in\{(0,0)\}\times\NN^d$ be their respective total effect and negated total effect.
	The remaining path $\eta$ is still a path in $\enc{V}_B$, which induces a $\ZZ$-run $\Vec s + \Vec a \xrightarrow{\eta}_\ZZ \Vec{t} + \Vec{b}$.
	Since $\Vec s+\Vec a,\Vec t+\Vec b\in\GTA_C$, the transformation-area property gives a run in $V$ between these two configurations with the same Parikh image as $\eta$.
	The deleted nonnegative and nonpositive transitions give a run from $\Vec s$ to $\Vec s+\Vec a$ and a run from $\Vec t+\Vec b$ to $\Vec t$, respectively.
	Concatenating these three runs gives a run from $\Vec s$ to $\Vec t$ in $V'$ with the same Parikh image as $\pi$.
\end{proof}

%% file: contents/5-Flatness.tex
\section{Polynomial Factorization of VAS Reachability}
\label{sec:reachability_factorization}
We now establish the polynomial factorization of VAS reachability stated in \cref{thm:intro_factorization}.
The proof uses the following linear factorization of $B$-encodings, with their reachability relations henceforth viewed on configurations of $V$ by restoring the counter values stored in their states.

Recall that a mixed LPS consists of LPS components interpreted under $\NN$- or $\ZZ$-semantics.
For a VASS or a mixed LPS $V_i$, we write $\Reach_{V_i}$ for its reachability relation and $\MinLen_{V_i}(s,t)$ for its minimum witness length.
In particular, for a VASS this is the ordinary reachability distance.

\begin{restatable}
{lemma}{LinearFactorabilityVB}
\label[lemma]{lem:encoding_factorability}
    For every fixed $d\ge1$, there exists a polynomial $P$ such that, for every $(d+2)$-VAS $V$, every $I\subseteq[d+2]$ with $|I|=2$, and every $B\in\NN$, there is a finite family $\mathcal{F}$ of sequences $(R_{1},\ldots,R_{m})$ satisfying
    \begin{equation*}
        \Reach_{\enc{V}_{I,B}}
        \subseteq
        \bigcup_{(R_{1},\ldots,R_{m})\in\mathcal{F}}R_{1}\circ\cdots\circ R_{m}
        \subseteq
        \Reach_{V}.
    \end{equation*}
    Every sequence in $\mathcal{F}$ is nonempty and has length at most $2B+2$.
    Each factor is $R_i=\Reach_{V_i}\subseteq\Reach_V$, where $V_i$ has one of the following forms:
    \begin{enumerate}
        \item $V_i=\enc{V}_{I\cup\{j\},D}$ for some $j\in[d+2]\setminus I$ and $D\le P(B+\Size(V))$;
        \item $V_i=\Lambda$ for some mixed LPS $\Lambda$, with $\norm{\Lambda},|\Lambda|\le P(B+\Size(V))$.
    \end{enumerate}
    Moreover, for every factor $R_i$ and every $(\Vec{x},\Vec{y})\in R_i$, 
    \begin{equation*}
        \MinLen_{V}(\Vec{x},\Vec{y}) \leq \MinLen_{V_{i}}(\Vec{x},\Vec{y}).
    \end{equation*}
\end{restatable}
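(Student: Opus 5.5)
We follow the high-interval recursion sketched in~\cref{subsubsec:overview_high_positions}. Fix $I=\{1,2\}$. Let $C:=(d+2)(2B+1)^2\norm{T}$, which is a transformation-area threshold for $\enc{V}_{2B}$ by~\cref{thm:generalized_transformation_area}. Apply~\cref{lem:coordinate_extraction} to the $d$-VASS $\enc{V}_B$ and to its reverse with this $C$, and let $H$ be the larger of the two resulting thresholds. Then $H\le (C+\Size(\enc V_B))^{\bigO((d+1)!)}$, which is polynomial in $B+\Size(V)$ because $d$ is fixed. The family $\mathcal F$ is defined as the set of all factor sequences that the procedure below can produce; all parameters are polynomially bounded, so $\mathcal F$ is finite.

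\textbf{Step 1 (one round).} Fix a run $\pi$ in $\enc{V}_B$ and call a position high if every counter reaches at least $H$ both in the prefix up to it and in the suffix from it. The high positions form an interval $[a,b]$, possibly empty. If some coordinate $j\in J$ stays below $H$ on the prefix before position $a$, then after absorbing one boundary transition it stays below $D:=H+\norm{T}$. That prefix is therefore a run of $\enc{V}_{I\cup\{j\},D}$, and the same holds for the suffix after $b$. If there are no high positions, the whole run lies in one or two such factors, since each position fails highness on its prefix or on its suffix. These factors clearly satisfy $\Reach\subseteq\Reach_V$ with no increase in length.

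\textbf{Step 2 (the LPS factor).} Let $m_\pi$ be the maximum first encoded coordinate. If the middle segment $\pi_{\mathrm{mid}}=\pi[a,b]$ visits a peak state $\Vec r$ of $\pi$, take an occurrence of $\Vec r$ at a high position. Apply~\cref{lem:coordinate_extraction} to the prefix ending there and to the reverse of the suffix starting there. This yields short runs $\alpha$ into $\Vec r(\Vec u)$ and $\beta$ out of $\Vec r(\Vec v)$, with $\Vec u,\Vec v\ge C\cdot\Vec1$, obtained by deleting cycles. We relocate the deleted cycles, together with the cycles of the middle part, to $\Vec r$ in $\enc{V}_{2B}$ using~\cref{lem:cycle_relocation} and~\cref{cor:cycle_relocation}. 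This gives a $\ZZ$-run $\Vec r(\Vec u)\to_\ZZ\Vec r(\Vec v)$ of the shape $\gamma_1\cdots\gamma_n$ at a single state.

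Grouping equal simple cycles gives a $\ZZ$-semantics LPS $\beta_1^*\cdots\beta_N^*$ over the polynomially many simple cycles of length at most $(2B+1)^2$ at $\Vec r$, with the encoded coordinates included as entries. We build the mixed LPS
\[
(\alpha,\NN)\,(\text{check},\NN)\,(\beta_1^*\cdots\beta_N^*,\ZZ)\,(\text{check},\NN)\,(\beta,\NN),
\]
where each check is the pair of vectors $(0,0,-C\cdot\Vec1)$ and $(0,0,C\cdot\Vec1)$. By~\cref{thm:generalized_transformation_area} applied to $\enc{V}_{2B}$, every witness lifts to a run of $V$ of no greater length; the checks have net effect zero and can be dropped. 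For $\Reach_{\enc V_B}\subseteq\bigcup\cdots$ we only need existence, so the LPS ranges over all choices of $\alpha,\beta,\Vec r$.

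\textbf{Step 3 (recursion).} If $\pi_{\mathrm{mid}}$ visits no peak state of $\pi$, then $m_{\pi_{\mathrm{mid}}}<m_\pi$, and we recurse on $\pi_{\mathrm{mid}}$, which is itself a run of $\enc V_B$. Each level contributes at most two bounded factors, and the depth is at most $B+1$. This gives at most $2B+2$ factors.

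\textbf{Main obstacle.} The delicate point is Step 2 at the last level, where a mixed-LPS factor joins the two bounded factors, and the sequence length must still stay within $2B+2$. We handle this by letting the LPS factor absorb the $\alpha$ and $\beta$ parts, so that the terminal level uses one factor in place of two. A second issue is that the relocation requires $\Vec r$ to be a peak state of the whole segment being decomposed. This is why the extraction is applied to the prefix and suffix of the current segment $\pi_{\mathrm{mid}}$, not to all of $\pi$, and highness must be measured within that segment. We would therefore redefine highness relative to the current segment at each level, which preserves the bounded-coordinate argument.
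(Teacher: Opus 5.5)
\textbf{Gap: the mixed-LPS factor is unsound.} Your architecture matches the paper's: highness measured relative to the current segment, bounded-counter factors $\enc V_{I\cup\{j\},D}$ for the outer parts, recursion when the middle misses the peak value, and a single terminal mixed-LPS factor absorbing $\alpha$ and $\beta$ so that the count stays at $2B+2$. What fails is the claim $\Reach_\Lambda\subseteq\Reach_V$ for that factor.

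A witness of a mixed LPS may start from any $\Vec{s}'\in\NN^{d+2}$, and your check $(0,0,-C\cdot\Vec1)(0,0,C\cdot\Vec1)$ constrains only the $J$-coordinates. Hence the vector entering the $\ZZ$-block has encoded part $\Vec r+\Vec\epsilon$, where $\Vec\epsilon=\restr{\Vec{s}'}{I}-\Vec p$ may have negative entries. The $\NN$-semantics of $\alpha$ only forces the states of $\alpha$, shifted by $\Vec\epsilon$, to stay nonnegative, and imposes nothing at all when $\alpha$ is empty. For such $\Vec\epsilon$ the block need not correspond to a path of any encoding, so \cref{thm:generalized_transformation_area} cannot be invoked.

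The inclusion genuinely fails. Take the $3$-VAS with $T=\{\Vec a,\Vec b\}$, where $\Vec a=(-1,1,1)$ and $\Vec b=(1,-1,1)$, let $B=1$, and consider the run $(\Vec a\Vec b)^n$ from state $(1,0)$ with counter value at least $H$. Every position is high, and $(1,0)$ is a peak state. Moreover, \cref{lem:coordinate_extraction} may legitimately return $\alpha=\beta=\varepsilon$, moving every copy of $\Vec a\Vec b$ into the $\ZZ$-block. Your factor then relates $(0,0,C)$ to $(0,0,C+2)$, although no transition of $V$ is enabled at $(0,0,C)$.

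\textbf{The missing idea} is to pin the encoded coordinates up to a \emph{nonnegative} offset. The paper's gadget $\chi^C_{\Vec r}$ does this by checking against $\Vec g^C_{\Vec r}=(\Vec r,C\cdot\Vec1)$ rather than $(0,0,C\cdot\Vec1)$. Then the vector entering the $\ZZ$-block is $(\Vec r+\Vec\eta,\Vec u')$ with $\Vec\eta\in\NN^2$ and $\Vec u'\ge C\cdot\Vec1$. Since the block is a genuine path from $\Vec r$ to $\Vec r$ in $\enc V_{2B}$, it exits at $(\Vec r+\Vec\eta,\Vec v')$. Subtracting $(\Vec\eta,\Vec0)$ gives a $\ZZ$-run of $\enc V_{2B}$ between configurations of $\GTA_C$. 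The theorem turns it into a run of the same length, and monotonicity adds $(\Vec\eta,\Vec0)$ back; this is \cref{lem:mixed_lps_transformation_area}.

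\textbf{A secondary correction.} The simple cycles at $\Vec r$ in $\enc V_{2B}$ are not polynomially many in general: their number can be exponential in $(2B+1)^2$, and $B$ is not fixed. Keep one representative cycle per effect instead. These effects vanish on $I$ and have norm at most $(2B+1)^2\norm T$, so there are at most $(2(2B+1)^2\norm T+1)^d$ classes. The substitution preserves the total effect, which is all the $\ZZ$-semantics sees. It also keeps the block a genuine path at $\Vec r$, so every witness still lifts to a run of exactly its own length. With both fixes, your direct construction of the $\ZZ$-block is a valid and more elementary substitute for the paper's appeal to the $\ZZ$-LPS representation of Blondin et al.
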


\begin{proof}[Proof sketch of \cref{thm:intro_factorization}]
    By adding dummy counters, any encoding of $V$ with fewer than two bounded coordinates can be regarded as one with exactly two bounded coordinates, so~\cref{lem:encoding_factorability} applies.
    Starting from $V$, we repeatedly apply~\cref{lem:encoding_factorability} to $B$-encoding factors with fewer than three bounded original coordinates, leaving mixed-LPS factors unchanged.
    Each application replaces such a factor by a finite union of compositions of mixed LPS and $B$-encoding factors with one additional bounded original coordinate.
    After three levels, every remaining $B$-encoding factor has three bounded coordinates, and the procedure terminates.

    Since each application produces compositions of polynomial length with polynomially bounded parameters, the fixed recursion depth ensures that the resulting factorization has the required polynomial bounds.
    Every reachable pair in $V$ is covered by a resulting composition $R_{1}\circ\cdots\circ R_{m}$.
    Conversely, for any resulting composition, witnesses for its factors can be realized as runs of $V$, which concatenate to a run between the same endpoints.
\end{proof}

The full proof is given in \cref{app:reachability_factorization}.
The rest of this section proves \cref{lem:encoding_factorability} and presents applications of the polynomial factorization.

\subsection{Mixed LPS from Generalized Transformation Areas}
\label{subsec:mixed_lps_transformation_areas}

To apply~\cref{thm:generalized_transformation_area}, we represent $\ZZ$-reachability in $\enc V_B$ by mixed LPS whose endpoints are restricted to $\GTA_C$.
We enforce this restriction using a checking gadget under $\NN$-semantics.

\paragraph{Endpoint Checks.}
\label{par:endpoint_checks}
Let $V$ be a $(d+2)$-VAS with transition set $T$, and let $C$ be a transformation-area threshold for $\enc{V}_B$.
For each $\Vec p\in[0,B]^2$, define $\Vec g_{\Vec p}^C:=(\Vec p,C\cdot\Vec1)\in\NN^{d+2}$ and a checking path
\[
    \chi_{\Vec p}^C:=(-\Vec g^C_{\Vec p})(+\Vec g^C_{\Vec p}).
\]
Under $\NN$-semantics, this gadget is enabled from $\Vec z\in\NN^{d+2}$ exactly when $\Vec z\ge\Vec g^C_{\Vec p}$.
Given a $\ZZ$-LPS over $T$, we place $\chi_{\Vec p}^C$ and $\chi_{\Vec q}^C$ at its two ends under $\NN$-semantics.
The checks ensure that, after subtracting a common nonnegative offset on the first two coordinates, the endpoints of the middle $\ZZ$-run lie in $\GTA_C$.
Thus~\cref{thm:generalized_transformation_area} applies to every witness of the resulting mixed LPS.
This yields the following lemma.
In what follows, we only consider mixed LPS over $T\cup\{\pm\Vec g^C_{\Vec p}:\Vec p\in[0,B]^2\}$.

\begin{lemma}
\label[lemma]{lem:mixed_lps_transformation_area}
If $\Vec p(\Vec x)\xrightarrow{*}_{\ZZ}\Vec q(\Vec y)$ in $\enc{V}_B$ with $\Vec p(\Vec x), \Vec q(\Vec y) \in \GTA_C$, then some mixed LPS $\Lambda$ satisfies
\begin{equation*}
    (\Vec p, \Vec x)\xrightarrow{\Lambda} (\Vec q,\Vec y),\qquad
    \Reach_\Lambda\subseteq\Reach_V,\qquad
    \norm\Lambda,|\Lambda|\le\PolyF{B+C+\Size(V)}.
\end{equation*}
Every $(\Vec s,\Vec t)\in\Reach_\Lambda$ satisfies $\MinLen_V(\Vec s,\Vec t)\le\MinLen_\Lambda(\Vec s,\Vec t)$.
\end{lemma}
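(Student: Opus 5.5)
The plan is to build $\Lambda$ in three pieces: an $\NN$-semantics check at each end and a $\ZZ$-semantics LPS in the middle. Take a path $\pi$ in $\enc{V}_B$ from $\Vec p$ to $\Vec q$ inducing the given $\ZZ$-run.

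\textbf{Step 1: the middle LPS.} Apply the cycle-decomposition idea of \cref{cor:cycle_relocation}, but stay inside $\enc{V}_B$. Repeatedly delete simple cycles from $\pi$ until a simple path $\eta$ of length at most $(B+1)^2-1$ remains. Each deleted cycle is attached to a state visited by $\eta$; if needed, first delete the cycle that is innermost in the nesting, so that its base state survives on the residual path. Only finitely many distinct simple cycles of length at most $(B+1)^2$ exist, so group them by base state.

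\textbf{Step 2: making the parameters polynomial.} The count of simple cycles may be exponential in $(B+1)^2$, which breaks the size bound. The fix is to record only one representative cycle per realized effect vector and per base state. There are at most $(B+1)^2$ states on $\eta$ and at most $(2(B+1)^2\norm{T}+1)^{d}$ effect vectors, since each cycle's effect on the encoded coordinates is zero. This gives a $\ZZ$-LPS $\Lambda_{\mathrm{mid}}=\alpha_0\beta_1^*\alpha_1\cdots\beta_k^*\alpha_k$ with $k$, $|\Lambda_{\mathrm{mid}}|$ and $\norm{\Lambda_{\mathrm{mid}}}$ bounded by $\PolyF{B+\Size(V)}$ for fixed $d$. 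Here $\alpha_0\cdots\alpha_k=\eta$ and the $\beta$'s are inserted at their base states. Under $\ZZ$-semantics, reordering cycles changes no endpoint, so $(\Vec p,\Vec x)\xrightarrow{\Lambda_{\mathrm{mid}}}_{\ZZ}(\Vec q,\Vec y)$ holds with suitable multiplicities. Every word it captures is, by construction, a path in $\enc{V}_B$ from $\Vec p$ to $\Vec q$.

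\textbf{Step 3: the checks.} Set $\Lambda=(\chi^C_{\Vec p},\NN)(\Lambda_{\mathrm{mid}},\ZZ)(\chi^C_{\Vec q},\NN)$, with $C$ polynomial. Since $(\Vec p,\Vec x)\ge\Vec g^C_{\Vec p}$ and likewise at the target, the given pair is in $\Reach_\Lambda$. Each check has length $2$ and norm $\max\{B,C\}$.

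\textbf{Step 4: soundness, $\Reach_\Lambda\subseteq\Reach_V$ with the length bound.} This is the main point to handle. Take any witness $\Vec s\xrightarrow{\chi_{\Vec p}}\Vec s\xrightarrow{w}_{\ZZ}\Vec t\xrightarrow{\chi_{\Vec q}}\Vec t$, where $w$ is a word captured by $\Lambda_{\mathrm{mid}}$. The checks give $\Vec s\ge(\Vec p,C\cdot\Vec 1)$ and $\Vec t\ge(\Vec q,C\cdot\Vec 1)$.

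The encoded coordinates of $\Vec s$ need not equal $\Vec p$. Still, $w$ follows a path of $\enc{V}_B$ from $\Vec p$, so its encoded prefix sums, shifted by $\Vec p$, stay in $[0,B]^2$. Write $\Vec s=(\Vec p+\Vec e,\Vec x')$ with $\Vec e\ge\Vec 0$. Then $\Vec t=(\Vec q+\Vec e,\Vec y')$, because $w$ has encoded effect $\Vec q-\Vec p$.

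Now view $w$ as a $\ZZ$-run in $\enc{V}_B$ from $(\Vec p,\Vec x')$ to $(\Vec q,\Vec y')$, both of which lie in $\GTA_C$. \cref{thm:generalized_transformation_area} produces a run $\rho$ in $\enc{V}_{2B}$, hence a run in $V$, with the same Parikh image. Adding $\Vec e$ on the encoded coordinates keeps them nonnegative, so $\rho$ is also a run of $V$ from $\Vec s$ to $\Vec t$.

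Its length is $|w|$, which is at most the witness length. This gives both $\Reach_\Lambda\subseteq\Reach_V$ and $\MinLen_V\le\MinLen_\Lambda$. The check transitions themselves are not part of $V$; they only certify membership in $\GTA_C$ and contribute length $4$, so the inequality holds.
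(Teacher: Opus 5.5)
Your Steps 3 and 4 match the paper's soundness argument: the checks force the middle endpoints into the form $(\Vec p+\Vec e,\Vec x')$ and $(\Vec q+\Vec e,\Vec y')$, you apply the threshold property, and you shift back by monotonicity. The paper obtains the middle $\ZZ$-LPS by citing Blondin et al.'s LPS representation of $\ZZ$-reachability, whereas you build it by hand. That construction has a real hole in Step 1.

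Deleting the innermost cycle first only guarantees that its base state lies on the residual path at the moment of deletion. A later deletion of an enclosing cycle can remove that state, so the final simple path $\eta$ need not visit any state of the earlier cycle. For instance, take $B=1$ and states $p=(0,0)$, $a=(0,1)$, $b=(1,0)$, $q=(1,1)$. The path with state sequence $p,a,b,a,p,q$ forces the deletions $a\,b\,a$ and then $p\,a\,p$, leaving $\eta=p\,q$ regardless of ordering. The cycle $a\to b\to a$ has no state on $\eta$, and no rotation or translation of it to $p$ or $q$ stays inside $[0,1]^2$. You must then either drop it, which loses the given pair whenever its effect on the non-encoded counters is nonzero, or insert it illegally. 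In the latter case the captured words are no longer paths of $\enc{V}_B$, and Step 4 cannot invoke the transformation-area property.

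The standard repair is the one used in the proof of \cref{lem:short_integer_vass}. Mark the first occurrence of each visited state, and delete simple cycles only inside the at most $(B+1)^2$ segments between consecutive marks. The resulting skeleton has length at most $(B+1)^4$ and visits every state visited by $\pi$. Every deleted simple cycle uses only such states, so it can be inserted at the skeleton's occurrence of its initial state, and every captured word is then a path of $\enc{V}_B$ from $\Vec p$ to $\Vec q$.

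With this skeleton, your Step 2 count (one representative per base state and per effect on the remaining coordinates) still gives polynomial bounds for fixed $d$, and the rest of your proof goes through. This yields a self-contained alternative to the citation, which is all the lemma needs since it asks for only one LPS containing the given pair.

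One small point: for an arbitrary transformation-area threshold $C$, invoke \cref{def:gta} directly to get a run of $V$ with the same Parikh image. \cref{thm:generalized_transformation_area} yields a run in $\enc{V}_{2B}$ only when $C$ is at least its explicit value.
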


\begin{proof}
The $\ZZ$-reachability relation of a VASS $\enc{V}_B$ is represented by finitely many LPS of length at most $\PolyF{\Size(\enc{V}_B)}$~\cite[Lemma~4.10 and Proposition~4.11]{DBLP:journals/jacm/BlondinEFGHLMT21}.
Choose such an LPS $\Gamma$ from $\Vec p$ to $\Vec q$ with $\Vec x\xrightarrow{\Gamma}_{\ZZ}\Vec y$.
We restore its original transition effects in $\ZZ^{d+2}$ and set
\begin{equation*}
    \Lambda:=(\chi_{\Vec p}^C,\NN)(\Gamma,\ZZ)(\chi_{\Vec q}^C,\NN).
\end{equation*}
Both checks are enabled at the given endpoints, so $(\Vec p,\Vec x)\xrightarrow{\Lambda}(\Vec q,\Vec y)$.

Consider any witness $w$ of $\Lambda$, with middle $\ZZ$-run $\rho$.
Since $\Gamma$ comes from paths from $\Vec p$ to $\Vec q$ in $\enc{V}_B$, the checks ensure that its endpoints have the form $(\Vec p+\Vec\eta,\Vec x')$ and $(\Vec q+\Vec\eta,\Vec y')$, where $\Vec\eta\in\NN^2$ and $\Vec x',\Vec y'\ge C\cdot\Vec1$.
Subtracting $(\Vec\eta,\Vec0)$ gives a $\ZZ$-run in $\enc{V}_B$ between two configurations in $\GTA_C$.
The threshold property yields a run of $V$ between these configurations of length $|\rho|$.
Adding back $(\Vec\eta,\Vec0)$ by monotonicity gives a run between the original endpoints of length $|\rho|\le |w|$.
This proves the containment and the distance inequality.
Since $|\Lambda|\le |\Gamma|+4$ and $\norm\Lambda\le\max\{\norm\Gamma,B,C\}$, the parameters of $\Lambda$ are polynomial in $B+C+\Size(V)$.
\end{proof}

\subsection{Linear Factorization of \texorpdfstring{$B$}{B}-Encodings of VAS}
\label{subsec:encoding_factorability}
Assume $I=\{1,2\}$ and put $W:=\enc V_B$.
Set $C:=(d+2)(2B+1)^2\norm T$, where $T$ is the transition set of $V$.
All LPS below are over $T\cup\{\pm\Vec g^C_{\Vec p}:\Vec p\in[0,B]^2\}$ as in \cref{par:endpoint_checks}.

By~\cref{lem:coordinate_extraction}, we can choose $H\ge C$, polynomial in $B+\Size(V)$, with the following property for both $W$ and its reverse.
If every counter reaches at least $H$ somewhere along a run, then deleting cycles yields a run of length at most $H$ that ends in the same state with all counters at least $C$.

\subsubsection{Normalization at High Peak Positions}
\label{subsubsec:high_peak_normalization}
A \emph{position} of a run $\pi$ is an occurrence of a configuration along it.
A position is \emph{high} if every counter reaches at least $H$ both in the prefix ending there and in the suffix starting there.
If $\pi$ visits a \hyperref[par:overview_peak_states]{peak state} at a high position, Rackoff's extraction gives short runs covering $C\cdot\Vec{1}$ at that state, yielding configurations in $\GTA_C$.
Cycle relocation then collects the removed cycles into a middle $\ZZ$-run.

\begin{lemma}
\label[lemma]{lem:high_peak_normalization}
Let $\Vec p(\Vec x)\xrightarrow{\pi}\Vec q(\Vec y)$ be a run in $W$ that visits a peak state at a high position.
Then some mixed LPS $\Lambda$ satisfies $(\Vec p, \Vec x)\xrightarrow{\Lambda}(\Vec q, \Vec y)$, $\Reach_\Lambda\subseteq\Reach_V$, and $\norm\Lambda,|\Lambda|\le\PolyF{B+\Size(V)}$.
Every $(\Vec u,\Vec v)\in\Reach_\Lambda$ satisfies $\MinLen_V(\Vec u,\Vec v)\le\MinLen_\Lambda(\Vec u,\Vec v)$.
\end{lemma}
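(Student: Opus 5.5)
We split $\pi$ at a position where it visits a peak state $\Vec r$ of $\pi$ at a high position, writing
\[
\Vec p(\Vec x)\xrightarrow{\pi_1}\Vec r(\Vec z)\xrightarrow{\pi_2}\Vec q(\Vec y).
\]
Since the position is high, every counter reaches at least $H$ along $\pi_1$ and along $\pi_2$. We apply the choice of $H$ (via \cref{lem:coordinate_extraction}) to $\pi_1$ in $W$. Deleting cycles from $\pi_1$ gives a run $\Vec p(\Vec x)\xrightarrow{\alpha}\Vec r(\Vec u)$ in $W$ with $|\alpha|\le H$ and $\Vec u\ge C\cdot\Vec1$. We apply the same property to the reverse of $\pi_2$ in $W^{\rev}$, starting from $\Vec q(\Vec y)$. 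This gives a run $\Vec r(\Vec v)\xrightarrow{\beta}\Vec q(\Vec y)$ in $W$ with $|\beta|\le H$ and $\Vec v\ge C\cdot\Vec1$. In both cases the final state is unchanged, so $\alpha$ ends at $\Vec r$ and $\beta$ starts at $\Vec r$. Thus $\Vec r(\Vec u)$ and $\Vec r(\Vec v)$ both lie in $\GTA_C$ for $\enc V_{2B}$ at the state $\Vec r\in[0,B]^2$.

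Next we build a $\ZZ$-run from $\Vec r(\Vec u)$ to $\Vec r(\Vec v)$ out of the deleted material. The deletions remove cycles $\theta_1,\dots,\theta_a$ from $\pi_1$ and $\theta'_1,\dots,\theta'_b$ from $\pi_2$, each a cycle in $W$ lying along $\pi$. We decompose every deleted cycle into simple cycles of $W$, all of whose states appear in $\pi$. Because $\Vec r$ is a peak state of $\pi$, \cref{lem:cycle_relocation} turns each such simple cycle into a simple cycle at $\Vec r$ in $\enc V_{2B}$ with the same Parikh image. Their concatenation $\gamma$ is a cycle at $\Vec r$ in $\enc V_{2B}$ with $\Eff(\gamma)=\Eff(\pi)-\Eff(\alpha)-\Eff(\beta)$. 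Restricted to the counters, $\Eff(\gamma)$ equals $\Vec v-\Vec u$, so $\Vec r(\Vec u)\xrightarrow{\gamma}_{\ZZ}\Vec r(\Vec v)$ in $\enc V_{2B}$.

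By the choice $C=(d+2)(2B+1)^2\norm T$, \cref{thm:generalized_transformation_area} makes $C$ a transformation-area threshold for $\enc V_{2B}$. We therefore apply \cref{lem:mixed_lps_transformation_area} with $2B$ in place of $B$. This gives a mixed LPS $\Lambda_{\mathrm{mid}}$ with $(\Vec r,\Vec u)\xrightarrow{\Lambda_{\mathrm{mid}}}(\Vec r,\Vec v)$, with $\Reach_{\Lambda_{\mathrm{mid}}}\subseteq\Reach_V$, with the distance inequality, and with parameters polynomial in $B+C+\Size(V)$, hence in $B+\Size(V)$. We then set
\[
\Lambda:=(\alpha,\NN)\,\Lambda_{\mathrm{mid}}\,(\beta,\NN),
\]
where $\alpha$ and $\beta$ are read as fixed sequences of transition effects in $\ZZ^{d+2}$. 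Since $|\alpha|,|\beta|\le H$ and $\norm\alpha,\norm\beta\le\norm T$, the parameters of $\Lambda$ stay polynomial.

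It remains to check soundness and the distance inequality. Any witness of $\Lambda$ executes $\alpha$ under $\NN$-semantics in $\ZZ^{d+2}$, then a witness of $\Lambda_{\mathrm{mid}}$, then $\beta$ under $\NN$-semantics. An $\NN$-execution of a fixed transition sequence is literally a run of $V$. The middle part lifts to a run of $V$ of no greater length by \cref{lem:mixed_lps_transformation_area}. Concatenating gives $\Reach_\Lambda\subseteq\Reach_V$ and $\MinLen_V\le\MinLen_\Lambda$.

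The main obstacle is the relocation step. \cref{lem:cycle_relocation} is stated for simple cycles occurring in a path to which $\Vec r$ is a peak. We must make sure that every deleted cycle decomposes into simple cycles whose states are visited by $\pi$, so that $\Vec r$ dominates them in the first coordinate. This holds because deletions from $\pi_1$ and $\pi_2$ only remove segments of $\pi$. We must also make sure that the reversal of $\pi_2$ does not affect the peak property, and it does not, since the set of visited states is the same.
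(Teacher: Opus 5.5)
Your proposal is correct and follows essentially the same route as the paper's proof. You split at a high peak position, apply Rackoff's extraction to the prefix and to the reversed suffix, relocate the simple cycles of the deleted material to the peak state in $\enc V_{2B}$ via the construction of \cref{lem:cycle_relocation}, and wrap the resulting $\ZZ$-run with \cref{lem:mixed_lps_transformation_area} at encoding bound $2B$ before adding the two short $\NN$-components. You also note that relocation only needs the states of each simple cycle to be visited by $\pi$, not the cycle to be a contiguous segment of $\pi$; this is a useful clarification of a point the paper passes over quickly.
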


\begin{proof}
    Choose such a high position, let $\Vec p^*$ be its state, and split $\pi=\pi^-\pi^+$ there.
    By the definition of $H$ and applying~\cref{lem:coordinate_extraction} to $\pi^-$ and to the reverse of $\pi^+$, we obtain
    \begin{equation*}
        \Vec p(\Vec x)\xrightarrow{\alpha}\Vec p^*(\Vec z),\qquad
        \Vec p^*(\Vec u)\xrightarrow{\delta}\Vec q(\Vec y),\qquad
        |\alpha|,|\delta|\le H,\qquad \Vec z,\Vec u\ge C\cdot\Vec1.
    \end{equation*}
    Both runs are obtained by deleting cycles from the corresponding segments.
    Decompose the deleted cycles into simple cycles.
    These cycles have total effect $\Vec u-\Vec z$.
    Since $\Vec p^*$ is a peak state of $\pi$, the construction in \cref{lem:cycle_relocation} rotates and relocates them to $\Vec p^*$ in $\enc V_{2B}$.
    Concatenating them gives a $\ZZ$-run $\Vec p^*(\Vec z)\xrightarrow{\tau}_{\ZZ}\Vec p^*(\Vec u)$.

    Apply \cref{lem:mixed_lps_transformation_area} with encoding bound $2B$ and threshold $C$ to obtain
    \begin{equation*}
        \Theta=(\chi_{\Vec p^*}^C,\NN)(\Gamma,\ZZ)(\chi_{\Vec p^*}^C,\NN),
        \qquad \Lambda:=(\alpha,\NN)\Theta(\delta,\NN).
    \end{equation*}
    Then $(\Vec p,\Vec x)\xrightarrow{\Lambda}(\Vec q,\Vec y)$ holds.
    For every witness of $\Lambda$, the $\alpha$- and $\delta$-components are runs of $V$, and the $\Theta$-component yields a run of no greater length by \cref{lem:mixed_lps_transformation_area}.
    This proves containment and the distance inequality.
    The bounds on $\alpha,\delta,C$ and $\Gamma$ give the parameter bounds.
\end{proof}

\subsubsection{Recursive Factorization}
\label{subsubsec:linear_factorization}
We first isolate the interval of high positions, leaving a prefix and suffix that each have a bounded counter.
If this interval contains a peak state, the preceding lemma yields a mixed-LPS factor.
Otherwise, we either obtain bounded-counter factors directly or recurse on the middle subrun, whose maximum first encoded coordinate is strictly smaller.

\begin{lemma}
    \label{lem:high_interval_decomposition}
    Every run $\pi$ in $W$ can be partitioned into $\pi=\alpha\pi'\beta$ such that:
    \begin{enumerate}
        \item Every nonempty path among $\alpha,\beta$ has a counter that stays below $H+\norm T$.
        \item Every position of $\pi'$ is high in $\pi$, unless $\pi'$ is empty.
    \end{enumerate}
\end{lemma}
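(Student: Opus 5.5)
We index the positions of $\pi$ by $0,1,\ldots,n$, where position $j$ is the configuration after the first $j$ transitions. For each counter $i$ let $f_i$ be the first position at which counter $i$ reaches at least $H$, and $g_i$ the last such position (undefined if counter $i$ never reaches $H$). By the definition, position $j$ is high exactly when $f_i\le j\le g_i$ for every counter $i$. Hence the high positions form the interval $[a,b]$ with $a:=\max_i f_i$ and $b:=\min_i g_i$. This interval is empty if some counter never reaches $H$ or if $a>b$.

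\emph{Nonempty interval.} Let $\alpha$ be the first $a$ transitions, $\pi'$ the segment from position $a$ to position $b$, and $\beta$ the remaining transitions. Every position of $\pi'$ is high, which is item~2; if $a=b$ then $\pi'$ is empty and item~2 is vacuous. For item~1, suppose $\alpha$ is nonempty, so $a\ge1$, and let $i$ attain $a=f_i$. Then counter $i$ is below $H$ at positions $0,\ldots,a-1$, and position $a$ is the endpoint of $\alpha$. Since a single transition changes a counter by at most $\norm T$, the value at position $a$ is at most $(H-1)+\norm T<H+\norm T$. Thus counter $i$ stays below $H+\norm T$ along $\alpha$. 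The argument for $\beta$ is symmetric, taking $i$ with $b=g_i$ and reading the positions after $b$ backwards from $n$.

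\emph{Empty interval.} If some counter $i$ never reaches $H$, put $\pi'$ and $\beta$ empty and $\alpha=\pi$; then counter $i$ stays below $H$ along $\alpha$. Otherwise $a>b$, with $a=f_i$ and $b=g_{i'}$. Cut $\pi$ at position $b$: let $\alpha$ be the prefix ending at $b$, let $\pi'$ be empty, and let $\beta$ be the suffix starting at $b$. Since $b<a=f_i$, counter $i$ is below $H$ at every position of $\alpha$. Along $\beta$, counter $i'$ is below $H$ at every position after $b$, because $b$ is its last large position. The value of counter $i'$ at position $b$ itself is at most $(H-1)+\norm T$, again because one transition changes it by at most $\norm T$. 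So counter $i'$ stays below $H+\norm T$ along $\beta$. If $b=n$, then $\beta$ is empty and nothing needs checking.

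The only step needing care is this boundary bookkeeping: the extreme position of a segment may itself carry a large value, and the single-transition bound $\norm T$ is exactly what absorbs it. Everything else follows directly from the interval characterization of high positions.
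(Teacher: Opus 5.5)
Your proof is correct and follows the paper's argument: both characterize the high positions as the interval $[\max_i f_i,\min_i g_i]$, and both use the one-step bound $\norm T$ to absorb the value at the cut position. The only difference is cosmetic: when $a>b$ you cut at $b$, whereas the paper cuts at $a$ by taking $\beta$ from position $\max\{a,b\}$, and either choice works.
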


\begin{proof}
	Number the positions of $\pi$ from $0$ to $\ell$.
	If some counter stays below $H$ throughout $\pi$, take $\alpha=\pi$ and let $\pi',\beta$ be empty at its final configuration.
	Otherwise, let $a_j,b_j$ be the first and last positions where counter $j$ reaches at least $H$, and set $a:=\max_j a_j$ and $b:=\min_j b_j$.
	The high positions are precisely $[a,b]$ when $a\le b$, and there are none otherwise.
    Choose $j_-$ and $j_+$ attaining $a$ and $b$, respectively.
    Let $\alpha$ be the subrun from position $0$ to $a$, let $\pi'$ be empty if $a>b$ and the subrun from $a$ to $b$ otherwise, and let $\beta$ be the subrun from $\max\{a,b\}$ to $\ell$.
    By construction, every position of $\pi'$ is high in $\pi$ whenever $\pi'$ is nonempty.
    Moreover, counter $j_-$ stays below $H$ before position $a$, while counter $j_+$ stays below $H$ after position $b$.
    Since each transition changes a counter by at most $\norm T$, every nonempty path among $\alpha,\beta$ has a counter bounded by $H-1+\norm T$, as required.
\end{proof}

We now combine the preceding two lemmas to prove the linear factorization.

\begin{proof}[Proof of \cref{lem:encoding_factorability}]
	Fix a run $\pi$ in $W$.
    For a run $\rho$ in $W$, let $\operatorname{peak}(\rho)\in[0,B]$ be the maximum first coordinate of its states.
    If $\pi$ is empty, it is captured by an empty $\NN$-LPS, so we assume that $\pi$ is nonempty.
    Put $D:=\max\{B,H-1+\norm T\}$ and set $\rho_0:=\pi$.
    We recursively process $\rho_i$ as follows.

    At stage $i$, apply~\cref{lem:high_interval_decomposition} to obtain $\rho_i=\alpha_i\rho_{i+1}\beta_i$, where $\rho_{i+1}$ spans the high interval of $\rho_i$, or is empty if no high position exists.
    We distinguish two cases:
    \begin{itemize}
        \item If the high interval contains a peak state of $\rho_i$, then~\cref{lem:high_peak_normalization} gives a mixed LPS $\Lambda$ containing the endpoints of $\rho_i$ in its reachability relation, with
        \[
            \norm\Lambda,|\Lambda|\le\PolyF{B+\Size(V)},
            \qquad
            \Reach_\Lambda\subseteq\Reach_V,
        \]
        and $\MinLen_V(\Vec x,\Vec y)\le\MinLen_\Lambda(\Vec x,\Vec y)$ for all $(\Vec x,\Vec y)\in\Reach_\Lambda$.
        We record $\Reach_\Lambda$ as one factor and terminate.

        \item Otherwise, each nonempty path among $\alpha_i,\beta_i$ has some counter $j$ bounded by $D$, and is therefore captured by $\enc V_{I\cup\{j\},D}$.
        We record the corresponding factor
        \[
            R=\Reach_{\enc V_{I\cup\{j\},D}}\subseteq\Reach_V,
        \]
        which satisfies $\MinLen_V(\Vec x,\Vec y)\le\MinLen_{\enc V_{I\cup\{j\},D}}(\Vec x,\Vec y)$ for all $(\Vec x,\Vec y)\in R$.
        If $\rho_{i+1}$ is empty, we terminate.
        Otherwise, we continue with $\rho_{i+1}$.
    \end{itemize}

    At every nonterminal stage, the high interval contains no peak state of $\rho_i$, so
    \[
        \operatorname{peak}(\rho_{i+1})
        <\operatorname{peak}(\rho_i).
    \]
    Since $\operatorname{peak}(\rho_i)\in[0,B]$, there are at most $B$ nonterminal stages.
    Each such stage contributes at most two bounded-counter factors, while the terminal stage contributes at most two factors.
    Thus the resulting composition has at most $2B+2$ factors, ordered according to their corresponding subruns.
	The factors produced above have uniformly bounded parameters, including at most five components per mixed LPS, so they belong to a finite family $\mathcal R$.
    
	Include in $\mathcal R$ the identity relation represented by the empty $\NN$-LPS, and choose a common polynomial $P$ for the parameter bounds.
	Let $\mathcal F$ contain all nonempty sequences of at most $2B+2$ relations from $\mathcal R$.
    The recursive construction gives coverage of $\Reach_{\enc V_{I,B}}$, and transitivity of $\Reach_V$ gives the other inclusion.
\end{proof}

The first inclusion in \cref{lem:encoding_factorability} may be strict: cycle relocation uses the enlarged encoding $\enc V_{2B}$, and reconstruction in $V$ need not preserve the original bound $B$.
This suffices for the polynomial factorization, since every resulting factor is contained in $\Reach_V$.

\subsection{Applications of Polynomial Factorization}
\label{subsec:factorization_applications}

\paragraph{Mixed LPS for \texorpdfstring{$5$}{5}-VAS Reachability.}
\label{subsubsec:five_vas_mixed_lps}
In a $5$-VAS, encoding the three bounded counters of a factor into states leaves a $2$-VASS of polynomial size.
By~\cite[Theorem~3.1]{DBLP:journals/jacm/BlondinEFGHLMT21}, its reachability relation is covered by finitely many $\NN$-LPS with polynomially bounded parameters.
The decoded LPS cover the factor, and each defines a subrelation of $\Reach_V$.
Concatenating these LPS with the mixed-LPS factors proves \cref{cor:flatness_of_5vas}.
See \cref{app:mixed_flattenability} for the full proof.

Fixing which cycles occur in a mixed-LPS witness yields a system $\mathcal E_\Lambda$ of linear equations over nonnegative integers, with the source and target vectors as variables.
The system has at most $\PolyF{\Size(V)}$ variables and equations, and its coefficients and constants have norm at most $\PolyF{\Size(V)}$.
By~\cref{thm:pottier}, its solution set is a finite union of linear sets whose bases and periods have norm at most $2^{\PolyF{\Size(V)}}$.
Projecting onto the source and target coordinates and taking the union over all mixed LPS and cycle choices yields
\[
    \Reach_V=\bigcup_{i\in\mathcal I}(\Vec b_i+P_i^*),
    \qquad
    \norm{\Vec b_i},\norm{P_i}\le 2^{\PolyF{\Size(V)}}.
\]
Since the family is effectively enumerable, $\Reach_V$ is effectively semilinear.

Now let $M:=\Size(V,\Vec s,\Vec t)$, and let $n$ be the binary input length.
Fixing $\Vec s, \Vec t$ in $\mathcal E_\Lambda$ gives a system $\mathcal E_{\Lambda,\Vec s, \Vec t}$ with polynomially bounded size and coefficients in $M$.
By~\cref{thm:pottier}, a feasible system has a solution of binary length polynomial in $M$, yielding a $\NEXPT$ algorithm since $M$ is at most exponential in $n$.
Moreover, the cycle repetition counts are at most $2^{\PolyF{M}}$, and every witness yields a run in $V$ of no greater length, so $\MinLen_V(\Vec s,\Vec t)\le 2^{\PolyF{M}}$.
The $\PSPACE$ upper bound uses the stronger polynomial reachability-distance bound of~\cref{thm:polynomial_witness}, proved in the following sections.

\paragraph{Reachability in \texorpdfstring{$(d+3)$}{(d+3)}-VAS Is in \texorpdfstring{$\mathsf F_d$}{Fd}.}
\label{subsubsec:higher_dimensional_complexity}
For a $(d+3)$-VAS, encode the three bounded counters of each bounded-counter factor into states while preserving all original counter effects.
Every cycle returns to the same encoded state, so it has zero effect on these three counters and the resulting gadget has cycle-dimension at most $d$.
The bound of~\cite[Theorem~2.1]{DBLP:conf/concur/ChenFZ26} concerns the maximum cycle-dimension of an SCC, so the mixed-LPS factors must also be converted into VASS gadgets with controlled cycle-dimension.
To build a VASS gadget, give each path and cycle of the LPS its own copy and handle its $\NN$- and $\ZZ$-components differently:
\begin{itemize}
    \item an $\NN$-component needs no further change, and
    \item a $\ZZ$-component is padded by two self-loops, one raising ($+\mathbf{1}$) and one restoring ($-\mathbf{1}$) all original counters, with an auxiliary counter forcing the two repetition counts to be equal.
\end{itemize}
The auxiliary counter adds a counter, but each SCC of the resulting gadget has at most one simple cycle, so its cycle-dimension is at most $1$.
Connecting the gadgets in sequence keeps every SCC inside a single gadget, so the compiled VASS has component-dimension at most $d$, and the $\mathsf{F}_d$ bound of~\cite[Theorem~2.1]{DBLP:conf/concur/ChenFZ26} applies.
Since every witness becomes a run of $V$ of no greater length, the bound transfers back and proves \cref{thm:higher_dimensional_complexity} for $d\ge3$.
The full proof is in \cref{app:higher_dimensional_complexity}.

%% file: contents/6-Coverable.tex
\section{CFZ Characterization and Basic Reachability-Distance Bounds}
\label{sec:basic_reachability_bounds}

We now turn to the proof of~\cref{thm:polynomial_witness} starting from a decomposition of $5$-VAS reachability into two classes of instances.
Recall that for a $d$-VASS $V=(Q,T)$, the encoding $\enc{V}_{I,B}$ has state set $Q'=Q\times[0,B]^I$ and removes the coordinates in $I$.
We also use $\widehat{\enc V}_{I,B}$, with the same state set but retaining all $d$ coordinates.
For every $(p,\Vec a,q) \in T$ and $p(\Vec u)\in Q'$, it has the transition
\[
    \bigl(p(\Vec u),\Vec a, q(\Vec v)\bigr)
    \quad\text{precisely when}\quad
    \Vec v=\Vec u+\restr{\Vec a}{I}\in [0,B]^I.
\]
Retaining the original counters allows two components encoding different coordinate sets to share the same configuration across a zero-effect transition.
From a configuration $p(\Vec{u})(\Vec{x})$ satisfying $\Vec x|_I=\Vec u$, every reachable configuration also preserves this equality.

\paragraph{The CFZ Characterization.}
\label{par:cfz_characterization}
Based on these notions, we now instantiate~\cite[Proposition 5.1]{DBLP:conf/concur/ChenFZ26} for $d=5$ as follows, and refer the reader to its full version for the construction.

\begin{theorem}[CFZ Characterization~{\cite[Proposition~5.1]{DBLP:conf/concur/ChenFZ26}}]
	\label{thm:cfz}
	For every $5$-VAS $(V,\Vec{s},\Vec{t})$ with $\Vec{s}\xrightarrow{*}\Vec{t}$, there exists $B \le \Size(V,\Vec{s},\Vec{t})^{\bigO(1)}$ and a VASS $(V',s',t')$ such that $s'\xrightarrow{*}t'$ in $V'$ and $\MinLen_{V}(\Vec{s},\Vec{t}) \le \MinLen_{V'}(s',t')$.
	Moreover, one of the following alternatives holds:
	\begin{caselist}
		\item\label{item:cfz_two_components}
		$V'$ consists of two disjoint $5$-VASS $\widehat{\enc{V''}}_{I_1,B},\widehat{\enc{V''}}_{I_2,B}$, connected by a zero-effect transition from the first to the second.
		Here, $V''$ is a $5$-VASS of size $\Size(V,\Vec{s},\Vec{t})^{\bigO(1)}$, and for each $i = 1,2$, we have $|I_i| = 2$ and $\dimcyc(\widehat{\enc{V''}}_{I_i,B})\le2$.
		The source and target configurations of $V'$ are $s' = (p, \restr{\Vec{s}}{I_1})(\Vec{s})$ and $t' = (q, \restr{\Vec{t}}{I_2})(\Vec{t})$, where $p$ and $q$ are two states in $V''$.
		\item\label{item:cfz_encoding}
		$(V',s',t')$ is a single $B$-encoding VAS $(\enc{V}_{B}, \enc{\Vec s}_B, \enc{\Vec t}_B)$ for some $B\le\Size(V,\Vec s,\Vec t)^{\bigO(1)}$.
	\end{caselist}
\end{theorem}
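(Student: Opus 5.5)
The CFZ characterization is imported from~\cite[Proposition~5.1]{DBLP:conf/concur/ChenFZ26}, so the plan is to recall how that proposition is derived and instantiate it at $d=5$. I would not build a new argument. Fix a reachable instance $(V,\Vec s,\Vec t)$ and a shortest witnessing run $\pi$, and put $M:=\Size(V,\Vec s,\Vec t)$.

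The first step is to pick a threshold $B=M^{\bigO(1)}$ using \cref{lem:coordinate_extraction} together with the transformation-area threshold of \cref{thm:generalized_transformation_area}. I then split into two cases according to how many coordinates stay small along $\pi$. If at most two coordinates exceed $B$ along the whole run, then, after permuting coordinates, the two remaining coordinates stay below $B$ throughout. In that case $\pi$ is literally a run of $\enc V_B$ from $\enc{\Vec s}_B$ to $\enc{\Vec t}_B$. This gives alternative~(II), and the length inequality is an equality, since runs of $\enc V_B$ lift to runs of $V$ of the same length.

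Otherwise, the run must be cut into pieces in each of which some pair $I_i$ of coordinates is bounded by $B$. Between the pieces there are positions where more coordinates are large, and these are handled by the transformation-area and pumping arguments of CFZ. The aim is to show that one can insert at most one switching point. The prefix lives in $\widehat{\enc{V''}}_{I_1,B}$ and the suffix in $\widehat{\enc{V''}}_{I_2,B}$, glued by a zero-effect bridge. The auxiliary VASS $V''$ of polynomial size is obtained by adjoining short extracted runs as transitions, and it absorbs the intermediate high segment. Keeping all five counters in the hat-encodings is what makes the bridge well defined, because both sides share the same configuration. The bound $\MinLen_V(\Vec s,\Vec t)\le\MinLen_{V'}(s',t')$ then follows because every transition of $V''$ unfolds into a run of $V$ of at least the same length.

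The main obstacle is the cycle-dimension condition $\dimcyc\le2$ in alternative~(I). In CFZ this comes from a case split on $\Span(T)$. If the cycles of a component span all three free directions, the generalized transformation area lets one move that segment into a $B$-encoding covered by alternative~(II). Only components whose cycle effects are confined to a plane remain as separate pieces. I would follow the full version of~\cite{DBLP:conf/concur/ChenFZ26} for this dichotomy and check that the polynomial bounds are uniform for fixed dimension five. This uniformity holds because all constants there depend only on $d$.
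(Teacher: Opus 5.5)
The paper gives no proof of this statement. It instantiates \cite[Proposition~5.1]{DBLP:conf/concur/ChenFZ26} at $d=5$ and refers to the full version of that paper for the construction. Your closing move, deferring to the same source, is therefore the whole argument, and at that level you agree with the paper.

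The reconstruction you wrap around the citation, however, does not work as written. The case split is miscounted. With five counters, ``at most two coordinates exceed $B$'' leaves three bounded ones, whereas the condition that places $\pi$ directly in $\enc{V}_B$ is that at least two coordinates stay at most $B$.

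More seriously, the complementary case does not give a decomposition of $\pi$ into pieces that each have a bounded pair. The run $\pi$ may pass through configurations in which all five counters are large. Where bounded pairs do exist, they may alternate arbitrarily often, as the paper itself points out in \cref{subsubsec:overview_mixed_lps}. Alternative~(I) asserts the existence of a possibly different run of $V'$ with exactly one switch, and constructing that run is the entire content of the CFZ proposition. Your sentence ``the aim is to show that one can insert at most one switching point'' names this content without supplying it.

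Likewise, the two alternatives describe the whole instance $(V',s',t')$. A segment whose component has a full three-dimensional cycle space therefore cannot be ``moved into'' alternative~(II) while the other segments stay in~(I). Moreover, \cref{thm:generalized_transformation_area} only converts $\ZZ$-runs whose endpoints already lie in $\GTA_C$, so it does not relocate arbitrary segments.

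Your justification of $\MinLen_V(\Vec s,\Vec t)\le\MinLen_{V'}(s',t')$ also runs in the wrong direction. Suppose $V''$ contained macro-transitions standing for extracted runs of $V$ of length up to $L$. Then a run of $V'$ of length $n$ would unfold into a run of $V$ of length up to $nL$, which yields only $\MinLen_V(\Vec s,\Vec t)\le L\cdot\MinLen_{V'}(s',t')$. The stated inequality requires each transition of $V'$ to project to at most one transition of $V$, with the zero-effect bridge projecting to nothing. So $V''$ may refine the control states but must keep the original transition vectors. One example is recording a bounded inner product with a separating normal, as in \cref{lem:proper_cone_trim}, which lowers the cycle-dimension without changing any transition vector. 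Adjoining macro-transitions would also create new cycle effects, leaving the condition $\dimcyc\le2$ uncontrolled.
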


The constant hidden in the $\bigO(1)$ exponent above is independent of the input.
It suffices to establish polynomial reachability-distance bounds for these two classes.

\subsection{Case I: Two Components with Cycle-Dimension at Most 2}
\label{subsec:two_component_case}

We now treat the two cases separately, beginning with~\cref{item:cfz_two_components}.
For clarity, we fix a $5$-VASS $V''$, $I_1, I_2 \subseteq [5]$ containing two elements, and $B \in \NN$, and consider the reachable instance $(V',s',t')$ obtained from $V''$ by the construction in~\cref{item:cfz_two_components}.
In this part, we establish the following proposition.

\begin{restatable}{proposition}{lenCaseI}
	\label{prop:two_component_reachability}
	Every reachable instance $(V',s',t')$ in~\cref{item:cfz_two_components} satisfies 
	\[\MinLen_{V'}(s',t')\le (B+\Size(V',s',t'))^{\bigO(1)}.\]
\end{restatable}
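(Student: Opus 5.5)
Let $W_1:=\widehat{\enc{V''}}_{I_1,B}$ and $W_2:=\widehat{\enc{V''}}_{I_2,B}$, and fix a run $s'\xrightarrow{\pi_1}w_1\xrightarrow{\text{bridge}}w_2\xrightarrow{\pi_2}t'$ through the zero-effect transition. Write the bridge configuration in $V''$ as $\Vec z\in\NN^5$. The plan is to find a replacement bridge configuration $\Vec z'$ of polynomial norm, together with polynomial-length runs $s'\xrightarrow{*}\Vec z'$ in $W_1$ and $\Vec z'\xrightarrow{*}t'$ in $W_2$. Throughout, I view each $W_i$ as a $3$-VASS of cycle-dimension at most $2$. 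Its states are states of $V''$ paired with $[0,B]^{I_i}$, and its counters are the coordinates in $[5]\setminus I_i$. This is legitimate because the invariant $\restr{\Vec x}{I_i}=\Vec u$ makes the $I_i$-coordinates redundant.

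\textbf{Step 1.} Apply the hybrid-set representation theorem for geometrically $2$-dimensional $3$-VASS from~\cite{DBLP:journals/corr/abs-2607-14983} to $(W_1,s')$. This yields a hybrid set $H=\Vec b+P^{\Inner{\Vec c,\cdot}\le K}\subseteq\NN^3$ with $\restr{\Vec z}{[5]\setminus I_1}\in H$ at the bridge state, and $\hnorm(H)\le(B+\Size(V',s',t'))^{\bigO(1)}$. Every member $\Vec h\in H$ is reachable from $s'$ by a run of length polynomial in $B+\Size(V',s')+\norm{\Vec h}$. Restoring the two $I_1$-coordinates, which are fixed by the state, lifts $H$ to a hybrid set in $\NN^5$ whose periods vanish on $I_1$.

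\textbf{Step 2.} Fix the two coordinates in $I_2$ to the values $\restr{\Vec z}{I_2}\le B$, which are determined by the bridge state of $W_2$. Periods with nonzero entries on $I_2\setminus I_1$ can then be used only boundedly often, namely at most $B$ times in total per coordinate. I absorb every such bounded combination into the base. The result is a finite union of hybrid sets whose periods are zero on $I_2$. Each of these has representation norm larger by at most a factor $\PolyF{B\cdot\hnorm(H)}$, and one of them still contains $\Vec z$. Projecting onto $[5]\setminus I_2$ gives a hybrid set $H'$ of source configurations for $W_2$, all at the same state.

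\textbf{Step 3.} Add the periods of $H'$ as self-loops at the bridge state of $W_2$, with the cost constraint handled as in the cited work: either an auxiliary bounded counter, or absorbing restricted periods into the base once their total exceeds the run length. The base of $H'$ is then the source. Apply the reachability-distance theorem for geometrically $2$-dimensional $3$-VASS to this augmented instance and the target $t'$. This yields a polynomial-length run from $\Vec b'$ that pumps polynomially many periods and then reaches $t'$. Its cost stays within $K$ after the same absorption trick. Undoing the pumping gives an explicit $\Vec z'\in H'$ of polynomial norm with a short run to $t'$. Step 1 then gives a short run from $s'$ to $\Vec z'$, and concatenating through the bridge finishes the proof.

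The main obstacle is Step 3 together with the bookkeeping in Step 2. The added self-loops must not break cycle-dimension $\le2$, since their effects may leave $\CycleSpace(W_2)$; this is the point I would check most carefully. If it fails, I would instead add them as a separate prefix SCC of dimension $\le2$ and invoke the sequential version of the cited theorem. I would also verify that the restricted-cost absorption keeps the representation norm polynomial rather than depending on $K$.
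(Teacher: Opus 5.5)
Your Steps 1 and 2 match the paper's proof. You describe the bridge vectors reachable through the first component by a hybrid set via \cref{thm:hybrid_description}, lift it to $\NN^5$, and absorb the at most $2B$ occurrences of periods that are nonzero on $I_2$ into the base. The paper absorbs only the occurrences in the fixed representation of the actual bridge vector, lowering the cost bound to $K_0$; your finite union also works.

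The gap is in Step 3. As you suspect, adjoining the periods of $H'$ as self-loops can raise the cycle-dimension to $3$. After absorption the periods vanish on $I_1\cup I_2$, but they need not lie in $\CycleSpace$ of the second component. For instance, take $I_1=\{1,2\}$ and $I_2=\{3,4\}$, with periods along $\Vec e_5$, while the cycles of $\enc{V''}_{I_2,B}$ span only $\Span(\Vec e_1,\Vec e_2)$. Then the fixed-source distance bound for cycle-dimension at most $2$ cannot be invoked. Your fallback does not close this. A self-loop SCC followed by the second component has component-dimension $2$ but cycle-dimension $3$. A polynomial distance bound for that sequential system is not a consequence of the fixed-source theorem; it is exactly the two-component difficulty you are trying to overcome. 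Separately, the ``auxiliary bounded counter'' option for costs fails, since $K$ is excluded from $\hnorm$ and cannot be encoded polynomially into states.

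The missing ingredient is \cref{lem:hybrid_source}. It applies to a $3$-VASS of cycle-dimension at most $2$ with an \emph{arbitrary} hybrid source $H$: there is no cycle-space condition on the periods and no dependence on $K$. It directly yields $\Vec x'\in H$ with $\norm{\Vec x'}$ and $\MinLen_V(p(\Vec x'),t)$ bounded by $\PolyF{\Size(V,t)+\hnorm(H)}$. Apply it in $\enc{V''}_{I_2,B}$ to your projected hybrid set. This gives a small member $\Vec z$ and a short run to the projected target, with no self-loops and no cost bookkeeping. Lifting $\Vec z$ by reinserting the $I_2$-values gives a vector $\Vec w'$ in the absorbed subset of $H$, so it is reachable at the bridge through the first component. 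A second application of the same lemma, with the singleton source $s_1$ and this fixed target, gives a short run there. The zero-effect bridge then joins the two runs.
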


It is crucial to observe that both $\widehat{\enc{V''}}_{I_1,B}$ and $\widehat{\enc{V''}}_{I_2,B}$ have cycle-dimension at most $2$.
Removing the coordinates already recorded in the states yields a $3$-VASS $\enc{V''}_{I_i,B}$ with cycle-dimension at most two.
We use two existing results for such systems.

For a $d$-VASS $V$ and a state $q$, write $\Reach^q_V(s) := \{\Vec{x} : q(\Vec{x}) \in \Reach_V(s)\}$ for the slice of $\Reach_V(s)$ at state $q$.
More generally, for $X\subseteq\NN^d$, we define $\Reach^q_V(p(X)) := \bigcup_{\Vec{x} \in X} \Reach_V^q(p(\Vec{x}))$, namely, the set of vectors reachable at $q$ from some configuration $p(\Vec{x})$ with $\Vec{x}\in X$.
The first result represents configurations reachable from a hybrid-set source by a finite union of hybrid sets with polynomially bounded representation norms.

\begin{theorem}[Efficient Hybrid-Set Representations~{\cite[Theorem~4.2]{DBLP:journals/corr/abs-2607-14983}}]
	\label{thm:hybrid_description}
	Let $V$ be a $3$-VASS with $\dimcyc(V)\le2$.
	Let $H=\Vec{b}+P^{\langle\Vec{c},\cdot\rangle\le K}$ be a hybrid set whose periods satisfy $P\subseteq\CycleSpace(V)$.
	For any states $p,q$, the set $\Reach_V^q(p(H))$ is a finite union of hybrid sets $H'$ satisfying
	\begin{equation*}
		\hnorm(H')\le\PolyF{\Size(V)+\hnorm(H)}.
	\end{equation*}
\end{theorem}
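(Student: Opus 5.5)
The plan is to reduce the statement to a flattening result for geometrically $2$-dimensional systems, and then to read off the hybrid sets from an integer-programming description. Throughout, the parameter $K$ is never touched, so that it does not enter the representation norm. Write $L:=\CycleSpace(V)$, so $\dim L\le2$.

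\textbf{Step 1: confine runs to affine planes.} Every path from $p$ to $q$ is a simple path plus simple cycles. Its effect therefore lies in one of at most $|Q|^{\bigO(1)}$ translates $\Vec e+L$ with $\norm{\Vec e}\le\Size(V)$. Since $P\subseteq L$, every configuration reachable at $q$ from $p(H)$ lies in one of polynomially many planes $\Vec b+\Vec e+L$.

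\textbf{Step 2: flatten.} On each such plane the intersection with $\NN^3$ is a $2$-dimensional polyhedron. The dynamics there behave like a $2$-VASS whose nonnegativity constraints come from at most three halfplanes, not necessarily coordinate-aligned. I would invoke (or reprove along the lines of the $2$-VASS flattability of Blondin et al.) the fact that runs of a $3$-VASS with $\dimcyc\le2$ from a fixed state to another are captured by finitely many LPS $\Lambda=\alpha_0\beta_1^*\alpha_1\cdots\beta_k^*\alpha_k$. Crucially, $|\Lambda|$ and $\norm\Lambda$ should be polynomial in $\Size(V)$. The bound must also be uniform in the source, and I would achieve this by treating the source as a variable rather than fixing it.

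\textbf{Step 3: encode reachability through $\Lambda$ as inequalities.} Introduce variables $\Vec\lambda\in\NN^m$ for the period multiplicities and $\Vec n\in\NN^k$ for the cycle repetitions. Nonnegativity along $\Lambda$ from the source $\Vec b+P\Vec\lambda$ is a system of linear inequalities in $(\Vec\lambda,\Vec n)$. It suffices to check each cycle block at its first and last iteration, because along a repeated cycle each coordinate is monotone across iterations. The coefficients have norm at most $\PolyF{\Size(V)+\hnorm(H)}$, and the number of rows is polynomial. By the Pottier decomposition (\cref{thm:pottier}, \cref{cor:pottier}), the solution set is a union of sets $\Vec x_0+\{\text{Hilbert basis}\}^*$ whose bases and periods have norm at most $\PolyF{\Size(V)+\hnorm(H)}^{r}$. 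Here the rank $r$ is bounded by a constant, since the inequalities live in dimension $3$ and only a constant number of them are essentially independent. The target is the affine image $\Vec b+P\Vec\lambda+\sum_i n_i\Eff(\beta_i)+\Eff(\alpha_0\cdots\alpha_k)$.

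\textbf{Step 4: carry the cost constraint.} Each Hilbert-basis period $\Vec h=(\Vec\lambda_h,\Vec n_h)$ receives cost $\Inner{\Vec c,\Vec\lambda_h}$. Each base $\Vec x_0=(\Vec\lambda_0,\Vec n_0)$ is paired with the budget $K-\Inner{\Vec c,\Vec\lambda_0}$. Images of these sets are hybrid sets, which gives exactly the desired union.

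I expect the main obstacle to be this last step. The constraint $\Inner{\Vec c,\Vec\lambda}\le K$ and the base shift interact: different bases need different budgets, while the statement allows only a single $K$ per hybrid set and excludes $K$ from the norm. Moreover, the rank bound needed for polynomial rather than exponential Pottier bounds requires care. To handle the budget, I would first try replacing the cost vector by $\Vec c'=\Vec c$ on periods. The budget deficit $\Inner{\Vec c,\Vec\lambda_0}$ would be absorbed by splitting on the polynomially many possible values of $\Vec\lambda_0$. I would exploit that $\norm{\Vec\lambda_0}$ is polynomially bounded, so that one can add a dummy restricted period of cost $\Inner{\Vec c,\Vec\lambda_0}$ that is forced to be used, keeping a common $K$. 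If this fails, I would fall back on the dedicated two-dimensional geometric analysis of the cited work.
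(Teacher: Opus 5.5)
The paper does not prove this theorem; it imports it from the cited work. The only trace of that argument is the sketch of its $2$-VASS analogue in the proof of \cref{lem:two_counter_tools}. Measured against what is needed, your plan has a genuine gap in Step~3.

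\textbf{The rank is not constant.} A polynomial-size scheme may have polynomially many cycle blocks. The nonnegativity rows at block $j$ involve the fresh variable $n_j$, so these rows are linearly independent and the rank grows linearly in the number $k$ of blocks. After adding slack variables it equals the number of rows. Pottier's theorem then gives only $(\Size(V)+\hnorm(H))^{\Theta(k)}$, i.e.\ exponential norms. This is exactly the $2^{\PolyF{\Size(V)}}$ bound the paper itself obtains by this route in \cref{subsec:factorization_applications}.

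\textbf{The blow-up is real, not an artifact of the estimate.} Take a chain of states with self-loops of effects $(1,0,0),(-1,2,0),(2,-1,0),(-1,2,0),\ldots$, joined by zero-effect bridges. This $3$-VASS has cycle-dimension $2$ and is already flat, so your Step~2 may return its own chain scheme. The homogeneous prefix constraints $\sum_{i\le j}n_i\Eff(\beta_i)\ge\Vec0$ have the primitive extreme ray $(1,1,2,4,\ldots,2^{k-2})$. Every Hilbert basis of your $(\Vec\lambda,\Vec n)$-system therefore contains it, and its image is a period of norm $2^{k-1}$. So the affine image of the Pottier decomposition has exponentially large periods, even though the image set itself may well admit small ones.

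\textbf{What is missing.} You need an argument on the image side: propagate hybrid sets block by block and show that period norms do not accumulate across turns. This is what the $2$-VASS sketch in the proof of \cref{lem:two_counter_tools} indicates. Pottier is applied only to pieces with constantly many cycles, which is the regime where your Step~3 does work. Each one-turn return in the zigzag middle satisfies $D_{i+1}\le\max\{D_i,Q(S)\}$.

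\textbf{Step~2 is assumed rather than proved.} A slice $(\Vec b+\Vec e+L)\cap\NN^3$ is a polygon cut out by up to three walls that are not axis-parallel in lattice coordinates, and it may be bounded. It is therefore not a $2$-VASS, and the flattening of Blondin et al.\ does not apply. A source-uniform polynomial flattening of geometrically $2$-dimensional $3$-VASS is essentially the substance of the theorem.

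\textbf{Step~4 is not an obstacle.} Each hybrid set in the union carries its own bound. The piece with base coefficients $\Vec\lambda_0$ simply receives $K-\Inner{\Vec c,\Vec\lambda_0}$, and is discarded if this is negative. Its periods get costs $\Inner{\Vec c,\Vec\lambda_h}$.
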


To bound $\MinLen_V(s,t)$, we also need the following reachability-distance bound for $3$-VASS of cycle-dimension at most $2$ with a hybrid source.

\begin{lemma}[{\cite[Lemma~4.3]{DBLP:journals/corr/abs-2607-14983}}]
	\label{lem:hybrid_source}
	Let $V$ be a $3$-VASS with $\dimcyc(V)\le2$, $t$ be a configuration, and $H$ be a hybrid set.
	If $p(\Vec{x})\xrightarrow{*} t$ for some $\Vec{x}\in H$, there exists $\Vec{x}'\in H$ with
	\begin{equation*}
		\norm{\Vec{x}'},\MinLen_V(p(\Vec{x}'), t)\le\PolyF{\Size(V,t)+\hnorm(H)}.
	\end{equation*}
\end{lemma}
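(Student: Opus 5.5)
The plan is to reduce the statement to two separate bounds. The first is a bound on the norm of some reachable member of $H$, obtained from \cref{thm:hybrid_description} applied to the reversed system. The second is a bound on the reachability distance between two fixed configurations of polynomial norm in a $3$-VASS of cycle-dimension at most $2$.

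\emph{Step 1 (backward hybrid description).} Consider $V^{\rev}$. Its simple cycles are the reverses of those of $V$, so $\CycleSpace(V^{\rev})=\CycleSpace(V)$ and $\dimcyc(V^{\rev})\le2$. Write $t=q(\Vec y)$ and regard $\{\Vec y\}$ as the hybrid set with base $\Vec y$ and no periods; the condition $\emptyset\subseteq\CycleSpace(V^{\rev})$ holds trivially. Then \cref{thm:hybrid_description} expresses $\Reach^p_{V^{\rev}}(q(\{\Vec y\}))$, which is exactly the set of $\Vec x$ with $p(\Vec x)\xrightarrow{*}t$ in $V$, as a finite union of hybrid sets $H'=\Vec b'+P'^{\Inner{\Vec c',\cdot}\le K'}$ with $\hnorm(H')\le\PolyF{\Size(V,t)}$. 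By hypothesis some $H'$ in this union meets $H$.

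\emph{Step 2 (small point in $H\cap H'$).} Write $H=\Vec b+P^{\Inner{\Vec c,\cdot}\le K}$. Points of $H\cap H'$ correspond to pairs $(\Vec\lambda,\Vec\mu)$ satisfying
\[
\Vec b+P\Vec\lambda=\Vec b'+P'\Vec\mu,\qquad \Inner{\Vec c,\Vec\lambda}\le K,\qquad \Inner{\Vec c',\Vec\mu}\le K'.
\]
I drop the two cost constraints and consider the three equations alone. Any nonnegative solution decomposes as a minimal solution $(\Vec\lambda_0,\Vec\mu_0)$ plus a nonnegative combination of homogeneous solutions, with $(\Vec\lambda_0,\Vec\mu_0)\le(\Vec\lambda,\Vec\mu)$. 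Since all costs are nonnegative, passing to this smaller pair preserves both cost inequalities.

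The periods in $P$ and $P'$ are distinct nonnegative vectors of norm at most $\hnorm$. Hence there are at most $\PolyF{\hnorm(H)+\hnorm(H')}$ of them in fixed dimension $3$. The augmented matrix has rank at most $3$, so \cref{thm:pottier} bounds $\norm{(\Vec\lambda_0,\Vec\mu_0)}_1$ by a polynomial. Setting $\Vec x':=\Vec b+P\Vec\lambda_0$, we get $\Vec x'\in H$, $p(\Vec x')\xrightarrow{*}t$, and $\norm{\Vec x'}\le\PolyF{\Size(V,t)+\hnorm(H)}$.

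\emph{Step 3 (short witness between fixed endpoints).} It remains to bound $\MinLen_V(p(\Vec x'),t)$ polynomially in $\Size(V,t)+\norm{\Vec x'}$. I expect this to be the main obstacle, because the statement is about lengths, while Step 1 only controls reachable sets.

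I would proceed as in the non-wide decomposition. Take a nonzero integer $\Vec h\perp\CycleSpace(V)$ of polynomial norm. Deleting simple cycles shows that along any run, $\Inner{\Vec h,\cdot}$ differs from its initial value by at most $|Q|\norm{T}\norm{\Vec h}$. Recording this value in the state therefore confines each state's counters to a fixed affine plane. Choosing two coordinates that parametrize that plane yields a polynomial-size $2$-dimensional system in which nonnegativity of the third counter becomes one extra linear half-plane constraint per state.

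For this ``$2$-VASS with one extra linear guard per state'', I would adapt the LPS flattening of \cite[Theorem~3.1]{DBLP:journals/jacm/BlondinEFGHLMT21}. Concretely, I would redo its argument with the guarded half-plane playing the role of an additional axis, which should give LPS of polynomial length and norm. Pottier's bound then gives polynomial repetition counts once both endpoints are fixed with polynomial norm. If adapting the flattening proves delicate, a fallback is to apply \cref{thm:hybrid_description} to a product of $V$ with a one-state length-recording gadget that does not enlarge the cycle space. This requires care, and is the step I would check most carefully.
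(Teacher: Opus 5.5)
Your Steps 1 and 2 are correct. Reversal preserves size and cycle space, so \cref{thm:hybrid_description} applied to $V^{\rev}$ with the period-free source $\{\Vec y\}$ writes $\{\Vec x : p(\Vec x)\xrightarrow{*}t\}$ as a finite union of hybrid sets of representation norm $\PolyF{\Size(V,t)}$. The minimal-solution argument, which is the one the paper uses for \cref{lem:hybrid_intersection}, then yields $\Vec x'\in H$ of polynomial norm with $p(\Vec x')\xrightarrow{*}t$.

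However, this only reduces the lemma to its singleton-source case, and that case is the entire substance of the statement. It is the polynomial reachability-distance theorem for geometrically $2$-dimensional $3$-VASS. The paper does not prove it; it imports it from the cited work and invokes it in exactly this singleton form, for \cref{cor:proper_cone_len} and for the run $\rho_1$ in the proof of \cref{prop:two_component_reachability}. \cref{thm:hybrid_description} describes reachable sets only and gives no control over run lengths. So Steps 1--2 do not advance the length bound at all.

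Step 3 is a plan, not a proof. Recording $\Inner{\Vec h,\cdot}$ in the states is fine, but the result is not a $2$-VASS. At a state $(r,b)$ the counters range over $\{x_1,x_2\ge0,\ h_1x_1+h_2x_2\le b\}$ (say $h_3>0$), a region with a third, slanted, state-dependent wall whose shape depends on the signs of $h_1,h_2$. The flattening of \cite[Theorem~3.1]{DBLP:journals/jacm/BlondinEFGHLMT21} is an analysis tailored to the quadrant. Asserting that it survives an extra wall with polynomial LPS parameters is an unverified claim, not a step you can take for granted.

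The fallback fails outright. A gadget recording unbounded length needs an extra counter, so the product is a $4$-VASS, outside the hypotheses of \cref{thm:hybrid_description}. Moreover, that counter is positive on every cycle, so the cycle space generally grows; a zero-effect cycle $\theta$ already contributes $(0,0,0,|\theta|)$.

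To complete the argument you need an independent, length-aware result for $3$-VASS of cycle-dimension at most $2$ between fixed configurations. One example would be a representation of the reachability relation by LPS with polynomially bounded parameters, after which \cref{thm:pottier} bounds the repetition counts. That is precisely what the cited source supplies.
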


Together, these results allow us to describe the reachable midpoints of the first component by hybrid sets and select a midpoint admitting a short run through the second.
Since the two components may encode different coordinates when $I_1\ne I_2$, we must ensure that the selected midpoint is compatible with both encodings.

\begin{proof}[Proof of~\cref{prop:two_component_reachability}]
	Set $M:=B+\Size(V',s',t')$.
	For $i=1,2$, write $J_i=[5]\setminus I_i$ and $V_i:=\enc{V''}_{I_i,B}$ for the $3$-dimensional projection of the $i$-th component.
	Every cycle in $V_i$ lifts to a cycle in the original component, with zero effect on $I_i$.
	Hence $\dimcyc(V_i)\le2$.
	Let $s_1,t_2$ be the projections of $s',t'$ into $V_1,V_2$, respectively.
	Write $(r_1(\Vec u_1), \Vec{0}, r_2(\Vec u_2))$ for the zero-effect bridge, where $\Vec u_i\in[0,B]^{I_i}$.
	It is easy to see that, for any run in $(V',s',t')$, the configurations at the two ends of the bridge share the same vector $\Vec{w}$, with $\restr{\Vec w}{I_1}=\Vec u_1$ and $\restr{\Vec w}{I_2}=\Vec u_2$.

	Apply~\cref{thm:hybrid_description} to $V_1$, with the singleton source specified by $s_1$.
	This set satisfies the cycle-space hypothesis.
	Choose a hybrid set in the resulting union at state $r_1(\Vec u_1)$ that contains $\restr{\Vec w}{J_1}$.
	By inserting $\Vec u_1$ into its base on $I_1$ and zeros into its periods on $I_1$, we obtain a $5$-dimensional hybrid set
	\begin{equation*}
		H=\Vec b+P^{\langle\Vec c,\cdot\rangle\le K}, \qquad \hnorm(H)\le\PolyF{M},
	\end{equation*}
	containing $\Vec w$, all of whose vectors are reachable at the bridge through the first component.
	Then we impose the second restriction $\restr{\Vec x}{I_2}=\Vec u_2$ before projecting to $J_2$.

	Fix a representation $\Vec w=\Vec b+\sum_{\Vec p\in P}\lambda_{\Vec p}\cdot\Vec p$ with $\sum_{\Vec p\in P}\Vec c(\Vec p)\lambda_{\Vec p}\le K$, where $\Vec c(\Vec p)$ denotes the cost of period $\Vec p$.
	Partition $P$ into $P_0=\{\Vec p\in P:\restr{\Vec p}{I_2}=\Vec0\}$ and $P_1=P\setminus P_0$.
	Each period in $P_1$ contributes at least $1$ to the sum of the coordinates in $I_2$.
	Since all bases and periods are nonnegative, we have
	\begin{equation*}
		\sum_{\Vec p\in P_1}\lambda_{\Vec p} \le \sum_{j\in I_2}\Vec w(j)\le2 B.
	\end{equation*}
	Absorb these boundedly many period occurrences into the base by setting $\Vec b_0:=\Vec b+\sum_{\Vec p\in P_1}\lambda_{\Vec p}\cdot\Vec p$ and $K_0:=K-\sum_{\Vec p\in P_1}\Vec c(\Vec p)\lambda_{\Vec p}\ge0$.
	Then the resulting hybrid set
	\begin{equation*}
		H_0:=\Vec b_0+P_0^{\langle\restr{\Vec c}{P_0},\cdot\rangle\le K_0}\subseteq H,\qquad
		\hnorm(H_0)\le (2B + 1)\hnorm(H)\le\PolyF{M}.
	\end{equation*}
	The set $H_0$ contains $\Vec w$, and every $\Vec x\in H_0$ satisfies $\restr{\Vec x}{I_2}=\Vec u_2$.
	Its projection $H_2:=\{\restr{\Vec x}{J_2}:\Vec x\in H_0\}$ is therefore a hybrid set of no larger norm.
	The original suffix shows that some vector in $H_2$ reaches $t_2$ from state $r_2(\Vec u_2)$ in $V_2$.
	By~\cref{lem:hybrid_source}, there are $\Vec z\in H_2$ and a run in $V_2$ satisfying
	\begin{equation*}
		r_2(\Vec u_2)(\Vec z)\xrightarrow{\rho_2}t_2,
		\qquad \norm{\Vec z},|\rho_2|\le\PolyF{M}.
	\end{equation*}
	Insert $\Vec u_2$ into $\Vec z$ to obtain its unique lift $\Vec w'\in H_0$.
	Then $\norm{\Vec w'}\le\max\{B,\norm{\Vec z}\} \le \PolyF{M}$.

	Since $\Vec w'\in H_0\subseteq H$, the configuration $r_1(\Vec u_1)(\restr{\Vec w'}{J_1})$ is reachable from $s_1$ in $V_1$.
	Apply~\cref{lem:hybrid_source} once more, now with the singleton source specified by $s_1$ and this fixed target.
	A singleton leaves no choice of source vector, so this gives a run $\rho_1$ from $s_1$ to that target with $|\rho_1|\le\PolyF{M+\norm{\Vec w'}}\le\PolyF{M}$.
	Both runs lift to their respective five-dimensional components without changing their lengths.
	They meet at the same vector $\Vec w'$, so the zero-effect bridge joins them.
	Consequently,
	\begin{equation*}
		\MinLen_{V'}(s',t')\le |\rho_1|+1+|\rho_2|\le M^{\bigO(1)},
	\end{equation*}
	which completes the proof.
\end{proof}

\subsection{Case II: A Single Encoding VAS}
\label{subsec:encoding_case}

We now consider~\cref{item:cfz_encoding}.
Let $W=\enc{V}_B$ be the $B$-encoding of a $5$-VAS $V$, and fix a run $\Vec p(\Vec x)\xrightarrow{\pi}\Vec q(\Vec y)$ in $W$, where $\Vec s = (\Vec{p},\Vec x)$ and $\Vec t = (\Vec{q},\Vec y)$.
The SCCs traversed by $\pi$, together with their connecting transitions, form a sequential sub-VASS $U=(U_1)u_1(U_2)\cdots u_{k-1}(U_k)$.
Since $W$ has $(B+1)^2$ states, we have $k\le(B+1)^2$.
Our goal is to establish a polynomial bound on $\MinLen_V(\Vec s,\Vec t)$.

\begin{restatable}{proposition}{generalEncodingBound}
	\label{prop:encoding_reachability}
	Let $V$ be a $5$-VAS and let $(U,\Vec{p}(\Vec{x}), \Vec{q}(\Vec{y}))$ be a reachable instance specified above.
	Then
	\begin{equation}
		\MinLen_V(\Vec s,\Vec t)
		\le \bigl(B+\Size(V,\Vec s,\Vec t)\bigr)^{\bigO(1)}.
		\label{eq:encoding_reachability_length}
	\end{equation}
\end{restatable}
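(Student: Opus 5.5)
The plan follows the additive-growth scheme of \cref{subsubsec:overview_additive_growth}. Throughout, the parameters are fixed \emph{before} any induction starts, all polynomial in $M:=B+\Size(V,\Vec s,\Vec t)$. Take $C:=(5)(2B+1)^2\norm T$, which by \cref{thm:generalized_transformation_area} and \cref{cor:threshold_closure} is a transformation-area threshold for $\enc V_B$, for its reverse, and for extensions by sign-definite zero-on-$I$ transitions. Take $D$ from \cref{lem:coordinate_extraction} applied to $U$ and $U^{\rev}$ with threshold $C$. Then fix $S$, $\kappa$ and $R$ in the order $D\to S\to\kappa\to R$ sketched in \cref{par:overview_uniform_parameters}.

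\textbf{Basic cases.} First I handle the instances that are settled directly.
\begin{enumerate}
\item If $\Cone(U)\ne\QQ^3$, a separating vector $\Vec h$ of polynomial norm bounds $\Inner{\Vec h,\cdot}$ along the run. Recording this value in states gives a $3$-VASS of cycle-dimension at most $2$, and \cref{lem:hybrid_source} with a singleton source yields a polynomial witness.
\item If $U$ is wide and the instance is $C$-coverable, \cref{lem:rackoff_cover} gives short covering runs at both ends. Wideness lets me cancel their effects by adding cycles, producing a $\ZZ$-run between points of $\GTA_C$. \cref{thm:pottier} applied to the Parikh equations shortens this $\ZZ$-run to polynomial length. \cref{thm:generalized_transformation_area} then converts it into a run of the same length.
\end{enumerate}

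\textbf{Semi-coverable stage.} Assume the target is backward $C$-coverable; by \cref{cor:threshold_closure}, reversal covers the symmetric case. Let $L_k(N)$ bound the distance for sequential subinstances with at most $k$ SCCs, source norm at most $N$, and this fixed target. If the source is not forward $C$-coverable, \cref{lem:coordinate_extraction} yields a coordinate $i$ that stays below $D$ in $U_1$. Extend the prefix while coordinate $i$ stays below $D$, obtaining the cut \eqref{eq:bounded_cut} together with the stopping property \eqref{eq:prefix_coverable_condition}. The prefix $\pi_0$ lives in a $2$-VASS obtained by encoding coordinate $i$ into states, so its reachable midpoints are hybrid sets by \cref{thm:hybrid_description} after a non-wide reduction. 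There are two cases.
\begin{itemize}
\item If $\pi_0$ is $R$-small, I recurse on the suffix with source norm at most $N+R$, paying $P(M+N+R)$ for a replacement prefix.
\item If $\pi_0$ is $R$-large, the $R$-large dichotomy gives either a hybrid set with a nonzero free period or a translated family $\{H_r\}_{r\le\kappa}$. A free period makes a midpoint pumpable, and the pumpable-source argument then joins it to the coverable target. For a translated family, I pick $r\le D$, rounded to a multiple of the determinant $\nu$ of three independent cycle effects of $U[P]$. This makes the translate forward $C$-coverable via \eqref{eq:prefix_coverable_condition} while preserving $\ZZ$-reachability, and the wide coverable argument finishes, absorbing restricted periods into the base if the cost bound $K$ is exceeded.
\end{itemize}
The resulting recurrence $L_{k+1}(N)\le P(M+N+R)+L_k(N+R)$ gives $L_k(N)\le kP(M+N+kR)$. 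Since $k\le(B+1)^2$, this is polynomial in $M$.

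\textbf{General stage.} I rerun the same induction with no coverability assumption. Pumpable sources and free-period outcomes reduce to the semi-coverable bound. An $R$-large outcome producing a translated family $\mathcal J$ is reversed and held fixed as a target family that has a backward coverable translate. A second $R$-large outcome from the source side then produces another family $\mathcal H$, which is $C$-joinable with $\mathcal J$.

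The main obstacle is this joining lemma for two translated families: choosing translations $r,r'\le D$ simultaneously so that both ends become $C$-coverable while a $\ZZ$-connection between them survives. My first attempt is to shift both indices to multiples of a common determinant $\nu$, so the combined translation $r\Vec\delta_{\mathcal H}-r'\Vec\delta_{\mathcal J}$ lies in the integer lattice of the cycle effects. This works in the wide case. In the non-wide case I fall back to the non-wide decomposition.

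Finally, the bound on $\MinLen_V(\Vec s,\Vec t)$ follows because every constructed run in $\enc V_{2B}$ or $U$ lifts to a run of $V$ of the same length.
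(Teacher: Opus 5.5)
Your proposal is correct and follows essentially the paper's own route. It has the same basic non-wide and wide-coverable cases, and the same additive-growth induction for a backward $C$-coverable target with recurrence $L_{k+1}(N)\le P(M+N+R)+L_k(N+R)$. Its final induction also matches: a translated-family outcome is reversed into a fixed target family, and a second family is handled through $C$-joinability, with both indices rounded to multiples of a common determinant (this is \cref{lem:translated_arbitrary} together with \cref{lem:translated_join}).

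A few pieces are left implicit, and the paper's lemmas supply them without changing the approach:
\begin{itemize}
\item the hybrid-target terminal cases of the inner induction, which the paper handles with \cref{lem:hybrid_terminal} and the cost-independent intersection bound \cref{lem:hybrid_intersection};
\item the fact that the dichotomy's period bounds are independent of $\kappa$ and of the source base, which rests on the refined bound in \cref{lem:two_counter_tools} rather than on \cref{thm:hybrid_description}.
\end{itemize}
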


Together with~\cref{thm:cfz,prop:two_component_reachability}, \cref{prop:encoding_reachability} implies~\cref{thm:polynomial_witness}.
The formal deduction is given in~\cref{app:polynomial_witness}.
We first establish polynomial reachability-distance bounds for two basic cases: non-wide instances and $C$-coverable instances as specified in~\cref{sec:tech_overview}.
We formalize the related notions below.

\begin{definition}[Wideness]
	\label{def:wideness}
	Let $U$ be a $d$-VASS.
	Let $E(U)$ be the set of effects of simple cycles in $U$, and let $\Cone(U):=\Cone(E(U))$.
	A $d$-VASS $U$ is called \emph{wide} if $\Cone(U)=\QQ^d$.
\end{definition}

This notion differs from the wideness conditions of~\cite{DBLP:conf/icalp/CzerwinskiJ0O25,DBLP:journals/corr/abs-2607-14983}, which use positive-orthant containment $\QQ_{\ge 0}^3 \subseteq \Cone(U)$.
\cref{def:wideness} considers the full space $\QQ^3$ because it permits compensation in every direction.

\begin{definition}[$C$-Coverability]
	\label{def:c_coverability}
	Let $(U,p(\Vec{x}), q(\Vec{y}))$ be a $d$-VASS.
	For $C\in\NN$, the source $p(\Vec x)$ is \emph{forward $C$-coverable} in $U$ if $p(\Vec x)\xrightarrow{*}p(\Vec u)$ for some $\Vec u\ge C\cdot\Vec1$.
	The target $q(\Vec y)$ is \emph{backward $C$-coverable} if $q(\Vec v)\xrightarrow{*}q(\Vec y)$ for some $\Vec v\ge C\cdot\Vec1$.
	The VASS $(U,p(\Vec{x}), q(\Vec{y}))$ is \emph{$C$-coverable} if both conditions hold.
\end{definition}

\subsubsection{Non-Wide Case}
\label{subsubsec:nonwide_encoding}

The non-wide case follows the standard argument of~\cite[Lemma~9]{DBLP:conf/icalp/CzerwinskiJ0O25}.
If $\dimcyc(U)\le2$, the preceding low-dimensional result applies directly.
Otherwise a proper cycle cone is contained in a half-space with a nonzero normal.
The inner product with this normal is bounded along a given run.
Recording that inner product in the state reduces the entire system to cycle-dimension at most two.
We state the reduction below and defer its proof to~\cref{app:proper_cone_trim}.

\begin{lemma}
	\label{lem:proper_cone_trim}
	Let $(U,s,t)$ be a reachable $3$-VASS with $\Cone(U)\ne\QQ^3$.
	There exists a reachable $3$-VASS $(\widetilde U,\widetilde s,\widetilde t)$ of size at most $\Size(U,s,t)^{\bigO(1)}$ such that $\dimcyc(\widetilde U)\le2$ and $\MinLen_U(s,t)\le\MinLen_{\widetilde U}(\widetilde s,\widetilde t)$.
\end{lemma}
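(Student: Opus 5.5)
If $\dimcyc(U)\le2$ we simply take $\widetilde U:=U$. So assume $\dimcyc(U)=3$ and $\Cone(U)\ne\QQ^3$. The plan is to confine every configuration of a given run to polynomially many parallel affine planes, and then record the plane index in the state.

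\textbf{Separating direction.} The cone $\Cone(U)$ is a proper polyhedral cone generated by the finite set $E(U)$, so by Farkas' lemma there is a nonzero $\Vec h\in\ZZ^3$ with $\Inner{\Vec h,\Vec e}\ge0$ for all $\Vec e\in E(U)$. We can take $\Vec h$ to be the normal of a facet, or of any supporting plane, spanned by two linearly independent generators. By Cramer's rule (a cross product) this gives $\norm{\Vec h}\le 2\norm{E(U)}^2$. Since simple cycles have length at most $|Q|$, we have $\norm{E(U)}\le|Q|\norm T$, and so $\norm{\Vec h}\le\Size(U)^{\bigO(1)}$. Every cycle decomposes in the Parikh sense into simple cycles, so $\Inner{\Vec h,\Eff(\theta)}\ge0$ for every cycle $\theta$ of $U$.

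\textbf{Bounding the inner product along a run.} Fix a run $s\xrightarrow{\pi}t$ and an intermediate configuration $r(\Vec z)$, splitting $\pi=\pi_1\pi_2$ there. Repeatedly deleting simple cycles from $\pi_1$, as paths only, gives a simple path $\sigma_1$ from the state of $s$ to $r$ together with cycles of nonnegative $\Vec h$-value. Writing $s=p(\Vec x)$, we obtain
\[\Inner{\Vec h,\Vec z}\ge\Inner{\Vec h,\Vec x}+\Inner{\Vec h,\Eff(\sigma_1)}\ge-\,3\norm{\Vec h}(\norm{\Vec x}+|Q|\norm T).\]
Symmetrically, doing the same on $\pi_2$ with $t=q(\Vec y)$ gives $\Inner{\Vec h,\Vec z}\le\Inner{\Vec h,\Vec y}-\Inner{\Vec h,\Eff(\sigma_2)}\le3\norm{\Vec h}(\norm{\Vec y}+|Q|\norm T)$. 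Hence $m:=\Inner{\Vec h,\Vec z}$ lies in an interval $[-N,N]$ with $N\le\Size(U,s,t)^{\bigO(1)}$.

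\textbf{Construction of $\widetilde U$.} We set $\widetilde Q:=Q\times[-N,N]$. For each transition $(p,\Vec a,q)$ and each $m$ with $m+\Inner{\Vec h,\Vec a}\in[-N,N]$, we add the transition $((p,m),\Vec a,(q,m+\Inner{\Vec h,\Vec a}))$. We take $\widetilde s:=(p,\Inner{\Vec h,\Vec x})(\Vec x)$ and $\widetilde t$ analogously.

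\textbf{Properties of $\widetilde U$.} Its size is polynomial. Every run of $\widetilde U$ projects to a run of $U$ of the same length, so $\MinLen_U(s,t)\le\MinLen_{\widetilde U}(\widetilde s,\widetilde t)$. The bound above shows that a witnessing run of $U$ lifts to $\widetilde U$, so $(\widetilde U,\widetilde s,\widetilde t)$ is reachable. Every cycle of $\widetilde U$ returns to the same $m$, so its effect lies in $\Vec h^\perp$, and therefore $\dimcyc(\widetilde U)\le2$.

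\textbf{Main obstacle.} The delicate point is the two-sided bound on $\Inner{\Vec h,\Vec z}$. Only cycles of nonnegative $\Vec h$-value are available, so the lower bound must come from the prefix and the upper bound from the suffix. One must also check that removing cycles as paths, which need not be valid runs, still yields the effect identity. This suffices because only effects are compared, not validity.
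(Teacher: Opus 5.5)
Your proof is correct and follows the paper's argument step for step: a nonnegative integer normal $\Vec h$ on all simple-cycle effects, a two-sided bound on $\Inner{\Vec h,\Vec z}$ obtained by deleting simple cycles from the prefix (lower bound) and from the suffix (upper bound), and then recording this inner product in the state so that every cycle effect lies in $\Vec h^\perp$. The only difference is how you get $\Vec h$: you take it as a facet normal, i.e.\ the cross product of two linearly independent generators, which exist because the cone is full-dimensional when $\dimcyc(U)=3$. This gives $\norm{\Vec h}\le 2(|Q|\norm T)^2$, whereas the paper cites the separating-normal lemma from CFZ; either route yields the required polynomial bound.
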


Apply~\cref{lem:hybrid_source} from the previous subsection to $\widetilde U$ with a singleton $\{\widetilde s\}$.
It gives a run from $\widetilde{s}$ to $\widetilde t$ with length at most $\PolyF{\Size(\widetilde U,\widetilde s,\widetilde t)}$, hence the following.

\begin{corollary}
	\label{cor:proper_cone_len}
	Let $(U,s,t)$ be a reachable $3$-VASS with $\Cone(U)\ne\QQ^3$.
	Its reachability distance satisfies
	\begin{equation*}
		\MinLen_U(s,t)\le\Size(U,s,t)^{\bigO(1)}.
	\end{equation*}
\end{corollary}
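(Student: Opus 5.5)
The corollary should follow by chaining \cref{lem:proper_cone_trim} with \cref{lem:hybrid_source}; nothing beyond polynomial composition is needed. First apply \cref{lem:proper_cone_trim} to the reachable instance $(U,s,t)$. This gives a reachable $3$-VASS $(\widetilde U,\widetilde s,\widetilde t)$ with
\[
\Size(\widetilde U,\widetilde s,\widetilde t)\le\Size(U,s,t)^{\bigO(1)},\qquad \dimcyc(\widetilde U)\le2,\qquad \MinLen_U(s,t)\le\MinLen_{\widetilde U}(\widetilde s,\widetilde t).
\]
It therefore suffices to bound $\MinLen_{\widetilde U}(\widetilde s,\widetilde t)$ polynomially in $\Size(\widetilde U,\widetilde s,\widetilde t)$.

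Write $\widetilde s=p(\Vec x)$. Take the hybrid set $H=\{\Vec x\}$, that is, base $\Vec x$, no periods, and $K=0$, so that $\hnorm(H)=\norm{\Vec x}\le\Size(\widetilde U,\widetilde s,\widetilde t)$. Since $\widetilde s\xrightarrow{*}\widetilde t$, the hypothesis of \cref{lem:hybrid_source} holds. The lemma then returns some $\Vec x'\in H$, which must equal $\Vec x$, with
\[
\MinLen_{\widetilde U}(\widetilde s,\widetilde t)\le\PolyF{\Size(\widetilde U,\widetilde t)+\norm{\Vec x}}\le\PolyF{\Size(\widetilde U,\widetilde s,\widetilde t)}.
\]
Composing this with the size bound from \cref{lem:proper_cone_trim} gives $\MinLen_U(s,t)\le\Size(U,s,t)^{\bigO(1)}$. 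The exponent is a fixed constant because both polynomials are independent of the instance.

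There is no real obstacle once \cref{lem:proper_cone_trim} is assumed. The only points to check are that \cref{lem:hybrid_source} has no cycle-space condition on $H$ (unlike \cref{thm:hybrid_description}), so the empty period set is harmless, and that a singleton is a legitimate hybrid set. All the substantive work, namely finding a polynomial-norm normal $\Vec h$ and recording $\Inner{\Vec h,\cdot}$ in states, is deferred to the proof of \cref{lem:proper_cone_trim}.
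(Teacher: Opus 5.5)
Your proposal is correct and follows the paper's own argument: apply \cref{lem:proper_cone_trim} to get a reachable instance $(\widetilde U,\widetilde s,\widetilde t)$ of cycle-dimension at most two, then apply \cref{lem:hybrid_source} with the singleton source $\{\widetilde s\}$ and compose the two polynomial bounds. Your side checks also hold: a singleton is a legitimate hybrid set since $\emptyset^*=\{\Vec0\}$, and \cref{lem:hybrid_source} imposes no cycle-space condition on $H$.
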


\subsubsection{Wide and Coverable Case}
\label{subsubsec:wide_coverable_case}

We now consider the wide case, where $\Cone(U)=\QQ^3$.
Combined with the transformation-area theorem, this gives a reachability-distance bound under $C$-coverability.
We state the result more generally for $B$-encodings of $(d+2)$-VAS.

\begin{theorem}[Wide and Coverable Case]
	\label{thm:wide_coverable}
	Let $V$ be a $(d+2)$-VAS, $B \in \NN$, and $(U,\Vec{p}(\Vec{x}),\Vec{q}(\Vec{y}))$ be a sequential sub-VASS of $\enc V_B$.
	Assume that the source and target lie in the first and last components of $U$, respectively.
	If $(U,\Vec{p}(\Vec{x}),\Vec{q}(\Vec{y}))$ is wide and $C$-coverable and $\Vec{p}(\Vec{x}) \xrightarrow{*}_\ZZ \Vec{q}(\Vec{y})$ in $U$, then
	\begin{equation}
		\MinLen_V(\Vec s,\Vec t)
		\le \bigl(B + C + \Size(V,\Vec s,\Vec t)\bigr)^{(d+1)^{\bigO(d)}},
		\label{eq:wide_coverable_length}
	\end{equation}
	where $C$ is a transformation-area threshold for $\enc{V}_{B}$, $\Vec{s}=(\Vec{p},\Vec x)$ and $\Vec{t}=(\Vec{q},\Vec y)$.
\end{theorem}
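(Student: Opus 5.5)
The plan follows the outline in the overview: move both endpoints into $\GTA_C$ by short covering runs, connect them by a $\ZZ$-run that is short, and then convert that $\ZZ$-run into a real run using \cref{thm:generalized_transformation_area}. Write $N:=B+C+\Size(V,\Vec s,\Vec t)$.

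\textbf{Step 1: short covering runs.} By forward $C$-coverability, $\Vec p(\Vec x)\xrightarrow{*}\Vec p(\Vec u')$ with $\Vec u'\ge C\cdot\Vec 1$. We apply \cref{lem:rackoff_cover} to the $d$-VASS $U$, augmented with a control flag that forces the covering run to end at state $\Vec p$. This gives a run $\Vec p(\Vec x)\xrightarrow{\alpha}\Vec p(\Vec u)$ with $\Vec u\ge C\cdot\Vec 1$ and $|\alpha|\le N^{\bigO((d+1)!)}$. Applying the same argument to $U^{\rev}$ gives $\Vec q(\Vec v)\xrightarrow{\beta}\Vec q(\Vec y)$ with $\Vec v\ge C\cdot\Vec1$ and $|\beta|\le N^{\bigO((d+1)!)}$. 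Since $\alpha$ stays inside the first SCC and $\beta$ inside the last, both $\Vec p(\Vec u)$ and $\Vec q(\Vec v)$ lie in $\GTA_C$. Moreover $\norm{\Vec u},\norm{\Vec v}\le N^{\bigO((d+1)!)}$.

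\textbf{Step 2: a $\ZZ$-run from $\Vec p(\Vec u)$ to $\Vec q(\Vec v)$.} We are given $\Vec p(\Vec x)\xrightarrow{\sigma}_\ZZ\Vec q(\Vec y)$. The endpoint shift is $\Vec w:=(\Vec v-\Vec u)-(\Vec y-\Vec x)=\Eff(\beta)\cdot(-1)\ldots$; concretely $\Vec w=-\Eff(\alpha)-\Eff(\beta)$ on the counters, so $\norm{\Vec w}\le N^{\bigO((d+1)!)}$. Wideness means $\Cone(U)=\QQ^d$, so $\Vec w$ is a nonnegative rational combination of simple-cycle effects. By Carath\'eodory and Cramer's rule we can write
\[
\Vec w=\sum_j \mu_j\,\Eff(\theta_j),
\]
with at most $2d$ cycles and $\mu_j\in\NN$, once we scale. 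To see the scaling, note that $-\Vec w$ also lies in the cone. We write $\Vec w=\sum\lambda_j\Eff(\theta_j)$ with $\lambda_j\ge0$ rational of small height, and add a strictly positive zero-sum combination $\sum\nu_j\Eff(\theta'_j)=\Vec0$, which exists because the cone is the whole space. Clearing denominators with this zero-sum combination makes all coefficients integers of size $N^{\bigO((d+1)!)\cdot\bigO(d)}$.

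Each cycle $\theta_j$ lies in some SCC $U_i$ that $\sigma$ visits, so we splice $\mu_j$ copies of $\theta_j$ into $\sigma$ at a visit of its base state. Under $\ZZ$-semantics this is harmless. The result is a $\ZZ$-run $\Vec p(\Vec u)\xrightarrow{\tau}_\ZZ\Vec q(\Vec v)$ in $U\subseteq\enc V_B$.

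\textbf{Step 3: shortening $\tau$.} A $\ZZ$-run may be arbitrarily long, so we replace it by a short one with the same endpoints. We describe $\ZZ$-runs in $U$ from $\Vec p$ to $\Vec q$ through the SCCs and bridges used by $\tau$. The description consists of the Euler/flow conditions on transition counts in $U$ (a Parikh vector $\Vec\psi$ with Kirchhoff balance at each state, connectivity being inherited from the fixed sequential structure), together with the linear condition $\Vec T\Vec\psi=\Vec v-\Vec u$ on the counters. Since connectivity can be a problem for a minimal solution, we keep it by imposing that every transition of a fixed spanning skeleton of each $U_i$, plus the bridges, is used at least once. This is a system $\Vec A\Vec\psi\ge\Vec b$ whose number of rows is polynomial in $\Size(U)$, but whose rank is at most $|Q_U|+d$, and this is where the exponent could blow up.

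We therefore use the rank form of \cref{thm:pottier}. The equations are flow conservation, with rank at most $|Q_U|\le (B+1)^2$, and the counter equations, with rank at most $d$. Applied naively this gives an exponent $(B+1)^2$, which is too big. To avoid it, we first decompose: a minimal solution consists of one path (from a skeleton) plus a multiset of simple cycles, and the only constraint that couples the cycles is the $d$-dimensional effect equation. So we apply Pottier to the system in the cycle multiplicities: $d$ equations, with variables indexed by the $N^{\bigO(1)}$-many cycle effects of norm at most $(B+1)^2\norm T$. This yields multiplicities of total size $N^{\bigO(d)}$, and after substituting $\norm{\Vec u},\norm{\Vec v}$ the length of the new $\ZZ$-run $\tau'$ is at most $N^{(d+1)^{\bigO(d)}}$. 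The spine path is kept, so every cycle is still attachable.

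\textbf{Step 4: conversion.} By \cref{thm:generalized_transformation_area}, or more precisely by the definition of $C$ as a threshold for $\enc V_B$, the $\ZZ$-run $\tau'$ between the two $\GTA_C$ endpoints yields a run of $V$ of the same length. Concatenating $\alpha$, this run, and $\beta$ gives the bound \eqref{eq:wide_coverable_length}.

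\textbf{Main obstacle.} I expect Step 3 to be the crux. The bound must have an exponent depending only on $d$ and not on the number of SCCs or states, so Pottier has to be applied only to the $d$ effect equations, with the cycle/spine decomposition ensuring that the combinatorial constraints come for free.
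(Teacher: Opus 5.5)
Your overall route is the paper's: Rackoff covering cycles at both ends, wideness to turn the given $\ZZ$-run into one between the covered configurations (\cref{lem:wide_compatible_covers}), shortening by a spine-plus-cycles decomposition with Pottier applied only to the $d$ effect equations (\cref{lem:short_integer_vass}), and conversion via the threshold property. Steps 1, 3 and 4 are sound as outlined. In Step 3 the flow-system detour is unnecessary. Also, before substituting $\norm{\Vec u},\norm{\Vec v}$, Pottier gives multiplicities of order $N^{\bigO(d^2)}$ rather than $N^{\bigO(d)}$, but your final exponent $(d+1)^{\bigO(d)}$ is correct. The coefficient-size bookkeeping in Step 2 is not needed either, since Step 3 re-shortens using only the endpoints.

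Step 2, which is the content of \cref{lem:wide_compatible_covers}, has two genuine gaps.

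First, your integrality argument is wrong as stated. A nonnegative rational representation of $\Vec w$ plus a strictly positive zero-sum relation does not yield a nonnegative integer representation: adding integer multiples of a relation leaves the fractional parts of the coefficients unchanged, and ``scaling'' changes $\Vec w$ itself. For example, with simple-cycle effects $\{2,-2\}$ in dimension one the cone is $\QQ$, yet $1\notin E^*=2\ZZ$. What the zero-sum relation does give is $-\Vec e\in E^*$ for every simple-cycle effect $\Vec e$, so $E^*$ equals the lattice generated by $E$. Hence $\Vec w\in E^*$ exactly when $\Vec w$ lies in that lattice. For your $\Vec w=-\Eff(\alpha)-\Eff(\beta)$ this holds because $\alpha$ and $\beta$ are cycles at $\Vec p$ and $\Vec q$, so their effects are sums of simple-cycle effects. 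You never use this, and it is exactly the fact that makes the step work.

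Second, the splicing is unjustified. $\sigma$ passes through every SCC, but it need not visit any state of a chosen $\theta_j$. The paper repairs this by first inserting, in each $U_i$, a closed walk $\omega_i$ through all states of $U_i$ (by strong connectivity). It then compensates $-\Eff(\alpha)-\Eff(\beta)-\sum_i\Eff(\omega_i)$, which again lies in $E^*$ by negation-closure, since all of these are cycle effects.
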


For $d=3$, the threshold supplied by~\cref{thm:generalized_transformation_area} is polynomial in $B+\Size(V)$, giving the required polynomial reachability-distance bound.

\paragraph{Short Runs Between Configurations in \texorpdfstring{$\GTA_C$}{GC}.}
\label{par:short_runs_transformation_area}
The transformation-area theorem converts a $\ZZ$-run between configurations in $\GTA_C$ into a run of the same length, but does not bound that length.
We therefore use the following shortening result for $\ZZ$-runs, whose proof is given in~\cref{app:short_integer_vass}.

\begin{restatable}{lemma}{shortIntVASS}
	\label{lem:short_integer_vass}
	Let $U = (Q, T)$ be a $d$-VASS.
	If $p(\Vec x)\xrightarrow{*}_{\ZZ}q(\Vec y)$ in $U$, there is such a $\ZZ$-run of length at most $(\Size(U) + \norm{\Vec{x}} + \norm{\Vec{y}})^{\PolyF{d}}$.
\end{restatable}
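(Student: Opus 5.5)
The plan is to express $\ZZ$-reachability in $U$ through the Parikh image: a path from $p$ to $q$ is a multiset of transitions that satisfies the Euler flow conditions and whose support is connected and contains $p$ and $q$. Write $\Vec T\in\ZZ^{d\times m}$ for the transition matrix and $\Vec I\in\ZZ^{|Q|\times m}$ for the incidence matrix, whose column for $(p',\Vec a,q')$ has $-1$ at $p'$ and $+1$ at $q'$ (zero if $p'=q'$). The $\ZZ$-run $p(\Vec x)\xrightarrow{\pi}_{\ZZ}q(\Vec y)$ then gives a nonnegative integer solution $\Vec\lambda=\Parikh\pi$ of
\[
\Vec T\Vec\lambda=\Vec y-\Vec x,\qquad \Vec I\Vec\lambda=\Vec e_q-\Vec e_p .
\]
Conversely, by Euler's theorem every such $\Vec\lambda$ whose support, together with $\{p,q\}$, forms a connected graph is the Parikh image of a path from $p$ to $q$. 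Along a $\ZZ$-run no nonnegativity is required, so such a path is already the desired $\ZZ$-run. Hence the problem reduces to bounding a solution of this system that also respects connectivity.

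To deal with connectivity, fix $S:=\Supp(\Parikh\pi)$ and restrict the system to the variables indexed by $S$. By the decomposition recalled before \cref{thm:pottier}, we can write $\Parikh\pi=\Vec\lambda_0+\sum_j\mu_j\Vec h_j$, where $\Vec\lambda_0$ is a minimal solution and each $\Vec h_j$ is a minimal nonzero solution of the homogeneous system, all supported in $S$. Each $\Vec h_j$ is a circulation with zero total effect. The solution $\Vec\lambda_0$ alone may have a disconnected support, so I would instead take
\[
\Vec\lambda':=\Vec\lambda_0+\sum_{j:\mu_j>0}\Vec h_j .
\]
This vector still solves the system, and its support equals $S$, which is connected since it is the support of $\pi$. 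Hence $\Vec\lambda'$ is the Parikh image of some path from $p$ to $q$ with the same effect.

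For the length bound, apply \cref{thm:pottier} with $n\le|T|$ columns, where $\norm{\Vec T}_{1,\infty}\le|T|\norm T$ and $\norm{\Vec I}_{1,\infty}\le 2|T|$. The rank is bounded by $d+|Q|$, not by a function of $d$ alone, so the $|Q|$ flow equations give an exponent that is too large. This is the step I expect to be the main obstacle. I would avoid it by not applying Pottier to the state-incidence constraints at all. Instead, decompose the path into a simple path from $p$ to $q$ together with simple cycles, as in the proof of \cref{cor:cycle_relocation}. There are at most $\Size(U)^{\Size(U)}$ simple cycles, but we only need the set $E$ of their effects, which has $|E|\le(2\Size(U)\norm T+1)^d$ elements of norm at most $\Size(U)\norm T$. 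Connectivity is then handled by a spanning argument. Keep one simple cycle through each state of the connected support not already on the chosen simple path; this is at most $|Q|$ cycles, each of length at most $|Q|$. Adding these cycles together with one extra copy of each of them fixes the support. It then remains to solve only $d$ equations, in variables indexed by cycle effects, asking that the remaining cycle multiplicities give the required residual effect.

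Applying \cref{cor:pottier} to these $d$ equations (inequalities in both directions) in $|E|$ variables, with coefficients of norm at most $\Size(U)\norm T$ and right-hand side of norm at most $\norm{\Vec x}+\norm{\Vec y}+\Size(U)^{\bigO(1)}$, gives multiplicities bounded by $(\Size(U)+\norm{\Vec x}+\norm{\Vec y})^{\bigO(d)}$. Each chosen cycle can be attached at a state of the connected skeleton, so the pieces concatenate into a single path. Multiplying the multiplicity bound by the number $|E|$ of cycle effects and by the cycle length $|Q|$ yields a total length of $(\Size(U)+\norm{\Vec x}+\norm{\Vec y})^{\PolyF d}$.
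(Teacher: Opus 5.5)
Your final argument in the third and fourth paragraphs is essentially the paper's proof. You build a skeleton path from $p$ to $q$ of length $\bigO(|Q|^2)$ that visits every state of $\pi$, group the removed simple cycles by their at most $(2|Q|\norm{T}+1)^d$ effects, apply Pottier to the resulting $d$-row system, and reattach representatives at skeleton states. You are right that putting the $|Q|$ flow equations into Pottier makes the exponent depend on $|Q|$, so the flow-based attempt of your first two paragraphs can be dropped. The one real difference is how the skeleton is built. The paper splits $\pi$ at the first occurrence of each visited state and deletes simple cycles inside each segment. This automatically leaves a single path of at most $|Q|$ simple pieces visiting every state. You instead delete cycles globally down to one simple path and then re-add at most $|Q|$ of the removed cycles.

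That re-adding step needs more care than you state. Choosing an arbitrary cycle through each uncovered state can give a disconnected skeleton. Take the simple path $p\to a\to q$ with removed cycles $a\to b\to a$, $b\to c\to b$, $c\to d\to c$. Picking the last two covers $b,c,d$ but never touches the path. Choose greedily instead. The path and all removed cycles together form the connected support of $\pi$. So while some state is uncovered, there is a removed cycle that meets the covered set and contains a new state. Inserting it at a shared state keeps a single path, and at most $|Q|$ steps are needed.

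Two smaller points. First, index the Pottier variables by effects of the removed cycles, as the paper does, or take a minimal solution below the original multiplicities. Otherwise a minimal solution might use an effect realised only by cycles that never touch the skeleton, and your attachment step would fail. Second, $\norm{\Vec A}_{1,\infty}\le|E|\cdot|Q|\norm{T}$, so the exponent you get is $\bigO(d^2)$ rather than $\bigO(d)$. This is still $\PolyF{d}$ and matches the paper's bound.
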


By~\cref{lem:short_integer_vass}, every $\ZZ$-run in $\enc{V}_B$ can be shortened to polynomial length in the size of the instance for fixed $d$.
If its endpoints lie in $\GTA_C$, \cref{thm:generalized_transformation_area} then converts the shortened $\ZZ$-run into a run in $V$ with the same Parikh image, and hence the same length.
This gives the following corollary.

\begin{corollary}
    \label{cor:short_high_encoding}
    Let $V$ be a $(d+2)$-VAS, $B\in\NN$, and $U$ a sub-VASS of $\enc{V}_B$.
    Let $C$ be a transformation-area threshold for $\enc{V}_B$.
    If $\Vec p(\Vec x)\xrightarrow{*}_{\ZZ}\Vec q(\Vec y)$ in $U$ and $\Vec x,\Vec y\ge C\cdot\Vec1$, then
    \begin{equation*}
        \MinLen_V\bigl((\Vec p,\Vec x),(\Vec q,\Vec y)\bigr)
        \le
        \bigl(\Size(U)+\norm{\Vec x}+\norm{\Vec y}\bigr)^{\PolyF{d}}.
    \end{equation*}
\end{corollary}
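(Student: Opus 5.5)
The plan is to combine \cref{lem:short_integer_vass} with \cref{thm:generalized_transformation_area} (in its abstract form, the definition of a transformation-area threshold). First, apply \cref{lem:short_integer_vass} to the given $\ZZ$-run $\Vec p(\Vec x)\xrightarrow{*}_{\ZZ}\Vec q(\Vec y)$ in the sub-VASS $U$. This is a $d$-VASS, since its counters are the $d$ non-encoded coordinates. We obtain a $\ZZ$-run $\Vec p(\Vec x)\xrightarrow{\pi}_{\ZZ}\Vec q(\Vec y)$ in $U$ with
\[
|\pi|\le(\Size(U)+\norm{\Vec x}+\norm{\Vec y})^{\PolyF{d}}.
\]

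Second, observe that $\pi$ is also a $\ZZ$-run in $\enc V_B$, because $U$ is a sub-VASS of $\enc V_B$. Its endpoints lie in $\GTA_C$: the states are in $[0,B]^2$ by construction of the encoding, and $\Vec x,\Vec y\ge C\cdot\Vec1$ by hypothesis. Since $C$ is a transformation-area threshold, \cref{def:gta} gives a run $(\Vec p,\Vec x)\xrightarrow{\rho}(\Vec q,\Vec y)$ in $V$ with $\Parikh\rho=\Parikh\pi$. In particular $|\rho|=\|\Parikh\rho\|_1=|\pi|$, because Parikh images of paths in $\enc V_B$ count underlying transitions of $T$ with multiplicity, so the total count equals the path length.

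Hence
\[
\MinLen_V((\Vec p,\Vec x),(\Vec q,\Vec y))\le|\rho|=|\pi|,
\]
which is the claimed bound.

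The main point to check is bookkeeping rather than a real obstacle. The shortened $\ZZ$-run must still use only transitions of $U$ and have the same endpoints, and \cref{lem:short_integer_vass} guarantees both. The transformation-area property must apply to $\ZZ$-runs in $\enc V_B$ rather than in $U$, which is immediate from the inclusion of $U$ in $\enc V_B$. No wideness or coverability hypothesis is needed here, since the endpoints are assumed to be already in $\GTA_C$.
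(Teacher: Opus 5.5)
Your proposal is correct and follows the paper's argument exactly: shorten the $\ZZ$-run in $U$ via \cref{lem:short_integer_vass}, view it as a $\ZZ$-run in $\enc{V}_B$ between endpoints of $\GTA_C$, and then use the transformation-area property to obtain a run of $V$ with the same Parikh image and hence the same length.
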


\paragraph{Compatibility of the Covering Configurations.}
\label{par:compatible_covering_configurations}

By~\cref{cor:short_high_encoding}, we can obtain a short run once the endpoints of a $\ZZ$-run lie in $\GTA_C$.
For the $C$-coverable case, we use~\cref{lem:rackoff_cover} to extract short covering runs.
However, the resulting configurations in $\GTA_C$ need not be reachable by the original $\ZZ$-run.
The following lemma uses wideness to provide a $\ZZ$-run by inserting suitable cycles.

\begin{lemma}
	\label{lem:wide_compatible_covers}
	Let $U=(U_1)u_1\cdots u_{k-1}(U_k)$ be wide, with $p$ in $U_1$ and $q$ in $U_k$.
	Suppose that
	\begin{equation*}
		p(\Vec x)\xrightarrow{\alpha}_{\ZZ}p(\Vec u),\qquad
		p(\Vec x)\xrightarrow{*}_{\ZZ}q(\Vec y),\qquad
		q(\Vec v)\xrightarrow{\beta}_{\ZZ}q(\Vec y)
	\end{equation*}
	in $U$.
	Then $p(\Vec u)\xrightarrow{*}_{\ZZ}q(\Vec v)$ in $U$.
\end{lemma}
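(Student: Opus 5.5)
The plan is to reduce the claim to a statement about effects and Parikh images of $\ZZ$-paths. For a sequential VASS, a $\ZZ$-run from $p$ to $q$ exists with given endpoints exactly when some path from $p$ to $q$ has effect equal to the displacement. So it suffices to construct a path $\eta$ from $p$ to $q$ in $U$ with $\Eff(\eta)=\Vec v-\Vec u$. Fix a path $\pi$ from $p$ to $q$ with $\Eff(\pi)=\Vec y-\Vec x$. Since $\alpha$ and $\beta$ are cycles at $p$ and at $q$, the target effect is
\[
\Vec v-\Vec u=(\Vec y-\Eff(\beta))-(\Vec x+\Eff(\alpha))=\Eff(\pi)-\Eff(\alpha)-\Eff(\beta).
\]
The task is therefore to cancel the effect $\Eff(\alpha)+\Eff(\beta)$ by inserting cycles into $\pi$.

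Wideness supplies the cancellation. Because $\Cone(U)=\QQ^3$, the vector $-(\Eff(\alpha)+\Eff(\beta))$ lies in the cone of the simple cycle effects $E(U)$. The obstacle is that we need a nonnegative \emph{integer} combination, not a rational one. The fix is to work with the whole lattice instead of a single vector. Since the cone is the full space, it contains both $\Vec w$ and $-\Vec w$ for every $\Vec e\in E(U)$-direction. In particular, for each $\Vec e\in E(U)$ we can write $-\Vec e$ as a nonnegative rational combination of $E(U)$. Clearing denominators gives an integer $N_{\Vec e}\ge1$ and a nonnegative integer combination of $E(U)$ equal to $-N_{\Vec e}\Vec e$. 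Hence $-\Vec e=(N_{\Vec e}-1)\Vec e+\bigl(\text{nonnegative integer combination equal to } -N_{\Vec e}\Vec e\bigr)$ is itself a nonnegative integer combination of $E(U)$. Consequently the monoid generated by $E(U)$ is a group, equal to the lattice generated by $E(U)$.

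Next, show that $\Eff(\alpha)+\Eff(\beta)$ lies in this lattice. The cycle $\alpha$ lives inside $U_1$, since a cycle in a sequential VASS cannot traverse a bridge. It therefore decomposes into simple cycles of $U_1\subseteq U$, and its effect is a nonnegative integer sum of elements of $E(U)$. The same holds for $\beta$ in $U_k$. By the previous paragraph, $-(\Eff(\alpha)+\Eff(\beta))=\sum_{\Vec e}m_{\Vec e}\Vec e$ with $m_{\Vec e}\in\NN$.

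The remaining step is to insert these simple cycles into $\pi$. Each simple cycle $\theta$ of $U$ lies in some SCC $U_i$, and $\pi$, going from $U_1$ to $U_k$ through a sequential VASS, must visit every $U_i$. At a visited state $r$ of $U_i$, insert a detour consisting of a path from $r$ to the start of $\theta$, then $\theta^{m}$, then a return path. These exist by strong connectivity of $U_i$. The difficulty is that the detour paths contribute a parasitic effect. To handle it, let $\gamma$ be the detour cycle obtained without $\theta$, namely the go-and-return loop. Both $\gamma$ and $\gamma\theta$ are closed walks in $U_i$, so their effects are nonnegative combinations of $E(U)$, and hence $-\Eff(\gamma)$ can also be realized by inserted cycles. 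Because the inverses are realized within the same monoid, the cancellation of $\Eff(\gamma)$ again only requires inserting closed walks at states already on the path. This produces no new parasitic terms provided each canceling simple cycle is inserted together with its own connecting loop, which can be arranged at its base state.

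The main subtlety, and the step I expect to require the most care, is this bookkeeping of the connecting loops. The cleanest way to organize it is to prove the following claim: the set of effects of closed walks in $U$ that are insertable into $\pi$ forms a monoid containing every element of $E(U)$ up to connecting-loop effects, and that monoid is itself a group by the wideness argument. From this claim the cancellation goes through, yielding $\eta$ and hence $p(\Vec u)\xrightarrow{*}_{\ZZ}q(\Vec v)$.
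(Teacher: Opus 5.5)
Your reduction to effects, the identity $\Vec v-\Vec u=\Eff(\pi)-\Eff(\alpha)-\Eff(\beta)$, and your argument that the monoid generated by $E(U)$ is closed under negation all agree with the paper and are correct. The gap is in the last step, which you flag but do not close. As written, your handling of connecting loops is circular. To cancel the parasitic effect $\Eff(\gamma)$ of a go-and-return loop, you insert further simple cycles. These may be based at states the current path does not visit, so each needs its own connecting loop, which creates new parasitic effects. Nothing in your sketch stops this regress. Your closing claim, that insertable closed-walk effects form a group ``up to connecting-loop effects'', is neither precise nor proved, and proving it runs into the same problem.

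The missing idea is to pay for all detours once, \emph{before} choosing any canceling cycles:

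In each component $U_i$, pick one cycle $\omega_i$ based at a state visited by $\pi$ that traverses every state of $U_i$; it exists by strong connectivity. Insert it into $\pi$. The resulting path $\tau$ visits every state of $U$.

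Set $\Vec h:=\sum_i\Eff(\omega_i)$. It lies in the monoid, since each $\omega_i$ decomposes into simple cycles. Now cancel the combined vector $\Eff(\alpha)+\Eff(\beta)+\Vec h$ instead of $\Eff(\alpha)+\Eff(\beta)$.

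By the group property, the negative of this vector is a nonnegative integer combination of simple-cycle effects. Each such simple cycle can now be inserted directly at a visit of $\tau$ to its base state, with no detour and hence no new parasitic term.

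The result is a path from $p$ to $q$ with effect $\Eff(\pi)+\Vec h-\Eff(\alpha)-\Eff(\beta)-\Vec h=\Vec v-\Vec u$, which is exactly how the paper concludes. In your own terms: choose connecting loops that together visit every state, insert them first, and only then choose the canceling cycles for the total vector. Fixing that order of choices is what breaks the circularity.
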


\begin{proof}
	Let $E:=E(U)$ be the set of effects of simple cycles in $U$, and let $E^*$ denote the set of their nonnegative integer combinations.
	We claim that $E^*$ is closed under negation.
	Indeed, for every $\Vec e\in E$, wideness gives $-\Vec e\in\Cone(E)$.
	Since $\Cone(E)$ consists of nonnegative rational combinations of vectors in $E$, there exists an integer $m\ge1$ such that $-m\cdot\Vec e\in E^*$.
	Consequently, $-\Vec e=(-m\cdot\Vec e)+(m-1)\cdot\Vec e\in E^*$.

	Fix a $\ZZ$-run $p(\Vec x)\xrightarrow{\pi}_{\ZZ}q(\Vec y)$ in $U$.
	In each component $U_i$, choose a cycle $\omega_i$ based at a state visited by $\pi$ and traversing every state of $U_i$.
	Strong connectivity provides such a cycle.
	For a one-state component we may instead take the empty path.
	Inserting these paths into $\pi$ gives a path $\tau$.
	Let
	\begin{equation*}
		\Vec a:=\Eff(\alpha)=\Vec u-\Vec x,\qquad
		\Vec b:=\Eff(\beta)=\Vec y-\Vec v,\qquad
		\Vec h:=\sum_{i = 1}^k\Eff(\omega_i).
	\end{equation*}
	The paths $\alpha,\beta,\omega_i$ are cycles or empty paths, so their effects belong to $E^*$.
	Hence $-\Vec a-\Vec b-\Vec h \in E^*$, which implies that there exists a multiset of simple cycles with effect $-\Vec a-\Vec b-\Vec h$.

	Insert these cycles at visits of $\tau$ to their initial states, which exist because $\tau$ visits every state of $U$.
	This yields a path from $p$ to $q$, whose effect is
	\begin{equation*}
		(\Vec y-\Vec x)+\Vec h-\Vec a-\Vec b-\Vec h
		=\Vec v-\Vec u.
	\end{equation*}
	Hence, there is a $\ZZ$-run from $p(\Vec u)$ to $q(\Vec v)$ in $U$.
\end{proof}

\begin{proof}[Proof of~\cref{thm:wide_coverable}]
	Let $T$ be the transition set of $V$, $N:=(B+1)^2$, and $R:=\max\{\norm{\Vec x},\norm{\Vec y}\}$.
	Applying~\cref{lem:rackoff_cover} to the source covering run in $U$ and to the target covering run in the reversed system, we obtain two short covering runs
	\begin{equation*}
		\Vec{p}(\Vec x)\xrightarrow{\alpha}\Vec{p}(\Vec u),
		\qquad
		\Vec{q}(\Vec v)\xrightarrow{\beta}\Vec{q}(\Vec y),
	\end{equation*}
	with $\Vec u,\Vec v\ge C\cdot \Vec{1}$ and $|\alpha|,|\beta|\le L:=\bigl(\Size(U)+C\bigr)^{\bigO((d+1)!)}$.
	Consequently, $\norm{\Vec u},\norm{\Vec v}\le R+L\norm{T}$.
	By~\cref{lem:wide_compatible_covers}, there is a $\ZZ$-run $\pi$ in $U$ giving
	\begin{equation*}
		\Vec{p}(\Vec x)\xrightarrow{\alpha}\Vec{p}(\Vec u)
		\xrightarrow{\pi}_{\ZZ}\Vec{q}(\Vec v)
		\xrightarrow{\beta}\Vec{q}(\Vec y).
	\end{equation*}
	By~\cref{cor:short_high_encoding}, the $\ZZ$-run $\Vec{p}(\Vec{u}) \xrightarrow{\pi}_{\ZZ} \Vec{q}(\Vec{v})$ can be replaced by a run in $V$ of length at most $M\le(\Size(U) + \norm{\Vec{u}} + \norm{\Vec{v}})^{\PolyF{d}}$.
	The two covering runs also lift to $V$, so concatenation gives
	\begin{equation*}
		\MinLen_V(\Vec s,\Vec t)
		\le 2L+M \le 2L + (\Size(U) + 2 R + 2L\norm{T})^{\PolyF{d}}.
	\end{equation*}

	Finally, $U$ has at most $N$ states and $N|T|$ transitions, all of norm at most $\norm{T}$, whence
	\begin{equation*}
		\Size(U)\le N \Size(V).
	\end{equation*}
	Combining this with the bounds on $N$ and $C$ yields~\eqref{eq:wide_coverable_length}.
\end{proof}

\cref{thm:wide_coverable} shows that, in the wide and $C$-coverable case, even $\ZZ$-reachability suffices to obtain a short run.
Without wideness, ~\cref{lem:proper_cone_trim} only shortens an \emph{existing run}.
For a reachable sequential instance, first restrict $U$ to the components between its source and target.
The forward and backward covering runs stay in their respective components and remain available in this restriction.
Since every run in $U$ lifts to $V$ with the same length and $\Size(U)\le(B+1)^2\Size(V)$, combining the two results gives the following bound for all reachable $C$-coverable instances.

\begin{corollary}[Coverable Case]\label{cor:coverable_case}
	Let $V$ be a $5$-VAS, $B\in\NN$, and $(U,\Vec p(\Vec x),\Vec q(\Vec y))$ be a sequential sub-VASS of $\enc V_B$ such that $\Vec p(\Vec x)\xrightarrow{*}\Vec q(\Vec y)$ in $U$.
	If $(U,\Vec p(\Vec x),\Vec q(\Vec y))$ is $C$-coverable, then
	\begin{equation*}
		\MinLen_V(\Vec s,\Vec t)
		\le
		\bigl(B+C+\Size(V,\Vec s,\Vec t)\bigr)^{\bigO(1)},
	\end{equation*}
	where $C$ is a transformation-area threshold for $\enc V_B$, $\Vec s=(\Vec p,\Vec x)$, and $\Vec t=(\Vec q,\Vec y)$.
\end{corollary}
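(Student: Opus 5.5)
The plan is a case split on wideness of $U$, after first trimming $U$ to the components actually used. Since $\Vec p(\Vec x)\xrightarrow{*}\Vec q(\Vec y)$ in $U$, fix a witnessing run $\pi$. Replace $U$ by the sequential sub-VASS $U'$ consisting of the SCCs and bridges between the component of $\Vec p$ and the component of $\Vec q$ that $\pi$ traverses, so that the source lies in the first component and the target in the last. The forward covering run $\Vec p(\Vec x)\xrightarrow{*}\Vec p(\Vec u)$ returns to $\Vec p$, so it stays inside the SCC of $\Vec p$ and survives in $U'$. Symmetrically, the backward covering run stays in the SCC of $\Vec q$. Hence $(U',\Vec p(\Vec x),\Vec q(\Vec y))$ is still reachable and $C$-coverable, and $\Size(U')\le\Size(U)\le(B+1)^2\Size(V)$.

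\emph{Wide case.} If $\Cone(U')=\QQ^3$, then the existing run $\pi$ is in particular a $\ZZ$-run in $U'$, so all hypotheses of \cref{thm:wide_coverable} hold with $d=3$. Because $d$ is fixed, the exponent $(d+1)^{\bigO(d)}$ is a constant, and we get $\MinLen_V(\Vec s,\Vec t)\le(B+C+\Size(V,\Vec s,\Vec t))^{\bigO(1)}$ directly.

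\emph{Non-wide case.} If $\Cone(U')\ne\QQ^3$, apply \cref{cor:proper_cone_len} to the reachable $3$-VASS $(U',\Vec p(\Vec x),\Vec q(\Vec y))$. This gives a run in $U'$ of length at most $\Size(U',\Vec p(\Vec x),\Vec q(\Vec y))^{\bigO(1)}$. That size is at most $(B+1)^2\Size(V)+3(\norm{\Vec x}+\norm{\Vec y}+1)$, which is polynomial in $B+\Size(V,\Vec s,\Vec t)$. Every run in a sub-VASS of $\enc V_B$ lifts to a run of $V$ of the same length, since the encoded coordinates stay in $[0,B]$, so this bound transfers to $\MinLen_V(\Vec s,\Vec t)$. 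The threshold $C$ is not needed in this branch.

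I do not expect a genuine obstacle here. The only care needed is the trimming step: \cref{thm:wide_coverable} requires the source and target to sit in the first and last components, and the covering runs must remain available after restriction. Both hold because the covering runs are cycles at $\Vec p$ and $\Vec q$, so each stays within its own SCC.
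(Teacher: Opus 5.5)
Your proposal is correct and follows the paper's own argument. Restrict $U$ to the components between source and target; the covering runs survive because they are cycles confined to their own SCCs. Then apply \cref{thm:wide_coverable} in the wide case and \cref{cor:proper_cone_len} in the non-wide case, and lift the resulting runs to $V$ using $\Size(U)\le(B+1)^2\Size(V)$.
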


\subsection{Hybrid Sets as Sources and Targets}
\label{subsec:hybrid_sources_and_targets}

The preceding results give polynomial reachability-distance bounds for fixed configurations.
In the reductions used later, however, the source and target may range over hybrid sets.
We handle such endpoints by reducing the problem to reachability between fixed configurations in an augmented VASS.
The idea is to add the periods of the source hybrid set as positive self-loops at the source state, and those of the target hybrid set as negative self-loops at the target state.

Formally, for a VASS $U=(Q,T)$ with specified source and target states $\Vec p,\Vec q$ and period sets $P_1,P_2\subseteq\NN^d$, define the augmented system:
\[
	\begin{aligned}
		U[P_1,P_2]&:=\bigl(Q,T\cup\{(\Vec p,\Vec a,\Vec p):\Vec a\in P_1\}
		\cup\{(\Vec q,-\Vec a,\Vec q):\Vec a\in P_2\}\bigr),
	\end{aligned}
\]
and abbreviate $U[P,\varnothing]$ to $U[P]$.
We first extend $C$-coverability to hybrid sets using only their free periods, whose multiplicities are unrestricted.

\begin{definition}
	\label{def:coverable_hybrid}
	Let $H=\Vec b+P^{\langle\Vec c,\cdot\rangle\le K}$ be a hybrid set at state $\Vec p$ of a VASS $U$, and let $P_0$ be its free periods.
	We say that $H$ is \emph{forward $C$-coverable in $U$} if $\Vec p(\Vec b)$ is forward $C$-coverable in $U[P_0]$.
	Similarly, $H$ at target state $\Vec q$ is \emph{backward $C$-coverable in $U$} if $\Vec q(\Vec b)$ is backward $C$-coverable in $U[\varnothing,P_0]$.
\end{definition}

Consider a forward $C$-coverable hybrid set $H_1$ and a backward $C$-coverable hybrid set $H_2$ with bases $\Vec b_1,\Vec b_2$ and period sets $P_1,P_2$, respectively.
If $\Vec p(\Vec x)\xrightarrow{*}\Vec q(\Vec y)$ in $U$ for some $\Vec x\in H_1$ and $\Vec y\in H_2$, then $\Vec p(\Vec b_1)\xrightarrow{*}\Vec q(\Vec b_2)\text{}$ in $U[P_1,P_2]$.
Indeed, the added self-loops first generate $\Vec x$ from $\Vec b_1$ and then reduce $\Vec y$ to $\Vec b_2$.
Moreover, the new instance is $C$-coverable.
Thus, \cref{cor:coverable_case} applies.
The added periods may violate the cost bounds of $H_1$ or $H_2$.
However, this can happen only when $K_i<\norm{\Vec c_i}L$, where $L$ is the polynomial length bound for the augmented run.
In this case, we absorb the restricted periods of a fixed connected member into the base.
Since each restricted period has positive integer cost, their total multiplicity is at most $K$.
This gives the following lemma.

\begin{lemma}
	\label{lem:restricted_period_absorption}
	Let $H=\Vec b+P^{\langle\Vec c,\cdot\rangle\le K}$ be a hybrid set, and write $P=P_0\sqcup P_+$ for its free and restricted periods.
	Fix a representation $\Vec z=\Vec b+\sum_{\Vec a\in P}\eta_{\Vec a}\cdot\Vec a$, where $\eta_{\Vec a}\in\NN\text{ for all }\Vec a\in P$ and $\sum_{\Vec a\in P}\Vec c(\Vec a)\eta_{\Vec a}\le K$.
	Define
	\begin{equation*}
		\widehat{\Vec b}:=\Vec b+\sum_{\Vec a\in P_+}\eta_{\Vec a}\cdot\Vec a,
		\qquad \widehat H:=\widehat{\Vec b}+P_0^*.
	\end{equation*}
	Then $\Vec z\in\widehat H\subseteq H$ and $\norm{\widehat{\Vec b}}\le\norm{\Vec b}+K\norm P$.
\end{lemma}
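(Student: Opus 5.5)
The lemma is a direct check, so the plan is to verify the three claims one by one straight from the definitions.

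\emph{First claim: $\Vec z\in\widehat H$.} Split the fixed representation of $\Vec z$ into its restricted and free parts:
\[
\Vec z=\Bigl(\Vec b+\sum_{\Vec a\in P_+}\eta_{\Vec a}\cdot\Vec a\Bigr)+\sum_{\Vec a\in P_0}\eta_{\Vec a}\cdot\Vec a=\widehat{\Vec b}+\sum_{\Vec a\in P_0}\eta_{\Vec a}\cdot\Vec a .
\]
The last sum lies in $P_0^*$, so $\Vec z\in\widehat H$.

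\emph{Second claim: $\widehat H\subseteq H$.} Take any $\widehat{\Vec b}+\sum_{\Vec a\in P_0}\mu_{\Vec a}\cdot\Vec a$ with $\mu\in\NN^{P_0}$. Define a coefficient vector $\lambda$ on $P$ by $\lambda_{\Vec a}:=\eta_{\Vec a}$ for $\Vec a\in P_+$ and $\lambda_{\Vec a}:=\mu_{\Vec a}$ for $\Vec a\in P_0$. This $\lambda$ represents the given vector. Free periods have cost zero, so
\[
\Inner{\Vec c,\lambda}=\sum_{\Vec a\in P_+}\Vec c(\Vec a)\eta_{\Vec a}\le\sum_{\Vec a\in P}\Vec c(\Vec a)\eta_{\Vec a}\le K .
\]
The inequality in the middle holds because all costs and multiplicities are nonnegative. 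Hence the vector lies in $H$.

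\emph{Third claim: the norm bound.} Every $\Vec a\in P_+$ has integer cost $\Vec c(\Vec a)\ge1$, so
\[
\sum_{\Vec a\in P_+}\eta_{\Vec a}\le\sum_{\Vec a\in P_+}\Vec c(\Vec a)\eta_{\Vec a}\le K .
\]
The triangle inequality for the max norm then gives $\norm{\widehat{\Vec b}}\le\norm{\Vec b}+\sum_{\Vec a\in P_+}\eta_{\Vec a}\norm{\Vec a}\le\norm{\Vec b}+K\norm P$.

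There is no real obstacle. The only point needing care is the cost accounting in the second claim: the inclusion $\widehat H\subseteq H$ needs the free periods to carry zero cost, so that re-choosing their multiplicities freely does not consume any of the budget $K$.
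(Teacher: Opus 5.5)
Your proposal is correct and follows the same argument as the paper: positive integer costs on $P_+$ bound $\sum_{\Vec a\in P_+}\eta_{\Vec a}$ by $K$, which yields the norm bound. Fixing the restricted coefficients at $\eta_{\Vec a}$ gives $\Vec z\in\widehat H\subseteq H$, since free periods cost nothing. You simply spell out the cost accounting that the paper leaves implicit.
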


\begin{proof}
	Every restricted cost is a positive integer, so
	$\sum_{\Vec a\in P_+}\eta_{\Vec a}\le\sum_{\Vec a\in P}\Vec c(\Vec a)\eta_{\Vec a}\le K$, which gives the norm bound.
	Fixing the restricted coefficients gives $\Vec z\in\widehat H\subseteq H$.
\end{proof}

Moreover, this operation preserves $C$-coverability.
Since $\widehat{\Vec b}\ge\Vec b$ and the free periods are unchanged, the covering runs can be shifted by $\widehat{\Vec b}-\Vec b$ using monotonicity.
The same argument applies to backward coverability.
We use these properties to handle the cost constraints when applying the reachability bounds for fixed configurations to augmented systems.
The resulting lemma is stated below, and its proof is deferred to~\cref{app:joining_coverable_hybrids}.

\begin{restatable}{lemma}{joinCoverableHybrid}\label{lem:joining_coverable_hybrids}
	Let $V$ be a $5$-VAS, $B\in\NN$, and $C$ a transformation-area threshold for $\enc V_B$.
	Let $H_1,H_2$ be hybrid sets at states $\Vec p,\Vec q$ of a sequential sub-VASS $U$ of $\enc{V}_B$.
	Suppose that there exist $\Vec x\in H_1$ and $\Vec y\in H_2$ such that $\Vec p(\Vec x)\xrightarrow{*}\Vec q(\Vec y)$ in $U$.
	Assume that $H_1$ is forward $C$-coverable in $U$ and $H_2$ is backward $C$-coverable in $U$.
	Then there exist $\Vec x'\in H_1$ and $\Vec y'\in H_2$ such that
	\begin{equation*}
		\norm{\Vec x'},\ \norm{\Vec y'},
		\MinLen_V\bigl((\Vec p,\Vec x'),(\Vec q,\Vec y')\bigr)
		\le
		\bigl(B+C+\Size(V)+\hnorm(H_1)+\hnorm(H_2)\bigr)^{\bigO(1)}.
	\end{equation*}
\end{restatable}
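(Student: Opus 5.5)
We reduce to \cref{cor:coverable_case} in the augmented system $U[P_1,P_2]$, and we first handle the cost constraints by absorbing restricted periods. Write $H_i=\Vec b_i+P_i^{\langle\Vec c_i,\cdot\rangle\le K_i}$, and let $P_i^0\subseteq P_i$ be the free periods. Set $N:=B+C+\Size(V)+\hnorm(H_1)+\hnorm(H_2)$. Fix a connected pair $\Vec x\in H_1$, $\Vec y\in H_2$ with representations $\eta^{(1)},\eta^{(2)}$, and a run $\Vec p(\Vec x)\xrightarrow{\pi}\Vec q(\Vec y)$ in $U$.

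The first step places the problem in the coverable setting. Consider $U'=U[P_1,P_2]$. It is still a sequential sub-VASS of $\enc{V'}_B$, where $V'$ adjoins the transitions $\{(0,0)\}\times(P_1\cup -P_2)$. By \cref{cor:threshold_closure}, $C$ remains a transformation-area threshold for $\enc{V'}_B$. Prepending self-loops that generate $\Vec x-\Vec b_1$ and appending loops that remove $\Vec y-\Vec b_2$ gives $\Vec p(\Vec b_1)\xrightarrow{*}\Vec q(\Vec b_2)$ in $U'$. Forward $C$-coverability of $\Vec p(\Vec b_1)$ holds in $U[P_1^0]\subseteq U'$, and backward $C$-coverability of $\Vec q(\Vec b_2)$ holds in $U[\varnothing,P_2^0]$. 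The loops remain inside the first and last components, so the instance is $C$-coverable.

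Applying \cref{cor:coverable_case} to $V'$ gives a run $\rho$ of length $L\le N^{\bigO(1)}$; note that $\Size(V')\le \PolyF{N}$. Reading off the self-loop uses of $\rho$ at its two ends yields $\Vec x'=\Vec b_1+\sum\lambda_{\Vec a}\Vec a$ and $\Vec y'$ analogously, together with a run of $V$ between them of length at most $L$. A subtle point is that self-loops may occur in the middle of $\rho$, not only at its ends. Because they have nonnegative, respectively nonpositive, effects with zero encoded part, we can commute every positive loop at $\Vec p$ to the front and every negative loop at $\Vec q$ to the back while preserving nonnegativity, by monotonicity. This is legitimate only when $\Vec p$ and $\Vec q$ are visited at the beginning and end, which holds since they are the source and target states.

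If $\sum\Vec c_1(\Vec a)\lambda_{\Vec a}\le K_1$ and likewise for $H_2$, then $\Vec x'\in H_1$ and $\Vec y'\in H_2$ with norms at most $N+LN$, and we are done.

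The main obstacle is the case where the cost budget is violated. Since each used cost is at most $\norm{\Vec c_i}L$, a violation forces $K_i<\norm{\Vec c_i}L\le N^{\bigO(1)}$. In that case we apply \cref{lem:restricted_period_absorption} to the fixed representation of $\Vec x$, respectively $\Vec y$, and obtain $\widehat H_i=\widehat{\Vec b}_i+(P_i^0)^*\subseteq H_i$. These sets still contain $\Vec x$ and $\Vec y$, have $\hnorm(\widehat H_i)\le N+KN\le N^{\bigO(1)}$, and have no restricted periods. They remain $C$-coverable, since $\widehat{\Vec b}_i\ge\Vec b_i$ (respectively, reverse monotonicity for the target) lets us shift the covering runs. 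We then rerun the argument above with $\widehat H_i$ in place of $H_i$. Now every period is free, so no cost constraint can fail, and the resulting $\Vec x',\Vec y'$ lie in $\widehat H_i\subseteq H_i$. Absorbing only the violating side, or both sides, changes the bound by a single polynomial composition of fixed degree. Hence the final bound $N^{\bigO(1)}$ holds, with the exponent independent of all inputs.
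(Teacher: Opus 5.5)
Your proposal is essentially the paper's proof: augment to $U[P_1,P_2]$, apply \cref{cor:threshold_closure} and \cref{cor:coverable_case}, strip the added period transitions (your commutation argument is equivalent to the paper's observation that each intermediate vector only gains the unused positive and already-removed negative periods), and fall back on \cref{lem:restricted_period_absorption} when a cost budget is exceeded. The only slip is the last step. After absorbing just the violating side, the rerun may produce a longer run that now violates the other side's budget, and you cannot absorb that side in advance since its $K_j$ may be huge. So ``now every period is free'' does not follow after one round; you must iterate, absorbing each side at most once, which needs at most two reruns and still yields $N^{\bigO(1)}$, exactly as the paper does.
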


%% file: contents/7-Semicoverable.tex
\section{Additive-Growth Induction for Semi-Coverable Instances}
\label{sec:semi_coverable}

We have established polynomial reachability-distance bounds for $C$-coverable instances in~\cref{sec:basic_reachability_bounds}.
We now show that coverability at just one endpoint suffices.
We call an instance \emph{semi-$C$-coverable} if its source is forward $C$-coverable or its target is backward $C$-coverable.

\begin{restatable}[Semi-Coverable Case]{theorem}{semicoverabletheorem}
	\label{thm:semi_coverable}

	Let $V$ be a $5$-VAS, $B \in \NN$, and $(U, \Vec{p}(\Vec{x}), \Vec{q}(\Vec{y}))$ be a sequential sub-VASS of $\enc{V}_B$ with $\Vec{p}(\Vec{x}) \xrightarrow{*} \Vec{q}(\Vec{y})$ in $U$.
	If $(U, \Vec{p}(\Vec{x}), \Vec{q}(\Vec{y}))$ is semi-$C$-coverable, then
	\begin{equation}
		\MinLen_V(\Vec s,\Vec t)
		\le \bigl(B + C + \Size(V,\Vec s,\Vec t)\bigr)^{\bigO(1)},
		\label{eq:semi_coverable_length}
	\end{equation}
	where $C$ is a transformation-area threshold for $\enc{V}_B$, $\Vec{s}=(\Vec{p},\Vec x)$ and $\Vec{t}=(\Vec{q},\Vec y)$.
\end{restatable}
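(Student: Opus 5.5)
By \cref{cor:threshold_closure}, $C$ remains a transformation-area threshold after reversal. Reversing the sequential instance turns a forward $C$-coverable source into a backward $C$-coverable target. So we assume throughout that $\Vec q(\Vec y)$ is backward $C$-coverable in $U$, and argue by additive-growth induction on the number $k\le(B+1)^2$ of SCCs of $U$.

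\textbf{Fixing the parameters.} Before the induction, we fix $D$, then $S$, then $\kappa$, then $R=(D+\kappa)^{\bigO(1)}$, each polynomial in $B+C+\Size(V,\Vec s,\Vec t)$, as outlined in \cref{par:overview_uniform_parameters}. Here $D$ is the threshold of \cref{lem:coordinate_extraction}, applied to $U$ and its SCCs with constant $C$. Let $L_k(N)$ bound $\MinLen_V$ over reachable subinstances with at most $k$ SCCs, source norm at most $N$, and the fixed target $\Vec t$. The goal is the recurrence
\[
L_1(N)\le P(M+N),\qquad L_{k+1}(N)\le P(M+N+R)+L_k(N+R),
\]
with $M=D+\norm{\Vec t}$. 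It gives $L_k(N)\le kP(M+N+kR)$, which is polynomial because the degree of $P$ and the exponent in $R$ do not depend on $k$.

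\textbf{One inductive step.} Take a run $\pi$ from $\Vec p(\Vec x)$.
\begin{enumerate}
\item If $U$ is non-wide, \cref{cor:proper_cone_len} finishes.
\item If the source is forward $C$-coverable, the instance is $C$-coverable and \cref{cor:coverable_case} finishes.
\item Otherwise, \cref{lem:coordinate_extraction} gives a coordinate $i$ that stays below $D$ on the segment of $\pi$ in $U_1$. We extend the prefix through the following SCCs while $i$ stays below $D$, obtaining the cut \eqref{eq:bounded_cut}: $\pi=\pi_0\pi_1$ with $\pi_0$ ending at $\Vec r(\Vec z)$.
\item If $\pi_0$ reaches the target, encoding $i$ into states gives a VASS of cycle-dimension at most two, handled by \cref{lem:hybrid_source}.
\item If instead $\pi_0$ is $R$-small, we replace $\pi_0$ by a polynomial-length run to $\Vec r(\Vec z)$, again by \cref{lem:hybrid_source} on the bounded-coordinate system. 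We then recurse on the suffix, which has fewer SCCs and source norm at most $\norm{\Vec x}+R$.
\end{enumerate}

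\textbf{The $R$-large case.} This is where the real work is. The $R$-large dichotomy (\cref{lem:large_prefix_dichotomy}) yields one of two outcomes.
\begin{itemize}
\item \emph{A hybrid set $H\ni\Vec z$ with a nonzero free period.} Then $\Vec r$ together with the members of $H$ is pumpable. The pumpable-source argument of \cref{subsec:pumpable_sources} then produces a polynomial-length run to the backward $C$-coverable target.
\item \emph{A translated family $\{H_r\}_{r\le\kappa}$ with $\Vec z\in H_0$.} By the stopping condition \eqref{eq:prefix_coverable_condition}, members of suitable translates are forward $C$-coverable. \cref{lem:translated_join} (joining via $U[P]$, a determinant-multiple choice of $r$, and absorption by \cref{lem:restricted_period_absorption}) gives $\Vec h\in H_r$ with polynomial norm and a polynomial-length run from $(\Vec r,\Vec h)$ to $\Vec t$. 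Finally, the dichotomy supplies a prefix witness from $\Vec p(\Vec x)$ to $\Vec h$ of polynomial length.
\end{itemize}

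The main obstacle I expect is making the translated-family outcome genuinely terminal with bounds independent of $\kappa$. There are two pieces. First, deleting repetitions of the chosen mixed-sign cycle must keep the rest of $\pi_0$ nonnegative; for this I would pick the last cycle raising the large coordinate that is repeated at least $\kappa$ times, so later growth is bounded. Second, $\norm{P}\le S$ must hold independently of $\kappa$, so that $\kappa$ can be fixed before $R$. Everything else is bookkeeping to ensure that a single polynomial $P$ dominates all terminal cases.
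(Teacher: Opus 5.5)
Your proposal is essentially the paper's proof: the same reduction by reversal to a backward $C$-coverable target, the same additive recurrence $L_{k+1}(N)\le P(M+N+R)+L_k(N+R)$ with $D,\kappa,R$ fixed before the induction, and the same terminal cases, with your $R$-large analysis being what the paper packages as \cref{lem:large_prefix_terminal} (your extra non-wide case is harmless but unnecessary, since non-wideness is already handled inside \cref{cor:coverable_case} and \cref{lem:translated_join}). Two details need tightening: $D$ must be the threshold of \cref{lem:bounded_prefix} rather than the extraction threshold with constant $C$ (you need constant $C+|Q|\norm{T}$ to return to $\Vec p$, plus a Rackoff margin to obtain the stopping condition~\eqref{eq:prefix_coverable_condition}), and in the free-period case the members of $H$ need not be pumpable in $U$ itself, so the step is really \cref{lem:free_covered}, which adjoins the periods as self-loops at $\Vec r$ to make the base pumpable in $U[P]$, applies \cref{lem:pumpable_coverable_target} there, and then removes the added loops using restricted-period absorption.
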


We prove the theorem when the target is backward $C$-coverable, without assuming coverability of the source.
The case of a forward $C$-coverable source follows by reversal.
The proof proceeds by induction on the number $k$ of SCCs.
We aim to ensure that each step increases the source norm by an additive polynomial bound, so that the final polynomial degree does not depend on $k$.
By \cref{cor:threshold_closure}, the same transformation-area threshold $C$ remains valid under reversal and when adjoining the nonnegative or nonpositive period transitions during the induction.

\paragraph{A Maximal Bounded-Coordinate Prefix.}
\label{par:maximal_bounded_prefix}

If the source is not forward $C$-coverable, Rackoff's extraction implies that some coordinate remains below a polynomial threshold along the first SCC segment of a witnessing run.
We follow this coordinate through successive SCCs and stop at the entry of the first later SCC in which it reaches the threshold, or at the target if no such SCC exists.
In the former case, any configuration at this entry state becomes forward $C$-coverable within that SCC if the other two coordinates are sufficiently large and the tracked coordinate is not decreased.

\begin{restatable}{lemma}{boundedPrefix}\label{lem:bounded_prefix}
	Let $U=(U_1)u_1(U_2)\cdots u_{k-1}(U_k)$ be a sequential $3$-VASS, and fix a run
	$p(\Vec x)\xrightarrow{\pi}q(\Vec y)$ whose source is not forward $C$-coverable.
	Then there exists an integer $D\le(C+\Size(U))^{\bigO(1)}$ such that one can find a coordinate $i\in[3]$ and a prefix $p(\Vec x)\xrightarrow{\pi_0}r(\Vec z)$ with $i\in\BDCounters(\pi_0,D)$ satisfying one of the following:
	\begin{enumerate}
		\item $r(\Vec z)=q(\Vec y)$;
		\item\label{item:bounded_prefix_entry} $r(\Vec z)$ is the entry configuration of a later component $U_h$, and every
		$\Vec w\in\NN^3$ satisfying
		\begin{equation*}
			\Vec w(i)\ge\Vec z(i),
			\qquad
			\Vec w(j)\ge D
			\quad\text{for all }j\ne i
		\end{equation*}
		has $r(\Vec w)$ forward $C$-coverable in $U_h$.
	\end{enumerate}
	Moreover, $\pi_0$ is captured by a $2$-VASS of size at most $D\Size(U)$.
\end{restatable}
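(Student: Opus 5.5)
The plan is to take $D$ to be the Rackoff-extraction threshold $D_0$ of \cref{lem:coordinate_extraction}, applied with parameter $C$ to the $3$-VASS $U$. Since the number of counters is the fixed value $3$, we get $D\le(C+\Size(U))^{\bigO(1)}$. To keep the argument uniform over components, I would take $D$ to be the maximum of the thresholds obtained for $U$ and for each single component $U_h$. This is still polynomial, because $\Size(U_h)\le\Size(U)$.

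Write $\pi=\sigma_1u_1\sigma_2\cdots u_{k-1}\sigma_k$, where $\sigma_h$ is the segment of $\pi$ inside $U_h$ (some segments may be empty).

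\emph{Step 1: a bounded coordinate in the first component.} Suppose every coordinate reached at least $D$ somewhere along $\sigma_1$. Then \cref{lem:coordinate_extraction}, applied to $U_1$ and the run $\sigma_1$ from $p(\Vec x)$, gives a run in $U_1$ from $p(\Vec x)$ ending at the final state of $\sigma_1$ with all counters at least $C$.

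Forward $C$-coverability, however, requires ending at the source state $p$ itself. To get this, I would not apply the extraction to $\sigma_1$ directly. Instead I would apply it to $\sigma_1$ extended by a fixed return path $\omega$ from its final state back to $p$ inside $U_1$, which exists by strong connectivity and has length at most $|Q|$. The issue is that $\omega$ may drive counters negative. I would absorb this by first enlarging $C$ to $C+|Q|\norm T$ in the extraction and then appending $\omega$ afterwards. An alternative is to use the version of the extraction that keeps the final state, applied to a prefix ending at a visit to $p$, together with the slack $C+|Q|\norm{T}$. Either way, the conclusion is that forward $C$-coverability of the source fails only if some coordinate $i$ satisfies $i\in\BDCounters(\sigma_1,D)$, with $D$ adjusted polynomially. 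I expect this ``return to $p$'' detail to be the main technical nuisance of the proof.

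\emph{Step 2: extend the prefix maximally.} Let $h$ be the least index greater than $1$ such that coordinate $i$ reaches at least $D$ somewhere along $\sigma_h$. If no such index exists, take $\pi_0=\pi$; this gives alternative 1.

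Otherwise let $\pi_0=\sigma_1u_1\cdots\sigma_{h-1}u_{h-1}$, which ends at the entry configuration $r(\Vec z)$ of $U_h$. The bridge transitions change coordinate $i$ by at most $\norm T$, so replacing $D$ by $D+\norm T$ gives $i\in\BDCounters(\pi_0,D)$.

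Encoding coordinate $i$ into states, $\pi_0$ becomes a run of $\enc{U}_{\{i\},D}$. This is a $2$-VASS with at most $D|Q|$ states and the same transitions copied over, so its size is at most $D\Size(U)$ up to the normalization in the definition of size.

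\emph{Step 3: coverability at the entry.} Let $\Vec w$ satisfy $\Vec w(i)\ge\Vec z(i)$ and $\Vec w(j)\ge D$ for $j\ne i$. Consider the run $\sigma_h$ started from $r(\Vec w)$ instead of $r(\Vec z)$. The idea is to use a prefix of $\sigma_h$ that reaches the first point where coordinate $i$ is at least $D$; along it coordinate $i$ stays valid by monotonicity, since $\Vec w(i)\ge\Vec z(i)$.

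The other coordinates are a problem, because they may dip below $\Vec z(j)$, which could exceed $D$. So I would not replay $\sigma_h$. Instead I would first apply \cref{lem:coordinate_extraction} in the reversed direction, or use Rackoff on $U_h$, to replace that prefix by a run of length at most $D'$ obtained by deleting cycles. Its drop is at most $D'\norm T$, so taking the threshold for $j\ne i$ to be $D'\norm T+D$ (still polynomial, and renamed $D$) makes it executable from $r(\Vec w)$. It ends with coordinate $i$ at least $D$ and the other coordinates still large. Then every coordinate has reached the threshold along a run from $r(\Vec w)$ in $U_h$, and Step 1's argument (\cref{lem:coordinate_extraction} plus a return path to $r$) yields forward $C$-coverability of $r(\Vec w)$ in $U_h$.

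Calibrating the successive polynomial thresholds consistently, so that a single $D$ works in all three steps, is the delicate point. The order I would fix them in is: $C$, then the return slack, then the extraction threshold, then the bridge slack.
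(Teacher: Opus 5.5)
Your Steps 1 and 2 are the paper's argument: extraction with the enlarged parameter $C+|Q|\norm T$ plus a simple return path to $p$ inside $U_1$, then the maximal extension with a bridge slack of $\norm T$. You differ in Step 3, and your route there is sound once the thresholds are ordered correctly.

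The paper uses neither the three-dimensional covering bound nor a second extraction in Step 3. It appends a simple path $\zeta$, from the crossing state back to the entry state $r$, to the prefix of the $U_h$-segment. Coordinate $i$ alone stays at least $C$ along $\zeta$, since it started at $D_0\ge C+n\norm T$. The paper then applies \cref{lem:rackoff_cover} to the projection of $U_h$ onto coordinate $i$. This gives a path from $r$ back to $r$ of length less than $L_c=\max\{2,n,\norm T,C\}^{c_{\mathrm{cov}}}$, a bound that does not depend on $D$. Choosing $D\ge C+L_c\norm T+1$ therefore makes its lift executable from every admissible $r(\Vec w)$, and the lift ends at $r$ with all counters at least $C$. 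No calibration issue arises.

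Your route has three steps: Rackoff in $U_h$ to cover a level of coordinate $i$ from $r(\Vec z)$, replay from $r(\Vec w)$ using the drop bound, then extraction and a return path. It works, but only if the covered level is fixed before the final threshold. A natural choice is the extraction threshold $D_{\mathrm{ext}}$ for parameter $C+|Q|\norm T$. The order should be:
\begin{enumerate}
\item $C$, then $C+|Q|\norm T$;
\item $D_{\mathrm{ext}}$;
\item the Rackoff length $D'$ for target level $D_{\mathrm{ext}}$;
\item the segment threshold $D_{\mathrm{ext}}+D'\norm T$;
\item finally $+\norm T$ for the bridges.
\end{enumerate}
Your stated order leaves out the $D'\norm T$ slack. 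If the covered level were the renamed $D$, the length $D'$ would depend on the very threshold $D'\norm T+D$ it is supposed to determine.

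Of your two shortening options, only Rackoff works. \cref{lem:coordinate_extraction}, forward or reversed, needs every coordinate to become large along that prefix, and nothing guarantees this. Rackoff gives a short covering run, not one obtained by deleting cycles. That is harmless, because only its length enters the drop estimate.
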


The proof uses Rackoff's bounds to establish~\cref{item:bounded_prefix_entry} at the first component where the selected coordinate ceases to remain bounded.
We defer the details to~\cref{app:bounded_prefix}.

\paragraph{Additive-Growth Induction and Prefix Classification.}
\label{par:prefix_classification}
Throughout this section, fix a $5$-VAS $V$, an encoding bound $B\in\NN$, and a transformation-area threshold $C$ for $W:=\enc{V}_B$.
Let $U=(U_1)u_1(U_2)\cdots u_{k-1}(U_k)$ be a reachable sequential sub-VASS of $W$ whose source $s$ and target $t$ belong to $U_1$ and $U_k$, respectively.
Their corresponding configurations in $V$ are $\Vec s=(\Vec p,\Vec x)$ and $\Vec t=(\Vec q,\Vec y)$.
For the proof of~\cref{thm:semi_coverable}, assume that $t$ is backward $C$-coverable in $U$.
Write $U=(Q,T)$.
Since $\Size(W)\le(B+1)^2\Size(V)$, the threshold in~\cref{lem:bounded_prefix} can be chosen uniformly over all sequential subinstances of $W$.
By enlarging it if necessary, we fix 
\begin{equation}\label{eq:uniform_structural_threshold}
	B+C+\Size(V)\le D \le(B+C+\Size(V))^{\bigO(1)}.
\end{equation}

If $s$ is forward $C$-coverable, \cref{cor:coverable_case} gives the required polynomial bound.
Otherwise, \cref{lem:bounded_prefix} gives a partition
\[
    \Vec p(\Vec x)
    \xrightarrow{\pi_0}\Vec r(\Vec z)
    \xrightarrow{\pi_1}\Vec q(\Vec y),
\]
where some coordinate $i$ remains below $D$ throughout $\pi_0$.
The prefix is captured by a $2$-VASS of size at most $D\Size(U)$, so \cref{lem:hybrid_source} gives a polynomial-length replacement in terms of the system size and endpoint norms.
If $\pi_1$ is empty, lifting this replacement to $V$ proves the desired bound.
Otherwise, the suffix lies in a sequential subinstance with fewer SCCs.

Fix a structural polynomial threshold $R\ge D$, to be chosen in~\cref{subsec:one_sided_completion}.
We call $\pi_0$ \emph{$R$-small} if $\norm{\Vec z}\le\norm{\Vec x}+R$.
In this case, induction applies to the unchanged suffix, with the source norm increased by at most $R$.
Otherwise, we call $\pi_0$ \emph{$R$-large}.
Since coordinate $i$ remains below $D\le R$, this means that
\begin{equation*}
    \Vec z(j)>\norm{\Vec x}+R
    \qquad\text{for some }j\ne i.
\end{equation*}
For a suitable choice of $R$, every $R$-large prefix yields either a hybrid set with a nonzero free period or a translated family.
The tools for these two outcomes are developed in~\cref{subsec:pumpable_sources,subsec:joining_translated}, respectively.
In~\cref{subsec:one_sided_completion}, we establish this dichotomy and use it to complete the induction.

\subsection{Pumpable Source}
\label{subsec:pumpable_sources}

A configuration $p(\Vec x)$ is \emph{pumpable} in $U$ if an enabled cycle returns to $p(\Vec x+\Vec a)$ for some $\Vec a\ge\Vec0$ with $\Vec a\ne\Vec0$.
Such a cycle can be repeated indefinitely, but it may leave some coordinates unchanged.
Thus, pumpability does not necessarily imply forward $C$-coverability.
We show that it nevertheless suffices for a polynomial reachability-distance bound when the target is backward $C$-coverable.

\begin{restatable}{lemma}{pumpableinterfacelemma}
	\label{lem:pumpable_coverable_target}
	Let $U$ be a sequential sub-VASS of $\enc V_B$ and let $\Vec p(\Vec x)\xrightarrow{*}\Vec q(\Vec y)$ in $U$, with $\Vec s=(\Vec p,\Vec x)$ and $\Vec t=(\Vec q,\Vec y)$.
	If $\Vec p(\Vec x)$ is pumpable in $U$ and $\Vec q(\Vec y)$ is backward $C$-coverable, then
	\begin{equation}\label{eq:pumpable_coverable_distance}
		\MinLen_V(\Vec s,\Vec t)\le\bigl(B+C+\Size(V,\Vec s,\Vec t)\bigr)^{\bigO(1)}.
	\end{equation}
\end{restatable}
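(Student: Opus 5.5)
We first make the pumping cycle short. If $\Vec p(\Vec x)$ is pumpable via a cycle $\theta$ with $\Eff(\theta)=\Vec a\ge\Vec0$, $\Vec a\ne\Vec0$, we replace $\theta$ by one of polynomial length, using a Rackoff-type bound in the sub-VASS given by the first component $U_1$. Write $A:=\Supp(\Vec a)\subseteq[3]$. If $A=[3]$, repeating $\theta$ polynomially often makes $\Vec p(\Vec x)$ forward $C$-coverable, and \cref{cor:coverable_case} gives~\eqref{eq:pumpable_coverable_distance}. So assume $|A|\in\{1,2\}$. The coordinates in $A$ can be made arbitrarily large at $\Vec p$, so only the coordinates in $[3]\setminus A$ obstruct forward coverability.

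Next we run bounded-coordinate extraction from the pumped source. Consider the $\ZZ$-adjusted instance in $U[\{\Vec a\}]$: by monotonicity every configuration $\Vec p(\Vec x+n\Vec a)$ still reaches $\Vec q(\Vec y+n\Vec a)$, but we prefer to keep the target fixed. Therefore we use $\{\Vec a\}$ as a free period of a source hybrid set $H_1=\Vec x+\Vec a^*$ at $\Vec p$, which contains $\Vec x$.
\begin{enumerate}
\item If $H_1$ is forward $C$-coverable in $U$, that is, $\Vec p(\Vec x)$ is forward $C$-coverable in $U[\{\Vec a\}]$, we apply \cref{lem:joining_coverable_hybrids} with $H_1$ and the singleton target $H_2=\{\Vec y\}$. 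The resulting $\Vec x'\in H_1$ is reached from $\Vec x$ by repeating $\theta$, which yields the bound.
\item Otherwise we apply \cref{lem:bounded_prefix} in $U[\{\Vec a\}]$ along a run that starts by pumping heavily. Since coordinates in $A$ are unbounded there, the bounded coordinates in the successive segments lie in $[3]\setminus A$.
\end{enumerate}

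If a single coordinate $i\notin A$ stays below $D$ throughout the whole run, the run lives in a $2$-VASS obtained by encoding $i$ into states. Its cycles (including $\theta$) have zero effect on $i$, so this is a $3$-VASS of cycle-dimension at most $2$ after restoring coordinates as in \cref{subsec:two_component_case}. We then apply \cref{lem:hybrid_source} with source hybrid set $H_1$ to obtain a short run from some $\Vec x+n\Vec a$ with $n$ polynomial, and prefix it with $\theta^n$.

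Otherwise $|A|=1$, say $A=\{1\}$, and coordinates $2$ and $3$ alternate being bounded across the run. Here we use a hill-cutting argument. Take the first segment where coordinate $2$ is bounded and coordinate $3$ climbs high, followed by one where coordinate $3$ is bounded and coordinate $2$ climbs. Deleting and collecting cycles from these climbs yields cycles $\theta_2,\theta_3$, occurring in this order, with $\Eff(\theta_3)(3)>0$ and $\Eff(\theta_2)(2)>0$. Their possibly negative entries are bounded polynomially, so a combination $N\Vec a+m_3\Eff(\theta_3)+m_2\Eff(\theta_2)$ with polynomial coefficients is strictly positive. To add this vector as a free period, we build an auxiliary VASS. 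It is $U$ with a copy of the prefix up to the states of $\theta_3,\theta_2$, where those cycles are made executable by first pumping $\theta$ and then inserting the cycles in order. The net effect is a positive self-loop $\Vec e\ge\Vec1$ at a reachable state on the original run, realizable in $V$. The configuration there then becomes forward $C$-coverable in the augmented system while the suffix still reaches the backward $C$-coverable target, and \cref{lem:joining_coverable_hybrids} finishes.

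The main obstacle is this last step. We must show that the cycles $\theta_3,\theta_2$ extracted from different segments can actually be fired repeatedly in the stated order, using the reserve built up by $\theta$ and by each other, without going negative in a bounded coordinate. We must also show that the whole construction keeps the states inside $\enc V_{B}$, or at least $\enc V_{2B}$ via \cref{cor:threshold_closure}, with polynomial parameters. To handle the order and nonnegativity, we would first try the peak-free observation that $\theta_3$ does not decrease coordinate $3$ on net and that coordinate $2$ is large when $\theta_2$ is fired. The net costs would be controlled by Steinitz-type prefix bounds as in \cref{cor:steinitz_partial_sums}.
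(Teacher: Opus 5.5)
\paragraph{Verdict: genuine gap.}
Your architecture is essentially the paper's: a short pumping cycle, bounded coordinates forced outside $\Supp(\Vec a)$, hill-cutting to two ordered cycles, a strictly positive free period, and then \cref{lem:joining_coverable_hybrids}. However, the step you leave open is exactly the missing idea, and two claims around it fail as stated.

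\emph{Positivity.} Cycles obtained by ``deleting and collecting cycles from the climbs'' need not have zero effect on the other alternating coordinate, and then no positive combination need exist. Take $\Vec a=(a,0,0)$, $\Eff(\theta_3)=(0,-2,1)$ and $\Eff(\theta_2)=(0,1,-2)$. Coordinates $2$ and $3$ of $N\Vec a+m_3\Eff(\theta_3)+m_2\Eff(\theta_2)$ are $m_2-2m_3$ and $m_3-2m_2$, which are never both positive. You need the hill-cutting of \cref{lem:hill_cutting}: two strict records of the climbing coordinate with the same state and the same value of the bounded coordinate bound an enabled cycle with zero effect on that coordinate. This cycle is then shortened by Rackoff after encoding both coordinates.

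\emph{The join point.} Your join point is the configuration $q_2(\Vec z_2)$ of the given run, whose norm is not polynomially bounded. Joining $\{\Vec z_2\}+\Vec e^*$ to the target therefore gives no usable bound, and you have no short prefix reaching it. You must replace $\Vec z_2$ by a hybrid set $G$ with polynomial $\hnorm$ and short prefix witnesses. This is obtained by encoding the coordinate that stays below $D_1$ on $\rho_1\rho_2$ and using \cref{thm:hybrid_description,lem:hybrid_source}. The period must then be valid for \emph{every} member of $G$, and those members' witnesses need not pass the cycle states with the cycles enabled. No Steinitz reordering of the original run addresses this.

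\paragraph{How the paper closes it.}
The paper uses drop tests (\cref{lem:two_cycle_capture}). Take two copies of $U$ and insert $q_1^{(0)}\to\hat q_1\to q_1^{(1)}$ and $q_2^{(1)}\to\hat q_2\to f$, each with effect $-\Drop(\theta_j)$ followed by $+\Drop(\theta_j)$. Every run to $f(\Vec h)$ then yields $\eta_1\eta_2$ that passes $q_1(\Vec w_1)$ with $\Vec w_1\ge\Drop(\theta_1)$ and ends at $q_2(\Vec h)$ with $\Vec h\ge\Drop(\theta_2)$. For $\lambda=2L\norm T+1$, the schedule $\theta_0^{k\lambda}\eta_1\theta_1^k\eta_2\theta_2^k$ is enabled by monotonicity alone. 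Indeed, before the $(j+1)$-st copy of $\theta_1$ the vector is $\Vec w_1+(k-j)\lambda\Vec a_0+j\Vec a_1$, where $\Vec a_1=\lambda\Vec a_0+\Eff(\theta_1)\ge\Vec0$, and similarly for $\theta_2$. The schedule reaches $q_2(\Vec h+k\Vec a_+)$ with $\Vec a_+\ge\Vec1$, so $G+\Vec a_+^*$ is forward $C$-coverable.

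\paragraph{A missing case.}
Your case split omits the situation where bounded-coordinate extraction stops at a later component entry $r_*(\Vec z_*)$ rather than at the target. Here $r_*(\Vec z_*)$ is forward $C$-coverable only after adjoining the $\Vec a$-loop, while one coordinate stays bounded along the prefix; this always happens when $|\Supp(\Vec a)|=2$. You treat coverability only at the source, and a bounded coordinate only when it persists to the target. In this case the paper takes a hybrid set $G\ni\Vec z_*$ from the bounded-coordinate $2$-VASS and adjoins $\Vec a$ as a free period. It then freezes boundedly many period occurrences from the representation of $\Vec z_*$ into the base, so that a short covering run is enabled (\cref{clm:pump_coverable_refinement}), before joining.

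Your worry about leaving $\enc V_B$ is unfounded. Everything stays in sub-VASS of $\enc V_B$ plus self-loops with zero effect on the encoded coordinates, which \cref{cor:threshold_closure} handles.
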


The proof starts with a short pumping cycle and considers the coordinates left unchanged by its effect.
Using bounded-coordinate extraction, we either find a prefix with one of these coordinates bounded throughout or obtain two ordered cycles whose effects, together with the initial pumping cycle, yield a strictly positive free period.
In particular, the positive free period ensures forward $C$-coverability, allowing us to apply~\cref{lem:joining_coverable_hybrids} to the backward $C$-coverable target.

\paragraph{$D$-Alternating Prefixes.}
\label{par:d_alternating_prefixes}

We first strengthen~\cref{lem:bounded_prefix} using the effect $\Vec a_0$ of the initial pumping cycle.
By temporarily adding $\Vec a_0$ as a self-loop at each component entry, we ensure that every coordinate in $\Supp(\Vec a_0)$ becomes large.
Thus, whenever forward $C$-coverability fails, the bounded coordinate extracted by~\cref{lem:bounded_prefix} must lie outside $\Supp(\Vec a_0)$.
Since this bounded coordinate may change between components, we introduce the following alternating condition.

\begin{definition}
\label{def:d_alternating}
	For $D\in\NN$ and $I\subseteq[3]$, we say that a run $\pi$ is \emph{$D$-alternating on $I$} if it admits a partition $\pi=\pi_1\cdots\pi_m$, where $m\ge0$, such that for $S_j:=\BDCounters(\pi_j,D)\cap I$,
	\begin{equation}\label{eq:d_alternating}
		S_j\ne\varnothing\text{ for all }j\in[m],\qquad S_j\cap S_{j+1}=\varnothing\text{ for all }j\in[m-1].
	\end{equation}
\end{definition}

When $|I|=1$, a nonempty $D$-alternating run keeps that coordinate bounded throughout.
When $|I|=2$, the bounded coordinates alternate between consecutive segments.

\begin{restatable}{lemma}{alternatingBoundedness}
	\label{lem:alternating_boundedness}
	Let $U=(U_1)u_1\cdots u_{k-1}(U_k)$ be a sequential $3$-VASS, let $p(\Vec x)\xrightarrow{\pi}t$ be a run in $U$, and let $\Vec a_0\in\NN^3\setminus\{\Vec0\}$ be the effect of an enabled cycle at $p(\Vec x)$.
	There exist
	\begin{equation}\label{eq:alternating_threshold_bound}
		D\le(C+\Size(U)+\norm{\Vec a_0})^{\bigO(1)}
	\end{equation}
	and a possibly empty prefix $p(\Vec x)\xrightarrow{\pi_*}r_*(\Vec z_*)$ that is $D$-alternating on $[3]\setminus\Supp(\Vec a_0)$.
	Either $r_*(\Vec z_*)=t$, or it is the entry of some component $U_*$ such that $r_*(\Vec z_*)$ is forward $C$-coverable in $U_*[\{\Vec a_0\}]$.
\end{restatable}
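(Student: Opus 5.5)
The plan is to build $\pi_*$ greedily, component by component, by repeatedly applying the argument of \cref{lem:bounded_prefix} to the augmented system obtained by adjoining $\Vec a_0$ as a self-loop at component entries. Let $I:=[3]\setminus\Supp(\Vec a_0)$. We take $D$ to be the Rackoff/extraction threshold of \cref{lem:coordinate_extraction} applied to each single component $U_h[\{\Vec a_0\}]$ with parameter $C$. Since $\Size(U_h[\{\Vec a_0\}])\le \Size(U)+3(\norm{\Vec a_0}+1)$, this $D$ satisfies \eqref{eq:alternating_threshold_bound}, and it can be chosen uniformly for all components.

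The key observation, used in each component, is the following. Suppose $\pi$ enters $U_h$ at a configuration $r(\Vec w)$ and the segment of $\pi$ inside $U_h$ is $\sigma_h$. Consider the run in $U_h[\{\Vec a_0\}]$ that first pumps the self-loop $\Vec a_0$ at $r$ a total of $D$ times and then follows $\sigma_h$. Along this run every coordinate of $\Supp(\Vec a_0)$ reaches at least $D$. Hence, if every coordinate of $I$ also reaches $D$ somewhere along $\sigma_h$, then \cref{lem:coordinate_extraction} gives forward $C$-coverability of $r(\Vec w)$ in $U_h[\{\Vec a_0\}]$. (The extracted run may leave the entry state, so I would close it into a cycle through strong connectivity by raising $C$ before extraction. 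Alternatively I would mirror the proof of \cref{lem:bounded_prefix}, which already yields coverability at the entry state.) In that case we stop, with $r_*(\Vec z_*)=r(\Vec w)$ and $U_*=U_h$.

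If instead the extraction fails in $U_h$, then $\BDCounters(\sigma_h,D)\cap I\ne\varnothing$. The bridge transitions should not cause trouble: attach each bridge to the preceding segment, which changes the bound only by $\norm T$, and absorb this into $D$. For the very first component, note that $p(\Vec x)$ itself is the entry, so the same test applies there. The prefix is therefore a concatenation $\sigma_1\cdots\sigma_{h-1}$ with $S_j:=\BDCounters(\sigma_j,D)\cap I\ne\varnothing$ for every $j$. If we never stop, the prefix is all of $\pi$ and ends at $t$.

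The remaining condition in \eqref{eq:d_alternating} is $S_j\cap S_{j+1}=\varnothing$. This is obtained by coarsening the partition: whenever two consecutive blocks share a bounded coordinate, merge them, since $\BDCounters$ of a concatenation is the intersection of the two sets. Merging greedily from the left keeps each block's set nonempty; I would make this precise as a maximal-run grouping. For instance, for $|I|=2$, group maximal stretches sharing a common coordinate, after which consecutive groups have disjoint or complementary sets. This regrouping is the step I expect to be fiddly. When $|I|=2$, a component with both coordinates bounded intersects both neighbours, so the merging must be done carefully, for example by assigning each block a single chosen coordinate and then merging runs of equal choice. The case $I=\varnothing$ needs no regrouping at all: the first test already succeeds, $\pi_*$ is empty, and the source is coverable.
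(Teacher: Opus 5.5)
Your argument is correct. It uses the same ingredients as the paper but organizes them differently. The shared ingredients are: pump $\Vec a_0$ so the coordinates in $\Supp(\Vec a_0)$ become large; apply coordinate extraction inside one component and close the run with a path back to the entry, which gives forward $C$-coverability unless some coordinate of $I$ stays bounded; then merge.

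The paper first fixes $r_*(\Vec z_*)$ as the \emph{first} component entry that is forward $C$-coverable in the augmented component. It then cuts $\pi_*$ by repeatedly applying \cref{lem:bounded_prefix} to the unprocessed suffix preceded by $D$ loop iterations. That route needs two things: non-coverability of every earlier entry, to meet the hypothesis of \cref{lem:bounded_prefix}; and the internal fact from that lemma's proof that the extracted prefix contains the initial loops, so that the bounded coordinate lies in $I$. Its segments may span several components.

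You instead test each single-component segment $\sigma_h$ directly. Your test is the first step of the proof of \cref{lem:bounded_prefix}, with the extraction threshold raised to $C+|Q|\norm T$ to pay for the closing path, localized to one component. You stop at the first component in which every coordinate of $I$ reaches the threshold. This avoids both the first-coverable-entry choice and the loop-containment bookkeeping. Your stopping point may be later than the paper's, which is harmless because the statement only asks for some coverable entry. Your blocks $\sigma_h u_h$, with threshold $D_0+\norm T$, then play the role of the paper's $\xi_j$.

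The merging step is not fiddly. Repeatedly merge any adjacent pair whose sets intersect. Each merge decreases the number of blocks and keeps every set nonempty, because $\BDCounters$ of a concatenation is the intersection. At the end, consecutive sets are disjoint. This is exactly the paper's argument, and it handles blocks with both coordinates bounded automatically.

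Your alternative of assigning one chosen coordinate per block and merging runs of equal choice does not suffice by itself. A block with set $\{1,2\}$ and choice $1$, followed by a block with set $\{2\}$, leaves intersecting consecutive sets.

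Finally, the estimate $\Size(U_h[\{\Vec a_0\}])\le\Size(U)+3(\norm{\Vec a_0}+1)$ is wrong, since a larger transition norm is multiplied by the number of transitions. A bound polynomial in $\Size(U)+\norm{\Vec a_0}$ still holds, and that is all \eqref{eq:alternating_threshold_bound} requires.
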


The proof repeatedly applies~\cref{lem:bounded_prefix} after adding $\Vec{a}_0$ as a self-loop at each SCC entry until the target is reached or a forward $C$-coverable entry is found.
Merging consecutive segments with a common bounded coordinate gives the required alternation.
See~\cref{app:alternating_boundedness} for its proof.

\subsubsection{Extracting Two Ordered Cycles for a Positive Free Period}
\label{subsubsec:ordered_pumping_cycles}
By~\cref{lem:alternating_boundedness}, the coordinates left unchanged by the initial pumping cycle are $D$-alternating along a prefix.
After permuting coordinates so that the pumping cycle increases coordinate $3$, this prefix is also $D$-alternating on $\{1,2\}$: if only one coordinate is left unchanged, the partition has at most one segment.
The following lemma shows that either some coordinate remains bounded throughout the prefix under a larger polynomial threshold, or there are two ordered cycles, each increasing one coordinate while leaving the other unchanged.
In the latter case, repeating the initial pumping cycle sufficiently many times makes the combined effect positive on coordinate $3$ as well, yielding a strictly positive free period.

\begin{lemma}
	\label{lem:hill_cuts}
	Let $U=(Q,T)$ be a $3$-VASS and let $p(\Vec x)\xrightarrow{\pi}t$ be a run $D$-alternating on $\{1,2\}$.
	There exist
	\begin{equation*}
		D_1,L\le \bigl(D+\Size(U)+\norm{\Vec x}\bigr)^{\bigO(1)}
	\end{equation*}
	such that either $\BDCounters(\pi,D_1) \neq \varnothing$, or, after exchanging the first two counters, there is a partition
	\begin{equation}
		p(\Vec x)\xrightarrow{\rho_1}q_1(\Vec z_1)
		\xrightarrow{\rho_2}q_2(\Vec z_2)
		\xrightarrow{\rho}t
		\label{eq:ordered_hill_cuts}
	\end{equation}
	and enabled cycles $\theta_1,\theta_2$ at $q_1(\Vec z_1)$ and $q_2(\Vec z_2)$, respectively, with $|\theta_\iota|\le L$, $\Eff(\theta_{\iota})(\iota)>0$ and $\Eff(\theta_{\iota})(3-\iota)=0$ for $\iota\in\{1,2\}$.
	Moreover, $\{1,2\}\subseteq\BDCounters(\rho_1,D_1)$ and $2\in\BDCounters(\rho_1\rho_2,D_1)$.
\end{lemma}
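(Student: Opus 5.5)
We use a hill-cutting argument on the alternating partition $\pi=\pi_1\cdots\pi_m$ from \cref{def:d_alternating}. If $m\le 1$, or more generally if one coordinate of $\{1,2\}$ lies in every $S_j$, then that coordinate stays below $D$ throughout $\pi$, and the first alternative holds with $D_1=D$. Otherwise some segment has $S_j=\{1\}$ or $\{2\}$ alone, and consecutive segments alternate. After exchanging counters if needed, we locate the first segment where coordinate $2$ is bounded but coordinate $1$ is not. Call it a \emph{hill} for coordinate $1$. Coordinate $1$ then rises from below $D$ (at the end of the earlier segment, where $1\in S_{j-1}$) to some large value while coordinate $2$ stays below $D$.

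First step: choose $D_1$ as a polynomial $(D+\Size(U)+\norm{\Vec x})^{c}$ and consider the prefix up to the first time coordinate $1$ reaches $D_1$. If coordinate $1$ never reaches $D_1$ along $\pi$, and likewise coordinate $2$ never reaches $D_1$, we are in the first alternative. So assume coordinate $1$ first reaches $D_1$ at some position, with coordinate $2$ below $D$ on the current segment. Before this position, both coordinates $1,2$ are below $D_1$. Coordinate $2$ is bounded by $D$ on this segment, and by $D_1$ earlier by minimality and symmetry of the choice. On this stretch, coordinates $1$ and $2$ take at most $D_1^2$ joint values and states take $|Q|$ values. A climb of coordinate $1$ from $<D$ to $\ge D_1$ with $D_1\gg D\cdot D|Q|$ must revisit a pair (state, coordinate-$2$ value) at two positions with strictly larger coordinate $1$ the second time. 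The intermediate piece is a cycle $\theta_1$ with $\Eff(\theta_1)(1)>0$ and $\Eff(\theta_1)(2)=0$. Choosing the pigeonhole pair at consecutive record values of coordinate $1$ keeps the length at most $L\le D|Q|\cdot(\text{step})$, a polynomial bound. Since it is a factor of the original run, it is enabled at its start, and we set $\rho_1$ to end there. Along $\rho_1$ both coordinates are below $D_1$, giving $\{1,2\}\subseteq\BDCounters(\rho_1,D_1)$.

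Second step: symmetric hill for coordinate $2$ after $q_1$. The alternation forces a later segment with $S=\{2\}$ excluded, i.e. coordinate $1$ bounded by $D$ and coordinate $2$ unbounded. If coordinate $2$ never reaches $D_1$ in the whole run, we again land in the first alternative, since coordinate $2$ is bounded by $D_1$ throughout. Otherwise, take the first position after which coordinate $2$ climbs to $D_1$ inside a segment where coordinate $1<D$. The same pigeonhole on (state, coordinate-$1$ value) yields $\theta_2$ with $\Eff(\theta_2)(2)>0$ and $\Eff(\theta_2)(1)=0$ at $q_2(\Vec z_2)$, of polynomial length. Up to $q_2$, coordinate $2$ stays below $D_1$ by minimality, giving $2\in\BDCounters(\rho_1\rho_2,D_1)$.

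The main obstacle is the ordering and the bookkeeping of which coordinate stays under $D_1$ before each cut. The first hill must be for the coordinate whose climb happens first, which dictates the exchange of counters. Coordinate $2$ must not exceed $D_1$ before the second hill, which is why we cut at the first $D_1$-crossing of each coordinate. Also, the initial value $\norm{\Vec x}$ may exceed $D$. Hence $D_1$ must dominate $\norm{\Vec x}$, and the first segment's start must be handled by allowing the climb to begin from $\norm{\Vec x}$. This explains the $\norm{\Vec x}$ in the bounds.
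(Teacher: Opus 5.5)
Your architecture matches the paper's proof. You cut at the first $D_1$-crossing of each coordinate. You observe that the segment containing a crossing must bound the \emph{other} coordinate below $D$, so the two crossings lie in distinct segments and each segment starts with the crossing coordinate at most $\max\{\norm{\Vec x},D\}$. You exchange counters so that coordinate $1$ crosses first. Finally, $\{1,2\}\subseteq\BDCounters(\rho_1,D_1)$ and $2\in\BDCounters(\rho_1\rho_2,D_1)$ follow because each cut precedes the relevant crossings. One cosmetic correction: records can jump by up to $\norm{T}$, so the threshold should have the form $\max\{\norm{\Vec x},D\}+D|Q|\norm{T}+1$ rather than involving $D\cdot D|Q|$.

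The genuine gap is the bound $|\theta_\iota|\le L$. Your pigeonhole on records $a<b$ with the same pair (state, value of coordinate $3-\iota$) produces a cycle that is a factor of $\pi$, and its length is not controlled. Only the \emph{number of records} between $a$ and $b$ is bounded, not the number of transitions between consecutive records. The run may make arbitrarily long excursions in coordinate $3$, or dip in coordinate $\iota$, before the next record. For example, suppose the climb of coordinate $1$ consists of repeated blocks $(0,0,1)^N(0,0,-1)^N(1,0,0)$ with coordinate $2$ constant. Then all records share the same pair, and the factor between consecutive records has length $2N+1$, where $N$ is unrelated to $D+\Size(U)+\norm{\Vec x}$. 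Deleting subcycles of the factor does not rescue this either: removing one with positive effect on coordinate $3$ can disable the remainder.

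The missing ingredient is a shortening step via Rackoff's bound, which is the content of \cref{lem:hill_cutting}. On the stretch from record $a$ to record $b$, coordinate $\iota$ stays below the crossing threshold and coordinate $3-\iota$ stays below $D$. Encoding both into states gives a $1$-VASS with polynomially many states and transition norm $\norm{T}$. Set $\widehat r_c:=(r_c,\Vec u_c(\iota),\Vec u_c(3-\iota))$; the factor then witnesses that $\widehat r_b(0)$ is coverable from $\widehat r_a(\Vec u_a(3))$. By \cref{lem:rackoff_cover} there is a covering run of polynomial length. Lifting it to $U$ gives a cycle $\theta_\iota$ enabled at $r_a(\Vec u_a)$, and the encoded endpoints fix $\Eff(\theta_\iota)(\iota)=\Vec u_b(\iota)-\Vec u_a(\iota)>0$ and $\Eff(\theta_\iota)(3-\iota)=0$.

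This $\theta_\iota$ is no longer a factor of $\pi$, so its enabledness comes from the covering property rather than from being a sub-run. The cut point $q_\iota(\Vec z_\iota):=r_a(\Vec u_a)$ is still a position of $\pi$ before the crossing, so the rest of your bookkeeping goes through unchanged.
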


We first isolate a hill-cutting argument.
It extracts a short enabled cycle when one coordinate remains bounded while another increases sufficiently.
The large increase produces many positions where the growing coordinate reaches a new maximum.
There are at most $D|Q|$ pairs consisting of a state and the bounded-coordinate value.
Two records must therefore share such a pair while the growing coordinate increases, yielding a cycle that can be shortened using Rackoff's covering bound.
The proof is deferred to~\cref{app:hill_cutting}.

\begin{restatable}{lemma}{hillCutting}
    \label{lem:hill_cutting}
    Let $U=(Q,T)$ be a $3$-VASS, and let $i,j\in[3]$ be distinct coordinates.
    Suppose $p(\Vec x)\xrightarrow{\pi}q(\Vec y)$ is a run with $D\ge1$ satisfying $j\in\BDCounters(\pi,D)$ and 
	\[\Vec y(i)\ge D':=\Vec x(i)+D|Q|\norm T+1.\]
    Then there exist a configuration $r(\Vec z)$ along $\pi$ and a cycle $\theta$ fireable at $r(\Vec z)$ such that
    \[
        \Eff(\theta)(i)>0,\qquad
        \Eff(\theta)(j)=0,\qquad
        |\theta|\le(D+\Size(U)+\Vec x(i))^{\bigO(1)}.
    \]
\end{restatable}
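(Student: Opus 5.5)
We prove \cref{lem:hill_cutting} by a record-position pigeonhole argument, followed by a Rackoff-style shortening of the resulting cycle.

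\emph{Records.} Write the run as $p(\Vec x)=c_0\to c_1\to\cdots\to c_\ell=q(\Vec y)$ with $c_n=r_n(\Vec z_n)$. Call $n$ a \emph{record} if $\Vec z_n(i)>\Vec z_m(i)$ for all $m<n$. Each transition changes coordinate $i$ by at most $\norm T$. Hence the successive record values of coordinate $i$ increase from $\Vec x(i)$ to at least $\Vec y(i)$ in steps of at most $\norm T$. So there are at least $(\Vec y(i)-\Vec x(i))/\norm T> D|Q|$ records. Each record determines a pair $(r_n,\Vec z_n(j))\in Q\times[0,D-1]$, because $j\in\BDCounters(\pi,D)$. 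By pigeonhole, two records $n<n'$ share the same pair. The segment $\theta_0$ of $\pi$ between them is then a cycle at $r_n$ with $\Eff(\theta_0)(j)=0$ and $\Eff(\theta_0)(i)=\Vec z_{n'}(i)-\Vec z_n(i)>0$. It is fireable at $r_n(\Vec z_n)$, being part of $\pi$. Moreover, since $n'$ is a record, coordinate $i$ stays below $\Vec z_{n'}(i)$ before $n'$. Because we may take the first repeated pair, it suffices to consider records within the first $D|Q|+1$ ones. Thus every value along the prefix up to $n'$ is bounded on coordinate $i$ by $\Vec x(i)+(D|Q|+1)\norm T$, and on $j$ by $D$.

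\emph{Shortening.} The cycle $\theta_0$ may be long because of the third coordinate $k$. Encode coordinates $i$ and $j$ into states: $j$ takes values in $[0,D)$, and $i$ takes values in $[0,E]$ with $E:=\Vec x(i)+(D|Q|+1)\norm T$. This yields a $1$-VASS $U'$ with $|Q|\cdot D\cdot(E+1)$ states, which is polynomial in $D+\Size(U)+\Vec x(i)$. The segment $\theta_0$ is a path in $U'$ from state $(r_n,\Vec z_n(i),\Vec z_n(j))$ to state $(r_n,\Vec z_{n'}(i),\Vec z_n(j))$, starting with $k$-value $\Vec z_n(k)$. We want a short run in $U'$ between these two states that is fireable from $\Vec z_n(k)$, with no constraint on the final $k$-value. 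Any such run lifts to a cycle $\theta$ at $r_n$ with $\Eff(\theta)(j)=0$ and $\Eff(\theta)(i)=\Vec z_{n'}(i)-\Vec z_n(i)>0$, fireable at $r_n(\Vec z_n)$.

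This is coverability of $k$-value $0$ at the target state of $U'$. By \cref{lem:rackoff_cover} in dimension $1$, there is such a run of length at most $N^{\bigO(1)}$, where $N=\max\{2,|Q'|,\norm T\}$. This bound is polynomial in $D+\Size(U)+\Vec x(i)$, as required. Alternatively, a direct one-counter argument works: remove cycles of the state-graph path that do not lower the minimum reached so far. Either way, the resulting length bound is polynomial.

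\emph{Main obstacle.} The step needing care is making sure the encoding bound on coordinate $i$ is polynomial, i.e.\ choosing the records early enough. Using the first $D|Q|+1$ records, rather than arbitrary ones, handles this. After that, the shortened run must return to the exact same encoded $(i,j)$-state, which the state encoding ensures automatically. Shortening does not move the start point $r_n(\Vec z_n)$, so fireability is preserved and the lemma follows.
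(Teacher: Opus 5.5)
Your proposal is correct and follows the paper's proof step for step: strict records of coordinate $i$, pigeonhole on the pair (state, value of coordinate $j$) over at most $D|Q|$ possibilities, encoding coordinates $i$ and $j$ into states to get a $1$-VASS, and shortening via \cref{lem:rackoff_cover} applied to coverability of counter value $0$ at the repeated encoded state. The only cosmetic difference is that you bound coordinate $i$ by restricting to the first $D|Q|+1$ records, whereas the paper truncates $\pi$ at the first position where coordinate $i$ reaches $D'$; your ``direct one-counter argument'' aside is too vague to stand as a proof, but it is not needed because the Rackoff route already gives the bound.
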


\begin{proof}[Proof of~\cref{lem:hill_cuts}]
	Let $D_1:=\max\{\norm{\Vec x},D\}+D|Q|\norm T+1$.
	We call the first position where coordinate $i$ has value at least $D_1$ its \emph{$D_1$-crossing}.
	If either coordinate has no $D_1$-crossing, the first conclusion holds with $L:=1$.
	Otherwise, fix a partition witnessing $D$-alternation.

	For $i\in\{1,2\}$, let $r_i(\Vec u_i)$ be the source of the segment containing the incoming transition of its $D_1$-crossing.
	This segment bounds coordinate $3-i$ below $D$.
	If it is not the first segment, its predecessor bounds coordinate $i$ below $D$ by alternation.
	Otherwise, its source is the initial configuration.
	Hence $\Vec u_i(i)\le\max\{\norm{\Vec x},D\}$, so~\cref{lem:hill_cutting} applies to the segment prefix ending at the crossing.
	Its construction gives a configuration $q_i(\Vec z_i)$ before the $D_1$-crossing and an enabled cycle $\theta_i$ there with
	\[
		\Eff(\theta_i)(i)>0,
		\qquad \Eff(\theta_i)(3-i)=0.
	\]
	The bound on $\Vec u_i(i)$ gives a common length bound $L\le(D+\Size(U)+\norm{\Vec x})^{\bigO(1)}$.

	The two crossings lie in distinct segments, since each segment keeps one coordinate below $D$.
	Exchange coordinates so that the $D_1$-crossing of coordinate $1$ comes first.
	Then $q_1(\Vec z_1)$ precedes $q_2(\Vec z_2)$, giving~\eqref{eq:ordered_hill_cuts}.
	The first cut precedes both $D_1$-crossings, and the second precedes that of coordinate $2$.
	Consequently, $\{1,2\}\subseteq\BDCounters(\rho_1,D_1)$ and $2\in\BDCounters(\rho_1\rho_2,D_1)$, as required.
\end{proof}

\subsubsection{Drop Tests and Hybrid Set Representation}
\label{subsubsec:drop_tests_hybrid_representation}

In the two-cycle case of~\cref{lem:hill_cuts}, coordinate $2$ remains bounded along $\rho_1\rho_2$.
By recording this coordinate in states, we can describe the reachability set at $q_2$ by a finite union of hybrid sets.
However, this alone does not allow us to obtain a strictly positive free period: a run reaching an arbitrary member of such a hybrid set need not visit $q_1$ with $\theta_1$ enabled or have $\theta_2$ enabled.
We therefore construct an auxiliary $3$-VASS $U^{(q_1,q_2)}$ that captures exactly the runs visiting $q_1$ before reaching $q_2$, with both cycles enabled at the corresponding configurations.
At each state $q_i$, subtracting and then restoring $\Drop(\theta_i)$ checks that $\theta_i$ is enabled without changing the counter values.

\begin{lemma}
    \label{lem:two_cycle_capture}
    Let $U=(Q,T)$ be a $3$-VASS, $p(\Vec x)\xrightarrow{\rho_1}q_1(\Vec z_1)\xrightarrow{\rho_2}q_2(\Vec z_2)$ be a run in $U$, and $\theta_0,\theta_1,\theta_2$ be three cycles enabled at $p(\Vec x)$, $q_1(\Vec z_1)$, and $q_2(\Vec z_2)$, respectively, with lengths at most $L$ and effects
    \[
        \Eff(\theta_0)=(0,0,a),\qquad
        \Eff(\theta_1)=(u,0,b),\qquad
        \Eff(\theta_2)=(0,v,c),
    \]
    where $a,u,v\ge1$.
    There exists a $3$-VASS $U^{(q_1,q_2)}$ of size polynomial in $\Size(U)+L$ with initial state $p^{(0)}$ and final state $f$, satisfying the following properties.
    \begin{enumerate}
        \item $p^{(0)}(\Vec x)\xrightarrow{\pi}f(\Vec{z}_2)$ in $U^{(q_1,q_2)}$ for some run $\pi$ with $\BDCounters(\rho_1\rho_2, D) \subseteq \BDCounters(\pi, D)$ for every $D\ge0$.
        \item There exists $\Vec a_+\ge\Vec1$ with $\norm{\Vec a_+}\le(L+\Size(U))^{\bigO(1)}$ such that
        \begin{equation*}
            \Reach^{f}_{U^{(q_1,q_2)}}(p^{(0)}(\Vec x))
            +\Vec a_+^*
            \subseteq
            \Reach^{q_2}_U(p(\Vec x)).
        \end{equation*}
		Moreover, if $f(\Vec h)$ is reachable in $U^{(q_1,q_2)}$ from $p^{(0)}(\Vec x)$, then for every $k\in\NN$,
		\[
            \MinLen_U\bigl(p(\Vec x),q_2(\Vec h+k\cdot\Vec a_+)\bigr)\le\MinLen_{U^{(q_1,q_2)}}\bigl(p^{(0)}(\Vec x),f(\Vec h)\bigr)+k(L+\Size(U))^{\bigO(1)}.
		\]
    \end{enumerate}
\end{lemma}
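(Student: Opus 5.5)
We will build $U^{(q_1,q_2)}$ as a three-layer copy of $U$ with drop-test gadgets. Take three disjoint copies $Q^{(0)},Q^{(1)},Q^{(2)}$ of $Q$, each carrying all transitions of $U$. Copy $0$ has initial state $p^{(0)}$. To pass from layer $0$ to layer $1$ we use a gadget at $q_1$: from $q_1^{(0)}$ take a transition with effect $-\Drop(\theta_1)$ to a fresh state $g_1$, and then a transition with effect $+\Drop(\theta_1)$ from $g_1$ to $q_1^{(1)}$. Symmetrically, the gadget at $q_2$ goes from $q_2^{(1)}$ through $-\Drop(\theta_2)$ and $+\Drop(\theta_2)$ to the final state $f$, with no outgoing transitions. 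We also want $\theta_0$ enabled at the source, so we prepend a gadget testing $\Drop(\theta_0)$ from $p^{(0)}$ into the copy-$0$ state $p$. Since $\Vec x$ is fixed, this test is automatically satisfied and is harmless. The size is $3|Q|+O(1)$ states and $3|T|+O(1)$ transitions, with norms at most $\max\{\norm T, L\norm T\}$. This is polynomial in $\Size(U)+L$.

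For property 1, we take $\pi$ to be the gadget at $p$, then $\rho_1$ in layer $0$, then the $q_1$ gadget, then $\rho_2$ in layer $1$, then the $q_2$ gadget. The drop tests are enabled because $\theta_i$ is enabled at $q_i(\Vec z_i)$, i.e.\ $\Vec z_i\ge\Drop(\theta_i)$. Each gadget only lowers counters temporarily and returns to the same value. Hence every counter value along $\pi$ is either a value of $\rho_1\rho_2$ or is at most one. More carefully, each such value is dominated by a value already attained, which gives $\BDCounters(\rho_1\rho_2,D)\subseteq\BDCounters(\pi,D)$.

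For property 2, let $\sigma$ be any run from $p^{(0)}(\Vec x)$ to $f(\Vec h)$. Projecting away the gadgets gives a run $p(\Vec x)\xrightarrow{\sigma_0}q_1(\Vec w_1)\xrightarrow{\sigma_1}q_2(\Vec h)$ in $U$ with $\Vec w_1\ge\Drop(\theta_1)$ and $\Vec h\ge\Drop(\theta_2)$.

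We then insert cycles in three places: $\theta_0^{n_0}$ at the start, $\theta_1^{n_1}$ at $q_1$, and $\theta_2^{n_2}$ at $q_2$. Choose $n_1=v$ and $n_2=u$, so that the effect on coordinates 1 and 2 is $(uv,uv)$. Then choose $n_0$ large, namely $n_0:=1+\lceil(v|b|+u|c|)/a\rceil$, so that the third coordinate $n_0a+vb+uc$ is at least $1$. This gives $\Vec a_+:=n_0\Eff(\theta_0)+v\Eff(\theta_1)+u\Eff(\theta_2)\ge\Vec1$, with norm polynomial in $L+\norm T$, since $a,u,v,|b|,|c|\le L\norm T$. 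Taking $k$ copies of each repetition block gives $\Vec h+k\Vec a_+$, and the length grows additively by $k(n_0+u+v)L$.

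Enabledness is the main obstacle: $\theta_1$ and $\theta_2$ may decrease coordinate 3, which could break the later part of the run. The order of insertion is what makes this work. Inserting $\theta_0^{kn_0}$ first raises only coordinate 3 by $kn_0a$. Then $\theta_1^{kv}$ is enabled at $q_1(\Vec w_1+kn_0\Vec e_3)$, by induction on repetitions:
\begin{itemize}
\item each repetition adds $(u,0,b)$;
\item coordinates 1 and 2 never fall below their starting values;
\item coordinate 3 keeps a reserve of at least $k(n_0a+vb)\ge kv|b|\ge0$ beyond its value before the block, because $n_0a>v|b|+u|c|$.
\end{itemize}
By monotonicity each $\theta_1$ repetition starts from a configuration $\ge\Vec w_1\ge\Drop(\theta_1)$, so it is enabled. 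After the block, the displacement is $k(n_0a+vb)\Vec e_3+kuv\Vec e_1\ge\Vec0$, so $\sigma_1$ remains a run by monotonicity. The same argument at $q_2$, with remaining reserve $k(n_0a+vb+uc)\ge0$ and $\Vec h\ge\Drop(\theta_2)$, shows that $\theta_2^{ku}$ is enabled. The reachable configuration is therefore $q_2(\Vec h+k\Vec a_+)$, which gives both the inclusion and the length bound.
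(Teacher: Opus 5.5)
Your proof is correct and is essentially the paper's: layered copies of $U$ joined by $\pm\Drop(\theta_1)$ and $\pm\Drop(\theta_2)$ tests, then pumping $\theta_0$ first so that coordinate $3$ has a reserve before the $\theta_1$ and $\theta_2$ blocks. The paper uses multiplicities $2L\norm T+1,1,1$ where you use $n_0,v,u$, and it needs neither your unused third layer nor the test at $p$, both of which are harmless.

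One slip to fix: the chain $k(n_0a+vb)\ge kv|b|$ is false in general (e.g.\ $b<0$ with $n_0a<2v|b|$). What you actually need is $kn_0a+mb\ge0$ for $m\le kv$ and $k(n_0a+vb)+mc\ge0$ for $m\le ku$, and both follow directly from your choice $n_0a>v|b|+u|c|$.
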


\begin{proof}
	Let $\Vec a_0:=\Eff(\theta_0)$, $\Vec e_j:=\Eff(\theta_j)$, and $\Vec d_j:=\Drop(\theta_j)$ for $j=1,2$.
	Then $\norm{\Vec a_0},\norm{\Vec e_j},\norm{\Vec d_j}\le L\norm T$ and $\Vec d_j\le\Vec z_j$.
	Take two disjoint copies $U^{(0)},U^{(1)}$ of $U$, with copied states $q^{(0)},q^{(1)}$, and three fresh states $\hat q_1,\hat q_2,f$.
	Besides the copied transitions, add only the following transitions $t_j^\pm$, of effects $\pm\Vec d_j$, respectively, for $j=1,2$:
	\begin{equation*}
		q_1^{(0)}\xrightarrow{t_1^-}\hat q_1\xrightarrow{t_1^+}q_1^{(1)},
		\qquad q_2^{(1)}\xrightarrow{t_2^-}\hat q_2\xrightarrow{t_2^+}f.
	\end{equation*}
	Denote the resulting system by $U^{(q_1,q_2)}$.
	It has $2|Q|+3$ states, $2|T|+4$ transitions, and transition norm at most $L\norm T$, so its size is polynomial in $\Size(U)+L$.
	Each test decreases and then restores the vector, and is enabled precisely when its source vector is at least $\Vec d_j$.
	Thus $\rho_1\rho_2$ lifts to a run $\pi$ reaching $f(\Vec z_2)$ with the same coordinatewise maxima, proving the first property.

	Every run $p^{(0)}(\Vec x)\xrightarrow{\gamma}f(\Vec h)$ traverses the two tests in order.
	Deleting their four transitions gives a run in $U$ of the form
	\begin{equation*}
		p(\Vec x)\xrightarrow{\eta_1}q_1(\Vec w_1)
		\xrightarrow{\eta_2}q_2(\Vec h),
		\qquad \Vec w_1\ge\Vec d_1,
		\quad \Vec h\ge\Vec d_2.
	\end{equation*}
	Let
	\[
		\lambda:=2L\norm T+1,
		\qquad \Vec a_1:=\lambda\cdot\Vec a_0+\Vec e_1,
		\qquad \Vec a_+:=\Vec a_1+\Vec e_2.
	\]
	The prescribed effects and $|b|,|c|\le L\norm T$ give $\Vec a_1\ge\Vec0$, $\Vec a_+\ge\Vec1$, and
	\[
		\norm{\Vec a_+}\le(2L\norm T+3)L\norm T
		\le(L+\Size(U))^{\bigO(1)}.
	\]
	For $k\in\NN$, consider $\theta_0^{k\lambda}\eta_1\theta_1^k\eta_2\theta_2^k$.
	The copies of $\theta_0$ and the translated $\eta_1,\eta_2$ are enabled by monotonicity.
	Before the $(j+1)$-st copies of $\theta_1$ and $\theta_2$, respectively, where $0\le j<k$, we have
	\[
		\begin{aligned}
			\Vec w_1+k\lambda\cdot\Vec a_0+j\cdot\Vec e_1
			&=\Vec w_1+(k-j)\lambda\cdot\Vec a_0+j\cdot\Vec a_1\ge\Vec d_1,\\
			\Vec h+k\cdot\Vec a_1+j\cdot\Vec e_2
			&=\Vec h+(k-j)\cdot\Vec a_1+j\cdot\Vec a_+\ge\Vec d_2.
		\end{aligned}
	\]
	Hence every cycle copy is enabled, and the run ends at $q_2(\Vec h+k\cdot\Vec a_+)$.
	Its length is at most
	\[
		|\gamma|+k\bigl(\lambda|\theta_0|+|\theta_1|+|\theta_2|\bigr)
		\le|\gamma|+k(2L\norm T+3)L.
	\]
	Taking $\gamma$ shortest proves the distance bound and the reachability inclusion.
\end{proof}

The two properties in~\cref{lem:two_cycle_capture} serve different purposes.
The first preserves the boundedness of the original run, allowing us to record a bounded coordinate in states without losing $\Vec z_2$ from the reachability set.
The second property applies to every vector reachable at $f$ in the auxiliary VASS: adding any multiple of $\Vec a_+$ yields a vector reachable at $q_2$ in $U$, with a polynomial bound on the additional run length.
Thus, after obtaining a hybrid set $G$ containing $\Vec z_2$ from the auxiliary reachability set, we can add $\Vec a_+$ as a free period.

Together with~\cref{thm:hybrid_description,lem:hybrid_source}, this yields a hybrid set with polynomially bounded representation norm and short prefix witnesses.
The complete proof is given in~\cref{app:pump_extraction}.

\begin{restatable}{lemma}{pumpExtraction}
	\label{lem:pump_extraction}
	Let $p(\Vec x)\xrightarrow{\pi}t=q(\Vec y)$ be a run in a sequential $3$-VASS $U=(Q,T)$, with $p(\Vec{x})$ pumpable, and let $C\in\NN$.
	There exist a partition of $\pi$
	\begin{equation*}
		p(\Vec x)\xrightarrow{\tau}r(\Vec z)\xrightarrow{\sigma}t,
	\end{equation*}
	and a hybrid set $H$ containing $\Vec z$, with $\hnorm(H) \le (C+\Size(U,p(\Vec x)))^{\bigO(1)}$, such that every $r(\Vec h)$ with $\Vec h\in H$ has a prefix witness $p(\Vec{x}) \xrightarrow{\rho} r(\Vec h)$ of length
	\begin{equation*}
		|\rho| \le (C + \Size(U,p(\Vec{x}),r(\Vec{h})))^{\bigO(1)}.
	\end{equation*}
	Either $\sigma$ is empty, or $H$ is forward $C$-coverable in the component containing $r$.
\end{restatable}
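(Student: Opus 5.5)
We plan to prove \cref{lem:pump_extraction} by combining \cref{lem:alternating_boundedness}, \cref{lem:hill_cuts}, and \cref{lem:two_cycle_capture}, and then representing the relevant reachability set by hybrid sets through \cref{thm:hybrid_description,lem:hybrid_source}.

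First we fix the pumping cycle. Since $p(\Vec x)$ is pumpable, \cref{lem:rackoff_cover} lets us take a short enabled cycle $\theta_0$ at $p(\Vec x)$ with effect $\Vec a_0\ge\Vec0$, $\Vec a_0\ne\Vec0$, and $|\theta_0|\le\Size(U,p(\Vec x))^{\bigO(1)}$. We get it by covering $p(\Vec x+\Vec e_i)$ for some coordinate $i$ that the original pumping cycle increases. We then apply \cref{lem:alternating_boundedness} to obtain $D$ and a prefix $\pi_*$ ending at $r_*(\Vec z_*)$ that is $D$-alternating on $[3]\setminus\Supp(\Vec a_0)$.

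There are two easy exits. If $r_*(\Vec z_*)$ is a forward $C$-coverable entry in $U_*[\{\Vec a_0\}]$, we take $\tau$ to be $\theta_0$ prepended to $\pi_*$, or more simply $\pi_*$ itself. The set is $H:=\Vec z_*+\Vec a_0^*$, which is forward $C$-coverable by definition. Membership of $\Vec z_*+k\Vec a_0$ follows by pumping $\theta_0^k$ first and using monotonicity. If $\Supp(\Vec a_0)=[3]$, then $H=\Vec z_*+\Vec a_0^*$ is again coverable. If instead $\sigma$ would be empty, we take $r=t$. In every case we still need a base of polynomial norm, so we replace $\Vec z_*$ by a short representative. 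The prefix keeps the coordinates outside $\Supp(\Vec a_0)$ alternately bounded. If a single coordinate stays below $D_1$ throughout, which is the first alternative of \cref{lem:hill_cuts}, we encode it into states. This gives a $2$-VASS and hence a $3$-VASS of cycle-dimension at most $2$, to which we apply \cref{thm:hybrid_description} with source $\{\Vec x\}$. We choose a hybrid set $G\ni\Vec z_*$ and adjoin the free period $\Vec a_0$, which is legitimate because $\theta_0$ can be fired first. Short witnesses for members of $G$ then come from \cref{lem:hybrid_source} applied in the reversed system, with the singleton target, plus $k|\theta_0|$ steps for each use of the $\Vec a_0$ period.

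The main case is the second alternative of \cref{lem:hill_cuts}. After permuting coordinates so that $\Vec a_0=(0,0,a)$, we obtain the cuts $q_1(\Vec z_1)$ and $q_2(\Vec z_2)$ with cycles $\theta_1,\theta_2$. We build $U^{(q_1,q_2)}$ by \cref{lem:two_cycle_capture}. By the first property of that lemma, coordinate $2$ stays below $D_1$ on the lifted run, so we encode it into states. This yields a system of cycle-dimension at most $2$ in which $\Vec z_2$ is reachable at $f$. \Cref{thm:hybrid_description} then gives a hybrid set $G\ni\Vec z_2$ of polynomial norm. We set $r:=q_2$, $\Vec z:=\Vec z_2$, $H:=G+\Vec a_+^*$, and let $\tau$ be the prefix of $\pi$ ending at $q_2(\Vec z_2)$. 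The second property of \cref{lem:two_cycle_capture} gives $H\subseteq\Reach^{q_2}_U(p(\Vec x))$, and it gives witness lengths $|\gamma|+k\,\PolyF{L}$. Here $|\gamma|$ is bounded by \cref{lem:hybrid_source} in terms of $\norm{\Vec h}$, and $k\le\norm{\Vec h}$ because $\Vec a_+\ge\Vec1$. Since $\Vec a_+\ge\Vec1$ is a free period, its base pumps above $C\cdot\Vec1$ by a self-loop at $r$ itself, so $H$ is forward $C$-coverable.

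The step we expect to be the main obstacle is this witness-length bound. \Cref{lem:hybrid_source} is stated for a fixed target and a hybrid source, whereas here we need a short run to an arbitrary member $\Vec h\in H$. We plan to apply it in the reversed bounded-coordinate system, with the singleton $\{\Vec x\}$ as the hybrid source and $r(\Vec h-k\Vec a_+)$ as the target. We must check that the decomposition $\Vec h=\Vec g+k\Vec a_+$ can be chosen with $\Vec g\in G$ reachable and $\norm{\Vec g}\le\norm{\Vec h}$; this holds by nonnegativity. We must also check that the encoding sizes stay polynomial in $C+\Size(U,p(\Vec x))$, absorbing $D$, $D_1$, and $L$ along the way.
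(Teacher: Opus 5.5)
Your two-cycle branch (via \cref{lem:hill_cuts} and \cref{lem:two_cycle_capture}, with $H:=G+\Vec a_+^*$) matches the paper. The gap is in the other branch. There, some coordinate $j$ stays below $D_1$ along $\pi_*$, and $\pi_*$ stops at a component entry $r_*(\Vec z_*)$ that is forward $C$-coverable in $U_*[\{\Vec a_0\}]$, so $\sigma\ne\varepsilon$. You rightly reject $\Vec z_*+\Vec a_0^*$, because the two unbounded coordinates of $\Vec z_*$ can be arbitrarily large, and you replace it by $G+\Vec a_0^*$ with $G$ from \cref{thm:hybrid_description}. Your justification, that $\theta_0$ can be fired first, only gives membership and witnesses. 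It does not give the final disjunct of the lemma. By \cref{def:coverable_hybrid}, forward $C$-coverability of a hybrid set concerns its \emph{base}, in the system augmented only with its \emph{free} periods. The base of $G$ can lie far below $\Vec z_*$, and $\Vec z_*$ minus that base may consist mostly of restricted periods, which are not available as loops. So coverability of $r_*(\Vec z_*)$ does not transfer to $G+\Vec a_0^*$.

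The missing step is a base refinement, which is the paper's \cref{clm:pump_coverable_refinement}.
\begin{enumerate}
\item By \cref{lem:rackoff_cover}, take a covering run $\rho_{\mathrm{cov}}$ from $r_*(\Vec z_*)$ to some vector $\ge C\cdot\Vec1$ in $U_*[\{\Vec a_0\}]$, of length polynomial in $M$.
\item Put $\Vec v:=\max\{\Drop(\rho_{\mathrm{cov}}),C\cdot\Vec1-\Eff(\rho_{\mathrm{cov}}),\Vec0\}$. Then $\Vec v\le\Vec z_*$ and $\norm{\Vec v}$ is polynomial.
\item Fix a representation of $\Vec z_*$ in $G+\Vec a_0^*$. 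Greedily move unused period occurrences from it into the base, subtracting their costs from the budget, until the base dominates $\Vec v$. Each move lowers the total deficit, so at most $3\norm{\Vec v}$ moves are needed.
\end{enumerate}
The resulting hybrid set still contains $\Vec z_*$ and lies inside $G+\Vec a_0^*$, so your witness bounds still apply. Its representation norm is polynomial. It is forward $C$-coverable, because $\rho_{\mathrm{cov}}$ is enabled from any vector $\ge\Vec v$ once $\Vec a_0$ is available as a free-period loop.

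The step you flag as the main obstacle is not a real one. With a singleton source, \cref{lem:hybrid_source} is just a distance bound between two fixed configurations. Apply it directly in the forward encoded system: take the encoded $p(\Vec x)$ as source, and as target the encoded $r_*(\Vec h)$, or respectively $f(\Vec g)$ in the encoded $U^{(q_1,q_2)}$. Do not use $q_2$ in $U$, where coordinate $2$ is not known to stay bounded. Then convert via \cref{lem:two_cycle_capture}. No reversal is needed, and your reversed formulation, with $\{\Vec x\}$ as source and $r(\Vec h-k\Vec a_+)$ as target, runs in the wrong direction.
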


\subsubsection{Joining the Backward Coverable Target}
\label{subsubsec:joining_coverable_target}

Now we can show that the reachability distance from a pumpable source to a backward $C$-coverable target in $\enc V_B$ is polynomially bounded in $V$.

\begin{proof}[Proof of~\cref{lem:pumpable_coverable_target}]
	Let $M:=C+B+\Size(V,\Vec s,\Vec t)$.
	Since $\Size(U)\le(B+1)^2\Size(V)$, applying~\cref{lem:pump_extraction} to a witnessing run $\Vec p(\Vec x)\xrightarrow{\pi}\Vec q(\Vec y)$ gives a partition $\pi=\tau\sigma$ at $\Vec r(\Vec z)$ and a hybrid set $H$ containing $\Vec z$, with
	\begin{equation}\label{eq:pump_joining_prefix_data}
		\hnorm(H)\le M^{\bigO(1)},
		\qquad
		\MinLen_U\bigl(\Vec p(\Vec x),\Vec r(\Vec h)\bigr)
		\le(M+\norm{\Vec h})^{\bigO(1)}\quad\text{for all }\Vec h\in H.
	\end{equation}
	If $\sigma=\varepsilon$, take $\Vec h:=\Vec y\in H$ in~\eqref{eq:pump_joining_prefix_data} and lift the resulting run to $V$, obtaining the required bound.

	Otherwise, let $U'$ be the suffix of $U$ starting with the component containing $\Vec r$, including the original bridges.
	By~\cref{lem:pump_extraction}, $H$ is forward $C$-coverable in this component and hence in $U'$.
	The backward covering run at $\Vec q(\Vec y)$ stays in its component, so $\{\Vec y\}$ is backward $C$-coverable in $U'$.
	Since $\Vec z\in H$ and $\Vec r(\Vec z)\xrightarrow{\sigma}\Vec q(\Vec y)$ in $U'$, \cref{lem:joining_coverable_hybrids} applied to $H$ and $\{\Vec y\}$ gives $\Vec h\in H$ with
	\begin{equation*}
		\norm{\Vec h},\ \MinLen_V\bigl((\Vec r,\Vec h),\Vec t\bigr)
		\le M^{\bigO(1)}.
	\end{equation*}
	By~\eqref{eq:pump_joining_prefix_data}, we have
	\begin{equation*}
		\MinLen_V(\Vec s,\Vec t)\le \MinLen_U\bigl(\Vec p(\Vec x),\Vec r(\Vec h)\bigr)+\MinLen_V\bigl((\Vec r,\Vec h),\Vec t\bigr) \le M^{\bigO(1)},
	\end{equation*}
	which is the bound in~\eqref{eq:pumpable_coverable_distance}.
\end{proof}

\paragraph{Hybrid Sources with a Nonzero Free Period.}
\label{par:free_period_hybrid_sources}
The preceding result assumes a fixed pumpable source, whereas the $R$-large case may yield a hybrid set representation.
If this set has a nonzero free period, we can adjoin its periods as self-loops to make the base pumpable and then apply~\cref{lem:pumpable_coverable_target}.
By removing the added transitions and handling the restricted-period costs as in~\cref{lem:joining_coverable_hybrids}, we obtain the following result.
The proof can be found in~\cref{app:free_covered}.

\begin{restatable}{lemma}{freeCovered}
	\label{lem:free_covered}
	Let $U$ be a sequential sub-VASS of $\enc V_B$, let $C$ be a transformation-area threshold for $\enc V_B$, and let $H$ be a hybrid set at state $\Vec r$ with a nonzero free period.
	Suppose that some member of $H$ reaches a backward $C$-coverable target $\Vec q(\Vec y)$ in $U$.
	Then there exists $\Vec h\in H$ such that
	\begin{equation*}
		\norm{\Vec h},
		\MinLen_V\bigl((\Vec r,\Vec h),\Vec{t}\bigr)
		\le
		\bigl(
			B+C+\Size(V,\Vec t)+\hnorm(H)
		\bigr)^{\bigO(1)},
	\end{equation*}
	where $\Vec t=(\Vec q,\Vec y)$.
	The degree is independent of the number of components.
\end{restatable}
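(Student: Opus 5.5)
The plan is to reduce \cref{lem:free_covered} to \cref{lem:pumpable_coverable_target} by turning the hybrid source into a fixed pumpable configuration in an augmented system. Write $H=\Vec b+P^{\langle\Vec c,\cdot\rangle\le K}$, with $P=P_0\sqcup P_+$ split into free and restricted periods, and fix a nonzero free period $\Vec a\in P_0$. Fix also a member $\Vec z=\Vec b+\sum_{\Vec e\in P}\eta_{\Vec e}\cdot\Vec e\in H$ that reaches $\Vec q(\Vec y)$ in $U$. Consider $U[P]$, obtained by adjoining every period as a nonnegative self-loop at $\Vec r$. By \cref{cor:threshold_closure}, $C$ remains a transformation-area threshold for the encoding of the augmented VAS, in which each period is a transition in $\{(0,0)\}\times\NN^3$. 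In $U[P]$ the configuration $\Vec r(\Vec b)$ reaches $\Vec q(\Vec y)$: first fire the loops $\eta$ times to reach $\Vec z$, then follow the original run. Moreover $\Vec r(\Vec b)$ is pumpable via the loop $\Vec a$. The target is still backward $C$-coverable, since its covering run lives in the last component and is untouched. The augmentation does not change the sequential structure, because the loops sit inside the component of $\Vec r$, so $U[P]$ is a sequential sub-VASS of the encoding of the augmented VAS $V'$.

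Next apply \cref{lem:pumpable_coverable_target} to $V'$ with source $(\Vec r,\Vec b)$. Its size satisfies $\Size(V')\le\Size(V)+|P|\cdot\PolyF{\hnorm(H)}$, and $|P|$ is polynomial in the representation norm once we remove duplicate periods. This yields a run $\rho'$ in $U[P]$ from $\Vec r(\Vec b)$ to $\Vec q(\Vec y)$ of length $L\le(B+C+\Size(V,\Vec t)+\hnorm(H))^{\bigO(1)}$, with degree independent of $k$ by that lemma. All loop transitions occur at state $\Vec r$ and have nonnegative effect. Hence we can commute every occurrence to the very start: moving a nonnegative self-loop earlier keeps all later configurations at least as large, and the loops fire at the same encoded state $\Vec r$. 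Let $\mu_{\Vec e}$ be the multiplicity of period $\Vec e$ in $\rho'$. Then $\Vec h:=\Vec b+\sum\mu_{\Vec e}\Vec e$ reaches $\Vec t$ in $U$ by the remaining path, of length at most $L$, and $\norm{\Vec h}\le\norm{\Vec b}+L\norm P$.

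The main obstacle is the cost constraint: $\Vec h\in H$ requires $\sum\Vec c(\Vec e)\mu_{\Vec e}\le K$. If $K\ge\norm{\Vec c}L$ this holds automatically. Otherwise $K<\norm{\Vec c}L$ is polynomially bounded. In that case apply \cref{lem:restricted_period_absorption} to the representation of $\Vec z$ to get $\widehat H=\widehat{\Vec b}+P_0^*\subseteq H$ with $\Vec z\in\widehat H$ and $\norm{\widehat{\Vec b}}\le\norm{\Vec b}+K\norm P$, which is polynomial. Since $\widehat H$ has no cost constraint and still contains the free period $\Vec a$, we rerun the argument above with $U[P_0]$ and source $\Vec r(\widehat{\Vec b})$. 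This gives $\Vec h\in\widehat H\subseteq H$ with the required polynomial bounds. Because $L$ is polynomial in quantities fixed before the case split, the second round only composes polynomials once, so the degree stays bounded. Finally, lift the run from $U$ to $V$ without changing its length.
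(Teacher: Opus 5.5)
Your proposal is correct and follows essentially the same route as the paper: you adjoin the periods as self-loops at $\Vec r$, keep $C$ valid via \cref{cor:threshold_closure}, apply \cref{lem:pumpable_coverable_target} to the pumpable base, strip the period transitions to obtain $\Vec h$, and, when $K<\norm{\Vec c}L$, use restricted-period absorption and rerun once with $U[P_0]$. One small correction: \cref{lem:pumpable_coverable_target} bounds the distance in the augmented VAS, so the run it provides need not lie in $U[P]$ (the transformation-area step may leave $[0,B]^2$); your loop-commuting argument still works verbatim there, because the adjoined transitions are nonnegative vectors of the VAS, and it yields the required run in $V$ directly.
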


\subsection{Translated Families}
\label{subsec:joining_translated}

The preceding results handle hybrid sources with a nonzero free period.
We now consider another terminal case, based on hybrid sets whose bases vary along a common direction while their periods and cost constraints remain unchanged.

\begin{definition}[Translated Family]
	\label{def:translated_family}
	For a permutation $\sigma$ of $[3]$, a \emph{translated family} $\mathcal H=\{H_r\}_{r=0}^\kappa$ consists of hybrid sets
	$H_r=\Vec b+r\cdot\Vec\delta+P^{\langle\Vec c,\cdot\rangle\le K}$, where $\Vec\delta\in\ZZ^3$, $\Vec\delta(\sigma(3))=0$, $\Vec p(\sigma(3))=0$ for every $\Vec p\in P$, and $\Vec b+r\cdot\Vec\delta\ge\Vec0$ for each integer $r\in[0,\kappa]$.
\end{definition}

We next consider reachability between two translated families.
The key is to combine a known run between their unshifted members with $C$-coverability at suitable translation indices.
Recall that $C$ is the fixed transformation-area threshold for $\enc{V}_B$, and $D$ is the polynomial threshold fixed in~\eqref{eq:uniform_structural_threshold}.
Let $\mathcal H=\{H_r\}_{r = 0}^\kappa$ and $\mathcal J=\{J_r\}_{r = 0}^\kappa$ be two translated families, where
\begin{equation*}
	H_r=\Vec b+r\cdot\Vec\delta+P^{\langle\Vec c,\cdot\rangle\le K},\qquad
	J_r=\Vec b'+r\cdot\Vec\delta'+P'^{\langle\Vec c',\cdot\rangle\le K'}.
\end{equation*}
Let $\Vec p$ and $\Vec q$ be two states of a sequential $3$-VASS $U$.
We call $(U,\Vec p(\mathcal H),\Vec q(\mathcal J))$ \emph{$C$-joinable} if the following conditions hold:
\begin{enumerate}
	\item There exist $\Vec x\in H_0$ and $\Vec x'\in J_0$ such that $\Vec p(\Vec x)\xrightarrow{*}\Vec q(\Vec x')$ in $U$.
	\item There exist integers $0\le r,r'\le\min\{D,\kappa\}$ such that $H_r$ is forward $C$-coverable and $J_{r'}$ is backward $C$-coverable in $U$.
\end{enumerate}

The following lemma shows that $C$-joinability allows us to select members of the two translated families connected by a short run in the original VAS.

\begin{lemma}
	\label{lem:translated_join}
	Let $\mathcal H=\{H_r\}_{r = 0}^\kappa$ and $\mathcal J=\{J_r\}_{r = 0}^\kappa$ be two translated families at states $\Vec p$ and $\Vec q$ of $U$, where
	\begin{equation*}
		H_r=\Vec b+r\cdot\Vec\delta+P^{\langle\Vec c,\cdot\rangle\le K},
		\qquad
		J_\ell=\Vec b'+\ell\cdot\Vec\delta'+P'^{\langle\Vec c',\cdot\rangle\le K'}.
	\end{equation*}
	Suppose that $(U,\Vec p(\mathcal H),\Vec q(\mathcal J))$ is $C$-joinable.
	There exists $\kappa_0\le\bigl(D+\norm P+\norm{P'}\bigr)^{\bigO(1)}$ such that, if $\kappa\ge\kappa_0$, some $0\le r,\ell\le\kappa_0$ and $\Vec h\in H_r$, $\Vec h'\in J_\ell$ satisfy
	\begin{equation*}
		\norm{\Vec h},\quad
		\norm{\Vec h'},\quad
		\MinLen_V\bigl((\Vec p,\Vec h),(\Vec q,\Vec h')\bigr)
		\le
		\bigl(
			D+\hnorm(H_0)+\hnorm(J_0)
			+\norm{\Vec\delta}+\norm{\Vec\delta'}
		\bigr)^{\bigO(1)}.
	\end{equation*}
\end{lemma}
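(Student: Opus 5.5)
The plan is to reduce to \cref{lem:joining_coverable_hybrids}, applied to two suitably translated members $H_r$ and $J_\ell$. To use it we need three things: $H_r$ forward $C$-coverable, $J_\ell$ backward $C$-coverable, and some member of $H_r$ reaching some member of $J_\ell$ in $U$. Joinability provides coverability only at the indices $r_0,r_0'\le D$, and reachability only between the unshifted members $H_0$ and $J_0$.

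\emph{Step 1: propagating coverability upward.} We first show that coverability survives shifting the index by a suitable amount. Consider $H_{r_0+m}$. Its base is $\Vec b+r_0\Vec\delta+m\Vec\delta$, and $m\Vec\delta$ has a negative part. The strategy follows the overview: for $\kappa$ large, the covering run from $H_{r_0}$ (obtained from \cref{lem:rackoff_cover}, of length polynomial in $C+\Size(U)+\norm P$) still works from the shifted base. This requires the shifted base to be at least the original base minus slack. The coordinates where $\Vec\delta$ is negative are handled by taking the index large: $\Vec b+\kappa\Vec\delta\ge\Vec0$ shows those coordinates of $\Vec b+r\Vec\delta$ stay large, of size at least $(\kappa-r)$, for $r$ well below $\kappa$. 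The coordinates where $\Vec\delta$ is positive only grow, which is harmless by monotonicity. I would therefore restrict $r$ to a window $[r_0,\kappa_0/2]$ and verify coverability on it. The same is done for $J$.

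\emph{Step 2: integer connectivity after translation.} This is the main obstacle. Let $U'=U[P,P']$ restricted to the components between $\Vec p$ and $\Vec q$. If $U'$ is non-wide, we do not need integer reachability: we use the original run together with the shifted members, via \cref{lem:proper_cone_trim}, or alternatively by observing that a bounded inner product forces $\Vec\delta$ and $\Vec\delta'$ into $\Vec h^\perp$ and repeating the argument inside the trimmed system. This sub-case is where I am least sure.

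If $U'$ is wide, the cycle effects $E(U')$ generate a lattice $\Lambda$ of full rank containing $\nu\ZZ^3$, where $\nu\le(\Size(U)+\norm P+\norm{P'})^{\bigO(1)}$ is the determinant of three independent simple-cycle effects. Since $E^*$ is closed under negation, as in \cref{lem:wide_compatible_covers}, inserting cycles shows that if $\Vec p(\Vec x)\to_\ZZ\Vec q(\Vec x')$, then $\Vec p(\Vec x+r\Vec\delta)\to_\ZZ\Vec q(\Vec x'+\ell\Vec\delta')$ in $U'$ whenever $r\Vec\delta-\ell\Vec\delta'\in\Lambda$. We take $r,\ell$ to be multiples of $\nu$ lying in the coverability windows. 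Such multiples exist once $\kappa_0\ge D+2\nu$, which is polynomial in $D+\norm P+\norm{P'}$, since $\Size(U)\le D$ by \eqref{eq:uniform_structural_threshold}.

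\emph{Step 3: finishing.} Combining the coverable sources and targets with this $\ZZ$-run, we follow the proof of \cref{lem:joining_coverable_hybrids}. That proof uses only $\ZZ$-reachability plus coverability, through \cref{thm:wide_coverable} for the augmented system $U'$; this is valid for the threshold by \cref{cor:threshold_closure}. If the short augmented run exceeds the cost bounds $K$ or $K'$, we absorb the restricted periods using \cref{lem:restricted_period_absorption} and repeat. This yields $\Vec h\in H_r$ and $\Vec h'\in J_\ell$ with polynomially bounded norms and distance. The bounds depend on $\hnorm(H_r)\le\hnorm(H_0)+\kappa_0\norm{\Vec\delta}$, which gives the stated polynomial bound.
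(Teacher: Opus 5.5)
Your wide-case argument is essentially the paper's. The reserve $(\kappa-r)\lvert\Vec\delta(h)\rvert$ on negatively translated coordinates lets a Rackoff covering run from $H_{r_0}$ be replayed from $H_r$. The coverable indices are rounded up to multiples of $\nu=\lvert\det(\Vec d_1,\Vec d_2,\Vec d_3)\rvert$. Cycle insertion as in \cref{lem:wide_compatible_covers} then gives a $\ZZ$-run between the translated bases in $U[P,P']$, and \cref{thm:wide_coverable} plus period removal finishes.

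The genuine gap is the non-wide case, which you leave open, and neither of your suggestions closes it.
\begin{itemize}
\item Using ``the original run together with the shifted members'' fails. \cref{lem:proper_cone_trim}, and hence \cref{cor:proper_cone_len}, only shortens an \emph{existing} $\NN$-run. Without wideness, nothing connects $H_r$ to $J_\ell$ for $r,\ell>0$, because the lattice correction that transported reachability along $\Vec\delta,\Vec\delta'$ needs $E^*$ closed under negation.
\item For the same reason, your remark that the proof of \cref{lem:joining_coverable_hybrids} ``uses only $\ZZ$-reachability'' is inaccurate. It passes through \cref{cor:coverable_case}, whose non-wide branch needs an actual run.
\item Your alternative, that a bounded inner product forces $\Vec\delta,\Vec\delta'\in\Vec h^\perp$, has no basis. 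The translation directions are arbitrary data of the families, unrelated to the cycle cone of this suffix.
\end{itemize}

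The missing observation is that no translation is needed in the non-wide case. The conclusion allows $r=\ell=0$, and $U[P,P']$ already contains an actual run
\[
\Vec p(\Vec b)\to\Vec p(\Vec z)\to\Vec q(\Vec z')\to\Vec q(\Vec b').
\]
To get it, pump the source periods to build the connected member $\Vec z\in H_0$, follow the given run, and strip the target periods down to $\Vec b'$. Since $U[P,P']$ is non-wide, \cref{cor:proper_cone_len} shortens this run to length polynomial in $\Size(U[P,P'])+\norm{\Vec b}+\norm{\Vec b'}$, without using coverability at all. Removing the period loops, with absorption if a cost bound fails, then yields $\Vec h\in H_0$ and $\Vec h'\in J_0$.

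Relatedly, Step 3 needs to say which member the absorption is done around.
\begin{itemize}
\item It must be done around the connected unshifted members $\Vec z\in H_0$ and $\Vec z'\in J_0$. This produces a new family $\widehat{\Vec b}+s\cdot\Vec\delta+P_0^*$ with the same direction. Absorbing inside $H_r$ would discard the only known connection.
\item After absorption the augmented system changes, so wideness and $\nu$ must be recomputed. This can land you in the non-wide case, which is a further reason that case must be settled.
\item The same $\kappa_0$ still works, because $\norm{P_0}\le\norm P$.
\end{itemize}
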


\begin{proof}
	Restrict $U$ to the components from $\Vec p$ to $\Vec q$, including their full transitions and intervening bridges.
	Let $\Vec z\in H_0$, $\Vec z'\in J_0$ and $a,a'\le\min\{D,\kappa\}$ witness $C$-joinability.
	Write $\Vec b_s:=\Vec b+s\cdot\Vec\delta$ and $\Vec b'_s:=\Vec b'+s\cdot\Vec\delta'$ for $0\le s\le\kappa$, and let $P_0,P'_0$ be the free periods.
	The covering runs stay in the components containing their respective states, so they are available in this restriction.
	Let $M:=D+\hnorm(H_0)+\hnorm(J_0)+\norm{\Vec\delta}+\norm{\Vec\delta'}$ and $n:=|Q|$.
	Set $\Delta_*:=\max\{1,\norm T,\norm P,\norm{P'}\}$ and $L_{\mathrm{cov}}:=\max\{2,n,\Delta_*,C\}^{c_{\mathrm{cov}}}$, where $c_{\mathrm{cov}}$ is the exponent for $3$-VASS from~\cref{lem:rackoff_cover}.
	Choose
	\begin{equation}\label{eq:bilateral_joining_range}
		\kappa_0:=D+6(n\Delta_*)^3+C+L_{\mathrm{cov}}\Delta_*+1
		\le\bigl(D+\norm P+\norm{P'}\bigr)^{\bigO(1)}.
	\end{equation}

	Fix representations of $\Vec z\in H_0$ and $\Vec z'\in J_0$.
	In the augmented VASS $W:=U[P,P']$, we can first execute the added positive loops at $\Vec p$ to reach $\Vec z$, follow the original run $\rho_0$, and then execute the negative loops at $\Vec q$ to reach $\Vec b'$.
	This gives
	\begin{equation}\label{eq:bilateral_augmented_connection}
		\Vec p(\Vec b)\xrightarrow{\rho_P}\Vec p(\Vec z)
		\xrightarrow{\rho_0}\Vec q(\Vec z')
		\xrightarrow{\rho_{P'}}\Vec q(\Vec b').
	\end{equation}
	The last segment is enabled because each intermediate vector equals $\Vec b'$ plus the periods not yet removed.
	Let $\widetilde V$ be obtained from $V$ by adjoining the corresponding transitions $(0,0,\Vec a)$ for $\Vec a\in P$ and $(0,0,-\Vec a')$ for $\Vec a'\in P'$.
	Both $W$ and $\widetilde V$ have size polynomial in $M$.

	If $W$ is non-wide, choose $r=\ell=0$ and apply~\cref{cor:proper_cone_len} to obtain a run between the unshifted bases of length $M^{\bigO(1)}$.
	Suppose now that $W$ is wide.
	Choose three linearly independent simple-cycle effects $\Vec d_1,\Vec d_2,\Vec d_3$ in $W$, and let
	\begin{equation*}
		\nu:=\bigl|\det(\Vec d_1,\Vec d_2,\Vec d_3)\bigr|,
		\qquad 1\le\nu\le6(n\Delta_*)^3.
	\end{equation*}
	The indices $a,a'$ witness forward and backward $C$-coverability, respectively, but translating the bases by arbitrary indices may destroy their connection.
	We therefore round them up to multiples of $\nu$, ensuring that the displacement between the new bases is an integer combination of cycle effects.
	Let
	\begin{equation*}
		r:=\nu\left\lceil\frac a\nu\right\rceil,
		\qquad \ell:=\nu\left\lceil\frac{a'}\nu\right\rceil,
		\qquad 0\le r,\ell\le D+\nu-1<\kappa_0\le\kappa.
	\end{equation*}
	For every $h$ with $\Vec\delta(h)\ge0$, we have $\Vec b_r(h)\ge\Vec b_a(h)$.
	If $\Vec\delta(h)<0$, the nonnegativity of $\Vec b_\kappa$ gives
	\begin{equation}\label{eq:translated_integer_reserve}
		\Vec b_r(h)\ge(\kappa-r)\lvert\Vec\delta(h)\rvert
		\ge C+L_{\mathrm{cov}}\Delta_*+1.
	\end{equation}
	By~\cref{lem:rackoff_cover}, choose a covering run of length less than $L_{\mathrm{cov}}$ from $\Vec p(\Vec b_a)$ in $U[P_0]$, and replay it from $\Vec p(\Vec b_r)$.
	Coordinates with nonnegative translation follow values no smaller than before, and every other coordinate stays above $C$ by~\eqref{eq:translated_integer_reserve}.
	Thus $\Vec p(\Vec b_r)$ is forward $C$-coverable in $W$.
	The same argument for $\Vec\delta'$ in the reversed system makes $\Vec q(\Vec b'_\ell)$ backward $C$-coverable in $W$.

	Cramer's rule and the divisibility of $r,\ell$ by $\nu$ give integers $t_1,t_2,t_3$ such that
	\begin{equation}\label{eq:bilateral_lattice_correction}
		\ell\cdot\Vec\delta'-r\cdot\Vec\delta
		=\sum_{m=1}^3t_m\cdot\Vec d_m.
	\end{equation}
	Apply the cycle-insertion argument from~\cref{lem:wide_compatible_covers} to the path in~\eqref{eq:bilateral_augmented_connection}, using~\eqref{eq:bilateral_lattice_correction} to obtain a $\ZZ$-run from $\Vec p(\Vec b_r)$ to $\Vec q(\Vec b'_\ell)$ in $W$.
	By~\cref{cor:threshold_closure}, $C$ remains a transformation-area threshold for $\enc{\widetilde V}_B$.
	Hence~\cref{thm:wide_coverable} supplies a run between these bases in $\widetilde V$.

	In both cases, write $\rho$ for the resulting run and $|\rho|\le L\le M^{\bigO(1)}$ for its length bound.
	Removing the added period transitions and applying restricted-period absorption yields the following claim, whose proof is in~\cref{app:translated_period_removal}.
	We are done by this claim.
	\begin{restatable}{claim}{translatedPeriodRemoval}\label{clm:translated_period_removal}
		There exist indices $0\le r,\ell\le\kappa_0$ and members $\Vec h\in H_r$, $\Vec h'\in J_\ell$ whose norms and reachability distance $\MinLen_V((\Vec p,\Vec h),(\Vec q,\Vec h'))$ are at most $M^{\bigO(1)}$.
	\end{restatable}
\end{proof}

\subsection{Completing the Induction on the Number of SCCs}
\label{subsec:one_sided_completion}

We have established reachability-distance bounds for two terminal cases: hybrid sources with a nonzero free period and translated families satisfying the joining conditions.
We now show that every $R$-large bounded-coordinate prefix yields one of these structures.

\paragraph{The \texorpdfstring{$R$}{R}-Large Dichotomy.}
\label{par:large_prefix_dichotomy}

Let $\Vec p(\Vec x)\xrightarrow{\pi_0}\Vec r(\Vec z)$ be a prefix with $i\in\BDCounters(\pi_0,D)$.
Recording coordinate $i$ in states gives a $2$-VASS of size at most $D\Size(U)$.
The following lemma extracts either a hybrid set with a nonzero free period or a translated family with $\kappa+1$ members.
Crucially, the period norms, costs, and translation direction are bounded polynomially in $D$, independently of $\kappa$.
This allows us to choose $\kappa$ before the growth threshold $R$.

\begin{restatable}{lemma}{mixeddichotomylemma}
	\label{lem:large_prefix_dichotomy}
	For every integer $\kappa\ge1$, there exists $D\le R\le(D+\kappa)^{\bigO(1)}$ such that the following holds.
	Let $\Vec p(\Vec x)\xrightarrow{\pi_0}\Vec r(\Vec z)$ be an $R$-large prefix in $U$ whose coordinate $i$ stays below $D$, and let $\Vec s=(\Vec p,\Vec x)$.
	Choose $k\ne i$ with $\Vec z(k)>\norm{\Vec x}+R$, and let $j$ be the remaining coordinate.
	Then one of the following holds:
	\begin{enumerate}
		\item $\Vec z\in H$ for a hybrid set $H$ with a nonzero free period and $\hnorm(H)\le(D+\norm{\Vec x})^{\bigO(1)}$.
		\item There is a translated family $\{H_r\}_{r=0}^{\kappa}$ with
		\begin{equation*}
			H_r=\Vec b+r\cdot\Vec\delta+P^{\langle\Vec c,\cdot\rangle\le K},
			\qquad \Vec\delta=u\cdot\Vec e_j-v\cdot\Vec e_k,
		\end{equation*}
		such that $\Vec z\in H_0$, $u, v \ge 1$, and
		\begin{equation*}
			\norm{\Vec{c}}, \norm{P}, \norm{\Vec{\delta}} \le D^{\bigO(1)}, \quad \norm{\Vec{b}} \le (D + \norm{\Vec{x}} + \kappa)^{\bigO(1)}.
		\end{equation*}
	\end{enumerate}
	Moreover, every member $\Vec h$ of the resulting hybrid set or family satisfies
	\begin{equation}\label{eq:large_prefix_distance}
		\MinLen_V\bigl(\Vec s,(\Vec r,\Vec h)\bigr)
		\le\bigl(D + \norm{\Vec{x}}+\norm{\Vec h}\bigr)^{\bigO(1)}.
	\end{equation}
\end{restatable}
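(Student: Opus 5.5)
We plan to work entirely inside the bounded-coordinate $2$-VASS $U_i$ obtained by recording coordinate $i$ (which stays below $D$ along $\pi_0$) into states; by \cref{lem:bounded_prefix} it has size at most $D\Size(U)$. By the LPS representation of $2$-VASS reachability \cite[Theorem~3.1]{DBLP:journals/jacm/BlondinEFGHLMT21}, the run $\pi_0$ can be assumed to be captured by an LPS $\Lambda=\alpha_0\beta_1^{n_1}\alpha_1\cdots\beta_m^{n_m}\alpha_m$ with $m,|\Lambda|,\norm\Lambda\le D^{\bigO(1)}$, ending at $\Vec r(\Vec z)$. Since $\Vec z(k)>\norm{\Vec x}+R$ and the fixed paths contribute only $D^{\bigO(1)}$, some cycle with positive $k$-effect must be repeated many times once $R\ge D^{c}\kappa$ for a suitable constant $c$. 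We split on the sign pattern of the cycles $\beta_\ell$ with $\Eff(\beta_\ell)(k)>0$. If some such cycle has effect $\ge\Vec0$ on $\{j,k\}$ (and zero on $i$, automatically since it is a cycle in $U_i$), it is a nonzero nonnegative vector of norm $D^{\bigO(1)}$. We then take $H$ from \cref{thm:hybrid_description}, applied to the source $\{\Vec x\}$ in $U_i$ (viewed as a $3$-VASS with $\dimcyc\le2$ after adding a dummy zero coordinate, or directly for $2$-VASS), choose a member containing $\Vec z$, and adjoin $\Eff(\beta_\ell)$ as a free period. This is justified by monotonicity: the cycle is enabled at its position in the run, and extra copies only add a nonnegative vector. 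Then $\hnorm(H)\le(D+\norm{\Vec x})^{\bigO(1)}$, and \eqref{eq:large_prefix_distance} follows from \cref{lem:hybrid_source} together with the extra cycle repetitions.

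In the remaining case, every $k$-increasing cycle has effect $(-u,v)$ on $(j,k)$ with $u,v\ge1$. Among the cycles repeated at least $\kappa$ times, we select the \emph{last} $k$-increasing one, say $\beta_\ell$, and set $\Vec\delta:=u\Vec e_j-v\Vec e_k=-\Eff(\beta_\ell)$. All later $k$-increasing cycles occur fewer than $\kappa$ times, so the total positive $k$-contribution after $\beta_\ell$ is at most $\kappa D^{\bigO(1)}$. Choosing $R\ge(D+\kappa)^{c}$ therefore ensures, by a counting argument on $\Vec z(k)$, that $n_\ell\ge\kappa$ and that the $k$-value just after the $\beta_\ell$-block exceeds $\kappa v+{}$(maximal later $k$-drop). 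Deleting $r\le\kappa$ copies of $\beta_\ell$ raises coordinate $j$ (harmless) and lowers $k$ by $rv$ along the suffix. This remains valid thanks to the reserve, while the prefix up to the block is unchanged. This yields the shifts $\Vec z+r\Vec\delta$.

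To turn this into a single translated family with common $P,\Vec c,K$, we will describe the configurations in two stages. First, the configuration $\Vec w$ right after the $\beta_\ell$-block lies in a hybrid set $G$ obtained from \cref{thm:hybrid_description}, with $\beta_\ell$ itself kept as a period. Second, applying \cref{thm:hybrid_description} again from source $G-\kappa\cdot\Eff(\beta_\ell)$, restricted to the suffix LPS with its reserve guaranteed, gives a hybrid set $\Vec b+P^{\langle\Vec c,\cdot\rangle\le K}$ whose members are all reachable. We place the base at the $\kappa$-deleted point and reinterpret the index as $r\mapsto\kappa-r$, so that $H_0\ni\Vec z$ and $\Vec b+r\Vec\delta\ge\Vec0$ for all $r\in[0,\kappa]$. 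The periods lie in $\{j,k\}$, as required by \cref{def:translated_family}, since cycles of $U_i$ are zero on $i$. We expect this step to be the main obstacle: the hybrid-set representation may not commute with the shift, so the two descriptions must be glued consistently. We would first try absorbing the bounded number of period occurrences that interact with the reserve into $\Vec b$, as in \cref{lem:restricted_period_absorption}. This keeps $\norm P,\norm{\Vec c}\le D^{\bigO(1)}$ independent of $\kappa$ and lets only $\norm{\Vec b}$ grow to $(D+\norm{\Vec x}+\kappa)^{\bigO(1)}$. Finally, \eqref{eq:large_prefix_distance} for every member follows from \cref{lem:hybrid_source} applied to the corresponding $2$-VASS, lifted to $V$.
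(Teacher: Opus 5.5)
Your case split, the choice of the last block with at least $\kappa$ copies of a $k$-increasing cycle, and the suffix reserve all match the paper. The construction of the translated family, however, has a genuine gap. You use an extreme index as the common source; your $G-\kappa\cdot\Eff(\beta_\ell)$ is the all-deleted one. Reaching any other index then changes the start of the suffix by a vector with a negative entry in coordinate $j$ (or in $k$, from the all-kept end). Monotonicity therefore does not let you replay the suffix run of an arbitrary member, and the reserve protects only the one actual run. Moreover, members of $G-\kappa\cdot\Eff(\beta_\ell)$ need not be reachable or even nonnegative, and $\Eff(\beta_\ell)$ cannot be kept as a period because it has a negative entry.

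The paper's fix has three steps:
\begin{itemize}
\item Take a hybrid set $A$ containing the vector $\Vec a$ just before the last $\kappa$ copies, via the prefix scheme ending with the selected starred block.
\item Freeze boundedly many period occurrences of $\Vec a$'s representation into the base. Then every member $\Vec w$ satisfies $\Vec w(j)\ge\max\{\kappa u,\Drop(\beta_\ell)(j)+(\kappa-1)u\}$ and $\Vec w(k)\ge\Drop(\beta_\ell)(k)$, so it enables $\beta_\ell^q$ for all $q\le\kappa$.
\item Describe the suffix from the componentwise minimum of all possible suffix starts, $A^-:=\overline A-\kappa u\cdot\Vec e_j$.
\end{itemize}
With $H_s:=G+su\cdot\Vec e_j+(\kappa-s)v\cdot\Vec e_k$, every member arises by running $\beta_\ell^{\kappa-s}$ from $\Vec w^-+\kappa u\cdot\Vec e_j\in\overline A$. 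One then replays a suffix run from $\Vec w^-$, translated by a \emph{nonnegative} vector. The reserve is needed only once, to place $\Vec z-\kappa v\cdot\Vec e_k$ in $G$.

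There are two further gaps.

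First, \cref{thm:hybrid_description} bounds the output representation norm, periods included, by $\hnorm$ of the source, and that includes the source base. Your sources have base norm depending on $\norm{\Vec x}$ and $\kappa$. So you only get $\norm P,\norm{\Vec c}\le(D+\norm{\Vec x}+\kappa)^{\bigO(1)}$, and absorbing occurrences into the base cannot improve this. The required $D^{\bigO(1)}$ bound comes from the refined second part of \cref{lem:two_counter_tools}, where output periods and costs depend only on $S+\norm{P_A}+\norm{\Vec c_A}$, not on $\Vec b_A$ or $K_A$. This independence is what later lets $\kappa$, and then $R$, be fixed before the induction.

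Second, in the free-period outcome, adjoining a nonnegative cycle effect to an arbitrary hybrid set containing $\Vec z$ is justified only for $\Vec z$ itself. Runs reaching other members need not pass through an enabled copy of that cycle, which may have positive drop. So the distance bound~\eqref{eq:large_prefix_distance} could fail for those members. The paper first replaces the starred cycle $\beta^*$ by $\beta\beta^*$ in the chosen scheme. Then every represented run executes an enabled copy, and extra copies stay inside the same starred block.
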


We will first choose an LPS $\Lambda$ with polynomial parameters for the encoding induced by $\pi_0$.
We recall the relevant $2$-VASS results below.
The proof is in~\cref{app:two_counter_tools}.

\begin{restatable}{lemma}{twoCounterTools}
	\label{lem:two_counter_tools}
	Let $U$ be a $2$-VASS with size at most $S$, and let $p,q$ be states of $U$.
	\begin{itemize}
		\item There exists a finite family $\mathcal L$ of LPSs $\Lambda$ satisfying $|\Lambda|,\ \norm{\Lambda}\le S^{\bigO(1)}$ such that
		\[
			\Reach_U^q(p(\Vec x))
			=
			\bigcup_{\Lambda\in\mathcal L}
			\{\Vec y:(\Vec x,\Vec y)\in\Reach_\Lambda\}
			\qquad\text{for every }\Vec x\in\NN^2.
    	\]
		Moreover, whenever $p(\Vec x)\xrightarrow{*}q(\Vec y)$,
		$
			\MinLen_U\bigl(p(\Vec x),q(\Vec y)\bigr)
			\le
			\bigl(S+\norm{\Vec x}+\norm{\Vec y}\bigr)^{\bigO(1)}.
		$
		\item Let
		$
			A=\Vec b_A+P_A^{\langle\Vec c_A,\cdot\rangle\le K_A}
		$
		be a hybrid set at $p$.
		There is a finite family $\mathcal G$ of hybrid sets at $q$ such that
		$
			\Reach_U^q(p(A))
			=
			\bigcup_{G\in\mathcal G}G,
		$
		and every $G = \Vec{b}_G + P_G^{\langle\Vec c_G,\cdot\rangle\le K_G}\in\mathcal G$ satisfies 
		\[
			\hnorm(G)\le\bigl(S+\hnorm(A)\bigr)^{\bigO(1)},\qquad
			\norm{P_G},\norm{\Vec c_G},|P_G|
			\le\bigl(S+\norm{P_A}+\norm{\Vec c_A}\bigr)^{\bigO(1)}.
		\]
	\end{itemize}
\end{restatable}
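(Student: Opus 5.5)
The lemma has two parts, and I would derive both from the 2-VASS results of Blondin et al.\ together with the hybrid-set machinery of~\cite{DBLP:journals/corr/abs-2607-14983}. A 2-VASS is a 3-VASS with a dummy coordinate whose transitions all have zero effect there, so its cycle-dimension is at most two. The hybrid-set and distance tools for cycle-dimension-two 3-VASS therefore apply after padding.

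\textbf{First part.} I would cite~\cite[Theorem~3.1]{DBLP:journals/jacm/BlondinEFGHLMT21}. It says that the reachability relation of a 2-VASS between states $p$ and $q$ is covered by finitely many LPS, each of length and norm polynomial in $\Size(U)\le S$, and each contained in the reachability relation. Taking $\mathcal L$ to be this family gives the equality for every source $\Vec x$: the inclusion $\supseteq$ holds because each LPS is sound, and $\subseteq$ holds by coverage. For the distance bound, I would fix a witnessing LPS and write down the linear system in the repetition counts $n_1,\ldots,n_k$. This system asserts that the final vector is $\Vec y-\Vec x$, and that the counters are nonnegative at the entry and exit of every cycle block. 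Since effects are linear in the iteration count, nonnegativity at block boundaries plus a per-block drop check gives nonnegativity throughout. A minimal solution given by \cref{cor:pottier} has norm $(S+\norm{\Vec x}+\norm{\Vec y})^{\bigO(1)}$, because the number of constraints is polynomial but the dimension of the counter space is fixed, so the relevant rank is bounded. Alternatively, I would simply invoke \cref{lem:hybrid_source} with a singleton source on the padded 3-VASS.

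\textbf{Second part.} I would apply \cref{thm:hybrid_description} to the padded 3-VASS with source hybrid set $A$. The periods of $A$ must lie in the cycle space, which here can be arranged by extending the cycle space with the periods as self-loops, say via the augmented system $U[P_A]$. This yields a finite union of hybrid sets of norm polynomial in $S+\hnorm(A)$. Projecting away the dummy coordinate does not increase norms.

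\textbf{Main obstacle.} The finer claim that $\norm{P_G}$, $\norm{\Vec c_G}$ and $|P_G|$ depend only on $S+\norm{P_A}+\norm{\Vec c_A}$, and not on $\norm{\Vec b_A}$, is what I expect to be hardest. To get it, I would inspect the construction of~\cite[Theorem~4.2]{DBLP:journals/corr/abs-2607-14983}. The base enters only through translation: periods arise from cycle effects, LPS cycles and periods of $A$, while the base of $G$ absorbs $\Vec b_A$ and the bounded parts. I would first try to argue via translation-invariance. Monotonicity shows that large bases behave uniformly, so I would decompose the reachable set from $\Vec b_A$ into a bounded-region part and a part where $\Vec b_A$ merely shifts a base-independent family of periods. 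If the construction cannot be reused as a black box, I would reprove the statement from the LPS family of the first part, composed with $A$ by Pottier-based linear-set descriptions. There the periods come from minimal solutions of homogeneous systems, whose norms are independent of $\Vec b_A$ by \cref{thm:pottier}, while the costs are linear combinations of $\Vec c_A$.
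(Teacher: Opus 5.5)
Your proposal has a genuine gap exactly where you expect it, and neither fallback closes it. Apart from citations, the only content of the lemma is that $\norm{P_G}$, $\norm{\Vec c_G}$ and $|P_G|$ are bounded independently of $\Vec b_A$ and $K_A$.

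\textbf{The base-independent period bound.} Your translation/monotonicity heuristic is not an argument for this. Re-deriving the description from the LPS family by Pottier fails quantitatively. The LPS of \cite[Theorem~3.1]{DBLP:journals/jacm/BlondinEFGHLMT21} may have polynomially many cycle blocks, each imposing nonnegativity constraints. Once these inequalities become equations with slack variables, the augmented matrix in \cref{thm:pottier} has rank at least the number of constraints. So Pottier bounds even the homogeneous minimal solutions (your periods) only by $S^{\PolyF{S}}$.

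The missing idea is structural. The paper opens up the construction behind \cite[Lemma~C.11]{DBLP:journals/corr/abs-2607-14983}, which (via Lemma~C.1 and Theorem~C.2 there) splits runs into a short prefix, a zigzag middle, and a short suffix. Pottier is applied only in the prefix and suffix stages. Their LPS have constantly many cycles, so the systems have constant rank and coefficients independent of $\Vec b_A$ and $K_A$. In the middle, each of the polynomially many one-turn returns produces hybrid sets with at most one period, of cost $0$ or $1$. By \cite[Lemma~C.4]{DBLP:journals/corr/abs-2607-14983}, the period bound evolves as $D_{i+1}\le\max\{D_i,Q(S)\}$, so it does not compound across returns. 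You also leave $|P_G|$ unaddressed; it follows from $P_G\subseteq\NN^2\setminus\{\Vec0\}$, which gives $|P_G|\le(\norm{P_G}+1)^2$.

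\textbf{Two further steps are wrong as written.}

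Your Pottier argument for the distance bound has the same rank problem. \cref{cor:pottier} puts the number of inequalities in the exponent, not the counter dimension, so it yields only an exponential bound; compare the $2^{\PolyF{M}}$ repetition counts in \cref{subsec:factorization_applications}. Your alternative, \cref{lem:hybrid_source} with a singleton source on the zero-padded $3$-VASS, is fine. The paper simply cites \cite[Theorem~3.2]{DBLP:journals/jacm/BlondinEFGHLMT21}.

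For the second bullet, passing to $U[P_A]$ changes the set being described. The new nonnegative loops at $p$ can be fired without limit and moved to the start, so $\Reach^q_{U[P_A]}(p(A))=\Reach^q_U(p(\Vec b_A+P_A^*))$. This forgets the cost bound $K_A$ and is in general strictly larger than $\Reach^q_U(p(A))$. You need a cost-preserving description that adds no transitions usable from $p$. The paper takes it from the $2$-VASS result \cite[Lemma~C.11]{DBLP:journals/corr/abs-2607-14983}, whose proof also gives $\hnorm(G)\le(S+\hnorm(A))^{\bigO(1)}$.
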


We now apply~\cref{lem:two_counter_tools} to prove the $R$-large dichotomy.

\begin{proof}[Proof of~\cref{lem:large_prefix_dichotomy}]
	Encode coordinate $i$ and apply~\cref{lem:two_counter_tools}.
	Let $L\le D^{\bigO(1)}$ bound the lengths of all resulting LPSs, and let $E:=L\norm{T}$.
	For the prescribed $\kappa\ge1$, choose
	\begin{equation*}
		R:=\max\{D,\kappa E\}\le(D+\kappa)^{\bigO(1)}.
	\end{equation*}
	Choose an LPS $\Lambda=\alpha_0\beta_1^*\alpha_1\cdots\beta_m^*\alpha_m$ containing a witnessing run.
	Delete unused cycle blocks and zero-effect cycles, preserving a witnessing run.
	In the following claims, vectors are projected onto coordinates $j,k$, while the states record coordinate $i$.
	Let $\mathcal S_\Lambda\subseteq\NN^2$ be the set of vectors reached from the encoded source through $\Lambda$ at its fixed final state.
	The first claim handles the free-period case.

	\begin{restatable}{claim}{mixedFreePeriod}
		\label{clm:mixed_free_period}
		If a cycle used by the chosen run has nonzero nonnegative effect, there exists a hybrid set $H$ with a nonzero free period such that $\Vec z\in H\subseteq\mathcal S_\Lambda$ and $\hnorm(H)\le(D+\norm{\Vec x})^{\bigO(1)}$.
	\end{restatable}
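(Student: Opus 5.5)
The plan is to extract the free period directly from the chosen LPS witness. Let the chosen run through $\Lambda$ be $\alpha_0\beta_1^{n_1}\alpha_1\cdots\beta_m^{n_m}\alpha_m$, and let $\beta_\iota$ be a used cycle ($n_\iota\ge1$) whose projected effect $\Vec a:=\Eff(\beta_\iota)$ on coordinates $j,k$ is nonzero and nonnegative. Recall that in the encoded $2$-VASS the bounded coordinate $i$ is recorded in the states, so $\beta_\iota$ is a cycle of the encoding with zero effect on $i$. Monotonicity then shows that the repetition count $n_\iota$ can be increased arbitrarily. Concretely, inserting $\lambda$ extra copies of $\beta_\iota$ shifts every later intermediate vector by $\lambda\Vec a\ge\Vec0$. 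Every earlier intermediate vector is unchanged, and the extra copies themselves start from vectors at least as large as the one at which $\beta_\iota$ was first fired. Hence every $\Vec z+\lambda\Vec a$ with $\lambda\in\NN$ lies in $\mathcal S_\Lambda$, which gives $\Vec z+\Vec a^*\subseteq\mathcal S_\Lambda$.

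The difficulty is the norm bound. $\Vec z$ itself may be huge (the prefix is $R$-large), so we cannot take $H=\Vec z+\Vec a^*$. Instead I will use the second item of \cref{lem:two_counter_tools} on a modified system. The first attempt is to split the LPS at the end of the first copy of $\beta_\iota$, at some intermediate state $r'$. The prefix $\alpha_0\beta_1^{n_1}\cdots\alpha_{\iota-1}\beta_\iota$ and the suffix $\beta_\iota^{n_\iota-1}\alpha_\iota\cdots\alpha_m$ are both paths in the encoded $2$-VASS. Then apply \cref{lem:two_counter_tools} as follows:
\begin{enumerate}
\item apply it to the prefix system with the singleton source $A=\{\Vec x\}$, which has $\hnorm\le\norm{\Vec x}$, obtaining a hybrid set $G_1\ni\Vec z_1$ at $r'$ of norm $(D+\norm{\Vec x})^{\bigO(1)}$;
\item adjoin $\Vec a$ as a free period, giving $G_1'=G_1+\Vec a^*$;
\item apply it again to the suffix system (the encoded $2$-VASS restricted so that the reached vectors stay in $\mathcal S_\Lambda$) with source $G_1'$.
\end{enumerate}
This yields a hybrid set $G\ni\Vec z$ of norm $(D+\norm{\Vec x})^{\bigO(1)}$. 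By the monotonicity argument above, every member of $G_1'$ is reachable at $r'$ from $\Vec x$ along runs captured by $\Lambda$'s prefix with extra copies of $\beta_\iota$.

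To make this precise, I will replace the LPS by a small auxiliary $2$-VASS. Give each $\alpha_\ell$ and $\beta_\ell$ its own copy of states, chained in order, with an added self-loop of effect $\Vec a$ at $r'$. Its size is polynomial in $|\Lambda|\norm\Lambda\le D^{\bigO(1)}$, and its runs from $\Vec x$ reach exactly vectors of the form (captured run) plus $\lambda\Vec a$. Applying \cref{lem:two_counter_tools} once to this auxiliary system with singleton source gives a finite union of hybrid sets covering its reachability set, all contained in $\mathcal S_\Lambda$ by monotonicity. Pick the member $H'$ containing $\Vec z$.

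The main obstacle is ensuring that $H'$ actually has $\Vec a$, or some other nonzero vector, as a free period; the representation supplied by the lemma need not expose it. To handle this, I set $H:=H'+\Vec a^*$, noting that $\hnorm(H)\le\max\{\hnorm(H'),\norm{\Vec a}\}$. Then $H\subseteq\mathcal S_\Lambda$ still holds, because the reachability set of the auxiliary system is closed under adding $\Vec a$: from any endpoint one can re-route through extra self-loop uses at $r'$, which monotonicity permits. Finally, lift $H$ back to three coordinates by inserting the $i$-coordinate of the final state.
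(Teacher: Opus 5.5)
Your final construction is essentially the paper's: the paper replaces the starred block by $\beta\beta^*$ so that every represented run fires an enabled copy of $\beta$, picks $G\ni\Vec z$ via \cref{lem:two_counter_tools}, and sets $H:=G+\Vec e^{*}$; monotonicity after that enabled copy is what handles a nonzero internal drop of $\beta$. Your extra $\Vec a$-self-loop is redundant but harmless, provided $r'$ is the endpoint of a mandatory first copy of $\beta_\iota$ in the auxiliary chain (the scheme $\cdots\alpha_{\iota-1}\beta_\iota\beta_\iota^*\alpha_\iota\cdots$) and not the base state of the $\beta_\iota$-cycle, since there the loop could fire at vectors from which $\beta_\iota$ is not enabled and containment in $\mathcal S_\Lambda$ would fail.
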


	For the remaining case, assume that no used cycle has nonzero nonnegative effect.
	Every used cycle increasing coordinate $k$ must then decrease coordinate $j$.
	The large growth in coordinate $k$ forces such a cycle to occur at least $\kappa$ times.
	Choosing the last cycle block with this property ensures that the remaining suffix has enough reserve in coordinate $k$ to delete up to $\kappa$ repetitions, even at intermediate positions within cycles.
	Hence, we have the following claim.

	\begin{restatable}{claim}{mixedLongCycle}
		\label{clm:mixed_long_cycle}
		There exists a run $\eta_-\beta^\kappa\eta_+$ from the encoded source to $\Vec z$, with vector $\Vec a$ immediately before $\beta^\kappa$, satisfying the following properties:
		\begin{enumerate}
			\item $\beta$ is a cycle with $\Eff(\beta)=-u\cdot\Vec e_j+v\cdot\Vec e_k$, where $1\le u,v\le E$.
			\item The paths $\eta_-,\eta_+$ belong to LPSs $\Lambda_-,\Lambda_+$ obtained by cutting $\Lambda$ after the selected $\beta$-star, with that star included in $\Lambda_-$.
			Their sizes are at most $D^{\bigO(1)}$, independently of $\kappa$.
			\item Every vector $\Vec w$ along $\eta_+$ satisfies $\Vec w(k)\ge\kappa v$.
		\end{enumerate}
	\end{restatable}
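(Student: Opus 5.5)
We prove \cref{clm:mixed_long_cycle} directly from the structure of the chosen LPS $\Lambda=\alpha_0\beta_1^*\alpha_1\cdots\beta_m^*\alpha_m$ and the fixed witnessing run $\alpha_0\beta_1^{n_1}\alpha_1\cdots\beta_m^{n_m}\alpha_m$. All vectors are projected onto the coordinates $j,k$.

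\emph{Step 1: sign pattern of cycle effects.} Every remaining cycle $\beta_l$ has nonzero effect, and none is nonzero nonnegative. Hence a cycle with $\Eff(\beta_l)(k)>0$ satisfies $\Eff(\beta_l)(j)<0$. Write $\Eff(\beta_l)=-u_l\Vec e_j+v_l\Vec e_k$ with $u_l,v_l\ge1$. Since $|\beta_l|\le L$, we get $u_l,v_l\le L\norm T=E$.

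\emph{Step 2: some $k$-increasing cycle is repeated at least $\kappa$ times.} Let $\mathcal B$ be the set of blocks $l$ whose cycle increases coordinate $k$. The total increase of coordinate $k$ from $\Vec x$ to $\Vec z$ is at most the contribution of the paths $\alpha$, which is at most $L\norm T=E$, plus $\sum_{l\in\mathcal B}n_lv_l$. Suppose every $n_l<\kappa$ for $l\in\mathcal B$. Then this increase is below $E+\kappa\cdot E\cdot(\text{number of blocks})$. We therefore enlarge the choice of $R$ to $R:=\max\{D,(\kappa+1)E L\}$, which is still $(D+\kappa)^{\bigO(1)}$, since the number of blocks is at most $L$. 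This contradicts $\Vec z(k)>\norm{\Vec x}+R$. So some block in $\mathcal B$ has $n_l\ge\kappa$. Let $l^*$ be the \emph{last} such block, and set $\beta:=\beta_{l^*}$, $u:=u_{l^*}$, $v:=v_{l^*}$.

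\emph{Step 3: the cut.} Split the run as $\eta_-\beta^\kappa\eta_+$. Here $\eta_-$ is $\alpha_0\cdots\beta_{l^*}^{n_{l^*}-\kappa}$, and $\eta_+$ is $\alpha_{l^*}\beta_{l^*+1}^{n_{l^*+1}}\cdots\alpha_m$. We may equally place the $\kappa$ copies at the end of the star block. Then $\eta_-\in\Lambda_-:=\alpha_0\beta_1^*\cdots\beta_{l^*}^*$ and $\eta_+\in\Lambda_+:=\alpha_{l^*}\cdots\alpha_m$. Both are sub-expressions of $\Lambda$, so their sizes are at most $L\le D^{\bigO(1)}$, independent of $\kappa$. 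Let $\Vec a$ be the vector before $\beta^\kappa$. This gives the first two properties.

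\emph{Step 4: reserve along $\eta_+$ (main obstacle).} Take any vector $\Vec w$ along $\eta_+$, possibly in the middle of a cycle iteration. Coordinate $k$ changes from $\Vec w$ to $\Vec z$ by at most the following amount:
\begin{itemize}
\item[(a)] at most $E$ from the $\alpha$-parts together with one partial cycle iteration (possibly a bit more, say $2E$);
\item[(b)] the contributions of full iterations of later blocks in $\mathcal B$, each of which has $n_l<\kappa$ by the maximality of $l^*$, so together at most $(\kappa-1)EL$;
\item[(c)] nonpositive contributions from blocks outside $\mathcal B$.
\end{itemize}
Hence
\[
\Vec w(k)\ge\Vec z(k)-2E-(\kappa-1)EL\ge R-2E-(\kappa-1)EL.
\]
Choosing $R\ge\kappa E+2E+\kappa EL$, again $(D+\kappa)^{\bigO(1)}$, yields $\Vec w(k)\ge\kappa E\ge\kappa v$.

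The delicate point is bounding the partial-iteration terms uniformly. I expect this to be handled by the crude bound $E$ per position, since each partial path has length at most $L$. I would fix the final constant in $R$ only after this bookkeeping.
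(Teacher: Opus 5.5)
Your proof is correct and follows the paper's route. You take the last block whose $k$-increasing cycle is repeated at least $\kappa$ times, cut so that $\Lambda_-$ keeps that star and $\Lambda_+$ is the unchanged suffix scheme, and bound the residual $k$-effect along $\eta_+$.

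The only difference is bookkeeping. You charge each later block $(\kappa-1)E$ and multiply by the number of blocks, which forces you to enlarge $R$ to roughly $\max\{D,(\kappa+2)EL\}$. The paper instead weights every component by its length and uses $|\Lambda|\le L$. The fixed paths, a partial iteration, and fewer than $\kappa$ copies of each later positive cycle then contribute at most $\kappa(E-|\beta|\norm{T})$ in total. This works with the already fixed $R=\max\{D,\kappa E\}$ and gives $\Vec w(k)>\kappa|\beta|\norm{T}\ge\kappa v$ directly. Your enlargement is harmless, though, since the lemma only requires $R\le(D+\kappa)^{\bigO(1)}$.
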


	Fix the run $\eta_-\beta^\kappa\eta_+$ and the LPSs $\Lambda_-,\Lambda_+$ supplied by~\cref{clm:mixed_long_cycle}.
	Deleting copies of $\beta$ shifts the final vector along $\Vec\delta=-\Eff(\beta)=u\cdot\Vec e_j-v\cdot\Vec e_k$.
	The lower bound along $\eta_+$ ensures that these deletions preserve nonnegativity.
	Using hybrid descriptions of $\Lambda_-$ and $\Lambda_+$, we obtain the following translated family.
	The full proofs are given in~\cref{app:mixed_dichotomy_claims}.

	\begin{restatable}{claim}{mixedTranslatedFamily}\label{clm:mixed_translated_family}
		There is a translated family $\{H_s\}_{s=0}^{\kappa}$ with $\Vec z\in H_0$ and direction $\Vec\delta=u\cdot\Vec e_j-v\cdot\Vec e_k$.
		Its periods and costs have norm at most $D^{\bigO(1)}$, and its base has norm at most $(D+\norm{\Vec x}+\kappa)^{\bigO(1)}$.
		Every member is reachable from the encoded source at the fixed encoded target state.
	\end{restatable}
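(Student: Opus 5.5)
The plan is to read the family directly off the fixed run $\eta_-\beta^\kappa\eta_+$ from \cref{clm:mixed_long_cycle}. Deleting $s\in[0,\kappa]$ copies of $\beta$ gives the path $\eta_-\beta^{\kappa-s}\eta_+$, which should end at $\Vec z+s\cdot\Vec\delta$. I will check first that this shortened path is still a run. Its prefix $\eta_-\beta^{\kappa-s}$ is a prefix of the original run. Every vector along the translated suffix equals a vector $\Vec w$ of $\eta_+$ plus $s\cdot\Vec\delta$. That addition only increases coordinate $j$ and lowers coordinate $k$ by $sv\le\kappa v\le\Vec w(k)$, by property~3. The difficulty is that single endpoints do not form a hybrid set with common periods, so the reserve along $\eta_+$ has to be propagated to an entire hybrid set. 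All of this happens in the $2$-VASS with coordinate $i$ recorded in states. Coordinate $i$ is fixed at the final state, so $\Vec\delta(i)=0$, every period is $0$ on $i$, and lifting back to $\NN^3$ produces a translated family with $\sigma(3)=i$.

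To build the family, I will shift the reserve out of coordinate $k$. Let $\Vec y:=\Vec a+\kappa\cdot\Eff(\beta)$ be the vector just before $\eta_+$, and put $\Vec y^\flat:=\Vec y-\kappa v\cdot\Vec e_k\ge\Vec0$. By property~3, $\eta_+$ is still enabled from $\Vec y^\flat$ and ends at $\Vec z-\kappa v\cdot\Vec e_k$. Viewing $\Lambda_+$ as a $2$-VASS of size $D^{\bigO(1)}$, I will apply the second item of \cref{lem:two_counter_tools} to a source hybrid set containing $\Vec y^\flat$. This yields a hybrid set $G=\Vec b_G+P_G^{\langle\Vec c_G,\cdot\rangle\le K_G}$ that contains $\Vec z-\kappa v\cdot\Vec e_k$ and consists of vectors reachable through $\Lambda_+$. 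I then set
\[
H_s:=\Vec b_G+\kappa v\cdot\Vec e_k+s\cdot\Vec\delta+P_G^{\langle\Vec c_G,\cdot\rangle\le K_G}.
\]
Clearly $\Vec z\in H_0$. Each base $\Vec b_G+(\kappa-s)v\cdot\Vec e_k+su\cdot\Vec e_j$ is nonnegative. Reachability of $H_s$ follows by monotonicity. The run $\eta_-\beta^{\kappa-s}$ reaches $\Vec y+s\cdot\Vec\delta=\Vec y^\flat+su\cdot\Vec e_j+(\kappa-s)v\cdot\Vec e_k\ge\Vec y^\flat$. Any run through $\Lambda_+$ from $\Vec y^\flat$ to $\Vec g\in G$ therefore replays from there, ending at $\Vec g+\kappa v\cdot\Vec e_k+s\cdot\Vec\delta$.

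The norm bounds are the main obstacle. The period and cost bounds of \cref{lem:two_counter_tools} depend only on the system size and on the source's periods and costs, so they are $D^{\bigO(1)}$ independently of $\kappa$, as required. The base norm, however, depends on $\norm{\Vec y}$, and $\Vec a$ may be huge because $\eta_-$ may repeat cycles many times. To control this I will use the hybrid description twice, as in the overview. First, apply \cref{lem:two_counter_tools} to $\Lambda_-$ with the selected $\beta$-star removed, from the singleton encoded source. This gives a hybrid set $A\ni\Vec a$ with $\hnorm(A)\le(D+\norm{\Vec x})^{\bigO(1)}$. Next, form the shifted source $A^\flat:=A+\kappa\cdot\Eff(\beta)-\kappa v\cdot\Vec e_k$, whose base norm is $(D+\norm{\Vec x}+\kappa)^{\bigO(1)}$. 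The real difficulty is that not every member of $A$ enables $\beta^\kappa$ followed by $\eta_+$ with the required reserve. I would handle this as in \cref{lem:two_cycle_capture}: before the cut, insert a drop test for $\Drop(\beta^\kappa)$, realized through $\Drop(\beta)$ together with $u\kappa$ on coordinate $j$, plus a $k$-reserve test. These tests do not change $\Vec a$ and restrict $A$ to members from which the deletion argument works. If they prove too awkward, the fallback is to absorb restricted periods of a fixed representation of $\Vec a$ into the base, as in \cref{lem:restricted_period_absorption}, and to add to the base enough copies of the free periods to dominate the drop. This affects only the base, by a factor polynomial in $D+\kappa$, and leaves $P_G$ and $\Vec c_G$ untouched.
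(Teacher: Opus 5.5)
The skeleton of your proposal is the paper's. Your $\Vec y^\flat=\Vec a-\kappa u\cdot\Vec e_j$ and $A^\flat=A-\kappa u\cdot\Vec e_j$ are exactly its $\Vec a^-$ and $A^-$. Your $H_s=G+su\cdot\Vec e_j+(\kappa-s)v\cdot\Vec e_k$, with $\beta^{\kappa-s}$ followed by a translated replay of a $\Lambda_+$-run, is its construction.

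The gap is in the step that does the real work. Every member of the source hybrid set must satisfy the guard $\Drop(\beta^\kappa)$, and this must be arranged without letting $\kappa$ enter the periods or letting the base blow up. Neither of your two routes achieves this.

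\emph{Drop test.} Your primary route puts a drop test of norm about $\kappa u$ into the $2$-VASS before invoking \cref{lem:two_counter_tools}. The system size is then $\Omega(\kappa)$. Used as stated, the lemma only gives $\norm{P_A},\norm{\Vec c_A}\le(D+\kappa)^{\bigO(1)}$, and its second item passes the same $\kappa$-dependence on to $P_G$ and $\Vec c_G$. This contradicts the claimed $D^{\bigO(1)}$ bound. It is also fatal downstream, since $\kappa$ is chosen only after the period bound is fixed: \cref{lem:translated_join} requires $\kappa\ge\kappa_0$ with $\kappa_0\le(D+\norm P)^{\bigO(1)}$.

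\emph{Fallback.} \cref{lem:restricted_period_absorption} absorbs all restricted occurrences and raises the base by up to $K_A\norm{P_A}$. The cost bound $K_A$ is excluded from $\hnorm(A)$ and nothing bounds it, so the base is not $(D+\norm{\Vec x}+\kappa)^{\bigO(1)}$. In addition, any free-period copies you add must come from the representation of $\Vec a$; otherwise $\Vec a$ drops out of the set, and with it $\Vec z\in H_0$.

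The missing idea is a bounded partial freeze. Put $\Vec d:=\Drop(\beta)$. The actual run gives
\[
\Vec a(j)\ge g_j:=\max\{\kappa u,\Vec d(j)+(\kappa-1)u\},\qquad \Vec a(k)\ge g_k:=\Vec d(k).
\]
Fix a representation $\Vec a=\Vec b_A+P_A\Vec\lambda_*$. Greedily move occurrences from $\Vec\lambda_*$ into the base, restricted or free, each one increasing a still-deficient coordinate, until the base dominates $(g_j,g_k)$. This has the following properties:
\begin{itemize}
\item Each move lowers the total deficit by at least one, so at most $g_j+g_k$ moves are needed.
\item The cost bound drops by the frozen cost and stays nonnegative, while $P_A$ and $\Vec c_A$ are untouched.
\item $\Vec a$ remains a member, and the base grows by at most $(g_j+g_k)\norm{P_A}$.
\item Since periods are nonnegative, every member now enables $\beta^q$ for all $q\le\kappa$, and the shifted source has a nonnegative base.
\end{itemize}

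Two smaller points. First, no $k$-reserve test is needed. Members of $G$ are reachable from members of the shifted source by definition, and property~3 of \cref{clm:mixed_long_cycle} is used only to place $\Vec z-\kappa v\cdot\Vec e_k$ in $G$. Second, keep the selected $\beta$-star inside $\Lambda_-$, as \cref{clm:mixed_long_cycle} does. The path $\eta_-$ contains the earlier $\ell-\kappa$ copies of $\beta$, so without that star $\Vec a$ need not belong to $A$.
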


	These claims establish the two outcomes of the dichotomy.
	In both cases, every member of the resulting hybrid set or translated family is reachable from the fixed source in the bounded-coordinate $2$-VASS.
	The short-witness bound in~\cref{lem:two_counter_tools}, followed by lifting to $V$, gives~\eqref{eq:large_prefix_distance} and completes the proof.
\end{proof}

\paragraph{Completion of the Induction.}
\label{par:completing_semi_coverable_induction}
The two outcomes of~\cref{lem:large_prefix_dichotomy} yield a polynomial distance bound for every $R$-large prefix satisfying the bounded-prefix stopping condition, leaving only $R$-small prefixes for the inductive step.

\begin{restatable}[$R$-large: terminal case]{lemma}{phaseLargeTerminal}
	\label{lem:large_prefix_terminal}

	There exists $D\le R\le D^{\bigO(1)}$ such that the following holds.
	Let $\Vec p(\Vec x)\xrightarrow{\pi_0}\Vec r(\Vec z)\xrightarrow{\pi_1}\Vec q(\Vec y)$ be a run in $U$ whose target $\Vec q(\Vec y)$ is backward $C$-coverable, where $\Vec s=(\Vec p,\Vec x)$ and $\Vec t=(\Vec q,\Vec y)$.
	Assume that $\pi_0$ is $R$-large and $i\in\BDCounters(\pi_0,D)$.
	If the configuration $\Vec r(\Vec w)$ is forward $C$-coverable whenever $\Vec w(i)\ge\Vec z(i)$ and $\Vec w(j)\ge D$ for all $j\ne i$, then
	\begin{equation*}
		\MinLen_V(\Vec s,\Vec t)
		\le
		\bigl(D+\norm{\Vec x}+\norm{\Vec t}\bigr)^{\bigO(1)}.
	\end{equation*}
\end{restatable}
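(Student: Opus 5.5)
We first fix the parameters in the order indicated in the overview: $D \to S \to \kappa \to R$. Applying \cref{lem:large_prefix_dichotomy}, the periods, costs and direction of any translated family it produces have norms at most $S\le D^{\bigO(1)}$, and this bound does not depend on $\kappa$. We then take $\kappa:=\max\{1,\kappa_0\}$, where $\kappa_0\le (D+2S)^{\bigO(1)}$ is the range supplied by \cref{lem:translated_join} with $\norm P,\norm{P'}\le S$. Finally, we choose $R\le(D+\kappa)^{\bigO(1)}\le D^{\bigO(1)}$ from \cref{lem:large_prefix_dichotomy} for this $\kappa$, enlarged if needed so that $R\ge D$. Because $D\ge B+C+\Size(V)$, every later polynomial in $B+C+\Size(V)$ is absorbed into a polynomial in $D$.

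\emph{Free-period outcome.} Here $\Vec z\in H$, where $H$ has a nonzero free period and $\hnorm(H)\le(D+\norm{\Vec x})^{\bigO(1)}$. The suffix $\pi_1$ connects the member $\Vec z$ to the backward $C$-coverable target. \cref{lem:free_covered} then gives $\Vec h\in H$ with $\norm{\Vec h}$ and $\MinLen_V((\Vec r,\Vec h),\Vec t)$ bounded by $(D+\norm{\Vec x}+\norm{\Vec t})^{\bigO(1)}$. By \eqref{eq:large_prefix_distance}, $\MinLen_V(\Vec s,(\Vec r,\Vec h))\le(D+\norm{\Vec x}+\norm{\Vec h})^{\bigO(1)}$. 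Concatenating the two runs gives the claimed bound.

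\emph{Translated-family outcome.} Here $\Vec z\in H_0$, $\Vec\delta=u\Vec e_j-v\Vec e_k$, and coordinate $i$ is untouched by $\Vec\delta$ and by every period. The family is therefore a translated family with $\sigma(3)=i$. We pair it with the constant family $J_\ell:=\{\Vec y\}$, taking $\Vec\delta'=\Vec 0$ and no periods, and check $C$-joinability in the suffix $U'$ starting at the component of $\Vec r$.

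Condition (1) holds through $\pi_1$. For the backward half of (2), $J_0$ is backward $C$-coverable, so we may take $r'=0$. For the forward half, we need some $a\le\min\{D,\kappa\}$ with $H_a$ forward $C$-coverable, and this is the step requiring care.

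Take $a:=\lceil D/u\rceil\le D$, which we may assume is at most $\kappa$ by choosing $\kappa\ge D$. Then $\Vec b_a(j)\ge D$. Moreover, $\Vec b_a(k)\ge(\kappa-a)v\ge D$ by nonnegativity of $\Vec b_\kappa$, provided $\kappa\ge 2D$. Finally, $\Vec b_a(i)=\Vec b(i)$.

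The difficulty is that the stopping condition is stated relative to $\Vec z(i)$, not $\Vec b(i)$. I would handle this by arguing that $\Vec b(i)=\Vec z(i)$: coordinate $i$ is recorded in the encoded state and all periods vanish on $i$, so every member of $H_0$ carries the same $i$-value, namely the one of the fixed encoded target state. The stopping condition then shows that $\Vec r(\Vec b_a)$ is forward $C$-coverable in $U_h$, and hence in $U'[P_0]$. This is the step I expect to be the main obstacle. If $\Vec b(i)$ cannot be pinned down this way, a fallback is to absorb restricted periods into the base via \cref{lem:restricted_period_absorption} before testing coverability.

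Once $C$-joinability holds, \cref{lem:translated_join} yields $r\le\kappa_0$ and $\Vec h\in H_r$ with $\norm{\Vec h}$ and $\MinLen_V((\Vec r,\Vec h),\Vec t)$ at most $(D+\hnorm(H_0)+\norm{\Vec y}+\norm{\Vec\delta})^{\bigO(1)}$. Here $\hnorm(H_0)\le(D+\norm{\Vec x}+\kappa)^{\bigO(1)}$ and $\kappa\le D^{\bigO(1)}$, so this bound is $(D+\norm{\Vec x}+\norm{\Vec t})^{\bigO(1)}$. Combining it with \eqref{eq:large_prefix_distance} for the prefix to $(\Vec r,\Vec h)$ completes the bound. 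All exponents are independent of $k$, since $\kappa$ and $R$ were fixed beforehand.
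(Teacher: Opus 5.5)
Your proof is correct and follows the paper's argument essentially step for step. The shared ingredients are the parameter order $D\to S\to\kappa\to R$, the constant target family $J_\ell=\{\Vec y\}$, \cref{lem:free_covered} in the free-period case, and $C$-joinability plus \cref{lem:translated_join} in the translated-family case. Your index $a=\lceil D/u\rceil$ works just as well as the paper's $\lceil\max\{D-\Vec b(j),0\}/u\rceil$.

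The step you flagged as the main obstacle is settled in the paper exactly as you propose, so no fallback is needed. The translated-family definition forces $\sigma(3)=i$, so all periods vanish on $i$. Then $\Vec z\in H_0$ gives $\Vec b_a(i)=\Vec b(i)=\Vec z(i)$.

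One small fix: set $\kappa\ge\max\{2D,\kappa_0\}$ from the outset instead of $\kappa:=\max\{1,\kappa_0\}$, since your argument later relies on $\kappa\ge 2D$.
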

\begin{proof}[Proof sketch]
	The extracted period norms are polynomially bounded in $D$, independently of $\kappa$.
	We can therefore choose $\kappa\ge \max\{2D,\kappa_0\}$ given in~\cref{lem:translated_join}, and then choose $R$ by~\cref{lem:large_prefix_dichotomy}, both polynomially bounded in $D$.
	Apply the dichotomy.
	In the free-period case, invoke~\cref{lem:free_covered}.
	In the translated-family case, the covering hypothesis and the choice of $\kappa$ give $C$-joinability with the constant target family $J_\ell=\{\Vec y\}$, so invoke~\cref{lem:translated_join}.
	In either case, prepend the short prefix supplied by~\eqref{eq:large_prefix_distance}.
	The full proof is in~\cref{app:large_prefix_terminal}.
\end{proof}

\begin{proof}[Proof of~\cref{thm:semi_coverable}]
	Fix $V,B,C$ and $\Vec t=(\Vec q,\Vec y)$, and assume that $\Vec q(\Vec y)$ is backward $C$-coverable.
	Choose $D$ uniformly for all sequential subinstances by~\cref{lem:bounded_prefix}, and then choose $\kappa,R$ as in the proof of~\cref{lem:large_prefix_terminal}, so that
	\begin{equation}\label{eq:semi_uniform_parameters}
		B+C+\Size(V)\le D\le(B+C+\Size(V))^{\bigO(1)},
		\qquad D\le R\le D^{\bigO(1)}.
	\end{equation}
	For $k\ge1$, we construct a nondecreasing bound $L_k(N)$ on $\MinLen_V(\Vec s,\Vec t)$, uniform over reachable sequential subinstances with at most $k$ components, source norm at most $N$, and backward $C$-coverable target $\Vec q(\Vec y)$.
	Let $M:=D+\norm{\Vec t}$, and choose a fixed nondecreasing polynomial $P$ large enough for the bounds below.
	Its degree and the choices of $D,R$ do not depend on $k$.

	\emph{Base Case: $k=1$.}
	If $\Vec p(\Vec x)$ is forward $C$-coverable, apply~\cref{cor:coverable_case}.
	Otherwise,~\cref{lem:bounded_prefix} captures the entire run in a $2$-VASS of size at most $D\Size(U)\le D^{\bigO(1)}$, since there is no later component.
	Apply~\cref{lem:two_counter_tools} and lift the resulting run to $V$.
	In both cases,
	\begin{equation*}
		L_1(N)\le P(M+N).
	\end{equation*}

	For $k \ge 1$, assume that $L_k$ has been constructed.
	Consider a run from $\Vec p(\Vec x)$ to $\Vec q(\Vec y)$ in a sequential subinstance $U$ with at most $k+1$ components and $\norm{\Vec s}\le N$.
	We distinguish the terminal and inductive cases.

	\emph{Terminal Cases.}
	If $\Vec p(\Vec x)$ is forward $C$-coverable, apply~\cref{cor:coverable_case}.
	Otherwise, write the run as $\pi_0\pi_1$ by~\cref{lem:bounded_prefix}.
	If $\pi_1=\varepsilon$, apply~\cref{lem:two_counter_tools} to the $2$-VASS encoding of $\pi_0$ and lift the run to $V$.
	If $\pi_1\ne\varepsilon$ and $\pi_0$ is $R$-large, its target satisfies the covering condition supplied by~\cref{lem:bounded_prefix}.
	By applying~\cref{lem:large_prefix_terminal}, the reachability-distance bound holds.
	Each case gives a run in $V$ with
	\begin{equation*}
		\MinLen_V(\Vec s,\Vec t)\le P(M+N).
	\end{equation*}

	\emph{Inductive Case.}
	The remaining case has an $R$-small prefix $\pi_0$ ending at a later component entry $\Vec r(\Vec z)$.
	Let $\Vec u:=(\Vec r,\Vec z)$.
	Since $R\ge D\ge B$,
	\begin{equation}\label{eq:source_restart_invariant}
		\norm{\Vec u}\le\max\{B,\norm{\Vec x}+R\}
		\le\norm{\Vec s}+R\le N+R.
	\end{equation}
	By~\cref{lem:two_counter_tools}, replace $\pi_0$ by a run in $V$ from $\Vec s$ to $\Vec u$ of length at most $P(M+N+R)$.
	The unchanged suffix has at most $k$ components and preserves the target's backward covering run, which stays in its component.
	Apply the induction hypothesis to this suffix and prepend the shortened prefix.
	Together with the terminal-case bounds, this allows us to choose
	\begin{equation}\label{eq:one_covered_component_recurrence}
		L_{k+1}(N)\le P(M+N+R)+L_k(N+R).
	\end{equation}
	Note that $k\le(B+1)^2$.
	By~\eqref{eq:semi_uniform_parameters}, solving this recurrence gives
	\begin{equation*}
		\begin{aligned}
			\MinLen_V(\Vec s,\Vec t) \le L_k(\norm{\Vec s})\le(B+C+\Size(V,\Vec s,\Vec t))^{\bigO(1)}.
		\end{aligned}
	\end{equation*}
	The degree is independent of $k$, proving~\eqref{eq:semi_coverable_length}.
\end{proof}

%% file: contents/8-Noncoverable.tex
\section{Removing the Coverability Assumption}
\label{sec:removing_coverability}

We now remove the coverability assumption from~\cref{thm:semi_coverable}, using the same additive-growth induction.
Pumpable sources and hybrid sets with a nonzero free period can be handled using the semi-coverable result.
The remaining difficulty arises when an $R$-large prefix yields a translated family but the target is not backward $C$-coverable.

To handle this case, we reverse the remaining suffix and regard the extracted translated family as a fixed target family.
We show that an arbitrary source can reach a suitable member of such a family by a polynomial-length run, using~\cref{lem:translated_join} when another translated family arises.
This provides the missing terminal case for the final induction.

Throughout this section, fix a $5$-VAS $V$, an encoding bound $B\in\NN$, and a sequential sub-VASS $U$ of $\enc V_B$.
Let $C$ be a transformation-area threshold for $\enc V_B$, and choose $D$ uniformly as in~\eqref{eq:uniform_structural_threshold}.

\subsection{Pumpable Sources and Hybrid Sets with a Nonzero Free Period}
\label{subsec:terminal_hybrids}

We first extend the pumpable-source bound to arbitrary targets.
The extraction in~\cref{lem:pump_extraction} does not require backward coverability of the target.
It produces a forward $C$-coverable hybrid set, unless the extraction process reaches the target, which already gives a short witness.
We can therefore apply~\cref{thm:semi_coverable} after adjoining its periods as self-loops.
The proof is given in~\cref{app:pumpable_arbitrary}.

\begin{restatable}{lemma}{pumpableArbitrary}
	\label{lem:pumpable_arbitrary}
	Suppose $\Vec p(\Vec x)$ is pumpable in $U$ and $\Vec p(\Vec x)\xrightarrow{*}\Vec q(\Vec y)$ in $U$.
	For $\Vec s=(\Vec p,\Vec x)$ and $\Vec t=(\Vec q,\Vec y)$,
	\[
		\MinLen_V(\Vec s,\Vec t)
		\le\bigl(D+\norm{\Vec s}+\norm{\Vec t}\bigr)^{\bigO(1)}.
	\]
\end{restatable}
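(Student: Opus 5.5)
We follow the outline given just before the statement. Fix a witnessing run $\Vec p(\Vec x)\xrightarrow{\pi}\Vec q(\Vec y)$ in $U$ and apply \cref{lem:pump_extraction} to it with the threshold $C$. This produces a partition $\pi=\tau\sigma$ at $\Vec r(\Vec z)$ and a hybrid set $H\ni\Vec z$ with $\hnorm(H)\le(C+\Size(U,\Vec p(\Vec x)))^{\bigO(1)}\le(D+\norm{\Vec s})^{\bigO(1)}$. Every member $\Vec h\in H$ comes with a prefix witness of length $(C+\Size(U,\Vec p(\Vec x),\Vec r(\Vec h)))^{\bigO(1)}$. Here we use $\Size(U)\le(B+1)^2\Size(V)\le D^{\bigO(1)}$, which follows from \eqref{eq:uniform_structural_threshold}.

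\emph{Case $\sigma=\varepsilon$.} Take $\Vec h:=\Vec y$, which lies in $H$ because $\Vec r(\Vec z)=\Vec q(\Vec y)$. The prefix witness then has length $(D+\norm{\Vec s}+\norm{\Vec t})^{\bigO(1)}$, and lifting it to $V$ finishes this case.

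\emph{Case $\sigma\neq\varepsilon$.} Now $H$ is forward $C$-coverable in the component of $\Vec r$. Let $U'$ be the suffix of $U$ starting at that component, and let $P_0$ be the free periods of $H$. Fix a representation of $\Vec z$ and absorb its restricted periods into the base using \cref{lem:restricted_period_absorption}. This gives a set $\widehat H=\widehat{\Vec b}+P_0^*$ with $\Vec z\in\widehat H\subseteq H$. However, $\norm{\widehat{\Vec b}}$ depends on $K$, so we will have to remove this dependence at the end (see below).

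Next, work in the augmented system $U'[P_0]$. It is a sequential sub-VASS of $\enc{\widetilde V}_B$, where $\widetilde V$ is $V$ with the transitions $(0,0,\Vec a)$ for $\Vec a\in P_0$ adjoined. By \cref{cor:threshold_closure}, $C$ is still a transformation-area threshold for $\enc{\widetilde V}_B$. In $U'[P_0]$ we have $\Vec r(\widehat{\Vec b})\xrightarrow{*}\Vec r(\Vec z)\xrightarrow{\sigma}\Vec q(\Vec y)$, and the source $\Vec r(\widehat{\Vec b})$ is forward $C$-coverable. The covering run from the base shifts up by monotonicity, since $\widehat{\Vec b}\ge\Vec b$. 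So the instance is semi-$C$-coverable, and \cref{thm:semi_coverable} yields a run $\rho$ in $\widetilde V$ from $(\Vec r,\widehat{\Vec b})$ to $\Vec t$ of length $(B+C+\Size(\widetilde V)+\norm{\widehat{\Vec b}}+\norm{\Vec t})^{\bigO(1)}$.

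Because the added transitions have zero effect on the first two coordinates and nonnegative effect on the rest, they commute forward: moving all of them to the front keeps the run enabled. So $\rho$ becomes a $V$-run from $(\Vec r,\Vec h)$ to $\Vec t$ with $\Vec h:=\widehat{\Vec b}+\sum\lambda_{\Vec a}\Vec a\in H$, where the $\lambda_{\Vec a}$ count the moved loops. In particular $\norm{\Vec h}\le\norm{\widehat{\Vec b}}+|\rho|\,\norm{P_0}$.

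\emph{Removing the dependence on $K$.} This is the main obstacle. We use the trick from the proof of \cref{lem:joining_coverable_hybrids}. First try the argument with the unabsorbed base $\Vec b$, adjoining all of $P$ as self-loops; $\Vec b$ is forward coverable in $U'[P_0]$ by definition. This gives a run of length $L\le(D+\norm{\Vec s}+\norm{\Vec t})^{\bigO(1)}$, independent of $K$. If this run uses restricted periods of total cost at most $K$, its moved-loop member lies in $H$ and we are done. Otherwise $K<\norm{\Vec c}L$, so \cref{lem:restricted_period_absorption} gives $\norm{\widehat{\Vec b}}\le\norm{\Vec b}+\norm{\Vec c}L\norm P$, which is polynomial. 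We then rerun the $P_0$-only argument above with this polynomially bounded $\widehat{\Vec b}$.

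In either case, prepend the prefix witness from $\Vec s$ to $(\Vec r,\Vec h)$ supplied by \cref{lem:pump_extraction}. Its length is polynomial in $D+\norm{\Vec s}+\norm{\Vec h}$, which gives the claimed bound $(D+\norm{\Vec s}+\norm{\Vec t})^{\bigO(1)}$.
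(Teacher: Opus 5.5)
Your proposal is correct and follows essentially the same route as the paper: apply \cref{lem:pump_extraction}, finish directly if the extracted prefix already reaches the target, and otherwise adjoin the periods of the forward $C$-coverable hybrid set as self-loops, invoke \cref{thm:semi_coverable} via \cref{cor:threshold_closure}, and remove the added periods while handling the cost bound by one round of restricted-period absorption, as in \cref{lem:joining_coverable_hybrids}. Your initial detour of absorbing before running is unnecessary. The version you settle on is exactly the paper's argument: first run with all of $P$ from the unabsorbed base, then absorb only if $K<\norm{\Vec c}L$ and rerun with $P_0$.
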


The preceding lemma handles fixed pumpable sources without requiring target coverability.
We now extend this result to hybrid sources and targets, allowing the source base to be either forward $C$-coverable or pumpable after adjoining its free periods.

\begin{restatable}{lemma}{hybridTerminal}
	\label{lem:hybrid_terminal}
	Let $H=\Vec a+P^{\langle\Vec c,\cdot\rangle\le K}$ and $J$ be hybrid sets at the source state $\Vec p$ and target state $\Vec q$ of $U$, respectively.
	Write $P_0$ for the free periods of $H$.
	Suppose that $\Vec p(\Vec x)\xrightarrow{*}\Vec q(\Vec y)$ in $U$ for some $\Vec x\in H$ and $\Vec y\in J$.
	If $\Vec p(\Vec a)$ is forward $C$-coverable or pumpable in $U[P_0]$, then there exist $\Vec h\in H$ and $\Vec h'\in J$ such that
	\begin{equation}\label{eq:hybrid_terminal_member}
		\norm{\Vec h},\ \norm{\Vec h'},\quad
		\MinLen_V\bigl((\Vec p,\Vec h),(\Vec q,\Vec h')\bigr)
		\le\bigl(D+\hnorm(H)+\hnorm(J)\bigr)^{\bigO(1)}.
	\end{equation}
\end{restatable}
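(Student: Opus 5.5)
The plan is to reduce to the fixed-configuration bound in the augmented system $W:=U[P,P']$, where $P'$ denotes the periods of $J=\Vec a'+P'^{\langle\Vec c',\cdot\rangle\le K'}$. Then remove the added period transitions and repair the cost constraints by restricted-period absorption (\cref{lem:restricted_period_absorption}), as in the proof of \cref{lem:joining_coverable_hybrids}.

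First, fix representations of $\Vec x\in H$ and $\Vec y\in J$. Executing the positive loops at $\Vec p$, then the original run, then the negative loops at $\Vec q$ gives $\Vec p(\Vec a)\xrightarrow{*}\Vec q(\Vec a')$ in $W$, exactly as in~\eqref{eq:bilateral_augmented_connection}. Let $\widetilde V$ adjoin the corresponding transitions $(0,0,\pm\Vec e)$ to $V$. By \cref{cor:threshold_closure}, $C$ is still a transformation-area threshold for $\enc{\widetilde V}_B$, and $\Size(\widetilde V)$ is polynomial in $D+\hnorm(H)+\hnorm(J)$. The hypothesis on $\Vec p(\Vec a)$ in $U[P_0]$ transfers to $W$, since $U[P_0]$ is a sub-VASS of $W$ at the source state.

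Second, split on the hypothesis. If $\Vec p(\Vec a)$ is forward $C$-coverable in $W$, apply \cref{thm:semi_coverable} to the reachable instance $(W,\Vec p(\Vec a),\Vec q(\Vec a'))$ regarded inside $\enc{\widetilde V}_B$. This yields a run in $\widetilde V$ of length $L\le(D+\hnorm(H)+\hnorm(J))^{\bigO(1)}$. If instead $\Vec p(\Vec a)$ is pumpable in $W$, apply \cref{lem:pumpable_arbitrary} to $W$ (with $\widetilde V$ in place of $V$), obtaining the same form of bound. Here $D$ should be enlarged polynomially to account for the larger system, and I must check that \cref{lem:pumpable_arbitrary} is uniform in the underlying VAS, which its statement with $D$ from~\eqref{eq:uniform_structural_threshold} suggests.

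Third, convert the run $\rho$ in $\widetilde V$ to a run in $V$. The added transitions act only on the three unencoded counters, with nonnegative effects (for $P$) or nonpositive effects (for $P'$). I move all positive added transitions to the front and all negative ones to the end. By monotonicity this preserves validity: front-loading nonnegative additions only raises intermediate values, and deferring nonpositive ones likewise. Let $\eta,\eta'$ be the resulting multiplicity vectors. Then $\Vec h:=\Vec a+\sum\eta_{\Vec e}\Vec e$ and $\Vec h':=\Vec a'+\sum\eta'_{\Vec e}\Vec e$ are connected in $V$ by a run of length at most $L$, and $\norm{\Vec h},\norm{\Vec h'}\le\hnorm(H)+\hnorm(J)+L\,\hnorm(H)\hnorm(J)$.

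The main obstacle is that the cost constraints may fail, with $\langle\Vec c,\eta\rangle>K$ or $\langle\Vec c',\eta'\rangle>K'$; these can exceed the budget only if $K<\norm{\Vec c}L$ or $K'<\norm{\Vec c'}L$. In that case I replace $H$, respectively $J$, by the absorbed set $\widehat H=\widehat{\Vec a}+P_0^*$ containing $\Vec x$ (\cref{lem:restricted_period_absorption}), whose base norm is at most $\hnorm(H)+\norm{\Vec c}L\norm P$, still polynomial. The free periods are unchanged and $\widehat{\Vec a}\ge\Vec a$, so forward $C$-coverability or pumpability of the base in $U[P_0]$ persists by monotonicity. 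I then rerun the first three steps once for each of $H$ and $J$; the second pass has no restricted periods on the absorbed side, so it cannot violate that constraint again. At most two rounds of absorption occur, so the final bound stays polynomial and proves~\eqref{eq:hybrid_terminal_member}.
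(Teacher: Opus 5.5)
Your proposal is correct and follows the paper's proof essentially step for step. Like the paper, you augment to $U[P,P']$ and invoke \cref{thm:semi_coverable} or \cref{lem:pumpable_arbitrary} in the augmented VAS with a polynomially enlarged structural threshold, then strip the period transitions as in \cref{lem:joining_coverable_hybrids} and absorb the restricted periods at most once per side, using monotonicity to keep the source hypothesis in $U[P_0]$ valid.
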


The proof adds the source periods and the negatives of the target periods as self-loops.
The augmented source is forward $C$-coverable or pumpable, so~\cref{thm:semi_coverable} or the preceding pumpable-source bound gives a short run between the bases.
Absorbing restricted periods yields the required members of the original hybrid sets.
The details are given in~\cref{app:hybrid_terminal}.

In particular, taking the target hybrid set to be a singleton handles the free-period outcome of the $R$-large dichotomy without any target coverability assumption.
It remains to handle the translated-family outcome.

\subsection{Reaching a Fixed Translated Family}
\label{subsec:family_arbitrary_source}

We now extend the induction of~\cref{sec:semi_coverable} to a fixed translated-family target with a backward $C$-coverable member.
The first ingredient is a small-intersection property for hybrid sets, with a bound independent of their cost bounds.
Its proof is given in~\cref{app:hybrid_intersection}.

\begin{restatable}{lemma}{phaseIntersection}\label{lem:hybrid_intersection}
	Let $H,J\subseteq\NN^3$ be hybrid sets with $H\cap J\ne\emptyset$.
	There is $\Vec h\in H\cap J$ such that
	\begin{equation*}
		\norm{\Vec h}\le\bigl(D+\hnorm(H)+\hnorm(J)\bigr)^{\bigO(1)}.
	\end{equation*}
\end{restatable}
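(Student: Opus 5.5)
We treat the intersection as the solution set of a linear Diophantine system and bound a small solution by Pottier's theorem (\cref{cor:pottier}). The one difficulty is that the cost bounds $K,K'$ are excluded from the representation norms. Write $H=\Vec b+P^{\langle\Vec c,\cdot\rangle\le K}$ and $J=\Vec b'+P'^{\langle\Vec c',\cdot\rangle\le K'}$, with $P=\{\Vec p_1,\ldots,\Vec p_m\}$ and $P'=\{\Vec p'_1,\ldots,\Vec p'_{m'}\}$. A common member corresponds to $(\Vec\lambda,\Vec\mu)\in\NN^{m+m'}$ satisfying
\[
	\Vec b+\textstyle\sum_i\lambda_i\Vec p_i=\Vec b'+\sum_j\mu_j\Vec p'_j,\qquad \langle\Vec c,\Vec\lambda\rangle\le K,\qquad \langle\Vec c',\Vec\mu\rangle\le K'.
\]
If the costs were ignored, this is a system of three equations. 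The constants are $\Vec b'-\Vec b$, and the coefficients are bounded by $\hnorm(H)+\hnorm(J)$.

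The first step is to pass to the homogeneous-plus-minimal decomposition from the preliminaries. Every nonnegative solution of the three equations decomposes as $(\Vec\lambda,\Vec\mu)=\Vec x_0+\sum_\ell\nu_\ell\Vec g_\ell$. Here $\Vec x_0$ is a minimal solution and the $\Vec g_\ell$ are minimal nonzero solutions of the homogeneous system. By \cref{thm:pottier}, each has $L_1$-norm at most $(1+(m+m')\norm{A}+\norm{\Vec b}+\norm{\Vec b'})^{4}$. This is polynomial provided $m$ and $m'$ are polynomial in the representation norms. That holds because duplicate periods with different costs can be merged keeping the cheapest one, so there are at most $(\hnorm+1)^3$ distinct period vectors, each carrying one useful cost.

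The key observation is monotonicity of the cost constraints. Take a solution $(\Vec\lambda,\Vec\mu)$ witnessing $H\cap J\ne\emptyset$ and decompose it as above. Since all costs are nonnegative, dropping every $\Vec g_\ell$ term yields $\Vec x_0\le(\Vec\lambda,\Vec\mu)$ componentwise. This lowers both $\langle\Vec c,\cdot\rangle$ and $\langle\Vec c',\cdot\rangle$, so $\Vec x_0$ still satisfies $\langle\Vec c,\cdot\rangle\le K$ and $\langle\Vec c',\cdot\rangle\le K'$. The member $\Vec h=\Vec b+\sum_i\Vec x_0(i)\Vec p_i$ therefore lies in $H\cap J$. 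Its norm is at most $\norm{\Vec b}+\norm{\Vec x_0}_1\norm P$, which is polynomial in $\hnorm(H)+\hnorm(J)$, and trivially at most polynomial in $D+\hnorm(H)+\hnorm(J)$. This bound is independent of $K$ and $K'$, as the statement requires.

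The main thing to verify is that the minimal solution below a given solution still lies in the cost-feasible region. I expect this to be immediate from componentwise domination and nonnegative costs. The only bookkeeping point is bounding the number of periods, $m$ and $m'$, polynomially. The merging step handles this: replacing two identical periods by one with the minimal cost does not change the hybrid set. Alternatively, one can keep the period count as is and note that $\norm{A}_{1,\infty}$ involves only a row sum over at most $m+m'$ entries, where $m$ and $m'$ are bounded after deduplication.
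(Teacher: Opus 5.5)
Your proposal is correct and follows essentially the paper's argument: write a common member as a solution of the three-equation system $\begin{pmatrix}P & -P'\end{pmatrix}\binom{\Vec\lambda}{\Vec\mu}=\Vec b'-\Vec b$, pass to a minimal solution lying componentwise below the witnessing coefficients so that nonnegative costs keep both cost constraints satisfied, and bound it by Pottier's theorem independently of $K$ and $K'$. Your deduplication step is harmless but not needed, since the periods already form a set in $\NN^3\setminus\{\Vec0\}$ and hence $|P|\le(\norm{P}+1)^3$ automatically.
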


We now establish the distance bound for an arbitrary source and a fixed translated-family target.
The induction follows~\cref{thm:semi_coverable}, keeping the target family fixed and replacing its terminal cases by the preceding hybrid-set bounds.

\begin{restatable}{lemma}{translatedArbitrary}
	\label{lem:translated_arbitrary}
	Let $\mathcal J=\{J_\ell\}_{0\le\ell\le\kappa}$ be a translated family at the target state $\Vec r'$ of $U$ with
	\[
		J_\ell
		=
		\Vec b'+\ell\cdot\Vec\delta'
		+P'^{\langle\Vec c',\cdot\rangle\le K'}.
	\]
	Suppose that $J_a$ is backward $C$-coverable in $U$ for some $0\le a\le\min\{D,\kappa\}$.
	Suppose that $\Vec p(\Vec x)\xrightarrow{*}\Vec r'(\Vec z')$ in $U$ for some $\Vec z'\in J_0$.
	There exists $\kappa_0 \le \bigl(D+\norm{P'}\bigr)^{\bigO(1)}$ such that, if $\kappa\ge\kappa_0$, then some $0\le\ell\le\kappa_0$ and $\Vec h'\in J_\ell$ satisfy
	\begin{equation}\label{eq:translated_arbitrary_length}
		\norm{\Vec h'},
		\quad
		\MinLen_V\bigl(\Vec s,(\Vec r',\Vec h')\bigr)
		\le
		\bigl(
			D+\norm{\Vec s}
			+\hnorm(J_0)
			+\norm{\Vec\delta'}
		\bigr)^{\bigO(1)},
		\qquad
		\Vec s=(\Vec p,\Vec x).
	\end{equation}
\end{restatable}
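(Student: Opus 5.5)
We prove \cref{lem:translated_arbitrary} by the additive-growth induction of \cref{thm:semi_coverable}, on the number $k$ of components of $U$ between the source and $\Vec r'$, keeping the family $\mathcal J$ fixed throughout. First fix the parameters uniformly. The family $\mathcal J$ determines $\norm{P'}$. \cref{lem:large_prefix_dichotomy} bounds the period norms of every translated family it extracts by some $S\le D^{\bigO(1)}$, independently of $\kappa$. Take $\kappa_0$ to be the larger of the constants supplied by \cref{lem:translated_join} for period norms $S$ and $\norm{P'}$, together with $2D$; this gives $\kappa_0\le(D+\norm{P'})^{\bigO(1)}$. Finally, apply \cref{lem:large_prefix_dichotomy} with $\kappa:=\kappa_0$ to obtain $R\le(D+\kappa_0)^{\bigO(1)}$. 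Write $L_k(N)$ for the bound on the left-hand side of~\eqref{eq:translated_arbitrary_length}, taken over sources of norm at most $N$. The aim is again the recurrence $L_{k+1}(N)\le P(M+N+R)+L_k(N+R)$ with $M:=D+\hnorm(J_0)+\norm{\Vec\delta'}$ and $k\le(B+1)^2$.

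\emph{Terminal cases.} Fix a run from $\Vec p(\Vec x)$ to $\Vec r'(\Vec z')$ with $\Vec z'\in J_0$.
\begin{enumerate}
\item If $\Vec p(\Vec x)$ is forward $C$-coverable, regard the source as the constant translated family $\{\{\Vec x\}\}$. Together with $J_a$ this makes the instance $C$-joinable, so \cref{lem:translated_join} yields $\ell\le\kappa_0$ and $\Vec h'\in J_\ell$ as required.
\item Otherwise, \cref{lem:bounded_prefix} splits the run as $\pi_0\pi_1$ at $\Vec r(\Vec z)$. If $\pi_1$ is empty, then $\Vec z\in J_0$. Recording the bounded coordinate gives a $2$-VASS; describe the reachable set by the hybrid sets of \cref{lem:two_counter_tools}, intersect one containing $\Vec z$ with $J_0$ using \cref{lem:hybrid_intersection} to get a small common member, and conclude with the short-witness bound of \cref{lem:two_counter_tools}.
\item If $\pi_0$ is $R$-large, apply \cref{lem:large_prefix_dichotomy}. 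In the free-period outcome, use \cref{lem:hybrid_terminal} with source $H$ and target $J_0$. Note $\Vec p(\Vec a)$ is pumpable in $U[P_0]$ because $P_0$ contains a nonzero free period. Then prepend the prefix given by~\eqref{eq:large_prefix_distance}.
\item In the translated-family outcome, we obtain $\mathcal H=\{H_r\}_{r=0}^{\kappa_0}$ at an entry state with $\Vec z\in H_0$. Check that $(U',\Vec r(\mathcal H),\Vec r'(\mathcal J))$ is $C$-joinable, apply \cref{lem:translated_join}, and prepend the prefix witness from~\eqref{eq:large_prefix_distance}.
\end{enumerate}

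\emph{Inductive case.} If $\pi_0$ is $R$-small, shorten it by \cref{lem:two_counter_tools} and apply the induction hypothesis to the suffix, whose source norm is at most $N+R$. The target family and its backward-coverable member $J_a$ persist in the suffix, because the backward covering run stays in the last component.

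The main obstacle is the $C$-joinability check in the fourth terminal case. Condition (1) holds because $\Vec z\in H_0$ reaches $\Vec z'\in J_0$ via $\pi_1$, and the backward half of (2) is the given $J_a$. For the forward half we need some $r\le D$ with $H_r$ forward $C$-coverable. I would argue this as in the sketch of \cref{lem:large_prefix_terminal}. The direction is $\Vec\delta=u\Vec e_j-v\Vec e_k$ with $u\ge1$, so the base moves up in $j$ by at least $r$; choose $r\le D$ so that $(\Vec b+r\Vec\delta)(j)\ge D$. Nonnegativity of $\Vec b+\kappa_0\Vec\delta$ with $\kappa_0\ge 2D$ gives $(\Vec b+r\Vec\delta)(k)\ge(\kappa_0-r)v\ge D$. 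The bounded coordinate $i$ is not translated, and the periods are zero on $i$, so it is not decreased below $\Vec z(i)$ provided $\Vec b(i)=\Vec z(i)$. I would verify this equality from the construction in \cref{clm:mixed_translated_family}, where coordinate $i$ is recorded in the state. The stopping condition of \cref{lem:bounded_prefix} then gives forward $C$-coverability of $\Vec r(\Vec b+r\Vec\delta)$, hence of $H_r$.

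A second point to check is that every bound is additive in the source norm, with degree independent of $k$. The $J$-side parameters enter only through $M$, so solving the recurrence yields~\eqref{eq:translated_arbitrary_length}.
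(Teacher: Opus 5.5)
Your proposal is correct and follows the paper's proof essentially step for step: the same uniform choice of $\kappa_0\ge 2D$ (depending only on $D$ and $\norm{P'}$) before $R$, the same four terminal cases, and the same additive recurrence with the target family held fixed. The only deviation is the forward-$C$-coverable source, where you invoke \cref{lem:translated_join} with the constant family $\{\Vec x\}$ rather than the paper's \cref{lem:hybrid_terminal} with $H=\{\Vec x\}$ and $J=J_0$; this is legitimate because $J_a$ supplies the backward coverability. Two small remarks: the equality $\Vec b(i)=\Vec z(i)$ that you flag follows at once from $\Vec z\in H_0$ and the periods vanishing on coordinate $i$, and, as the paper does, you should restrict $\mathcal J$ to indices at most $\kappa_0$ so that both families share the range required by $C$-joinability.
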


\begin{proof}[Proof sketch]
	Follow the induction on the number of SCCs in~\cref{thm:semi_coverable}, keeping $\mathcal J$ fixed.
	There are four terminal cases.
	A forward $C$-coverable source is handled by~\cref{lem:hybrid_terminal}.
	A bounded-coordinate prefix reaching $J_0$ is handled by~\cref{lem:hybrid_intersection,lem:two_counter_tools}.
	An $R$-large prefix yielding a nonzero free period is handled by~\cref{lem:hybrid_terminal}.
	An $R$-large prefix yielding a translated family is handled by~\cref{lem:translated_join}.
	Only an $R$-small prefix ending at a later component requires induction, with an additive increase of at most $R$ in the source norm.
	The full case analysis is in~\cref{app:translated_arbitrary}.
\end{proof}

\subsection{The Final Induction}
\label{subsec:final_induction}

We are now ready to complete the proof of~\cref{prop:encoding_reachability}.

\begin{proof}[Proof sketch of~\cref{prop:encoding_reachability}]
	Use additive-growth induction as in~\cref{thm:semi_coverable}.
	A forward $C$-coverable source is terminal by~\cref{thm:semi_coverable}.
	Otherwise, extract a bounded-coordinate prefix.
	If it reaches the target, apply~\cref{lem:two_counter_tools}.
	If it is $R$-large, use~\cref{lem:hybrid_terminal} for a nonzero free period, or reverse the suffix and use~\cref{lem:translated_arbitrary} for a translated family.
	Only an $R$-small prefix ending at a later component requires induction, increasing the source norm by at most $R$.
	The full proof is in~\cref{app:encoding_reachability}.
\end{proof}

%% file: appendix/Appendix-A-Overview.tex
\section{Proofs for Main Results}
\label{app:overview}

\subsection{Proof of~\texorpdfstring{\cref{thm:main_complexity}}{Theorem~\ref*{thm:main_complexity}}}
\label{app:main_complexity}

\mainComplexity*

Recall that the size $\Size(V,\Vec{s}, \Vec{t})$ is defined under unary encoding as in~\cref{sec:preliminary}.
For $d$-VAS, we use the following measure, which is polynomially equivalent to the binary encoding length:
\begin{equation*}
	\Size_{\textsc{bin}}(V,\Vec s,\Vec t) := 1 + d|T|\lceil\log_2(\norm T+2)\rceil + d\lceil\log_2(\norm{\Vec s}+2)\rceil + d\lceil\log_2(\norm{\Vec t}+2)\rceil.
\end{equation*}
The main theorem follows by a nondeterministic brute-force algorithm as follows.

\begin{proof}[Proof of \cref{thm:main_complexity}]
    The \NL{}-hardness and the \PSPACE{}-hardness are given in ~\cite[Theorem 1.2]{DBLP:conf/concur/ChenFZ26} and~\cite[Theorem 1]{DBLP:conf/rp/KaminskiL26}, respectively.
    We only focus on the upper bound.

	Let $c\ge1$ be an integer constant satisfying~\cref{thm:polynomial_witness}.
	Construct a nondeterministic Turing machine that, given a $5$-VAS $(V,\Vec s,\Vec t)$ of size $M:=\Size(V,\Vec s,\Vec t)$, guesses at most $\ell:=M^c$ transitions and verifies that they induce a run from $\Vec s$ to $\Vec t$.
	By this theorem, if $\Vec{s} \xrightarrow{*} \Vec{t}$, then there exists a witness of length at most $\ell$.
	Hence the machine accepts if and only if $\Vec{s} \xrightarrow{*} \Vec{t}$ in $V$.

	Then we analyze the space complexity of this algorithm.
	The machine needs to store the current configuration, which has a norm at most $\norm{\Vec{s}} + \ell \cdot M  \le (\ell + 1) \cdot M \le 2M ^ {c + 1}$.
	The five counters, a step counter, and an index of the current transition therefore require $O(\log M)$ space.
	The transition can be read from the input using that index.
	Also $\ell$ can be computed within the same space.
	For unary encoding, this uses logarithmic space, and hence reachability in unary $5$-VAS is in $\NL$.
	For binary encoding, note that $\log M = O(\Size_{\textsc{bin}}(V, \Vec{s}, \Vec{t}))$.
	So the same algorithm uses polynomial space, and hence reachability in binary $5$-VAS is in $\NPSPACE = \PSPACE$.
\end{proof}

\subsection{Proof of~\texorpdfstring{\cref{cor:flatness_of_5vas}}{Corollary~\ref*{cor:flatness_of_5vas}}}
\label{app:mixed_flattenability}

\flatnessFiveVAS*

\begin{proof}[Proof of \cref{cor:flatness_of_5vas}]
Fix a polynomial factorization of $\Reach_V$ given by \cref{thm:intro_factorization}.
Let $R$ be a factor captured by $\enc{V}_{I, B}$ where $|I| = 3$ and $B$ is polynomial in $\Size(V)$.
Applying~\cite[Theorem~3.1]{DBLP:journals/jacm/BlondinEFGHLMT21} to this $2$-VASS and restoring the original transition effects of $V$ gives a finite family $\mathcal S_R$ of $\NN$-LPSs satisfying
\begin{equation*}
    R\subseteq \bigcup_{\Lambda\in \mathcal S_R}\Reach_\Lambda\subseteq\Reach_V.
\end{equation*}
The second holds because the resulting $\NN$-runs use only transitions of $V$.
We still have $|\Lambda| \le \PolyF{\Size(V)}$ and $\norm{\Lambda} \le \Size(V)$.

Let $\mathcal S$ be the finite family of concatenations obtained by choosing an LPS from $\mathcal S_R$ for each bounded-counter factor and keeping the mixed-LPS factors.
These concatenations cover $\Reach_V$ by the original factorization.
Conversely, each chosen mixed LPS captures a subset of $\Reach_V$, so transitivity gives $\Reach_V=\bigcup_{\Lambda\in \mathcal S}\Reach_\Lambda$.
Since the number of factors is polynomially bounded, so are the parameters of the resulting mixed LPSs.
\end{proof}

\subsection{Proof of~\texorpdfstring{\cref{thm:higher_dimensional_complexity}}{Theorem~\ref*{thm:higher_dimensional_complexity}}}
\label{app:higher_dimensional_complexity}

\dVASReachInFdmThree*

We use $F_d$ for the fast-growing functions and $\mathsf F_d$ for the associated complexity classes.
See~\cite[Section~2]{DBLP:journals/toct/Schmitz16} for more details.

\begin{proof}[Proof of \cref{thm:higher_dimensional_complexity}]
Let $n$ be the bit-length of the input.
The unary size is at most $2^{\PolyF{n}}$, so \cref{lem:higher_dimensional_shortest_runs} gives a run of length at most $F_d(E(n))$ for some elementary function $E$ whenever the target is reachable.
Exhaustive search in $V$ up to this length decides reachability.
For fixed $d\ge3$, the preprocessing and search overhead are absorbed in $\mathsf F_d$~\cite[Lemma~4.6 and Section~4.2.1]{DBLP:journals/toct/Schmitz16}.
\end{proof}

\begin{lemma}
\label[lemma]{lem:higher_dimensional_shortest_runs}
For every fixed $d\ge3$, there is an elementary function $g_d$ such that every reachable $(d+3)$-VAS instance $(V,\Vec s,\Vec t)$ satisfies
\begin{equation*}
    \MinLen_V(\Vec s,\Vec t)\le F_d\bigl(g_d(\Size(V,\Vec s,\Vec t))\bigr).
\end{equation*}
\end{lemma}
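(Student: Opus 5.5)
We will combine the polynomial factorization of \cref{thm:intro_factorization} with the component-dimension bound of \cite[Theorem~2.1]{DBLP:conf/concur/ChenFZ26}, following the sketch in \cref{subsubsec:higher_dimensional_complexity}. Fix a reachable $(d+3)$-VAS instance $(V,\Vec s,\Vec t)$ and put $M:=\Size(V,\Vec s,\Vec t)$. By \cref{thm:intro_factorization}, there is a composition $R_1\circ\cdots\circ R_m$ in the factorization containing $(\Vec s,\Vec t)$. Here $m$, all bounds $B$ and all mixed-LPS parameters are at most $\PolyF{\Size(V)}$. Every witness of a factor lifts to a run of $V$ of no greater length. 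Hence it suffices to bound the length of a shortest witness for the whole composition. We will do this by compiling the composition into one VASS $G$ with $\dimcom(G)\le d$ and size elementary in $M$.

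The compilation proceeds gadget by gadget. For a bounded-counter factor $\Reach_{\enc V_{I,B}}$ with $|I|=3$, we keep all $d+3$ original counters and record the three bounded values in the states, i.e.\ we use $\widehat{\enc V}_{I,B}$. Every cycle returns to the same state, so its effect vanishes on $I$, and hence every SCC has cycle-dimension at most $d$.

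For a mixed LPS $(\Lambda_1,\XX_1)\cdots(\Lambda_r,\XX_r)$, each path $\alpha$ becomes a simple chain of states and each cycle $\beta$ becomes a fresh simple cycle, so each SCC contains one simple cycle. An $\NN$-component is used as is. A $\ZZ$-component is bracketed by a self-loop with effect $+\Vec 1$ on all original counters before it and a self-loop with effect $-\Vec 1$ after it. One auxiliary counter is incremented by the first loop and decremented by the second, which forces equal repetition counts. With $N$ repetitions, any $\ZZ$-witness whose intermediate values stay above $-N$ becomes an $\NN$-run. Conversely, every run of the gadget is a valid $\ZZ$-witness with the same endpoints. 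Every SCC of the gadget is a single simple cycle, so its cycle-dimension is at most $1\le d$.

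Chaining the gadgets sequentially by zero-effect bridges gives $G$. Its SCCs lie inside single gadgets, so $\dimcom(G)\le d$, and $\Size(G)\le \PolyF{M}$ after adding the fixed endpoints. Applying \cite[Theorem~2.1]{DBLP:conf/concur/ChenFZ26} to $(G,\Vec s,\Vec t)$ yields a run of length at most $F_d(g(\Size(G)))$ for some elementary $g$, since the dimension of $G$ is fixed at $d+4$.

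We expect the main obstacle to be the transfer back from $G$ to $V$, which has two parts. First, a run of $G$ must decode into witnesses of the factors whose total length controls the length of the resulting run in $V$. The padding loops of a $\ZZ$-gadget add $2N$ steps, but removing them yields a $\ZZ$-witness of smaller length. The lifting guarantee of the factorization, via \cref{lem:mixed_lps_transformation_area}, then gives a run in $V$ of no greater length. Second, the run of $G$ must start and end with zero auxiliary counter, which we enforce in the target configuration. Once both parts hold, the bound transfers with $g_d:=g\circ\PolyF{\cdot}$, which is elementary. We would also check the hypotheses of \cite[Theorem~2.1]{DBLP:conf/concur/ChenFZ26}, namely that it applies to VASS with component-dimension $d\ge3$ and arbitrary ambient dimension, which is why the statement requires $d\ge3$.
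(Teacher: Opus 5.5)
Your route is the paper's. You fix a witnessing composition from \cref{thm:intro_factorization} and compile the bounded-counter factors and the mixed-LPS factors into gadgets of component-dimension at most $d$. You then chain the gadgets, apply \cite[Theorem~2.1]{DBLP:conf/concur/ChenFZ26}, and decode back to $V$.

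\textbf{The gap is the auxiliary counter.} You use a single one for all of $G$; this is why you get dimension $d+4$ and impose a zero auxiliary value only at the global target. A composition produced by the factorization generally contains several mixed-LPS factors, each with its own $\ZZ$-component. With a shared counter, the chained system only enforces $\sum_i(N_i-N_i')=0$ and nonnegativity of the partial sums. It does not enforce $N_i=N_i'$ for each $\ZZ$-gadget $i$.

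\begin{itemize}
\item An early $\ZZ$-gadget can therefore exit with a surplus $k\cdot\Vec 1$ on the original counters, which a later $\ZZ$-gadget repays.
\item That gadget's net effect is then not the effect of any path of its LPS.
\item All intervening $\NN$-components and bounded-counter gadgets run on counters shifted by $k\cdot\Vec 1$.
\end{itemize}

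Such a run of $G$ does not decode into witnesses of the factors. So your assertion that every run of a gadget is a valid $\ZZ$-witness with the same endpoints fails inside $G$. Consequently $\MinLen_V(\Vec s,\Vec t)$ is not shown to be bounded by the length of a shortest run of $G$, because that shortest run may be exactly such a spurious one.

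\textbf{The repair is what the paper does.} Give each $\ZZ$-component its own auxiliary counter, changed only by its two padding loops and required to be $0$ in both the source and the target of $G$. Then each pair of loops is forced to have equal repetition counts, and decoding works gadget by gadget.

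The price is that $G$ has polynomially many counters rather than $d+4$. So ``the dimension of $G$ is fixed'' can no longer be the reason the resulting bound is elementary. You need exactly the property the paper invokes: the bound in \cite[Theorem~2.1]{DBLP:conf/concur/ChenFZ26} is a fixed elementary function of the instance size that depends only on the component-dimension, not on the number of counters.

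With that change, the rest of your argument goes through as you describe. This includes the cycle-dimension of the gadgets, the polynomial size bound, and the removal of padding loops and connecting bridges.
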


\begin{proof}
Put $D:=d+3$ and $M:=\Size(V,\Vec s,\Vec t)$.
Apply \cref{thm:intro_factorization} and choose a witnessing composition for $\Vec s\xrightarrow{*}\Vec t$ with polynomially bounded factor count and parameters.
We build a VASS $W$ using the original $D$ counters and dedicated auxiliary counters that start and finish at zero.

\begin{claim}
\label{clm:integer_lps_simulation}
    Let $\Lambda$ be a $\ZZ$-LPS over $d$ counters.
    There exists a $(d+1)$-dimensional $\NN$-LPS $\widehat\Lambda$ such that
    \[
        \Reach_{(\Lambda,\ZZ)}
        =
        \left\{
            (\Vec s,\Vec t):
            ((\Vec s,0),(\Vec t,0))
            \in\Reach_{\widehat\Lambda}
        \right\}.
    \]
    Moreover, $|\widehat\Lambda|\le|\Lambda|+2$,
    $\norm{\widehat\Lambda}\le\max\{\norm\Lambda,1\}$.
\end{claim}

\begin{claimproof}
    We append a zero auxiliary coordinate to every vector of $\Lambda$ to obtain a $(d+1)$-dimensional $\ZZ$-LPS $\Lambda^\uparrow$, and define
    \[
        \widehat\Lambda
        :=(\Vec1,1_h)^*\Lambda^\uparrow(-\Vec1,-1_h)^*.
    \]
    Every $\ZZ$-witness of $\Lambda$ with nonnegative endpoints can be executed under $\NN$-semantics by repeating the positive vector sufficiently many times before it and the negative vector equally many times afterward.
    Conversely, a witness of $\widehat\Lambda$ whose auxiliary coordinate starts and ends at zero uses these two vectors equally often.
    Removing their occurrences gives a $\ZZ$-witness of $\Lambda$.
    The parameter bounds follow directly from the construction.
\end{claimproof}

For a mixed LPS $\Lambda$, discard empty components and apply \cref{clm:integer_lps_simulation} to each $\ZZ$-component, using distinct auxiliary counters that are unchanged elsewhere and start and finish at zero.
Write the resulting $\NN$-LPS as $\widehat{\Lambda}=\alpha_0\beta_1^*\alpha_1\cdots\beta_k^*\alpha_k$.
We construct a VASS $V_{\widehat{\Lambda}}$ by realizing each $\alpha_i$ as a directed path and each $\beta_i$ as a simple cycle, using fresh states for distinct occurrences.
Connect these paths and cycles in their order in $\widehat{\Lambda}$ by forward zero-effect transitions, and let $p_{\widehat{\Lambda}}$ and $q_{\widehat{\Lambda}}$ be the entry and exit states.
Every run from $p_{\widehat{\Lambda}}$ to $q_{\widehat{\Lambda}}$ follows the fixed paths and repeats each cycle an arbitrary number of times.
Consequently, for all $\Vec s,\Vec t\in\NN^D$,
\[
    (\Vec s,\Vec t)\in\Reach_\Lambda
    \iff
    p_{\widehat{\Lambda}}(\Vec s, \Vec 0)\xrightarrow{*}q_{\widehat{\Lambda}}(\Vec t, \Vec 0)
    \text{ in }V_{\widehat{\Lambda}}.
\]
Deleting the connecting transitions recovers an LPS witness of no greater length.
Moreover, $V_{\widehat{\Lambda}}$ has $O(|{\Lambda}|+1)$ states and transitions, with transition norm at most $\max\{\norm{{\Lambda}},1\}$.
Each SCC contains at most one simple cycle, so its cycle-dimension is at most one.

For a bounded-counter factor captured by $\enc{V}_{I,B}$ with $|I|=3$, we restore all $D$ original transition coordinates, and add a private entry and exit connected to the encoded states by zero-effect edges.
This gadget has polynomial size and cycle-dimension at most $d$, since every cycle has zero effect on the three encoded coordinates.
Its runs become runs of $V$ after deleting the entry and exit edges, and it admits every witness of the factor.

Give the factor gadgets disjoint states and auxiliary counters, and connect them in sequence by forward zero-effect edges to obtain $W$.
Let $s':=p_{\mathrm{in}}(\Vec s,\Vec0)$ and $t':=p_{\mathrm{out}}(\Vec t,\Vec0)$ be the global entry and exit configurations.
The chosen witnesses give $s'\xrightarrow{*}t'$ in $W$, with $\Size(W,s',t')\le\PolyF{M}$ and $\dimcom(W)\le d$, since every SCC stays inside one gadget.
Every run from $s'$ to $t'$ restores each auxiliary counter to zero.
Deleting the padding loops and connecting edges therefore recovers mixed-LPS witnesses and runs of $V$ for the bounded-counter gadgets.
By \cref{thm:intro_factorization}, these yield a run of $V$ of no greater length.
Thus~\cite[Theorem~2.1]{DBLP:conf/concur/ChenFZ26} gives
\begin{equation*}
    \MinLen_V(\Vec s,\Vec t)
    \le\MinLen_W(s',t')
    \le F_d\bigl(g(\Size(W,s',t'))\bigr),
\end{equation*}
where $g$ is a fixed nondecreasing elementary function, independent of the number of counters.
Composing $g$ with the polynomial size bound gives $g_d$.
\end{proof}

%% file: appendix/Appendix-B-Transformation-Area.tex
\section{Quantifying van Leeuwen's Transformation Area}
\label{app:classical_transformation_area}

Fix a $d$-VAS $V$ with transition set $T$.
A set $A\subseteq\NN^d$ is a transformation area if every $\ZZ$-run between configurations in $A$ can be rearranged into a run in $V$ with the same Parikh image.
For $i\in[d]$ and $C\in\NN$, define
\begin{equation*}
	\TA_C^i := \{\Vec{x}\in\NN^d:\Vec{x}(j)\ge C\text{ for every }j\in[d]\setminus\{i\}\},
	\qquad
	\TA_C := \bigcup_{i\in[d]}\TA_C^i.
\end{equation*}
The original result of van Leeuwen~\cite[Theorem 5.3]{DBLP:conf/stoc/Leeuwen74} states that for every $i\in[d]$, there exists an effectively computable threshold $C\in\NN$ such that $\TA_C^i$ is a transformation area.
We give an explicit threshold linear in $d$ and prove that the union $\TA_C$ is also a transformation area.

\begin{theorem}
	\label{thm:coor_ta_bound}
	Let $C:=2d\norm{T}(\norm{T}+1)$.
	Then $\TA_C$ is a transformation area of $V$.
\end{theorem}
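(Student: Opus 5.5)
The plan is to prove that with $C=2d\norm{T}(\norm{T}+1)$, every $\ZZ$-run $\Vec s\xrightarrow{\pi}_{\ZZ}\Vec t$ with $\Vec s\in\TA_C^i$ and $\Vec t\in\TA_C^{i'}$ can be reordered into a run. The proof mirrors that of~\cref{thm:generalized_transformation_area}, with Steinitz's lemma doing the work. Write $\Vec s$ and $\Vec t$ for the endpoints and $M:=\norm{T}$. Treat the possibly small coordinates separately: coordinate $i$ of $\Vec s$ and coordinate $i'$ of $\Vec t$ may be small, and all other coordinates are at least $C$.

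\textbf{Case $i=i'$.} Coordinate $i$ may be small at both ends; all other coordinates are large at both ends. Split the multiset of transitions of $\pi$ into $T_+$ (positive on coordinate $i$), $T_0$ (zero on $i$) and $T_-$ (negative on $i$). I want an ordering that keeps coordinate $i$ nonnegative and keeps every other coordinate near the Steinitz envelope. First interleave greedily with respect to coordinate $i$: start from $\Vec s(i)$, place a negative transition whenever the current value of coordinate $i$ is at least $M$, and otherwise place a positive one. The total effect on $i$ is $\Vec t(i)-\Vec s(i)$, which ends nonnegative, so coordinate $i$ never drops below $0$ until the positives run out. After that, the remaining negatives bring the value down exactly to $\Vec t(i)\ge0$ monotonically.

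This greedy order gives no control over the other coordinates, and I expect that to be the main obstacle. I would instead group transitions into short blocks, each a single transition or a pair whose effect on $i$ has absolute value at most $M$. More precisely, I would decompose $\pi$'s Parikh image into blocks of at most $M+1$ transitions with effect on $i$ in $[-M,M]$ ("balanced blocks"), by pairing $\Vec a(i)$-units of positives against negatives, plus leftover singletons of one sign. Then apply~\cref{cor:steinitz_partial_sums} in dimension $d$ to the block effects, whose norms are at most $(M+1)M$. This gives partial sums at least $\min(\Vec0,\Vec t-\Vec s)-d(M+1)M\cdot\Vec1$ on every coordinate. Within a block the additional drop is at most $(M+1)M$. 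So coordinates $j\ne i$ stay above $C-2dM(M+1)\ge0$, which is where the factor $2$ in $C$ comes from. For coordinate $i$ the Steinitz bound is too weak, since $\Vec s(i)$ may be $0$. To fix this I would place all positive singletons first on coordinate $i$ and all negative singletons last, and apply Steinitz only to the balanced middle blocks.

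\textbf{Case $i\ne i'$.} Coordinate $i$ is small at the source and large at the target, and coordinate $i'$ is the reverse. I would handle this by applying the block-plus-Steinitz argument in the form "front-load the transitions increasing $i$ while keeping $i'$ safe". The leftover singletons positive on $i$ go at the start and those negative on $i'$ go at the end. This works since $\Vec s(i')\ge C$ and $\Vec t(i)\ge C$ give reserve on the coordinates where the leftovers might hurt. The delicate point is that the front-loaded singletons may decrease $i'$ or other coordinates, and I would have to bound their cumulative effect. The likely resolution is to form blocks balanced simultaneously in coordinates $i$ and $i'$ where necessary, accepting the Steinitz constant $d$ times the block norm. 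This bookkeeping is what I expect to make the precise constant $2d\norm{T}(\norm{T}+1)$ work.

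Finally, I would note that for a VAS the Parikh image determines the endpoint, so the reordered sequence reaches $\Vec t$. This proves that $\TA_C$ is a transformation area.
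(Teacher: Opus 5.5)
There is a genuine gap. Your plan has no mechanism that keeps the possibly small coordinate $i$ nonnegative while Steinitz's lemma controls the other coordinates, and the case $i\ne i'$ is not argued at all.

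\textbf{The case $i=i'$.} The defining property of a balanced block, namely $i$-effect in $[-M,M]$, is already satisfied by every single transition. So it does not make blocks nearly neutral on $i$. Steinitz applied to the middle blocks bounds the $i$-coordinate from below only by its value at the start of the middle, minus roughly $d(M+1)M$. When $\Vec s(i)=\Vec t(i)=0$ and nothing is left over, that bound is negative. Nothing in the Steinitz order, or inside a block whose negative transition comes first, prevents coordinate $i$ from dropping below $0$.

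Front-loading the leftover positive singletons fails for a different reason: their number is unbounded. Take $T=\{(1,-1),(0,1)\}$, $\Vec s=(0,C)$, $\Vec t=(N,C)$, and let $\pi$ consist of $N$ copies of each transition. Nothing can be paired, and placing the copies of $(1,-1)$ first drives coordinate $2$ to $C-N<0$ once $N>C$. Yet a valid rearrangement (alternation) exists.

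\textbf{The case $i\ne i'$.} ``Blocks balanced simultaneously in $i$ and $i'$'' is a hope rather than a step. Your accounting has also already spent all of $C$ on the case $i=i'$.

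\textbf{The missing idea is a cyclic rotation.} Set $n:=|\pi|$ and $\Vec z:=\Vec t-\Vec s$, and append $n$ copies of $-\Vec z/n$, so that the sequence sums to zero. Take a Steinitz order with all partial sums $\Vec P_k$ of norm at most $d\norm T$. Rotate this order to start just after an index $r$ minimizing $\Vec P_r(i)$. Because the total is zero, every partial sum becomes $\Vec P_k-\Vec P_r$. It is therefore nonnegative on coordinate $i$ and at least $-2d\norm T$ elsewhere. Deleting the auxiliary vectors only adds $(m/n)\cdot\Vec z\ge\min(\Vec0,\Vec z)$, where $m$ is the number of auxiliary vectors deleted from that prefix. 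This shows that each $\TA^i_{C_0}$ with $C_0:=2d\norm T$ is a transformation area, which is where the factor $2d$ comes from.

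Now take $\Vec s\in\TA^i_C$ and $\Vec t\in\TA^{i'}_C$. If both endpoints lie in $\TA^{i'}_{C_0}$, you are done. Otherwise $i\ne i'$ and $\Vec s(i)<C_0$, and no balancing is needed. Since $\Vec t(i)\ge C$, you can front-load at most $C_0$ occurrences with positive $i$-effect to raise coordinate $i$ to $C_0$. This loses at most $\norm T C_0$ on the other coordinates. With $C=(\norm T+1)C_0$, you reach some $\Vec w\ge C_0\cdot\Vec1$. Since $\Vec w,\Vec t\in\TA^{i'}_{C_0}$, the remaining transitions are handled by the single-coordinate case. This bounded front-loading is the source of the factor $\norm T+1$.
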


We first show how to rearrange a $\ZZ$-run whose endpoints lie in the same $\TA_C^i$ into a valid run.

\begin{lemma}
	\label{lem:coor_ta_bound}
	For every $i\in[d]$, the set $\TA_{C_0}^i$ is a transformation area of $V$, where $C_0:=2d\norm{T}$.
\end{lemma}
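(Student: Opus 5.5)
We prove \cref{lem:coor_ta_bound} for fixed $i$ by adapting the proof of \cref{thm:generalized_transformation_area} with no encoded coordinates. The role of the peak-state relocation is played by a reordering of the whole Parikh multiset, and Steinitz's lemma now has to control only the coordinates in $J:=[d]\setminus\{i\}$. Coordinate $i$ is handled separately by a monotone ordering argument.

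Fix $\Vec s,\Vec t\in\TA_{C_0}^i$ and a $\ZZ$-run $\Vec s\xrightarrow{\pi}_\ZZ\Vec t$. Since $V$ has a single state, every permutation of $\pi$ is a path with the same Parikh image and the same endpoint $\Vec t$. The task is therefore to order the multiset of transition effects $\Vec a_1,\dots,\Vec a_n$, which sums to $\Vec t-\Vec s$, so that all prefix sums added to $\Vec s$ stay nonnegative.

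The plan is to split the multiset into the transitions with $\Vec a(i)\ge0$ and those with $\Vec a(i)<0$. Put the nonnegative ones first and the negative ones last. On coordinate $i$ the resulting walk first increases monotonically and then decreases monotonically. It starts at $\Vec s(i)\ge0$ and ends at $\Vec t(i)\ge0$, so coordinate $i$ never goes negative, however the two blocks are ordered internally.

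The difficulty is the coordinates in $J$, and this is the main obstacle. The two blocks have sums $\Vec g$ and $\Vec t-\Vec s-\Vec g$, and $\restr{\Vec g}{J}$ can be huge and negative, so the boundary between the blocks may violate $J$. I would fix this by interleaving the blocks instead. Apply \cref{cor:steinitz_partial_sums} on the $J$-coordinates to the combined sequence, but with the constraint that coordinate $i$ stay nonnegative. A convenient way to do this is to add coordinate $i$ as an extra Steinitz coordinate, giving $d$ coordinates in total. With the $n$ auxiliary copies of $-(\Vec t-\Vec s)/n$ as in that corollary, every prefix sum is then at least $\min(\Vec0,\Vec t-\Vec s)-d\norm T\cdot\Vec1$ in all $d$ coordinates. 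On $J$ this gives values at least $\min(\Vec s,\Vec t)-d\norm T\ge C_0-d\norm T\ge0$, which is fine.

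On coordinate $i$, however, the endpoints may be small, so this bound can dip to $-d\norm T$. To repair it, I would apply a local exchange: whenever coordinate $i$ would go negative, pull the next transition with positive $i$-effect forward. Alternatively, choose the Steinitz ordering only for $J$, and then perform a greedy ordering that keeps coordinate $i$ nonnegative while fixing the sequence of the Steinitz order ``up to a window''. The slack $C_0-d\norm T=d\norm T$ on $J$ is exactly what absorbs the $J$-perturbation caused by displacing at most $d$ transitions, each of norm at most $\norm T$. Making this exchange argument precise is the delicate part. I would try to formalize it as follows: among orderings whose $J$-prefix sums stay at least $\min(\Vec s,\Vec t)-2d\norm T$, pick one minimizing the total negative part on coordinate $i$, and show that a negative dip can always be removed by swapping in a positive-$i$ transition that lies at most a bounded distance ahead.
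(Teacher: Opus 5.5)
\paragraph{Gap in the repair of coordinate $i$.} Your reduction to reordering the multiset of effects is correct. Applying Steinitz to all $d$ coordinates with the $n$ auxiliary copies of $-(\Vec t-\Vec s)/n$ correctly gives $J$-values at least $\min(\Vec s,\Vec t)-d\norm{T}\ge d\norm{T}$. However, the step that repairs coordinate $i$ is never carried out, and the heuristic behind it does not hold.

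Steinitz only guarantees that coordinate $i$ stays above $\min(\Vec s(i),\Vec t(i))-d\norm{T}$, so the dip can be of order $d\norm{T}$ in value. A positive $i$-effect may be just $1$, so undoing such a dip by pulling individual transitions forward may require displacing up to $d\norm{T}$ transitions, not $d$. Each displaced transition $\Vec a$ shifts every $J$-prefix sum in the region it crosses by $\restr{\Vec a}{J}$, which can be as low as $-\norm{T}$ per coordinate. These shifts accumulate when the moves overlap, giving a perturbation of order $d\norm{T}^2$. The slack $d\norm{T}$ does not absorb that. Your potential-minimization formulation also gives no argument that a dip-removing swap within a bounded window always exists while keeping $J$ above $\min(\Vec s,\Vec t)-2d\norm{T}$. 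As written, the proof is incomplete at exactly the step that matters.

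\paragraph{Missing idea: rotate the zero-sum sequence.} The fix is to move a whole block at once, using that the augmented sequence sums to zero.
\begin{itemize}
\item Let $\Vec P_0=\Vec 0,\ldots,\Vec P_{2n}=\Vec 0$ be the Steinitz partial sums of the augmented sequence, so $\norm{\Vec P_k}\le d\norm{T}$.
\item Choose $r$ minimizing $\Vec P_r(i)$, and rotate the cyclic sequence to begin right after position $r$.
\item Every partial sum $\Vec R$ of the rotated sequence, including after wrap-around, equals $\Vec P_k-\Vec P_r$. Hence $\Vec R(i)\ge 0$ and $\Vec R(j)\ge -2d\norm{T}$ for $j\ne i$.
\item Now delete the auxiliary vectors. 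Consider a prefix of real transitions whose augmented prefix contains $m$ auxiliary copies. It reaches $\Vec s+\Vec R+(m/n)(\Vec t-\Vec s)\ge \Vec R+\min(\Vec s,\Vec t)$.
\item This is nonnegative on coordinate $i$ because $\min(\Vec s(i),\Vec t(i))\ge 0$. It is nonnegative on $J$ because $\min(\Vec s(j),\Vec t(j))\ge C_0=2d\norm{T}$.
\end{itemize}
The rotation displaces an entire block whose total is $-\Vec P_r$, and Steinitz already bounds its norm. That cancellation is what your transition-by-transition exchange loses, and it is exactly where the factor $2$ in $C_0$ comes from.
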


\begin{proof}
	Fix $i\in[d]$ and a $\ZZ$-run $\Vec s\xrightarrow{\pi}_{\ZZ}\Vec t$ with $\Vec s,\Vec t\in\TA_{C_0}^i$.
	Write $\pi=\Vec a_1\cdots\Vec a_n$ and $\Vec z:=\Vec t-\Vec s=\sum_{\ell=1}^n\Vec a_\ell$.
	The case $n=0$ is immediate.
	As in the proof of~\cref{cor:steinitz_partial_sums}, append $n$ copies of $-\Vec z/n$ and apply~\cref{lem:steinitz}.
	Let $\Vec P_0,\ldots,\Vec P_{2n}$ be the partial sums in the resulting order, so that
	\begin{equation*}
		\Vec P_0=\Vec P_{2n}=\Vec0,
		\qquad
		\norm{\Vec P_k}\le d\norm{T}.
	\end{equation*}
	Choose $r\in[0,2n-1]$ minimizing $\Vec P_r(i)$, and rotate the sequence to begin just after $\Vec P_r$.
	Since the total sum is zero, every partial sum $\Vec R$ of the rotated sequence has the form $\Vec P_k-\Vec P_r$, including those after the sequence wraps around.
	Thus
	\begin{equation}\label{eq:rotated_steinitz}
		\Vec R(i)\ge0,
		\qquad
		\Vec R(j)\ge-2d\norm{T}\text{ for }j\ne i.
	\end{equation}
	Delete the auxiliary vectors from the rotated sequence, and let $\rho$ be the resulting path.
	For a nonempty prefix of $\rho$, let $\Vec p$ be its effect and let $\Vec R$ be the partial sum of the rotated augmented sequence ending at the same original occurrence.
	If that augmented prefix contains $m$ auxiliary vectors, then
	\begin{equation*}
		\Vec{s} + \Vec p=\Vec{s} + \Vec R+(m/n)\cdot\Vec z \ge \Vec{R} + \Vec{s} + \min(\Vec{0}, \Vec{t} - \Vec{s})= \Vec{R} + \min(\Vec{s}, \Vec{t}).
	\end{equation*}
	Using~\eqref{eq:rotated_steinitz}, we obtain $\Vec s(i)+\Vec p(i)\ge\Vec R(i)+\min(\Vec s(i),\Vec t(i))\ge0$.
	For every $j\ne i$, the endpoints instead satisfy $\min(\Vec s(j),\Vec t(j))\ge C_0$, so $\Vec s(j)+\Vec p(j)\ge-2d\norm{T}+C_0=0$.
	Hence $\rho$ is a run from $\Vec s$ to $\Vec t$.
	Every original occurrence is used exactly once, so $\Parikh{\rho}=\Parikh{\pi}$.
\end{proof}

For the union, a short prefix brings every coordinate to at least $C_0$, after which~\cref{lem:coor_ta_bound} applies.
The larger threshold $C=(\norm{T}+1)C_0$ pays for the decrease along that prefix.

\begin{proof}[Proof of~\cref{thm:coor_ta_bound}]
	Let $C_0:=2d\norm{T}$, so $C=(\norm{T}+1)C_0$.
	Consider a $\ZZ$-run $\Vec s\xrightarrow{\pi}_{\ZZ}\Vec t$ with $\Vec s\in\TA_C^i$ and $\Vec t\in\TA_C^j$.
	If $\Vec s,\Vec t\in\TA_{C_0}^j$, then~\cref{lem:coor_ta_bound} rearranges $\pi$ into the required run.

	Otherwise $\Vec s\in\TA_{C_0}^i\setminus\TA_{C_0}^j$, which implies that $i\ne j$ and $\Vec s(i)<C_0$.
	Since $\Vec t\in\TA_C^j$ and $i\ne j$, we have $\Vec t(i)\ge C$.
	For the occurrences of $\pi$ with strictly positive $i$-th component, the sum of these components is therefore at least $\Vec t(i)-\Vec s(i)\ge C_0-\Vec s(i)$.
	Choose $\alpha$ as a subsequence consisting only of such occurrences.
	Take such occurrences until that coordinate first reaches $C_0$.
	Each increases it by at least one, so $|\alpha|\le C_0-\Vec s(i)\le C_0$.
	Coordinate $i$ is nondecreasing along $\alpha$, and every other coordinate stays at least
	\begin{equation*}
		C-\norm{T}|\alpha|\ge C-\norm{T}\cdot C_0=C_0.
	\end{equation*}
	Thus $\alpha$ induces a run to some $\Vec w\ge C_0\cdot\Vec1$.

	The Parikh image $\Parikh{\pi}-\Parikh{\alpha}$ gives a $\ZZ$-run from $\Vec w$ to $\Vec t$ in $V$.
	Since $\Vec{w},\Vec{t}\in\TA_{C_0}^j$, \cref{lem:coor_ta_bound} rearranges this $\ZZ$-run into a run $\beta$.
	Then $\alpha\beta$ is a run from $\Vec s$ to $\Vec t$ with Parikh image $\Parikh{\pi}$.
	This proves that $\TA_C$ is a transformation area.
	Each $\TA_C^i$ is a subset of $\TA_C$, so it is a transformation area as well.
\end{proof}

%% file: appendix/Appendix-C-Factorization.tex
\section{Proofs for~\texorpdfstring{\cref{sec:reachability_factorization}}{Section~\ref*{sec:reachability_factorization}}}
\label{app:reachability_factorization}

The following is the formal version of \cref{thm:intro_factorization}.

\begin{restatable}{theorem}{FactorabledVAS}
\label{thm:vas_factorability}
For every fixed dimension $d\ge3$, there exists a polynomial $P_d$ such that every $d$-VAS $V$ admits a finite family $\mathcal{F}$ of sequences of relations $(R_{1},\ldots,R_{m})$ satisfying the following properties:
\begin{equation*}
    \Reach_{V}
    =
    \bigcup_{(R_{1},\ldots,R_{m})\in\mathcal{F}}
    R_{1}\circ\cdots\circ R_{m}.
\end{equation*}
Every sequence in $\mathcal{F}$ has $1\le m\le P_d(\Size(V))$.
Each factor is $R_i=\Reach_{V_i}\subseteq\Reach_V$, where $V_i$ has one of the following forms:
\begin{enumerate}
    \item $V_i=\enc{V}_{I,B}$ for some $I\subseteq[d]$ with $|I|=3$ and $B\le P_d(\Size(V))$;
    \item $V_i=\Lambda$ for some mixed LPS $\Lambda$, with $\norm{\Lambda},|\Lambda|\le P_d(\Size(V))$.
\end{enumerate}
Moreover, for every factor $R_i$ and every $(\Vec{x},\Vec{y})\in R_i$,
\begin{equation*}
    \MinLen_{V}(\Vec{x},\Vec{y}) \leq \MinLen_{V_{i}}(\Vec{x},\Vec{y}).
\end{equation*}
\end{restatable}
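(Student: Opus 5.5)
We derive \cref{thm:vas_factorability} by applying \cref{lem:encoding_factorability} three times in succession. The first step handles dimensions below two bounded coordinates. For $d\ge3$ we pad $V$ with dummy coordinates whose transition effects are zero; these coordinates stay at $0$ on every run, so they can be treated as already bounded. We then view $\Reach_V$ as $\Reach_{\enc{V'}_{I_0,0}}$, where $V'$ is $V$ with two dummy coordinates adjoined and $I_0$ indexes those two coordinates. This is a $(d+2)$-VAS encoded with $|I_0|=2$ and $B=0$. Size grows only by a constant factor, configurations correspond bijectively, and $\MinLen$ is unchanged. Applying \cref{lem:encoding_factorability} once gives a covering of $\Reach_V$ by at most $2$ factors. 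Each factor is either a mixed LPS of polynomial size, or $\enc{V'}_{I_0\cup\{j\},D_1}$ with $D_1\le P(\Size(V))$. The second kind has one genuine original coordinate bounded.

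Next we process each factor with a single genuine bounded coordinate. Say the genuine coordinate is $j_1$, one dummy coordinate is kept, and the other is dropped. The factor can then be read as an encoding with exactly two bounded coordinates: one dummy and $j_1$, with bound $D_1$. We apply \cref{lem:encoding_factorability} again with $B=D_1$. This replaces the factor by at most $2D_1+2$ factors. Each new factor is either a mixed LPS with parameters polynomial in $D_1+\Size(V)$, or an encoding with two genuine coordinates $j_1,j_2$ bounded by some $D_2\le P(D_1+\Size(V))$. A third application, to the now fully genuine pair $\{j_1,j_2\}$ in the original $d$-VAS $V$ with $B=D_2$, produces factors $\enc{V}_{\{j_1,j_2,j_3\},D_3}$ with $D_3\le P(D_2+\Size(V))$, together with further mixed LPS. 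Because the number of rounds is fixed, composing polynomials three times yields a single polynomial $P_d$. The total length of a composition is at most $2(2D_1+2)(2D_2+2)$, which is polynomial. Bounds of the form $B\le D_3$ can be uniformized by enlarging them: increasing $B$ only enlarges $\Reach_{\enc V_{I,B}}$, and the result still lies inside $\Reach_V$.

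The main obstacle is bookkeeping the relation between the partially encoded systems and $V$ itself. \cref{lem:encoding_factorability} is stated for a $(d+2)$-VAS $V$ and its own encodings. At round two, we need it for an encoding whose two bounded coordinates mix a dummy coordinate and a real one. I would handle this by always applying the lemma to the single padded VAS $V'$, with $I$ chosen as the current bounded pair, so that every factor produced is of the form $\Reach_{V'_i}\subseteq\Reach_{V'}$. This relies on the fact that the lemma's output families hold for every $I$ with $|I|=2$ and every $B$.

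To finish, we project the dummy coordinates away. Mixed LPS over $V'$ restrict to mixed LPS over $V$ once the zero dummy entries are deleted; the checking vectors still carry the value $C$ on dummy coordinates, but the dummies stay $\ge0$ only if they equal $0$, so we replace those entries by $0$, which preserves both the inclusion in $\Reach_V$ and the distance inequality. Equality $\Reach_V=\bigcup R_1\circ\cdots\circ R_m$ then follows: the first inclusion comes from iterated coverage, and the second from each $R_i\subseteq\Reach_V$ together with transitivity. Finally, the inequality $\MinLen_V\le\MinLen_{V_i}$ is inherited directly from the lemma at each level.
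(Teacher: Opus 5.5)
Your second paragraph is the paper's proof. At level $k$ the paper adds $2-k$ zero-effect dummy coordinates, encodes them together with the $k$ already bounded original coordinates, and applies \cref{lem:encoding_factorability}. It then projects the dummies away and enlarges the new bound to $B_{k+1}=P_0(B_k+N)$. This gives at most $\prod_{k=0}^{2}(2B_k+2)$ factors, which is your $2(2D_1+2)(2D_2+2)$.

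Drop the fix in your third paragraph of always applying the lemma to the doubly padded $V'$. It also contradicts your own second paragraph, where one dummy is dropped per round.
\begin{itemize}
\item If a dummy coordinate is left unencoded, the lemma only promises a new bounded coordinate $j\in[d+2]\setminus I$, and $j$ may be that dummy.
\item Its proof can actually be forced to pick the dummy. If every genuine counter becomes large along the run, the dummy is the only counter that stays at $0<H$, so no position is high and the whole run becomes a factor bounding the dummy.
\item Genuine bounded coordinates then fail to accumulate. Your last round could end with only two original coordinates bounded, which is not an allowed factor form.
\end{itemize}
Using exactly $2-k$ dummies, all encoded, is what makes every unencoded coordinate genuine and so forces $j$ to be an original coordinate.

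In that correct scheme no dummy is ever unencoded, so no checking vector carries $C$ on a dummy coordinate. Your justification for projecting mixed LPS should also be replaced by the following. Deleting the dummy entries is sound because any witness of the projected scheme from $\Vec s$ to $\Vec t$ lifts to a witness of the original scheme of the same length. Start it from $(\Vec s,c\cdot\Vec 1)$ with a constant $c\ge\norm{\Lambda}$ on the dummies; it ends at $(\Vec t,c\cdot\Vec1)$. The lemma's inclusion and distance bound then apply, and the resulting run of $V'$ projects to a run of $V$.
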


\begin{proof}
    Let $N:=\Size(V)$ and fix a nondecreasing polynomial $P_0$ sufficiently large for all applications of~\cref{lem:encoding_factorability}, including those with at most two zero-effect dummy coordinates.
    Set $B_0:=0$ and $B_{k+1}:=P_0(B_k+N)$ for $k\in\{0,1,2\}$.

    We construct finite families $\mathcal V_0,\ldots,\mathcal V_3$ of sequences of factors, starting with $\mathcal V_0:=\{(V)\}$.
    At level $k\in\{0,1,2\}$, consider a sequence $(V_1,\ldots,V_m)\in\mathcal V_k$.
    We leave its mixed-LPS factors unchanged and replace each factor of the form $V_i=\enc V_{I,B_k}$ with $|I|=k$ as follows.

    If $k<2$, add $2-k$ zero-effect dummy coordinates and encode them together with the coordinates in $I$.
    We can then apply~\cref{lem:encoding_factorability}, since exactly two coordinates are encoded.
    Projecting away the dummy coordinates yields a finite family of compositions whose factors are either mixed LPS or encodings $\enc V_{I\cup\{j\},D}$ for some original coordinate $j\notin I$ and $D\le B_{k+1}$.
    Every resulting mixed LPS has zero effect on the dummy coordinates, so this projection preserves its witnesses and their lengths.
    We enlarge $D$ to $B_{k+1}$ and replace $V_i$ by each of these compositions.

    Performing these replacements for every nonterminal factor and taking all possible choices gives $\mathcal V_{k+1}$.
    Thus every nonterminal factor in $\mathcal V_{k+1}$ has $k+1$ bounded original coordinates.
    After three levels, every factor is either a mixed LPS or an encoding with three bounded coordinates, and the recursion terminates.
    Each level-$k$ replacement introduces at most $2B_k+2$ factors.
    Hence every resulting sequence contains at most
    \[
        \prod_{k=0}^{2}(2B_k+2)
    \]
    factors, and all counter bounds and mixed-LPS parameters are at most $B_3$.
    Since the recursion has constant depth, these bounds are polynomial in $N$.

    Every reachable pair remains represented after each replacement by~\cref{lem:encoding_factorability}.
    Conversely, every resulting factor defines a subset of $\Reach_V$, and the distance inequality is preserved by the lemma and by projection of the dummy coordinates.
    Thus $\mathcal V_3$ gives a polynomial factorization.
\end{proof}

\paragraph{Effective Construction.}
\label{app:effective_factorization}
We consider candidate mixed LPSs with polynomially bounded parameters.
Each $\NN$-component may be an arbitrary LPS using transitions of $V$.
For each $\ZZ$-component, guess a $B$-encoding of $V$ with two encoded coordinates, possibly after adding at most two zero-effect dummy coordinates.
Choose a state $\Vec q\in[0,B]^2$ and an LPS $\Gamma$ from $\Vec q$ to itself in the corresponding $2B$-encoding.
Let $C$ be the transformation-area threshold supplied by~\cref{thm:generalized_transformation_area} for this encoding.
Restore the original transition effects and form the block
\[
    (\chi_{\Vec q}^C,\NN)(\Gamma,\ZZ)(\chi_{\Vec q}^C,\NN),
\]
projecting away any dummy coordinates afterward.

By~\cref{lem:mixed_lps_transformation_area}, every such block defines a subrelation of $\Reach_V$ and satisfies the distance inequality.
Thus every candidate mixed LPS is sound, while the preceding construction guarantees that these candidates include all factors needed for the factorization.
The encoded paths and cycles, checking gadgets, and polynomial parameter bounds can all be verified in polynomial time.

%% file: appendix/Appendix-D-Basic.tex
\section{Proofs for~\texorpdfstring{\cref{sec:basic_reachability_bounds}}{Section~\ref*{sec:basic_reachability_bounds}}}
\label{app:basic}

\subsection{Proof of~\texorpdfstring{\cref{thm:polynomial_witness}}{Theorem~\ref*{thm:polynomial_witness}}}
\label{app:polynomial_witness}

\lenFiveVAS*

\begin{proof}
	Let $(V,\Vec{s}, \Vec{t})$ be a reachable $5$-VAS and $M:=\Size(V,\Vec s,\Vec t)$.
	We obtain $B\le M^{\bigO(1)}$ and the two case distinctions by applying~\cref{thm:cfz} to $V$.

	For \cref{item:cfz_two_components}, assume that $V'' = (Q, T)$.
	Then $V'$ has $2|Q| \cdot (B + 1)^2$ states, at most $2|T| \cdot (B + 1)^2 + 1$ transitions, and norm at most $\norm{T}$, which implies that
	\begin{equation*}
		\Size(V') \le (2(B + 1)^2 + 1) \cdot \Size(V'').
	\end{equation*}
	\cref{thm:cfz} gives that $\MinLen_V(\Vec{s}, \Vec{t}) \le \MinLen_{V'}(s',t')$ and $B, \Size(V'') \le \Size(V, \Vec{s}, \Vec{t})^{\bigO(1)}$.
	Since
	\begin{equation*}
		B + \Size(V',s',t') \le \bigO((B+1)^2) \cdot \Size(V, \Vec{s}, \Vec{t}) ^{\bigO(1)} \le \Size(V, \Vec{s}, \Vec{t}) ^{\bigO(1)},
	\end{equation*}
	by~\cref{prop:two_component_reachability}, $\MinLen_V(\Vec{s}, \Vec{t}) \le \MinLen_{V'}(s',t') \le \Size(V, \Vec{s}, \Vec{t}) ^{\bigO(1)}$.

	In~\cref{item:cfz_encoding}, we fix the SCCs and bridges visited by a witnessing run $\pi$ and obtain $U$.
	By~\cref{prop:encoding_reachability},
	\begin{equation*}
		\MinLen_V(\Vec s,\Vec t)\le(B+M)^{\bigO(1)} \le M^{\bigO(1)}.
	\end{equation*}
	The exponents are uniform over all instances, so one constant suffices for both cases.
\end{proof}

\subsection{Proof of~\texorpdfstring{\cref{lem:proper_cone_trim}}{Lemma~\ref*{lem:proper_cone_trim}}}
\label{app:proper_cone_trim}

Geometrically, a proper cycle cone is contained in a half-space with a nonzero normal.
The existence and the description size of the separating plane are guaranteed by the following lemma, which is a part of~\cite[Lemma 4.5]{DBLP:conf/concur/ChenFZ26}.

\begin{lemma}
	\label{lem:separating_normal}
	Let $X \subseteq \mathbb{Z}^d$ be a finite set of vectors such that $\Cone(X) \ne \Span(X)$.
	There exists a nonzero vector $\Vec{n} \in \mathbb{Z}^d$ such that
	\begin{enumerate}
		\item \label{item:normal_pos} $\Inner{\Vec{n}, \Vec{x}} \ge 0$ for all $\Vec{x} \in X$;
		\item \label{item:normal_bound} $\norm{\Vec{n}} \le (d + 1)\cdot (r\norm{X})^r$, where $r := \dim(\Span(X))$.
	\end{enumerate}
\end{lemma}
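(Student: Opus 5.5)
This is a standard separating-hyperplane argument, with the bound on the normal coming from Cramer's rule.

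First reduce to the full-dimensional case. Let $L:=\Span(X)$ and $r:=\dim L$. Choose $r$ linearly independent vectors $\Vec x_1,\ldots,\Vec x_r\in X$ spanning $L$, and choose $r$ coordinates on which these vectors restrict to an invertible $r\times r$ submatrix. If $\Cone(X)\ne L$, then $\Cone(X)$ is a polyhedral cone properly contained in $L$. Its polar within $L$ is therefore nonzero, so some rational $\Vec n_0\in L$ with $\Vec n_0\ne\Vec0$ satisfies $\Inner{\Vec n_0,\Vec x}\ge0$ for all $\Vec x\in X$.

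Next pick the normal to be a vertex-type solution. By Farkas/Minkowski–Weyl, the dual cone $\{\Vec n\in L:\Inner{\Vec n,\Vec x}\ge0\ \forall \Vec x\in X\}$ is a pointed-modulo-nothing polyhedral cone in $L$. It is nonzero, so it has an extreme ray. Such a ray lies in $L$, which is cut out by $d-r$ linear equations, and it is determined by $r-1$ linearly independent tight constraints $\Inner{\Vec n,\Vec x}=0$ with $\Vec x\in X$.

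It is simpler to parametrize $\Vec n=\sum_{m=1}^r\mu_m\Vec x_m$, which is possible because the $\Vec x_m$ span $L$. The constraints then become $G\Vec\mu\ge\Vec0$, where $G_{x,m}=\Inner{\Vec x,\Vec x_m}$. An extreme ray is the kernel of an $(r-1)\times r$ subsystem of rank $r-1$. By Cramer's rule, this ray is spanned by $\Vec\mu$ whose entries are $(r-1)$-minors of $G$, up to sign. That already gives an integer $\Vec n$.

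A tighter bound comes from working directly with the coordinate projection. Restrict to the $r$ chosen coordinates, where $X$ becomes a full-rank set in $\QQ^r$. Take $\Vec n'\in\ZZ^r$ given by the generalized cross product, that is, the $(r-1)$-minors of the $r-1$ tight vectors, so $\norm{\Vec n'}\le (r-1)!\norm{X}^{r-1}\le (r\norm X)^{r}$. Then lift $\Vec n'$ back to $L$ by solving for the remaining coordinates. Absorbing the lifting denominator, which is at most the determinant of the invertible $r\times r$ minor, gives the stated bound $(d+1)(r\norm X)^r$; the factor $d+1$ pays for summing $d$ coordinates.

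The main obstacle is controlling the norm when lifting from the projected coordinates back to $L\subseteq\QQ^d$. A cleaner route avoids the lift. Write $\Vec n$ as a combination of the tight vectors' orthogonal complement inside $\Span$, via $\Vec n(j)=\det(\Vec y_1,\ldots,\Vec y_{r-1},\Vec e_j)$ restricted appropriately. Concretely, take $\Vec n(j):=\det$ of the $r\times r$ matrix formed from the $r$ chosen coordinates of $\Vec y_1,\ldots,\Vec y_{r-1}$ together with row $j$ of the basis. Then check inner-product signs, flipping sign if necessary; nonnegativity on all of $X$ follows from extremality. Each such determinant is bounded by $r!\norm X^{r}$ by Hadamard or direct expansion, giving the bound.
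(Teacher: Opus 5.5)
The paper does not prove this lemma; it quotes it from Lemma 4.5 of Chen, Fu, and Zheng, so your argument has to stand on its own.

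\textbf{What works.} Your setup is sound: a coordinate set $S$ with $|S|=r$ on which the projection $\pi$ is injective on $L=\Span(X)$, a nonzero pointed dual cone, and an extreme ray with $r-1$ linearly independent tight vectors $\Vec y_1,\ldots,\Vec y_{r-1}\in X$.

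\textbf{The gap.} The step you rely on for the bound is wrong. Lifting $\Vec n'$ into $L$ ``by solving for the remaining coordinates'' does not carry over the sign conditions. For $\Vec n\in L$ with $\pi(\Vec n)=\Vec n'$,
\[
\Inner{\Vec n,\Vec x}=\Inner{\Vec n',\pi(\Vec x)}+\sum_{j\notin S}\Vec n(j)\Vec x(j),
\]
and the last sum is uncontrolled. Take $X=\{(1,0,1),(0,1,-1)\}$ and $S=\{1,2\}$. The projected extreme normals are $(1,0)$ and $(0,1)$, and their lifts into $L$ are $(1,0,1)$ and $(0,1,-1)$. Each has inner product $-1$ with the other element of $X$, so neither sign works. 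The arithmetic also fails. Clearing a denominator of size up to $r!\norm{X}^r$ from a vector of norm $(r-1)!\norm{X}^{r-1}$ gives order $\norm{X}^{2r-1}$, not $(r\norm{X})^r$. Nothing in your argument actually produces the factor $d+1$.

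\textbf{The missing observation.} The lemma never asks for $\Vec n\in\Span(X)$. Extend $\Vec n'$ by zeros outside $S$. By multilinearity,
\[
\Inner{\Vec n,\Vec x}=\det(\pi\Vec y_1,\ldots,\pi\Vec y_{r-1},\pi\Vec x)\quad\text{for every }\Vec x\in\QQ^d.
\]
Since $\pi$ is injective on $L$, the vectors $\pi\Vec y_i$ are independent, so $\Vec n\ne\Vec 0$. The kernel of this functional on $L$ is exactly $\Span(\Vec y_1,\ldots,\Vec y_{r-1})$. That is also the kernel on $L$ of the extreme functional, so the two are proportional. After a sign flip, $\Inner{\Vec n,\Vec x}\ge0$ on $X$. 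The entries of $\Vec n$ are $(r-1)$-minors, so $\norm{\Vec n}\le(r-1)!\norm{X}^{r-1}\le(r\norm{X})^r$, and the factor $d+1$ is slack.

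\textbf{Your final paragraph.} Your ``cleaner route'' is exactly this construction if ``row $j$ of the basis'' means $\Vec e_j$ restricted to $S$, so that $\Vec n(j)=0$ for $j\notin S$. But then $\Vec n$ is not ``the orthogonal complement inside $\Span(X)$'', as you describe it. The determinant identity above is what actually transfers the signs, and it must be stated rather than attributed to extremality alone. Drop the lifting paragraph and write the final one this way.
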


\begin{proof}[Proof of~\cref{lem:proper_cone_trim}]
	Assume that $U=(Q,T)$, $s=p(\Vec x)$, and $t=q(\Vec y)$.
	If $\dimcyc(U)\le2$, take $\widetilde U=U$ with unchanged endpoints.
	Otherwise $\CycleSpace(U)=\QQ^3$.
	Let $X := \{\Eff{\theta} : \theta \text{ is a simple cycle in } U\}$.
	Thus $X$ is finite and $\Cone(X)\ne\Span(X)$.
	By~\cref{lem:separating_normal}, there exists a nonzero integer normal $\Vec h$ satisfying
	\begin{equation*}
		\Inner{\Vec h,\Eff{\theta}}\ge0\text{ for every simple cycle }\theta,
		\qquad \norm{\Vec h}\le 4(3|Q|\cdot \norm{T})^3.
	\end{equation*}
	Let $r(\Vec z)$ occur on a run from $p(\Vec x)$ to $q(\Vec y)$.
	We delete simple cycles separately from its prefix and suffix.
	Then each remaining path has length at most $|Q|-1$, while every deleted cycle has nonnegative inner product with $\Vec h$.
	Consequently, with $D:=(|Q|-1)\norm{T} \cdot \norm{\Vec h}_1$,
	\begin{equation*}
		\Inner{\Vec h,\Vec x}-D
		\le\Inner{\Vec h,\Vec z}
		\le\Inner{\Vec h,\Vec y}+D.
	\end{equation*}
	In particular, all these inner products lie in $[-H,H]$, where
	\begin{equation*}
		H:=(\norm{\Vec{x}} + \norm{\Vec{y}} + (|Q|-1)\norm{T})\cdot\norm{\Vec h}_1
		\le \Size(U,s,t)^{\bigO(1)}.
	\end{equation*}

	Define $\widetilde U$ with states $(r,b)\in Q \times[-H,H]$ and transitions
	\begin{equation*}
		\bigl((r,b),\Vec a,(r',b+\Inner{\Vec h,\Vec a})\bigr)
	\end{equation*}
	whenever $(r,\Vec a,r')\in T$ and both states belong to $Q \times [-H,H]$.
	Take $\widetilde s=(p,\Inner{\Vec h,\Vec x})(\Vec x)$ and $\widetilde t=(q,\Inner{\Vec h,\Vec y})(\Vec y)$.
	The bound above lifts every original run from $s$ to $t$ to $\widetilde U$, and projection gives the converse, preserving lengths.
	Every cycle of $\widetilde U$ returns its recorded inner product to the same value, so its effect lies in $\Vec h^\perp := \{\Vec x \in \QQ^{3} : \Inner{\Vec h, \Vec x} = 0\}$.
	Thus $\dimcyc(\widetilde U)\le2$.
	There are at most $|Q|\cdot (2H+1)$ states and $|T|\cdot (2H+1)$ transitions, with unchanged transition vectors and endpoint norms, proving the polynomial size bound.
\end{proof}

\subsection{Proof of~\texorpdfstring{\cref{lem:short_integer_vass}}{Lemma~\ref*{lem:short_integer_vass}}}
\label{app:short_integer_vass}

\shortIntVASS*

\begin{proof}
	We fix a witnessing path $\pi$ and mark the first occurrence of every visited state.
	The at most $|Q|$ marks divide $\pi$ into at most $|Q|$ segments, including the suffix after the last mark.
	Delete simple cycles from each segment until a simple path remains.
	The concatenation $\eta$ of the remaining paths has length at most $|Q|^2$ and still visits every state visited by $\pi$, since every marked occurrence remains as an endpoint of one segment.
	Each deleted simple cycle has length at most $|Q|$ and effect of norm at most $K:=|Q|\cdot \norm{T}$.

	There are at most $(2K+1)^d$ distinct effects among the deleted cycles.
	Choose one representative cycle for each such effect, and let $\Vec A$ be the matrix of these effects.
	The original multiplicities give a nonnegative integer solution of $\Vec A\Vec z=\Vec b$, where $\Vec b:=\Vec y-\Vec x-\Eff(\eta)$.
	By~\cref{thm:pottier}, a minimal solution of this system satisfies
	\begin{equation*}
		\norm{\Vec z}_1\le
		\bigl(1+(2K+1)^dK+\norm{\Vec{x}}+\norm{\Vec{y}}+|Q|^2\norm{T}\bigr)^d \le (\Size(U) + \norm{\Vec{x}} + \norm{\Vec{y}})^{\bigO(d^2)}.
	\end{equation*}
	Here the augmented matrix $\begin{pmatrix} \Vec A & \Vec b \end{pmatrix}$ has rank at most $d$.
	If $\Vec b=\Vec0$, the zero solution suffices.

	Since every state visited by $\pi$ still appears in $\eta$, each representative cycle can be placed at a state visited by $\eta$.
	Using the multiplicities given by $\Vec z$, we obtain a $\ZZ$-run of total effect $\Vec y-\Vec x$.
	Its length is at most $|\eta|+|Q|\cdot \norm{\Vec z}_1\le (\Size(U) + \norm{\Vec{x}} + \norm{\Vec{y}})^{\PolyF{d}}$.
\end{proof}

\subsection{Proof of~\texorpdfstring{\cref{lem:joining_coverable_hybrids}}{Lemma~\ref*{lem:joining_coverable_hybrids}}}
\label{app:joining_coverable_hybrids}

\joinCoverableHybrid*

\begin{proof}
	Write $H_i=\Vec b_i+P_i^{\langle\Vec c_i,\cdot\rangle\le K_i}$ for $i=1,2$, write $\Vec c_i(\Vec a)$ for the cost of $\Vec a\in P_i$, and fix representations $\Vec x=\Vec b_1+\sum_{\Vec a\in P_1}\lambda^*_{\Vec a}\cdot\Vec a$ and $\Vec y=\Vec b_2+\sum_{\Vec a\in P_2}\mu^*_{\Vec a}\cdot\Vec a$ satisfying $K_1$ and $K_2$, respectively.
	Consider the augmented VASS $U[P_1,P_2]$.
	The fixed representations and the hypothesis give $\Vec p(\Vec b_1)\xrightarrow{*}\Vec q(\Vec b_2)$ in $U[P_1,P_2]$.
	Moreover, the two bases are respectively forward and backward $C$-coverable in $U[P_1,P_2]$.
	By \cref{cor:threshold_closure}, $C$ remains a transformation-area threshold after adjoining these periods.
	Hence \cref{cor:coverable_case} gives a run $\rho$ in the augmented VAS from $(\Vec p,\Vec b_1)$ to $(\Vec q,\Vec b_2)$ of length $L\le M^{\bigO(1)}$, where $M:=B+C+\Size(V)+\hnorm(H_1)+\hnorm(H_2)$.

	For $\Vec a\in P_1$, let $\lambda_{\Vec a}$ be the number of occurrences of $+\Vec a$ in $\rho$, and for $\Vec a\in P_2$, let $\mu_{\Vec a}$ be the number of occurrences of $-\Vec a$.
	Let
	\begin{equation*}
		\Vec x':=\Vec b_1 + \sum_{\Vec a\in P_1}\lambda_{\Vec a}\cdot\Vec a,
		\qquad
		\Vec y':=\Vec b_2 + \sum_{\Vec a\in P_2}\mu_{\Vec a}\cdot\Vec a.
	\end{equation*}
	Delete all added transitions from $\rho$ and start the remaining path at $\Vec x'$.
	At every position of the resulting path, its vector is the corresponding vector of $\rho$ plus the sum of the unused positive periods and the absolute effects of the negative periods already removed.
	Hence the resulting path is a run in $V$ from $(\Vec p,\Vec x')$ to $(\Vec q,\Vec y')$, of length at most $L$.

	It remains to verify the cost bounds.
	Since $\rho$ has length at most $L$, the total cost of the added period occurrences on side $i$ is at most $\norm{\Vec c_i}L$.
	Hence, if $K_i\ge\norm{\Vec c_i}L$, the corresponding vector already belongs to $H_i$.

	If $K_i<\norm{\Vec c_i}L$, apply~\cref{lem:restricted_period_absorption} to the original member $\Vec x$ or $\Vec y$ on side $i$.
	This gives a subset $\widehat {H}_i$ of $H_i$ containing that member, with only free periods and base norm at most $\norm{\Vec b_i}+K_i\norm{P_i}\le M^{O(1)}$.
	Thus the original connected pair $(\Vec p(\Vec x), \Vec q(\Vec y))$ remains available.
	Since the base increases and the free periods are unchanged, forward or backward $C$-coverability is preserved by monotonicity.
	We repeat the construction with the new hybrid sets.
	Each side can be absorbed at most once, so at most two repetitions are needed.
	The final vectors belong to the original hybrid sets $H_1,H_2$, have norm at most $M^{\bigO(1)}$, and are connected in $V$ by a run of length at most $M^{\bigO(1)}$.
\end{proof}

%% file: appendix/Appendix-E-Induction.tex
\section{Proofs for~\texorpdfstring{\cref{sec:semi_coverable}}{Section~\ref*{sec:semi_coverable}}}
\label{app:semi_coverable_proofs}

\subsection{Proof of~\texorpdfstring{\cref{lem:bounded_prefix}}{Lemma~\ref*{lem:bounded_prefix}}}
\label{app:bounded_prefix}

\boundedPrefix*

\begin{proof}
	Assume that $U=(Q,T)$ with $n:=|Q|$.
	Let $D:=D_0+\norm{T}$, where
	\begin{equation*}
        L_c:=\max\{2,n,\norm{T},C\}^{c_{\mathrm{cov}}},\qquad
        D_0:=\max\{(C+n\norm T+\Size(U))^{c_{\mathrm{ext}}},C+L_c\norm{T}+1\}.
	\end{equation*}
	Here $c_{\mathrm{cov}},c_{\mathrm{ext}}\ge1$ are fixed integer constants from~\cref{lem:rackoff_cover,lem:coordinate_extraction} for $d=3$.
	Write the witness as $\pi=\xi_1u_1\xi_2\cdots u_{k-1}\xi_k$, where $\xi_j$ stays in $U_j$.
	Some coordinate remains below $D_0$ along $\xi_1$.
	Otherwise,~\cref{lem:coordinate_extraction} gives a run in $U_1$ from $p(\Vec x)$ to a configuration $q'(\Vec w)$ with $\Vec w\ge (C+n\norm{T})\cdot\Vec1$.
	A simple path from $q'$ back to $p$ has length at most $n-1$ and decreases each coordinate by at most $(n-1)\norm{T}$.
	It is fireable at $\Vec w$ and returns with at least $C\cdot\Vec 1$, contradicting the hypothesis.

	Choose $i\in\BDCounters(\xi_1,D_0)$ and put
	\begin{equation*}
		b:=\min\{j\in\{2,\ldots,k\}:i\notin\BDCounters(\xi_j,D_0)\},
	\end{equation*}
	with $b:=k+1$ if the set is empty.
	If $b\le k$, set $\pi_0:=\xi_1u_1\cdots\xi_{b-1}u_{b-1}$.
	Otherwise, set $\pi_0:=\pi$.
	Every included component segment keeps coordinate $i$ below $D_0$, including $\xi_1$.
	Across each included bridge, its value satisfies
	\begin{equation*}
		\Vec w'(i)=\Vec w(i)+\Eff(u_j)(i)
		<D_0+\norm T=D.
	\end{equation*}
	This proves that $i\in\BDCounters(\pi_0,D)$.

	Suppose $b\le k$, and let $r(\Vec z)$ be the entry of $U_b$.
	There is a prefix $\eta$ of $\xi_b$ reaching $q'(\Vec w)$ with $\Vec w(i)\ge D_0$.
	Choose a simple path $\zeta$ from $q'$ to $r$ in $U_b$.
	Coordinate $i$ is nonnegative along $\eta$, and every prefix $\zeta'$ of $\zeta$ satisfies
	\begin{equation*}
		\Vec w(i)+\Eff(\zeta')(i)
		\ge D_0-(n-1)\norm{T}
		\ge C.
	\end{equation*}
	Projecting $\rho:=\eta\zeta$ onto coordinate $i$ gives a run from $r(\Vec z(i))$ to $r(v)$ with $v\ge C$.
	By~\cref{lem:rackoff_cover}, there is such a covering run of length less than $L_c$.
	Now fix any $\Vec w$ satisfying the conditions in the second outcome.
	Lift the shortened run to a $\ZZ$-run $\rho'$ in $U_b$ from $r(\Vec w)$ to $r(\Vec v)$.
	Since $\Vec w(i)\ge\Vec z(i)$, coordinate $i$ is nonnegative along $\rho'$ and $\Vec v(i)\ge C$.
	For every $j\ne i$ and every prefix $\eta'$ of $\rho'$, we have $\Vec w(j)+\Eff(\eta')(j)\ge D-L_c\norm{T}>C$.
	Therefore, $\rho'$ is a run and $r(\Vec w)$ is forward $C$-coverable in $U_b$.

	Finally, encoding coordinate $i$ in $\{0,\ldots,D-1\}$ gives the $2$-VASS $\enc{U}_{\{i\},D-1}$ with
	\begin{equation*}
		\Size(\enc{U}_{\{i\},D-1})\le D\Size(U).
	\end{equation*}
	Projection and lifting preserve run lengths and represent exactly the runs keeping coordinate $i$ below $D$.
	The construction remains valid with any larger $D$ if $D_0$ is replaced by $D-\norm T$.
\end{proof}

\subsection{Proof of~\texorpdfstring{\cref{lem:alternating_boundedness}}{Lemma~\ref*{lem:alternating_boundedness}}}
\label{app:alternating_boundedness}

\alternatingBoundedness*

\begin{proof}
	Write $U=(Q,T)$ and put $I:=[3]\setminus\Supp(\Vec a_0)$ and $n:=|Q|$.
	Choose $D$ by the formula in the proof of~\cref{lem:bounded_prefix}, using the parameters of $U[\{\Vec a_0\}]$ and the threshold $C$.
	These parameters bound those of every suffix augmented with the same loop at its source, so the same $D$ works throughout and satisfies~\eqref{eq:alternating_threshold_bound}.
	Let $r_*(\Vec z_*)$ be the first component entry on $\pi$ that is forward $C$-coverable after adjoining this loop there.
	If no such entry exists, take $r_*(\Vec z_*):=t$.
	Let $\pi_*$ be the original prefix ending at this position.
	If $\pi_*=\varepsilon$, its empty partition proves the claim.
	In particular, this handles $I=\varnothing$, since repetitions of the added loop already give forward $C$-coverability at the source.

	We divide $\pi_*$ into bounded prefixes by repeated applications of~\cref{lem:bounded_prefix}.
	Suppose the unprocessed suffix starts at an original component entry $p_j(\Vec x_j)$ before $r_*(\Vec z_*)$.
	By its choice, this entry is not forward $C$-coverable in its component augmented with a self-loop of effect $\Vec a_0$.
	A run returning to this entry cannot cross a bridge, so the entry is also not forward $C$-coverable in the augmented suffix.
	Execute the loop $D$ times and replay the unprocessed part of $\pi_*$ translated by $D\cdot\Vec a_0$.
	By monotonicity this is a run from $p_j(\Vec x_j)$, including the initial loops, and every coordinate in $\Supp(\Vec a_0)$ reaches at least $D$ within its first component.
	By the construction in the proof of~\cref{lem:bounded_prefix}, the extracted prefix contains all the initial loop occurrences.
	Its bounded coordinate therefore lies in $I$.
	Removing these loops and undoing the translation by $D\cdot\Vec a_0$ gives a prefix $\xi$ of the original run with
	\[
		\BDCounters(\xi,D)\cap I\ne\emptyset.
	\]
	If $\xi$ stops at a later entry before $r_*(\Vec z_*)$, apply the same construction to the remaining suffix.
	Continuing in this way gives a partition
	\begin{equation*}
		\pi_*=\xi_1\xi_2\cdots\xi_m
	\end{equation*}
	such that
	\begin{equation}\label{eq:pump_unsupported_bounded}
		\BDCounters(\xi_j,D)\cap I\ne\varnothing
		\qquad\text{for every }j\in[m].
	\end{equation}
	The same threshold $D$ is used throughout, and each nonterminal application passes at least one bridge, so $m\le k$.
	Repeatedly merge adjacent segments that share a coordinate in $I$ remaining below $D$.
	Since
	\[
		\BDCounters(\alpha\beta,D)\cap I
		=
		\bigl(\BDCounters(\alpha,D)\cap I\bigr)
		\cap
		\bigl(\BDCounters(\beta,D)\cap I\bigr),
	\]
	each merged segment still satisfies~\eqref{eq:pump_unsupported_bounded}.
	When no further merge is possible, the resulting partition satisfies~\eqref{eq:d_alternating}.
\end{proof}

\subsection{Proof of~\texorpdfstring{\cref{lem:hill_cutting}}{Lemma~\ref*{lem:hill_cutting}}}
\label{app:hill_cutting}

\hillCutting*

\begin{proof}
	By taking a prefix, we assume w.l.o.g. that $q(\Vec y)$ is the first configuration where coordinate $i$ has value at least $D'$.
	Let $r_0(\Vec u_0),\ldots,r_s(\Vec u_s)$ be the successive strict records of coordinate $i$, including $r_0(\Vec u_0):=p(\Vec x)$ and $r_s(\Vec u_s):=q(\Vec y)$.
	Each record exceeds the preceding one by at most $\norm T$, so
	\[
		s\norm T\ge\Vec u_s(i)-\Vec x(i)\ge D|Q|\norm T+1,
		\qquad s\ge D|Q|+1.
	\]
	Here $\norm T\ge1$, since coordinate $i$ increases along $\pi$.
	The first $s$ records precede the crossing, and their state and $j$-coordinate pairs take at most $D|Q|$ values.
	Hence there are $0\le a<b<s$ with
	\[
		r_a=r_b,
		\qquad \Vec u_a(j)=\Vec u_b(j),
		\qquad \Vec u_a(i)<\Vec u_b(i).
	\]
	The intervening sub-run keeps coordinate $i$ below $D'$ and coordinate $j$ below $D$.
	Encode these two coordinates into states, obtaining a $1$-VASS with at most $|Q|DD'$ states and transition norm at most $\norm T$.
	The sub-run witnesses coverability of $\widehat r_b(0)$ from $\widehat r_a(\Vec u_a(k))$, where $\{k\}=[3]\setminus\{i,j\}$ and $\widehat r_c:=(r_c,\Vec u_c(i),\Vec u_c(j))$ for $c\in\{a,b\}$.
	By~\cref{lem:rackoff_cover}, there is such a covering run of length
	\[
		\max\{2,|Q|DD',\norm T\}^{\bigO(1)}
		\le(D+\Size(U)+\Vec x(i))^{\bigO(1)}.
	\]
	Lifting this run to $U$ gives an enabled cycle $\theta$ at $r_a(\Vec u_a)$, since $r_a=r_b$.
	Its encoded endpoints guarantee $\Eff(\theta)(i)=\Vec u_b(i)-\Vec u_a(i)>0$ and $\Eff(\theta)(j)=0$, as required.
\end{proof}

\subsection{Proof of~\texorpdfstring{\cref{lem:pump_extraction}}{Lemma~\ref*{lem:pump_extraction}}}
\label{app:pump_extraction}

\pumpExtraction*

\begin{proof}
	Let $M:=C+\Size(U,p(\Vec x))$.
    Pumpability gives a coordinate $i$ for which $p(\Vec x+\Vec e_i)$ is coverable from $p(\Vec x)$.
	By~\cref{lem:rackoff_cover}, there is a cycle $\theta_0$ fireable from $p(\Vec x)$ such that
    \[
        \Vec a_0:=\Eff(\theta_0)\in\NN^3\setminus\{\Vec0\},
        \qquad
        |\theta_0|,\norm{\Vec a_0}\le M^{O(1)}.
    \]
    Permute coordinates so that $\Vec a_0(3)>0$, and put $I:=[3]\setminus\Supp(\Vec a_0)\subseteq\{1,2\}$.

	Apply~\cref{lem:alternating_boundedness} to obtain a partition $\pi=\pi_*\sigma_*$, where $\pi_*$ ends at $r_*(\Vec z_*)$ and is $D$-alternating on $I$ for some $D\le M^{O(1)}$.
    If $\sigma_*$ is nonempty, its starting configuration is forward $C$-coverable in its component after adjoining
    $\Vec a_0$ as a self-loop.
    Since $|I|\le2$, the prefix $\pi_*$ is also $D$-alternating on $\{1,2\}$.
    Apply~\cref{lem:hill_cuts} to $\pi_*$ and enlarge $D_1$ if necessary so that $D_1>\max\{D,\norm{\Vec x}\}$.
	Then $D_1,L\le M^{\bigO(1)}$.
	We distinguish the two outcomes of~\cref{lem:hill_cuts}.

	Suppose first that $j\in\BDCounters(\pi_*,D_1)$ for some $j\in[3]$.
	Encode this coordinate into states, obtaining the $2$-VASS $\widetilde U:=\enc U_{\{j\},D_1-1}$ of size at most $D_1\Size(U)\le M^{\bigO(1)}$.
	Apply~\cref{thm:hybrid_description} with the encoded singleton source, and choose a hybrid set at state $(r_*,\Vec z_*(j))$ containing the projection of $\Vec z_*$.
	Inserting $\Vec z_*(j)$ into its base and zeros into its periods gives a hybrid set $G$ containing $\Vec z_*$, with $\hnorm(G)\le M^{\bigO(1)}$.
	For each $\Vec h\in G$, apply~\cref{lem:hybrid_source} in $\widetilde U$ with the same singleton source and the encoded $r_*(\Vec h)$ as target.
	The resulting short run lifts to $U$.
	Thus every $\Vec h\in G$ has a run from $p(\Vec x)$ to $r_*(\Vec h)$ in $U$ of length at most $(M+\norm{\Vec h})^{\bigO(1)}$.
	If $r_*(\Vec z_*)=t$, take $\tau:=\pi$, $\sigma:=\varepsilon$, and $H:=G$, which proves the lemma in this case.
	Otherwise, $r_*(\Vec z_*)$ is a component entry that is forward $C$-coverable after adjoining the $\Vec a_0$-loop there.
	Adjoin $\Vec a_0$ as a free period and write
	\begin{equation*}
		G^+:=G+\Vec a_0^\ast=\Vec b+P^{\langle\Vec c,\cdot\rangle\le K},
		\qquad \Vec z_*\in G^+.
	\end{equation*}
	For every $\Vec h\in G$ and $k\in\NN$, reach $\Vec h+k\cdot\Vec a_0\in G^+$ by pumping $\theta_0^k$ at the source and translating a witness reaching $\Vec h$.
	Since $k\le\norm{\Vec h+k\cdot\Vec a_0}$, this again yields polynomial-length witnesses.

	\begin{claim}\label{clm:pump_coverable_refinement}
		There is a hybrid set $H$ with $\Vec z_*\in H\subseteq G^+$ and $\hnorm(H)\le M^{\bigO(1)}$ that is forward $C$-coverable in the component containing $r_*$.
	\end{claim}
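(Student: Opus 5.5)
The plan is to obtain $H$ from $G^+$ by absorbing a \emph{bounded} number of occurrences of each period of $\Vec z_*$ into the base. We do not absorb all restricted periods, since that would cost a factor of $K$ in the norm. The only input is the forward $C$-coverability of $r_*(\Vec z_*)$ in $U_*[\{\Vec a_0\}]$, where $U_*$ is the component containing $r_*$.

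First, apply \cref{lem:rackoff_cover} to $U_*[\{\Vec a_0\}]$. This gives a covering run $\gamma$ from $r_*(\Vec z_*)$ to some $r_*(\Vec u)$ with $\Vec u\ge C\cdot\Vec1$ and $|\gamma|<L_{\mathrm{cov}}\le M^{\bigO(1)}$. Its transition norm is at most $\max\{\norm T,\norm{\Vec a_0}\}\le M^{\bigO(1)}$, so $\gamma$ changes every coordinate by at most some $E\le M^{\bigO(1)}$ along any prefix. Set $N:=C+E+1$.

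Next, fix a representation $\Vec z_*=\Vec b+\sum_{\Vec p\in P}\lambda_{\Vec p}\cdot\Vec p$ respecting the cost bound $K$, and define
\[
	\Vec b':=\Vec b+\sum_{\Vec p\in P}\min\{\lambda_{\Vec p},N\}\cdot\Vec p,
	\qquad
	K':=K-\sum_{\Vec p\in P}\Vec c(\Vec p)\min\{\lambda_{\Vec p},N\},
	\qquad
	H:=\Vec b'+P^{\langle\Vec c,\cdot\rangle\le K'}.
\]
Then $K'\ge0$ and $\Vec z_*\in H\subseteq G^+$. The free periods of $H$ include $\Vec a_0$. For the norm, we have $\hnorm(H)\le(1+N|P|)\hnorm(G^+)$. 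Moreover $|P|$ is polynomial, since $P$ consists of distinct vectors in $\NN^3$ of norm at most $\hnorm(G^+)\le M^{\bigO(1)}$. Hence $\hnorm(H)\le M^{\bigO(1)}$.

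It remains to prove forward $C$-coverability of $H$, that is, to show that $r_*(\Vec b')$ is forward $C$-coverable in $U_*[P_0]$, where $P_0\ni\Vec a_0$ are the free periods. We replay $\gamma$ from $\Vec b'$, which is a legal path because $\Vec a_0$-loops are available in $U_*[P_0]$. Split the coordinates $h\in[3]$ into two cases.
\begin{itemize}
	\item If some period $\Vec p$ with $\lambda_{\Vec p}>N$ has $\Vec p(h)\ge1$, then $\Vec b'(h)\ge N\ge C+E+1$. Coordinate $h$ then stays nonnegative along the replay and ends at least $C$.
	\item Otherwise every period with nonzero $h$-entry was fully absorbed, so $\Vec b'(h)=\Vec z_*(h)$. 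Coordinate $h$ then follows exactly the values of $\gamma$, and so ends at $\Vec u(h)\ge C$.
\end{itemize}
Thus the replay is a run ending in $C\cdot\Vec1+\NN^3$ at $r_*$, and it stays in $U_*$. This proves the claim.

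The only delicate point is choosing $N$ independently of $K$, so that the norm stays polynomial while every coordinate not frozen at its $\Vec z_*$-value gains enough reserve to absorb the drop of $\gamma$. The step most likely to need care is bounding $|P|$ and $E$ polynomially via \cref{thm:hybrid_description} and \cref{lem:rackoff_cover}.
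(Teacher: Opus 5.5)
Your proof is correct and follows essentially the paper's route. Like the paper, you fix a representation of $\Vec z_*$ in $G^+$, absorb a polynomially bounded number of its period occurrences into the base while keeping $\Vec a_0$ free, and replay the Rackoff covering run from the new base. The only difference is the absorption rule: you cap every multiplicity at $N=C+E+1$ and argue coordinatewise (each coordinate is either saturated or reproduced exactly), at the cost of a factor $|P|$ in the base norm. The paper instead greedily absorbs occurrences until the base dominates $\Vec v=\max\{\Drop(\rho_{\mathrm{cov}}),C\cdot\Vec1-\Eff(\rho_{\mathrm{cov}}),\Vec0\}$, which uses at most $3\norm{\Vec v}$ occurrences independently of $|P|$.
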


	\begin{claimproof}
	By~\cref{lem:rackoff_cover}, choose a run $\rho_{\mathrm{cov}}$ of length $M^{\bigO(1)}$ from $r_*(\Vec z_*)$ covering $r_*(C\cdot\Vec1)$ in this component augmented with the $\Vec a_0$-loop.
	Let $\Vec v:=\max\{\Drop(\rho_{\mathrm{cov}}),C\cdot\Vec1-\Eff(\rho_{\mathrm{cov}}),\Vec0\}$, so $\Vec z_*\ge\Vec v$ and $\norm{\Vec v}\le M^{\bigO(1)}$.
	Choose a representation $\Vec z_*=\Vec b+P\Vec\eta$ in $G^+$ with $\langle\Vec c,\Vec\eta\rangle\le K$.
	Starting from $\Vec b$, add unused period occurrences from this representation that increase a deficient coordinate until the resulting base dominates $\Vec v$.
	Each addition decreases the total integer deficit, giving $\Vec0\le\Vec\mu\le\Vec\eta$ with $\norm{\Vec\mu}_1\le3\norm{\Vec v}$ and $\Vec b':=\Vec b+P\Vec\mu\ge\Vec v$.
	Set
	\begin{equation*}
		H:=\Vec b'+P^{\langle\Vec c,\cdot\rangle\le K-\langle\Vec c,\Vec\mu\rangle}.
	\end{equation*}
	Then $\Vec z_*\in H\subseteq G^+$ and $\hnorm(H)\le M^{\bigO(1)}$.
	Since $\Vec a_0$ is a free period of $H$ and $\Vec b'\ge\Vec v$, the same run $\rho_{\mathrm{cov}}$ witnesses forward $C$-coverability of $H$.
	\end{claimproof}

	Take $H$ from~\cref{clm:pump_coverable_refinement}, $\tau:=\pi_*$, and let $\sigma$ be the original remaining suffix.
	The witnesses for $G^+$ give the required length bound for every member of $H$.

	Otherwise, $\BDCounters(\pi_*,D_1)=\emptyset$.
	Since $D$-alternation on $I$ with $|I|\le1$ implies that some coordinate stays below $D_1$ throughout $\pi_*$, we must have $I=\{1,2\}$ and $\Vec a_0=(0,0,a)$ for some $a\ge1$.
	\cref{lem:hill_cuts} supplies two ordered cycles $\theta_1,\theta_2$ and a partition $\pi_*=\rho_1\rho_2\rho$.
	Apply~\cref{lem:two_cycle_capture} with common cycle-length bound $\max\{|\theta_0|,L\}\le M^{\bigO(1)}$.
	The resulting $U^{(q_1,q_2)}$ and period $\Vec a_+\ge\Vec1$ satisfy $\Size(U^{(q_1,q_2)}),\norm{\Vec a_+}\le M^{\bigO(1)}$.
	Its first property gives a run to $f(\Vec z_2)$ on which coordinate $2$ remains below $D_1$.
	Encode this coordinate into states, obtaining $\widetilde U^{(q_1,q_2)}:=\enc{U^{(q_1,q_2)}}_{\{2\},D_1-1}$.
	Its size satisfies
	\[
		\Size(\widetilde U^{(q_1,q_2)})\le D_1\Size(U^{(q_1,q_2)})\le M^{\bigO(1)}.
	\]
	As above, apply~\cref{thm:hybrid_description} with the encoded singleton source and choose a hybrid set at state $(f,\Vec z_2(2))$ containing the projection of $\Vec z_2$.
	Recovering the fixed second coordinate gives a hybrid set $G$ with
	\begin{equation*}
		\Vec z_2\in G\subseteq\Reach^f_{U^{(q_1,q_2)}}(p^{(0)}(\Vec x)),
		\qquad \hnorm(G)\le M^{\bigO(1)}.
	\end{equation*}
	For each $\Vec h\in G$, \cref{lem:hybrid_source} in $\widetilde U^{(q_1,q_2)}$, with the same singleton source and the encoded $f(\Vec h)$ as target, gives a run to $f(\Vec h)$ in $U^{(q_1,q_2)}$ of length at most $(M+\norm{\Vec h})^{\bigO(1)}$.

	Set $H:=G+\Vec a_+^\ast$, assigning the added period cost zero.
	Then $\Vec z_2\in H$ and $\hnorm(H)\le M^{\bigO(1)}$.
	For $\Vec v=\Vec h+k\cdot\Vec a_+\in H$, we have $\Vec h\le\Vec v$ and $k\le\norm{\Vec v}$, so~\cref{lem:two_cycle_capture} gives
	\begin{equation*}
		\MinLen_U\bigl(p(\Vec x),q_2(\Vec v)\bigr)\le(M+\norm{\Vec h})^{\bigO(1)}+kM^{\bigO(1)}\le(M+\norm{\Vec v})^{\bigO(1)}.
	\end{equation*}
	
	Take $\tau:=\rho_1\rho_2$ and $\sigma:=\rho\sigma_*$, so that $\pi=\tau\sigma$ with the boundary configuration $q_2(\Vec z_2)$.
	Since $\Vec a_+\ge\Vec1$ is a free period of $H$, its base is forward $C$-coverable in the component containing $q_2$ after adjoining the corresponding self-loop.
	This completes the proof.
\end{proof}

\subsection{Proof of~\texorpdfstring{\cref{lem:free_covered}}{Lemma~\ref*{lem:free_covered}}}
\label{app:free_covered}

\freeCovered*

\begin{proof}
	Write $H=\Vec b+P^{\langle\Vec c,\cdot\rangle\le K}$ and let $M:=C+B+\Size(V,\Vec t)+\hnorm(H)$.
	Consider $U[P]$ with specified source state $\Vec r$.
	Let $V^+$ be the original $5$-VAS with the lifted periods $(0,0,\Vec a)$, $\Vec a\in P$, adjoined as transitions.
	Note that $U[P]$ is a sequential sub-VASS of $\enc{V^+}_B$.
	Clearly, $\Vec r(\Vec b)\xrightarrow{*}\Vec q(\Vec y)$ in $U[P]$, since some member of $H$ reaches this target in $U$.
	A nonzero free period of $H$ makes $\Vec r(\Vec b)$ pumpable in $U[P]$.
	The original backward $C$-covering run of the target is still available.
	By~\cref{cor:threshold_closure}, $C$ remains a transformation-area threshold after adjoining the periods to $V$.
	Since $\Size(U)\le(B+1)^2\Size(V)$ and $|P|\le(\norm P+1)^3$, both $U[P]$ and $V^+$ have size polynomial in $M$.
	By~\cref{lem:pumpable_coverable_target}, there is a run in $V^+$
	\begin{equation*}
		(\Vec r,\Vec b)\xrightarrow{\rho}\Vec t,
		\qquad |\rho|\le L:=F_0(M)
	\end{equation*}
	for a fixed polynomial $F_0$.

	Remove the added periods as in the proof of~\cref{lem:joining_coverable_hybrids}, with singleton target $\{\Vec y\}$.
	This gives a run $\rho'$ in $V$ with
	\begin{equation}\label{eq:free_deleted_run_bound}
		(\Vec r,\Vec h)\xrightarrow{\rho'}\Vec t,
		\qquad |\rho'|\le L,
		\qquad \norm{\Vec h}\le\norm{\Vec b}+L\norm P\le M^{\bigO(1)}.
	\end{equation}
	If $K\ge\norm{\Vec c}L$, the same cost estimate gives $\Vec h\in H$.
	If $K<\norm{\Vec c}L$, choose $\Vec z\in H$ such that $\Vec r(\Vec z)\xrightarrow{*}\Vec q(\Vec y)$ in $U$, and let $P_0:=\{\Vec a\in P:\Vec c(\Vec a)=0\}$.
	Applying~\cref{lem:restricted_period_absorption} to a representation of $\Vec z$ in $H$ gives
	\begin{equation*}
		\Vec z\in\widehat H:=\widehat{\Vec b}+P_0^*\subseteq H,
		\qquad \norm{\widehat{\Vec b}}\le M^{\bigO(1)}.
	\end{equation*}
	Since $\Vec z\in\widehat H$ and $P_0$ contains a nonzero period, $\Vec r(\widehat{\Vec b})$ reaches $\Vec q(\Vec y)$ and is pumpable in $U[P_0]$.
	The target remains backward $C$-coverable, so reapply~\cref{lem:pumpable_coverable_target} and remove the added periods as in~\eqref{eq:free_deleted_run_bound}.
	All remaining periods are free, so the selected member belongs to $\widehat H\subseteq H$, with its norm and distance to $\Vec t$ bounded by $M^{\bigO(1)}$.
\end{proof}

\subsection{Proof of~\texorpdfstring{\cref{clm:translated_period_removal}}{Claim~\ref*{clm:translated_period_removal}}}
\label{app:translated_period_removal}

\translatedPeriodRemoval*

\begin{claimproof}
	Let $\lambda_{\Vec d}$ and $\mu_{\Vec d'}$ count the added positive and negative period occurrences in $\rho$, and define
	\begin{equation*}
		\Vec\xi:=\sum_{\Vec d\in P}\lambda_{\Vec d}\cdot\Vec d,
		\qquad \Vec\xi':=\sum_{\Vec d'\in P'}\mu_{\Vec d'}\cdot\Vec d'.
	\end{equation*}
	Set $\Vec h:=\Vec b_r+\Vec\xi$ and $\Vec h':=\Vec b'_\ell+\Vec\xi'$.
	Delete the added occurrences and start the remaining transitions at $(\Vec p,\Vec h)$.
	At every position of the resulting path, the vector is increased by the unused positive periods and the already removed negative periods.
	Thus the remaining transitions form a run in $V$ from $(\Vec p,\Vec h)$ to $(\Vec q,\Vec h')$ of length at most $L$.
	Moreover,
	\begin{equation*}
		\norm{\Vec h},\norm{\Vec h'}
		\le\max\{\norm{\Vec b},\norm{\Vec b'}\}
		+\kappa_0\max\{\norm{\Vec\delta},\norm{\Vec\delta'}\}
		+L\max\{\norm P,\norm{P'}\}\le M^{\bigO(1)}.
	\end{equation*}
	Both selected vectors meet their cost bounds if
	\begin{equation*}
		K\ge\norm{\Vec c}L,
		\qquad K'\ge\norm{\Vec c'}L.
	\end{equation*}
	If either cost bound fails, we absorb the restricted periods on that side.
	We describe the source side, where $K<\norm{\Vec c}L$; the target side is symmetric.
	Apply~\cref{lem:restricted_period_absorption} to the fixed representation of $\Vec z$ to obtain
	\begin{equation*}
		\widehat H_s:=\widehat{\Vec b}+s\cdot\Vec\delta+P_0^*\subseteq H_s,
		\qquad \Vec z\in\widehat H_0,
		\qquad \norm{\widehat{\Vec b}}\le\norm{\Vec b}+K\norm P,
	\end{equation*}
	where $0\le s\le\kappa$.
	The original connected members remain available, while monotonicity preserves the covering runs and $\kappa$.
	We repeat the augmented-run construction with the reduced period sets.
	The bound in~\eqref{eq:bilateral_joining_range} still applies.
	Each failure eliminates all restricted periods on at least one side, so at most two repetitions suffice to obtain members of the original families with norms and reachability distance at most $M^{\bigO(1)}$.
\end{claimproof}

\subsection{Proof of~\texorpdfstring{\cref{lem:two_counter_tools}}{Lemma~\ref*{lem:two_counter_tools}}}
\label{app:two_counter_tools}

\twoCounterTools*

\begin{proof}
	The LPS representation is~\cite[Theorem 3.1]{DBLP:journals/jacm/BlondinEFGHLMT21}, and the length bound is~\cite[Theorem 3.2]{DBLP:journals/jacm/BlondinEFGHLMT21}.
	The hybrid-source description is~\cite[Lemma C.11]{DBLP:journals/corr/abs-2607-14983}.
	The bound $\hnorm(G)\le(S+\hnorm(A))^{\bigO(1)}$ is already established in the cited proof.
	It remains to show that $\norm{P_G}$, $\norm{\Vec c_G}$, and $|P_G|$ have polynomial bounds independent of the source base $\Vec b_A$ and the cost bound $K_A$.
	Put $R:=S+\norm{P_A}+\norm{\Vec c_A}+1$.

	By~\cite[Lemma C.1 and Theorem C.2]{DBLP:journals/corr/abs-2607-14983}, the original construction consists of a short prefix, a zigzag middle, and a short suffix.
	The first and last stages involve LPSs with constantly many cycles.
	Their output periods and costs are determined by homogeneous solutions of linear Diophantine systems, whose coefficients are independent of $\norm{\Vec{b}_A}$ and $K_A$.
	Since these systems have a constant number of equations, \cref{thm:pottier} gives
	\[
		\norm{P_{\mathrm{out}}},\norm{\Vec c_{\mathrm{out}}}
		\le (S+\norm{P_{\mathrm{in}}}+\norm{\Vec c_{\mathrm{in}}}+1)^{\bigO(1)}.
	\]

	By~\cite[Lemma C.4]{DBLP:journals/corr/abs-2607-14983}, each one-turn return produces a finite family of arithmetic sets, represented by hybrid sets with at most one period of cost zero or one and with period norms satisfying
	\[
		D_{i+1}\le\max\{D_i,Q(S)\},
	\]
	where $D_i$ bounds the period norms and $Q$ is a fixed polynomial.
	Consequently, the period bound does not accumulate over the polynomially many returns.

	Applying the constant-cycle bound to the final short suffix yields $\norm{P_G},\norm{\Vec c_G}\le R^{\bigO(1)}.$
	Finally, since $P_G\subseteq\NN^2\setminus\{\Vec0\}$, $|P_G|\le(\norm{P_G}+1)^2\le R^{\bigO(1)}$.
	This proves the required bounds.
\end{proof}

\subsection{Proofs of the Claims in~\texorpdfstring{\cref{lem:large_prefix_dichotomy}}{Lemma~\ref*{lem:large_prefix_dichotomy}}}
\label{app:mixed_dichotomy_claims}

Whenever we apply~\cref{lem:two_counter_tools} to an LPS $\Lambda$, we use the $2$-VASS induced by $\Lambda$, realizing its paths and cycles with fresh states.
This VASS has polynomial size in $|\Lambda|+\norm\Lambda$, and it has the same reachability relation as $\Lambda$ between the designated source and target.
Thus the resulting hybrid sets represent only target vectors of runs captured by the corresponding LPS.

\mixedFreePeriod*

\begin{claimproof}
	If a remaining cycle $\beta$ has nonzero nonnegative effect $\Vec e$, require one copy by replacing its starred expression with $\beta\beta^*$.
	This restricted scheme still contains the actual witness and has polynomial size.
	Choose a hybrid set containing its actual target by~\cref{lem:two_counter_tools} and adjoin $\Vec e$ as a free period.
	Every represented run executes an enabled copy of $\beta$.
	Further copies and the original suffix are enabled by monotonicity.
	The extra copies increase the exponent of the same starred block, so every member lies in $\mathcal S_\Lambda$, including when the original cycle has a nonzero internal drop.
	If $G$ is the selected hybrid set, this construction is
	\begin{equation*}
		H:=G+\Vec e^{\ast},
		\qquad \Vec z\in G\subseteq H,
		\qquad \Vec e\ge\Vec0,\quad \Vec e\ne\Vec0.
	\end{equation*}
	The added period has norm at most $E$ and cost zero.
	Thus the first outcome has period norm and base norm polynomial in $D+\norm{\Vec x}$.
\end{claimproof}

\mixedLongCycle*

\begin{claimproof}
	After deleting zero-effect cycles, every cycle with positive effect on coordinate $k$ has negative effect on coordinate $j$.
	The $R$-large hypothesis gives
	\begin{equation*}
		\Vec z(k)>\norm{\Vec x}+R
		\ge\Vec x(k)+\kappa E.
	\end{equation*}

	If every cycle increasing $k$ occurred fewer than $\kappa$ times, the total $k$-increase would be at most $\kappa E$, counting all fixed paths.
	Choose the last block with at least $\kappa$ copies of a cycle $\beta$ increasing $k$, and write
	\begin{equation*}
		\beta^\ell=\beta^{\ell-\kappa}\beta^\kappa,
		\qquad \Eff(\beta)=-u\cdot\Vec e_j+v\cdot\Vec e_k,
		\quad \ell\ge\kappa,\quad u,v\ge1.
	\end{equation*}
	From any position after this block, the remaining $k$-effect is at most $\kappa(E-|\beta|\norm{T})$: later positive-effect cycles occur fewer than $\kappa$ times, complete nonpositive-effect cycles contribute at most zero, and fixed paths and a possible partial cycle use only the scheme outside the selected block.
	Consequently every vector $\Vec w$ on that suffix satisfies
	\begin{equation}\label{eq:phase_suffix_reserve}
		\Vec w(k)\ge\Vec z(k)-\kappa(E-|\beta|\norm{T})>\kappa|\beta|\norm{T}\ge\kappa v.
	\end{equation}

	Let $\Vec a$ be the vector just before the last $\kappa$ copies of $\beta$.
	The prefix scheme $\Lambda_-$ ending there includes the preceding starred copies of $\beta$, and $\Lambda_+$ is the unchanged suffix scheme after this block.
	Both are obtained from $\Lambda$ without unrolling the remaining repetitions, so their sizes are polynomial in $D$ and independent of $\kappa$.
	The actual prefix and suffix give $\eta_-$ and $\eta_+$, and $u,v\le|\beta|\norm{T}\le E$.
\end{claimproof}

\mixedTranslatedFamily*

\begin{claimproof}
	Let $\Vec d:=\Drop(\beta)$.
	Apply~\cref{lem:two_counter_tools} to choose a hybrid set $A=\Vec b_A+P_A^{\langle\Vec c_A,\cdot\rangle\le K_A}$ containing $\Vec a$ and consisting of vectors reachable through $\Lambda_-$.
	The validity of the actual $\beta^\kappa$ gives the guards
	\begin{equation}\label{eq:phase_turn_bound}
		\Vec a(j)\ge g_j:=\max\{\kappa u,\Vec d(j)+(\kappa-1)u\},
		\qquad \Vec a(k)\ge g_k:=\Vec d(k).
	\end{equation}
	Fix its actual coefficients $\Vec\lambda_*$.
	Freeze occurrences from the fixed representation of $\Vec a$ into the base until $\overline{\Vec b}_A(j)\ge g_j$ and $\overline{\Vec b}_A(k)\ge g_k$.
	Choose $\Vec0\le\Vec\theta\le\Vec\lambda_*$ and set
	\begin{equation*}
		\overline{\Vec b}_A:=\Vec b_A+P_A\Vec\theta,
		\qquad \overline K_A:=K_A-\langle\Vec c_A,\Vec\theta\rangle.
	\end{equation*}
	Let $\overline A:=\overline{\Vec b}_A+P_A^{\langle\Vec c_A,\cdot\rangle\le\overline K_A}$.
	Since the full representation of $\Vec a$ satisfies both inequalities and all periods are nonnegative, a greedy choice needs at most $g_j+g_k$ occurrences.
	Hence
	\begin{equation*}
		\norm{\Vec\theta}_1\le g_j+g_k,
		\qquad
		\overline K_A\ge0,
		\qquad
		\norm{\overline{\Vec b}_A}\le\norm{\Vec b_A}+(g_j+g_k)\norm{P_A}.
	\end{equation*}
	Thus $\Vec a\in\overline A\subseteq A$.
	The base of $\overline A$ has norm polynomial in $D+\norm{\Vec x}+\kappa$, while its periods and costs satisfy the same bounds as those of $A$.
	For $\Vec w\in\overline A$ and $0\le t<\kappa$, these guards give
	\begin{equation*}
		\bigl(\Vec w+t\cdot\Eff(\beta)\bigr)(j)
		=\Vec w(j)-tu\ge\Vec d(j)+(\kappa-1-t)u\ge\Vec d(j).
	\end{equation*}
	The other coordinate satisfies $\bigl(\Vec w+t\cdot\Eff(\beta)\bigr)(k)=\Vec w(k)+tv\ge\Vec d(k)$.
	Hence every member enables $\beta^q$ for each $0\le q\le\kappa$, including all internal transitions of the cycle.

	\emph{Constructing the Translated Family.}
	Define the shifted source hybrid set
	\begin{equation*}
		A^-:=\overline A-\kappa u\cdot\Vec e_j,
		\qquad \Vec a^-:=\Vec a-\kappa u\cdot\Vec e_j\in A^-.
	\end{equation*}
	Its base is nonnegative by~\eqref{eq:phase_turn_bound}.
	By~\eqref{eq:phase_suffix_reserve}, subtracting $\kappa v\cdot\Vec e_k$ from $\eta_+$ gives a valid run from $\Vec a^-$ to $\Vec z-\kappa v\cdot\Vec e_k$.
	Apply~\cref{lem:two_counter_tools} with source $A^-$ to select an output hybrid set $G=\Vec b_G+P^{\langle\Vec c,\cdot\rangle\le K_G}$ containing that target.
	Its base has norm polynomial in $D+\norm{\Vec x}+\kappa$, while~\cref{lem:two_counter_tools} bounds its periods and costs without using the shifted source base.
	Define
	\begin{equation}
		H_s:=G+su\cdot\Vec e_j+(\kappa-s)v\cdot\Vec e_k,
	\end{equation}
	where $0\le s\le\kappa$.
	Then $H_s=\Vec b+s\cdot\Vec\delta+P^{\langle\Vec c,\cdot\rangle\le K_G}$, where $\Vec b:=\Vec b_G+\kappa v\cdot\Vec e_k$ and $\Vec\delta:=u\cdot\Vec e_j-v\cdot\Vec e_k$.
	Every base is nonnegative, $\Vec z\in H_0$, and the periods, costs, and cost bound are those of $G$.

	To verify the prefix witnesses, take $\Vec g\in G$ reached by a run $\rho_+$ through $\Lambda_+$ from some $\Vec w^-\in A^-$.
	The vector $\Vec w:=\Vec w^-+\kappa u\cdot\Vec e_j\in\overline A$ has a prefix $\rho_-$ through $\Lambda_-$ and satisfies the guards in~\eqref{eq:phase_turn_bound}.
	Execute $\kappa-s$ copies of $\beta$ from $\Vec w$ and translate the suffix witness by the nonnegative vector $su\cdot\Vec e_j+(\kappa-s)v\cdot\Vec e_k$.
	This gives
	\begin{equation*}
		\begin{gathered}
			\Vec x\xrightarrow{\rho_-}\Vec w
			\xrightarrow{\beta^{\kappa-s}}
			\Vec w^-+su\cdot\Vec e_j+(\kappa-s)v\cdot\Vec e_k
			\xrightarrow{\rho_+}
			\Vec g+su\cdot\Vec e_j+(\kappa-s)v\cdot\Vec e_k\in H_s.
		\end{gathered}
	\end{equation*}
	Thus every member of the family is reachable in the bounded-coordinate encoding.
\end{claimproof}

\subsection{Proof of~\texorpdfstring{\cref{lem:large_prefix_terminal}}{Lemma~\ref*{lem:large_prefix_terminal}}}
\label{app:large_prefix_terminal}

\phaseLargeTerminal*

\begin{proof}
	By~\cref{lem:large_prefix_dichotomy}, a fixed polynomial $f(D)$ bounds the periods of every extracted translated family for all $\kappa$.
	Use the constant target family $J_\ell=\{\Vec y\}$ for $0\le\ell\le\kappa$, with $P'=\varnothing$, $K'=0$, and $\Vec\delta'=\Vec0$.
	Every $J_\ell$ is backward $C$-coverable.
	As $\norm{P'}=0$, let $g$ be a fixed nondecreasing polynomial such that the threshold in~\cref{lem:translated_join} satisfies $\kappa_0\le g(D+\norm P)$.
	Choose
	\begin{equation*}
		\max\{2D,g(D+f(D))\}\le\kappa\le D^{\bigO(1)},
	\end{equation*}
	and then choose the sufficient $R$ from~\cref{lem:large_prefix_dichotomy}:
	\begin{equation*}
		D\le R\le(D+\kappa)^{\bigO(1)}\le D^{\bigO(1)}.
	\end{equation*}
	These choices are uniform over all suffixes.

	Apply~\cref{lem:large_prefix_dichotomy}.
	If $\Vec z\in H$ and $H$ has a nonzero free period, use~\cref{lem:free_covered} to choose $\Vec h\in H$ with polynomial norm and distance to $\Vec t$.
	Together with the prefix bound in~\eqref{eq:large_prefix_distance}, this gives
	\begin{align*}
		\MinLen_V(\Vec s,\Vec t)\le\MinLen_V\bigl(\Vec s,(\Vec r,\Vec h)\bigr)+\MinLen_V\bigl((\Vec r,\Vec h),\Vec t\bigr)\le(D+\norm{\Vec x}+\norm{\Vec t})^{\bigO(1)}.
	\end{align*}
	Otherwise $\Vec z\in H_0$ for a translated family from~\cref{lem:large_prefix_dichotomy}.
	Our choice gives $\kappa\ge\kappa_0$ and $\kappa\ge2D$.
	For the extracted direction $u\cdot\Vec e_j-v\cdot\Vec e_k$ with $u,v\ge1$, choose $a:=\lceil\max\{D-\Vec b(j),0\}/u\rceil\le D$.
	Then $\Vec b_a(j)\ge D$, while $\Vec b_\kappa\ge\Vec0$ gives $\Vec b_a(k)\ge(\kappa-a)v\ge D$.
	Since $\Vec\delta$ is nonzero precisely on coordinates $j,k$, the definition of a translated family implies that every period has zero effect on coordinate $i$.
	Together with $\Vec z\in H_0$, this gives $\Vec b_a(i)=\Vec z(i)$.
	The covering hypothesis makes $H_a$ forward $C$-coverable.
	Together with the given suffix and $J_0$, this proves $C$-joinability.
	By~\cref{lem:translated_join}, choose $\Vec h\in H_r$ with polynomial norm and distance to $\Vec t$, since the selected member of the constant target family is $\Vec y$.
	Append its run to the prefix supplied by~\eqref{eq:large_prefix_distance}, obtaining
	\begin{equation*}
		\MinLen_V(\Vec s,\Vec t)\le(D+\norm{\Vec x}+\norm{\Vec t})^{\bigO(1)}.
	\end{equation*}
\end{proof}

%% file: appendix/Appendix-F-General.tex
\section{Proofs for~\texorpdfstring{\cref{sec:removing_coverability}}{Section~\ref*{sec:removing_coverability}}}
\label{app:arbitrary_target_proofs}

\subsection{Proof of~\texorpdfstring{\cref{lem:pumpable_arbitrary}}{Lemma~\ref*{lem:pumpable_arbitrary}}}
\label{app:pumpable_arbitrary}

\pumpableArbitrary*

\begin{proof}
	Let $M:=D+\norm{\Vec s}+\norm{\Vec t}$ and apply \cref{lem:pump_extraction}.
	If the extracted prefix reaches $\Vec q(\Vec y)$, its pointwise length bound at $\Vec y$ proves the claim.
	Otherwise, it gives a forward $C$-coverable hybrid set $G=\Vec a+P^{\langle\Vec c,\cdot\rangle\le K}$ at state $\Vec r$, with $\Vec z\in G$ and $\Vec r(\Vec z)\xrightarrow{*}\Vec q(\Vec y)$ in the suffix of $U$, such that
	\begin{equation}\label{eq:terminal_second_hybrid}
		\hnorm(G)\le M^{\bigO(1)},\qquad
		\MinLen_U\bigl(\Vec p(\Vec x),\Vec r(\Vec g)\bigr)
		\le(M+\norm{\Vec g})^{\bigO(1)}
		\quad\text{for all }\Vec g\in G.
	\end{equation}

	Let $\widehat P:=\{(0,0,\Vec d):\Vec d\in P\}\subseteq\NN^5$.
	Consider the augmented VAS $V[\widehat P]$ and the suffix of $U$ augmented with a self-loop of effect $\Vec d$ at $\Vec r$ for every $\Vec d\in P$.
	The base $\Vec r(\Vec a)$ is forward $C$-coverable there and reaches $\Vec q(\Vec y)$ by first generating $\Vec z$.
	By \cref{cor:threshold_closure,thm:semi_coverable}, there is a run $\rho$ from $(\Vec r,\Vec a)$ to $\Vec t$ in $V[\widehat P]$ of length $L\le M^{\bigO(1)}$.
	Remove the added periods and handle the cost bounds as in the proof of~\cref{lem:joining_coverable_hybrids}, with singleton target $\{\Vec y\}$.
	If absorption is needed, it keeps $\Vec z$ in the selected hybrid set, increases the base, and leaves the free periods unchanged, so the base remains forward $C$-coverable by monotonicity.
	After at most one repetition of the preceding construction, this yields $\Vec g\in G$ with norm and distance to $\Vec t$ polynomial in $M$.
	Prepend its prefix witness from~\eqref{eq:terminal_second_hybrid}.
\end{proof}

\subsection{Proof of~\texorpdfstring{\cref{lem:hybrid_terminal}}{Lemma~\ref*{lem:hybrid_terminal}}}
\label{app:hybrid_terminal}

\hybridTerminal*

\begin{proof}
	Let $M:=D+\hnorm(H)+\hnorm(J)$ and write $J=\Vec b+P'^{\langle\Vec c',\cdot\rangle\le K'}$.
	Fix representations of the connected members $\Vec x\in H$ and $\Vec y\in J$, together with a witnessing run in $U$.
	In $W:=U[P,P']$, the positive period loops at $\Vec p$ generate $\Vec x$ from $\Vec a$, and the negative period loops at $\Vec q$ reduce $\Vec y$ to $\Vec b$.
	Thus $\Vec p(\Vec a)\xrightarrow{*}\Vec q(\Vec b)$ in $W$.
	Let $V^{\pm}$ be $V$ with the lifted transitions $(0,0,\Vec d)$ for $\Vec d\in P$ and $(0,0,-\Vec d')$ for $\Vec d'\in P'$ adjoined.
	The source condition remains valid in $W$, and~\cref{cor:threshold_closure} preserves $C$ for $V^{\pm}$.
	Since the period sets are subsets of $\NN^3$, their cardinalities and $\Size(V^{\pm})$ are polynomial in $M$.
	The structural threshold for the augmented instance can therefore be chosen at most $(B+C+\Size(V^{\pm}))^{\bigO(1)}\le M^{\bigO(1)}$.
	Apply~\cref{thm:semi_coverable} in the forward coverable case and~\cref{lem:pumpable_arbitrary} in the pumpable case to obtain
	\begin{equation*}
		(\Vec p,\Vec a)\xrightarrow{\rho}(\Vec q,\Vec b)
		\quad\text{in }V^{\pm},\qquad |\rho|\le L\le M^{\bigO(1)}.
	\end{equation*}

	Remove the added periods and handle the cost bounds as in the proof of~\cref{lem:joining_coverable_hybrids}.
	Each absorption preserves the fixed connected member on that side, increases its base, and leaves its free periods unchanged.
	Hence monotonicity preserves the source's forward $C$-coverability or pumpability in the system augmented with its free periods, so the corresponding short-run bound remains applicable.
	Each side is absorbed at most once, and the same polynomial estimates give members $\Vec h\in H$ and $\Vec h'\in J$ satisfying~\eqref{eq:hybrid_terminal_member}.
\end{proof}

\subsection{Proof of~\texorpdfstring{\cref{lem:hybrid_intersection}}{Lemma~\ref*{lem:hybrid_intersection}}}
\label{app:hybrid_intersection}

\phaseIntersection*

\begin{proof}
	Let $M:=D+\hnorm(H)+\hnorm(J)$ and write
	\begin{equation*}
		H=\Vec b+P^{\langle\Vec c,\cdot\rangle\le K},\qquad
		J=\Vec b'+P'^{\langle\Vec c',\cdot\rangle\le K'},\qquad
		\Vec z\in H\cap J.
	\end{equation*}
	Fix coefficients $\Vec\lambda_*,\Vec\mu_*$ representing $\Vec z$ in both hybrid sets and choose a componentwise-minimal solution below $(\Vec\lambda_*,\Vec\mu_*)$ of
	\begin{equation*}
		\begin{pmatrix}P&-P'\end{pmatrix}
		\binom{\Vec\lambda}{\Vec\mu}=\Vec b'-\Vec b,
		\qquad(\Vec\lambda,\Vec\mu)\in\NN^{|P|+|P'|}.
	\end{equation*}
	Any smaller solution would also lie below the fixed pair, so this solution is globally minimal.
	Nonnegative costs give
	\begin{equation*}
		\langle\Vec c,\Vec\lambda\rangle\le\langle\Vec c,\Vec\lambda_*\rangle\le K,
		\qquad\langle\Vec c',\Vec\mu\rangle\le\langle\Vec c',\Vec\mu_*\rangle\le K'.
	\end{equation*}
	Let $m:=|P|+|P'|$ and $A_0:=\max\{\norm P,\norm{P'}\}$.
	By \cref{thm:pottier}, the three-row system satisfies
	\begin{equation*}
		\norm{\Vec\lambda}_1+\norm{\Vec\mu}_1
		\le\bigl(1+mA_0+\norm{\Vec b'-\Vec b}\bigr)^3
		\le M^{\bigO(1)}.
	\end{equation*}
	Thus $\Vec h:=\Vec b+P\Vec\lambda=\Vec b'+P'\Vec\mu$ has the required properties.
\end{proof}

\subsection{Proof of~\texorpdfstring{\cref{lem:translated_arbitrary}}{Lemma~\ref*{lem:translated_arbitrary}}}
\label{app:translated_arbitrary}

\translatedArbitrary*

\begin{proof}
	We follow~\cref{thm:semi_coverable}, keeping $\mathcal J$ fixed.
	By~\cref{lem:large_prefix_dichotomy}, the periods of every translated family extracted from a source-side prefix have norm at most $D^{\bigO(1)}$, independently of the prescribed translation range.
	Thus~\cref{lem:translated_join} allows us to choose a uniform sufficient joining threshold $2D\le\kappa_0\le(D+\norm{P'})^{\bigO(1)}$ before choosing $R$.
	Assume $\kappa\ge\kappa_0$, restrict $\mathcal J$ to indices at most $\kappa_0$, and choose $D\le R\le(D+\kappa_0)^{\bigO(1)}$ by~\cref{lem:large_prefix_dichotomy}.
	The member $J_a$ remains available because $a\le D\le\kappa_0$, and its backward covering run stays in the terminal component of every suffix considered below.

	Let $M:=D+\hnorm(J_0)+\norm{\Vec\delta'}$.
	For $k\ge1$, let $L_k(N)$ bound both the norm of the selected member of $\mathcal J$ and its reachability distance in $V$ from a source of norm at most $N$, uniformly over the relevant sequential subinstances with at most $k$ components.
	Choose a fixed nondecreasing polynomial $F$ large enough for the bounds below.

	\emph{Base Case: $k=1$.}
	If $\Vec p(\Vec x)$ is forward $C$-coverable, apply~\cref{lem:hybrid_terminal} with $H=\{\Vec x\}$ and $J=J_0$.
	Otherwise,~\cref{lem:bounded_prefix} captures the entire witnessing run to $\Vec r'(\Vec z')$ in a $2$-VASS of size at most $D\Size(U)$, since there is no later component.
	By~\cref{lem:two_counter_tools}, its reachable vectors at the encoded target state include a hybrid set $H$ containing $\Vec z'$ with $\hnorm(H)\le(D+\norm{\Vec x})^{\bigO(1)}$, after restoring the bounded coordinate.
	Since $\Vec z'\in H\cap J_0$, \cref{lem:hybrid_intersection} selects a small $\Vec h'\in H\cap J_0$.
	Apply~\cref{lem:two_counter_tools} to reach this member in the same encoding and lift the run to $V$.
	Both cases give $L_1(N)\le F(M+N)$.

	For $k\ge1$, assume that $L_k$ has been constructed, and consider a witnessing run in a subinstance with at most $k+1$ components and $\norm{\Vec s}\le N$.
	When the source is not forward $C$-coverable, apply~\cref{lem:bounded_prefix} and write the witness as $\pi_0\pi_1$, where $\pi_0$ ends at $\Vec r(\Vec z)$ and some coordinate $i$ stays below $D$ throughout $\pi_0$.
	We distinguish four terminal cases and one inductive case.

	\emph{Terminal Cases.}
	\begin{enumerate}
		\item \emph{The source is forward $C$-coverable.}
		Apply~\cref{lem:hybrid_terminal} with $H=\{\Vec x\}$ and $J=J_0$.

		\item \emph{The bounded-coordinate prefix reaches $\Vec r'(\Vec z')$.}
		Here $\pi_1=\varepsilon$.
		Use~\cref{lem:two_counter_tools,lem:hybrid_intersection} to select and reach a small member of $H\cap J_0$, exactly as in the base case.

		\item \emph{An $R$-large prefix yields a nonzero free period.}
		Apply~\cref{lem:large_prefix_dichotomy} with range $\kappa_0$ to obtain a hybrid set $H$ with $\Vec z\in H$ and a nonzero free period.
		Its base is pumpable after adjoining the free periods.
		The unchanged suffix connects $H$ to $J_0$, so~\cref{lem:hybrid_terminal} selects small members $\Vec h\in H$ and $\Vec h'\in J_0$ connected by a short run in $V$.
		Prepend the run to $\Vec h$ supplied by~\eqref{eq:large_prefix_distance}.

		\item \emph{An $R$-large prefix yields a translated family.}
		The other outcome of~\cref{lem:large_prefix_dichotomy} gives $\mathcal H=\{H_r\}_{r=0}^{\kappa_0}$ with $\Vec z\in H_0$.
		Since $\kappa_0\ge2D$, the nonnegative bases and the stopping condition of~\cref{lem:bounded_prefix} give a forward $C$-coverable $H_b$ for some $b\le D$, by the same argument as in~\cref{app:large_prefix_terminal}.
		Together with the unchanged suffix from $\Vec r(\Vec z)$ to $\Vec r'(\Vec z')$ and the backward $C$-coverability of $J_a$, this proves $C$-joinability of $\mathcal H$ and $\mathcal J$ in that suffix subinstance.
		Apply~\cref{lem:translated_join} to select small members $\Vec h\in H_r$ and $\Vec h'\in J_\ell$ connected by a short run in $V$, and prepend the prefix from~\eqref{eq:large_prefix_distance}.
	\end{enumerate}
	In every terminal case, both $\norm{\Vec h'}$ and the concatenated length are at most $F(M+N)$.

	\emph{Inductive Case.}
	The remaining case has an $R$-small prefix $\pi_0$ ending at a later component entry $\Vec r(\Vec z)$.
	For $\Vec u:=(\Vec r,\Vec z)$, the same estimate as in~\eqref{eq:source_restart_invariant} gives $\norm{\Vec u}\le N+R$.
	By~\cref{lem:two_counter_tools}, replace $\pi_0$ by a run in $V$ to this same configuration of length at most $F(M+N+R)$.
	The unchanged suffix has at most $k$ components, reaches the same $\Vec z'\in J_0$, and preserves the backward covering run for $J_a$.
	Apply the induction hypothesis with the same target family and prepend the shortened prefix.
	The selected-member norm and total length therefore satisfy the recurrence in~\eqref{eq:one_covered_component_recurrence}, with $F$ in place of $P$.
	Since $k\le(B+1)^2$ and $R\le M^{\bigO(1)}$, the same additive-growth estimate as in~\cref{thm:semi_coverable} proves~\eqref{eq:translated_arbitrary_length}, with degree independent of $k$.
\end{proof}

\subsection{Proof of~\texorpdfstring{\cref{prop:encoding_reachability}}{Proposition~\ref*{prop:encoding_reachability}}}
\label{app:encoding_reachability}

\generalEncodingBound*

\begin{proof}
	Choose $C$ by~\cref{thm:generalized_transformation_area} and $D$ uniformly for sequential subinstances in both orientations by~\cref{lem:bounded_prefix}.
	The period bound in~\cref{lem:large_prefix_dichotomy} is independent of the prescribed range, so~\cref{lem:translated_arbitrary} gives a uniform sufficient cutoff $\kappa_0\le D^{\bigO(1)}$ for all extracted families in the reversed system.
	Choose $\max\{2D,\kappa_0\}\le\kappa\le D^{\bigO(1)}$, then choose $D\le R\le(D+\kappa)^{\bigO(1)}$ by the dichotomy.
	By~\cref{cor:threshold_closure}, the same $C$ is valid under reversal.
	We follow the additive-growth induction in the proof of~\cref{lem:translated_arbitrary}, now with the fixed target $\Vec q(\Vec y)$ and a witnessing run $\pi$.
	If the source is not forward $C$-coverable, use~\cref{lem:bounded_prefix} to write $\pi=\pi_0\pi_1$, with $\pi_0$ ending at $\Vec r(\Vec z)$.

	\emph{Terminal Cases.}
	\begin{enumerate}
		\setlength{\itemsep}{0pt}
		\item \emph{The source is forward $C$-coverable.}
		Apply~\cref{thm:semi_coverable}.

		\item \emph{The bounded-coordinate prefix reaches the target.}
		If $\pi_1=\varepsilon$, shorten $\pi_0$ in its $2$-VASS encoding by~\cref{lem:two_counter_tools} and lift the run to $V$.

		\item \emph{An $R$-large prefix yields a nonzero free period.}
		The dichotomy gives a hybrid set $H$ containing $\Vec z$, whose base is pumpable after adjoining the free periods.
		Apply~\cref{lem:hybrid_terminal} to $H$ and $\{\Vec y\}$ in the suffix subinstance, and prepend the run to its selected member supplied by~\eqref{eq:large_prefix_distance}.

		\item \emph{An $R$-large prefix yields a translated family.}
		Write $\mathcal H=\{H_r\}_{r=0}^{\kappa}$ for the extracted family, with $\Vec z\in H_0$.
		Since $\kappa\ge2D$, the stopping condition gives a forward $C$-coverable $H_b$ for some $b\le D$, as in~\cref{app:large_prefix_terminal}.
		In the reversed suffix, $H_b$ is backward $C$-coverable and $\Vec q(\Vec y)$ reaches $\Vec r(\Vec z)$ with $\Vec z\in H_0$.
		Thus~\cref{lem:translated_arbitrary} applies to this fixed target family and selects a polynomial-norm $\Vec h\in H_r$ with a short run from $\Vec t$ to $(\Vec r,\Vec h)$ in $-V$.
		Reverse that run and prepend the prefix from~\eqref{eq:large_prefix_distance}.
	\end{enumerate}
	The first two cases cover a single-component instance.
	The extracted hybrid norms and translation directions are polynomially bounded in $D+\norm{\Vec s}$, so every terminal case gives a run of length at most $(D+\norm{\Vec s}+\norm{\Vec t})^{\bigO(1)}$.

	The remaining $R$-small case is identical to the inductive step of~\cref{lem:translated_arbitrary}: the new source norm is at most $\norm{\Vec s}+R$, and the unchanged suffix has fewer components.
	Since there are at most $(B+1)^2$ components and $D,R\le(B+\Size(V))^{\bigO(1)}$, the same recurrence~\eqref{eq:one_covered_component_recurrence} proves~\eqref{eq:encoding_reachability_length}.
\end{proof}